\newtheorem{observation}[proposition]{\textbf{Observation}}
\begin{document}
\title{Dynamics of Belief: Abduction, Horn Knowledge Base And Database Updates}
\author{Radhakrishnan Delhibabu}
\institute{Informatik 5, Knowledge-Based Systems Group\\
RWTH Aachen, Germany\\
\email{delhibabu@kbsg.rwth-aachen.de}}
\maketitle
\begin{abstract}
The dynamics of belief and knowledge is one of the major components
of any autonomous system  that should be able to incorporate new
pieces of information. In order to apply the rationality result of
belief dynamics theory to various practical problems, it should be
generalized in two respects: first it should allow a certain part of
belief to be declared as immutable; and second, the belief state
need not be deductively closed. Such a generalization of belief
dynamics, referred to as base dynamics, is presented in this paper,
along with the concept of a generalized revision algorithm for
knowledge bases (Horn or Horn logic with stratified negation). We
show that knowledge base dynamics has an interesting connection with
kernel change via hitting set and abduction. In this paper, we show
how techniques from disjunctive logic programming can be used for
efficient (deductive) database updates. The key idea is to transform
the given database together with the update request into a
disjunctive (datalog) logic program and apply disjunctive techniques
(such as minimal model reasoning) to solve the original update
problem. The approach extends and integrates standard techniques for
efficient query answering and integrity checking. The generation of
a hitting set is carried out through a hyper tableaux calculus and
magic set that is focused on the goal of minimality. The present
paper provides a comparative study of view update algorithms in
rational approach. For, understand the basic concepts with
abduction, we provide an abductive framework for knowledge base
dynamics. Finally, we demonstrate how belief base dynamics can
provide an axiomatic characterization for insertion a view atom to
the database. We give a quick overview of the main operators for
belief change, in particular, belief update versus database update.

\vspace{0.2cm}

\textbf{Keyword}: AGM, Belief Revision, Belief Update, Horn
Knowledge Base Dynamics, Kernel Change, Abduction, Hyber Tableaux,
Magic Set, View update, Update Propagation.
\end{abstract}
\section{Introduction}

We live in a constantly changing world, and consequently our beliefs
have to be revised whenever there is new information. When we
encounter a new piece of information that contradicts our current
beliefs, we revise our beliefs \emph{rationally}.

In the last three decades, the field of computer science has grown
substantially beyond mere number crunching, and aspires to imitate
rational thinking of human beings. A separate branch of study,
\emph{artificial intelligence} (AI) has evolved, with a number of
researchers attempting to represent and manipulate knowledge in a
computer system. Much work has been devoted to study the statics of
the knowledge, i.e. representing and deducting from fixed knowledge,
resulting in the development of expert systems. The field of logic
programming, conceived in last seventies, has proved to be an
important tool for handling static knowledge. However, such fixed
Horn knowledge based systems can not imitate human thinking, unless
they are accomplish revising their knowledge in the light of new
information. As mentioned before, this revision has to take place
rationally. This has led to a completely new line of research, the
\textbf{dynamics of belief}.

Studies in dynamics of belief are twofold: What does it mean to
\emph{rationally} revise a belief state? How can a belief state be
represented in a computer and revised? The first question is more
philosophical theory, and a lot of works have been carried out from
epistemological perspective to formalize belief dynamics. The second
question is computation oriented, and has been addressed differently
from various perspectives of application. For example, a lot of
algorithms have been proposed in logic programming to revise a Horn
knowledge base or a database represented as a logic program; number
of algorithms are there to carry out a view update in a rational
database; algorithm to carry out diagnosis; algorithm for abduction
reasoning and so on. We need the concept of "change" in some form or
other and thus need some axiomatic characterization to ensure that
the algorithms are rational. Unfortunately, till this date, these
two tracks remain separate, with minimal sharing of concepts and
results. The primary purpose of the paper is to study these
two developments and integrate them.

When a new piece of information is added to a Horn knowledge base
(Delgrande 2008 and Delgrande \& Peppas 2011), (Papini 2000) it may become
inconsistent. Revision means modifying the Horn knowledge base in
order to maintain consistency, by keeping the new information and
removing the least possible previous information. In our case,
update means revision and contraction, that is insertion and
deletion in database perspective. Previous works (Aravindan \& Dung
1994), (Aravindan 1995) have explained connections between
contraction and knowledge base dynamics. Our Horn knowledge base
dynamics is defined in two parts: an immutable part (Horn formulae)
and updatable part (literals) (for definition and properties see the
works of Nebel 1998, Segerberg 1998, Hansson et al 2001 and Ferm{\'e} $\&$ Hansson 2001).
Knowledge bases have a set of integrity constraints. In the case of
finite knowledge bases, it is sometimes hard to see how the update
relations should be modified to accomplish certain Horn knowledge
base updates.
.

\section{Motivation}
In the general case of arbitrary formulae, the revision problem for
knowledge bases is hard to solve. So we restrict the revision
problem to \emph{Horn formulae}. The connection between belief
change and database change is an interesting one since so far the
two communities have independently considered two problems that are
very similar, and our aim is to bring out this connection.

We aim to bridge the gap between philosophical and database theory.
In such a case, Hansson's (Hansson 1997) kernel change is related to
the abductive method. Aliseda's (Aliseda 2006) book on abductive
reasoning is one of our key motivation. Wrobel's (Wrobel 1995)
definition of first-order theory revision was helpful to frame our
algorithm. On the other hand, we are dealing with the view update
problem. Keller and Minker's (Keller 1985 and Minker 1996) work is
one the motivation for the view update problem.
In Figure 1 understand the concept of view update problem in
rational way. Figure show that foundation form Belief Revision
theory, intermediate step handle to Horn knowledge base, this step
very impairment that agent have background knowledge and he/she made
decision with postulate may require to process next step. Target of
the application is connect database updates via Horn knowledge base
with abduction reasoning. All clear procedure shown in each section.

\hspace{3cm}

\begin{figure}[h]
\begin{center}
   \includegraphics[height=6cm, width=12cm,angle=0]{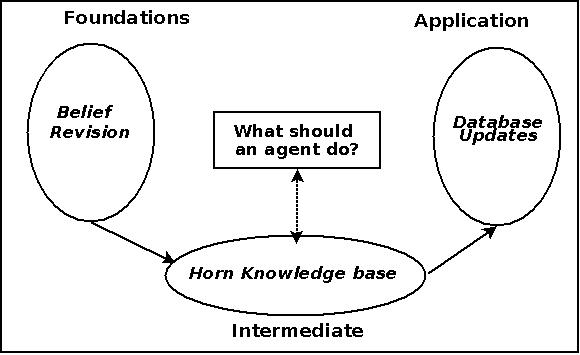}
   \caption{Layout of the paper}
   \label{Figure 1.1}
   \end{center}
\end{figure}

Following example illustrates the motivation of the paper:
\begin{example} \label{e10} Consider a database with an (immutable) rule that a staff
member is a person who is currently working in a research group
under a chair. Additional (updatable) facts are that matthias and
gerhard are group chairs, and delhibabu and aravindan are staff
members in group infor1. Our first integrity constraint (IC) is that
each research group has only one chair ie. $\forall x,y,z$ (y=z)
$\leftarrow$ group\_chair(x,y) $\wedge$ group\_chair(x,z). Second
integrity constraint is that a person can be a chair for only one
research group ie. $\forall x,y,z$ (y=z)$\leftarrow$
group\_chair(y,x) $\wedge$ group\_chair(z,x).
\end{example}

\begin{center}
\underline {Immutable part}: staff\_chair(x,y)$\leftarrow$
staff\_group(x,z),group\_chair(z,y). \vspace{0.5cm}

\underline{Updatable part}: group\_chair(infor1,matthias)$\leftarrow$ \\
\hspace{2.4cm}group\_chair(infor2,gerhard)$\leftarrow$ \\
\hspace{2.6cm}staff\_group(delhibabu,infor1)$\leftarrow$ \\
\hspace{2.6cm}staff\_group(aravindan,infor1)$\leftarrow$ \\
\end{center}
Suppose we want to update this database with the information,
staff\_chair({aravin\-dan},{gerhard}); From the immutable part, we
can deduce that this can be achieved by asserting
staff\_group(\underline{aravindan},z) $\bigwedge$
group\_chair(z,\-\underline{gerhard})

If we are restricted to definite clauses, there are three plausible
ways to do this: first case is, aravindan and gerhard belong to
infor1, i.e, staff\_group(\underline{aravind}\-\underline{an},infor1) $\bigwedge$
group\_chair(info1,\underline{gerhard}). We need to delete both
base facts group\_\-chair(infor1,matthias)$\leftarrow$ and
group\_chair(infor2,gerhard)$\leftarrow$, because our first IC as
well as second IC would be violated otherwise. In order to change
the view, we need to insert
group\_chair(i\-nfor1,gerhard)$\leftarrow$ as a base fact. Assume
that we have an algorithm that deletes the base facts
staff\_group(delhibabu,\-infor1)$\leftarrow$ from the database. But,
no rational person will agree with such an algorithm, because the
fact staff\_group(delhibabu,infor1)$\leftarrow$ is not "relevant" to
the view atom.

Second case, aravindan and gerhard belong to infor2, that is
staff\_group(\underline{aravin}\-\underline{dan},\-infor2) $\bigwedge$
group\_chair(infor2,\underline{gerhard}). Simply, insert the new
fact staff\_group\-(aravindan,\-infor2)$\leftarrow$ to change the
view. Suppose an algorithm deletes the base facts
staff\_group(ara\-vindan,infor1)$\leftarrow$ from the database, then
it can not be "rational" since these facts are not "relevant" to the
view atom.

Third case, aravindan and gerhard belong to infor3 (free assignment
of the group value), that is
staff\_group(\underline{aravindan},infor3) $\bigwedge$
group\_chair(info3,\underline{gerhard}). Suppose, we insert new base
fact group\_chair(infor3,gerhard) $\leftarrow$, our second IC does
not follow. Suppose an algorithm inserts the new base fact
staff\_group(aravin\-dan,infor2)$\leftarrow$ or
staff\_group(aravindan,infor1)$\leftarrow$ is deleted, then it can
not be "rational".

The above example highlights the need for some kind of "relevance
policy" to be adopted when a view atom is to be inserted to a
deductive database. How many such axioms and policies do we need to
characterize a "good" view update? When are we sure that our
algorithm for view update is "rational"? Clearly, there is a need
for an axiomatic characterization of view updates. By axiomatic
characterization, we mean explicitly listing all the rationality
axioms that are to be satisfied by any algorithm for view update.

The basic idea in (Behrend \& Manthey 2008), (Aravindan \&
Baumgartner 1997) is to employ the model generation property of
hyper tableaux and magic set to generate models, and read off
diagnosis from them. One specific feature of this diagnosis
algorithm is the use of semantics (by transforming the system
description and the observation using an initial model of the
correctly working system) in guiding the search for a diagnosis.
This semantical guidance by program transformation turns out to be
useful for database updates as well. More specifically we use a
(least) Herbrand model of the given database to transform it along
with the update request into a logic program in such a
way that the models of this transformed program stand for possible
updates.

We discuss two ways of transforming the given database together with
the view update (insert and delete) request into a logic
program resulting in two variants of view update algorithms. In the
first variant, a simple and straightforward transformation is
employed. Unfortunately, not all models of the transformed program
represent a rational update using this approach. The second variant
of the algorithm uses the least Herbrand model of the given database
for the transformation. In fact what we referred to as offline
preprocessing before is exactly this computation of the least
Herbrand model. This variant is very meaningful in applications
where views are materialized for efficient query answering. The
advantage of using the least Herbrand model for the transformation
is that all models of the transformed logic program (not
just the minimal ones) stand for a rational update.

When dealing with the revision of a Horn knowledge base (both
insertions and deletions), there are other ways to change a Horn
knowledge base and it has to be performed automatically also (Ferm{\'e} 1992 and Rodrigues\& Benevidas 1994).
Considering the information, change is precious and must be
preserved as much as possible. The \emph{principle of minimal
change} (G$\ddot{a}$rdenfors 1998, Dalal 1988 and Herzig \& Rifi 1999), (Schulte 1999) can provide a
reasonable strategy. On the other hand, practical implementations
have to handle contradictory, uncertain, or imprecise information,
so several problems can arise: how to define efficient change in the
style of Carlos Alchourr$\acute{o}$n, Peter G$\ddot{a}$rdenfors, and
David Makinson (AGM) (Alchourron et al. 1985b); what result has to
be chosen (Lakemeyer 1995), (Lobo \& Trajcevski 1997), (Nayak et al.
2006); and finally, according to a practical point of view, what
computational model to explore for the Horn knowledge base revision has
to be provided?

The rest of the paper is organized as follows: First we start with
preliminaries in Section 3. In Section 4, we give a quick overview
of belief changes and belief update. In Section 5, we introduce
knowledge base dynamics along with the concept of generalized
revision, and revision operator for knowledge base. Section 6
studies the relationship between knowledge base dynamics and
abduction and shows how abductive procedures could be used to
realize revision. In Section 7, we give a quick overview of belief
update versus knowledge base update. In Section 8, we discuss an
important application of knowledge base dynamics in providing an
axiomatic characterization for insertion view atoms to databases;
and nature of view update problem for incomplete to complete
information shown. We give a quick overview of belief update versus
database update in Section 9. In Section 10, we provide an abductive framework for 
Horn knowledge base dynamics in first order version. In Section 11,
we give brief overview of related works. In Section 12, we draw
conclusions with a summary of our contribution and indicate future
directions of our investigation.


\section{Preliminaries}

We consider a propositional language $\mathcal{L_P}$ defined from a
finite set of propositional variables $\mathcal{P}$ and the standard
connectives. We use $a, b, x, y,...\varphi, \phi,
\psi, ...$ for propositional formulae. Sets of formulae are denoted
by upper case Roman letters $A,B, F,K, ....$. A literal is an atom
(positive literal), or a negation of an atom (negative literal).

For any formula $\varphi$, we write $E(\varphi)$ to mean the set of
the elementary letters that occur in $\varphi$. The same notation
also applies to a set of formulae. For any set $F$ of formulae,
$L(F)$ represents the sub-language generated by $E(F)$, i.e. the set
of all formulae $\varphi$ with $E(\varphi) \subseteq E(F)$.

Horn formulae are defined (Delgrande \& Peppas 2011) as follows:
\begin{enumerate}
\item[1.] Every $a \in \Phi$ where $\Phi \in \mathcal{L_P} \cup \{ \bot \}$ , $a$ and $\neg a$ are Horn clauses.
\item[2.] $a \leftarrow a_1 \land a_2 \land ... \land a_n$ is a Horn clause, where $n \geq 0$ and
$a, a_i \in \Phi$ ($1 \leq i \leq n$).
\item[3.] Every Horn clause is a Horn formula, $a$ is called head and $a_i$
is body of the Horn formula.
\item[4.] If $\varphi$ and $\psi$ are Horn formulae, so is $\varphi\land \psi$.
\end{enumerate}

A definite Horn clause is a finite set of literals (atoms) that
contains exactly one positive literal which is called the head of
the clause. The set of negative literals of this definite Horn
clause is called the body of the clause. A Horn clause is
non-recursive, if the head literal does not occur in its body. We
usually denote a Horn clause as head$\leftarrow$body. Let
$\mathcal{L_H}$ be the set of all Horn formulae with respect to
$\mathcal{L_P}$. A formula $\phi$ is a syntactic consequence within
$\mathcal{L_P}$ of a set $\Gamma$ of formulas if there is a formal
proof in $\mathcal{L_P}$ of $\phi$ from the set $\Gamma
\vdash_{\mathcal{L_P}} \phi$.

A Horn logic with stratified negation (Jackson \& Schulte 2008) is
similar to a set of Horn formulae. An immutable part is a
function-free clause of the form $a \leftarrow a_1 \land a_2 \land
... \land a_n$, with $n\geq 1$ where $a$ is an atom denoting the
immutable part's head and $a_1 \land a_2 \land ... \land a_n$ are
literals. i.e. positive or negative atoms, representing the body of
the Horn clauses.

Formally, a finite Horn knowledge base $KB$ (Horn or Horn logic with
stratified negation) is defined as a finite set of formulae from
language $\mathcal{L_{H}}$, and divided into three parts: an
immutable theory $KB_{I}$ is a Horn formula (head$\leftarrow$body),
which is the fixed part of the knowledge; updatable theory $KB_{U}$
is a Horn clause (head$\leftarrow$); and integrity constraint
$KB_{IC}$ representing a set of clauses (Horn logic with stratified
negation) ($\leftarrow$body).

\begin{definition} [Horn Knowledge Base] \label{D1} A Horn knowledge base, KB
is a finite set of Horn formulae from language $\mathcal{L_{H}}$,
s.t $KB=KB_{I}\cup KB_{U}\cup KB_{IC}$, $KB_{I}\cap
KB_{U}=\varnothing$ and $KB_{U}\cap KB_{IC}=\varnothing$.
\end{definition}

Horn knowledge base change deals with situations in which an agent
has to modify its beliefs about the world, usually due to new or
previously unknown incoming information, also represented as
formulae of the language. Common operations of interest in Horn
knowledge base change are the expansion of an agent's current Horn
knowledge base KB by a given Horn clause $\varphi$ (usually denoted
as KB+$\varphi$), where the basic idea is to add regardless of the
consequences, and the revision of its current beliefs by $\varphi$
(denoted as KB * $\varphi$), where the intuition is to incorporate
$\varphi$ into the current beliefs in some way while ensuring
consistency of the resulting theory at the same time. Perhaps the
most basic operation in Horn knowledge base change, like belief
change, is that of contraction AGM (Alchourron et al. 1985b), which
is intended to represent situations in which an agent has to give up
$\varphi$ from its current stock of beliefs (denoted as
KB-$\varphi$).

\begin{definition} [AGM Contraction] Let KB be a Horn knowledge base, and $\alpha$ a belief
that is present in KB. Then \emph{contraction} of KB by $\alpha$,
denoted as $KB-\alpha$, is a consistent belief set that  ignore
$\alpha$
\end{definition}

\begin{definition} [Levi Identity] \label{D2} Let - be an AGM contraction
operator for KB. A way to define a revision is by using Generalized
Levi Identity:
\begin{center}
$KB*\alpha~=~(KB-\neg\alpha)\cup\alpha$
\end{center}
\end{definition}

Then, the revision can be trivially achieved by expansion, and the
axiomatic characterization could be straightforwardly obtained from
the corresponding characterizations of the traditional models
(Alchourron, CE et al 1985). The aim of our work is not to define revision
from contraction, but rather to construct and axiomatically
characterize revision operators in a direct way.

\section{Belief Changes}

Working at an abstract philosophical level, the aim of
belief dynamics is to formalize the rationality of change, without
worrying much about the syntactic representation of belief. However,
it is not possible to completely ignore belief representation, and
works on belief dynamics assume as little necessary things as
possible about the representation of the belief. In this Section
based on Konieczny's (Konieczny 2011) work, we recall the definition
of the main belief change operators and the links between them. We
focus on the classical case, where the belief represent use
propositional logic. This is a very quick presentation of belief
change theory. For a complete introduction the reader is referred to
seminal books on belief revision ((G$\ddot{a}$rdenfors 1992 \&
1998), (Hansson 1997a), (Rott 2001)) or the recent special issue of
Journal of Philosophical Logic on the 25 Years of AGM Theory (Ferme
\& Hansson 2011).

A belief base K is a finite set of propositional formulae. In order
to simplify the notations we identify the base K with the formula.
which is the conjunction of the formulae of K
\footnotemark \footnotetext{There are two major interpretations of belief bases. One of them, supported by Dalal 1998, uses belief bases as mere expressive devices; hence if Cn(B1) =
Cn(B2) then B1 and B2 represent the same belief state and yield the same
outcome under all operations of change. The other, more common approach treats inclusion in the belief base as
epistemically significant. The belief base contains those sentences that have
an epistemic standing of their own (Ferme 2011)}

Belief revision aims at changing the status of some beliefs in the
base that are contradicted by a more reliable piece of information.
Several principles are govern this revision operation:

\begin{itemize}
 \item First is the primacy of update principle: the new piece of information has to be
accepted in the belief base after the revision. This is due to the
hypothesis that the new piece of information is more reliable than
the current beliefs \footnotemark
\footnotetext{If this is not the case one should use a
non-prioritized revision operator (Hansson 1997b)}
  \item Second is the principle of coherence: the new belief base after the revision should
be a consistent belief base. Asking the beliefs to be consistent is
a natural requirement if one wants to conduct reasoning tasks from
her belief base
  \item Third is the principle of minimal change: the new belief base after the revision
should be as close as possible from the current belief base. This
important principle aims at ensuring that no unnecessary information
(noise) is added to the beliefs during the revision process, and
that no unnecessary information is lost during the process:
information/beliefs are usually costly to obtain, we do not want to
throw them away without any serious reason.
\end{itemize}

let $\psi$ and $\mu$ be two formulae denoting respectively the
belief base, and a new piece of information. Then $\psi\circ\mu$ is
a formula representing the new belief base. An operator $\circ$ is
an AGM belief revision operator if it satisfies the following
properties.

\hspace{0.5cm}

\begin{definition} [Belief revision] \label{D3}
\hspace{0.5cm}
\begin{enumerate}
\item[]\hspace{-0.6cm}(R1) $\psi\circ\mu$ implies $\mu$.
\item[]\hspace{-0.6cm}(R2) If $\psi\land\mu$ is satisfiable, then
$\psi\circ\mu\equiv \psi\land \mu$.
\item[]\hspace{-0.6cm}(R3) If $\mu$ is satisfiable, then so is $\psi\circ\mu$.
\item[]\hspace{-0.6cm}(R4) If $\psi_{1}\equiv\psi_{2}$ and $\mu_{1}\equiv\mu_{2}$,
then $\psi_{1}\circ\mu_{1}\equiv \psi_{2}\circ \mu_{2}$.
\item[]\hspace{-0.6cm}(R5) $(\psi\circ\mu)\land \phi$ implies $\psi\circ(\mu\land
\phi)$.
\item[]\hspace{-0.6cm}(R6) If $(\psi\circ\mu)\land \phi$ is satisfiable, then
$\psi\circ(\mu\land\phi)$ implies $(\psi\circ\mu)\land\phi$.
\end{enumerate}
\end{definition}

When one works with a finite propositional language the above
postulates,~proposed by Katsuno and Mendelzon (Katsuno, H \& Mendelzon, AO. 1991b), are equivalent to AGM ((Alchourron et al 1985b) and (G$\ddot{a}$rdenfors 1998))
(R1) states that the new piece of information must be believed after
the revision. (R2) says that when there is no conflict between the
new piece of information and the current belief, the revision is
just the conjunction. (R3) says that revision always a
consistent belief base, unless the new piece of information is not
consistent. (R4) is an irrelevance of syntax condition, it states
that logically equivalent bases must lead to the same result. (R5)
and (R6) give conditions on the revision by a conjunction.

AGM also defined contraction operators, that aim to remove some
piece of information from the beliefs of the agent. These
contraction operators are closely related to revision operators,
since each contraction operator can be used to define a revision
operator, through the Levy identity and conversely each revision
operator can be used to define a contraction operator through the
Harper identity ((Alchourron et al 1985b) and (G$\ddot{a}$rdenfors
1998)). So one can study indifferently revision or contraction
operators. So we focus on revision here.

Several representation theorems, that give constructive ways to
define AGM revision/ contraction operators, have been proposed, such
as partial meet contraction/revision (Alchourron et al 1985b),
epistemic entrenchments (G$\ddot{a}$rdenfors 1992) and
(G$\ddot{a}$rdenfors \& Makinson 1988), safe contraction (Alchourron
et al 1985a), etc. In (Katsuno \& Mendelzon 1991b and 1992)
(Benferhat et al. 2005), Katsuno and Mendelzon give a representation
theorem, showing that each revision operator corresponds to a
faithful assignment, that associates to each base a plausibility
preorder on interpretations (this idea can be traced back to Grove
systems of spheres (Grove 1988)).

\begin{sloppypar}
Assume a total pre-order $\leq_{\psi}$ on W (set of possible world). That is to say, KB = $min(W,\leq_{\psi})$. As usual we take $\leq_{\psi}$ to be an ordering of plausibility on the worlds, with worlds lower down in the ordering seen as
more plausible. In what follows $\simeq_\psi$ will always denote the symmetric closure of $\leq_{\psi}$, i.e., $W1 \simeq_\psi W2$ iff both $W1\leq_{\psi} W2$ and $W2\leq_{\psi} W1$. 
\end{sloppypar}

\begin{definition} [\text{(Konieczny 2011)}] \label{D4} A faithful assignment is a function mapping each
base $\psi$ to a pre-order $\leq_{\psi}$ over interpretations such
that
\begin{enumerate}
  \item if $\omega \models\psi$ and $\omega'\models \psi,$ then
  $\omega \simeq_\psi \omega'$
  \item if $\omega \models\psi$ and $\omega'\not\models \psi,$ then
  $\omega <_\psi \omega'$
  \item if $\psi\equiv\psi'$, then $\leq_{\psi}= \leq_{\psi'}$
\end{enumerate}
\end{definition}

\begin{sloppypar}
\begin{theorem} [\text{(Katsuno \& Mendelzon 1991b and 1991b)}] \label{T1} An operator $\circ$ is an AGM revision operator
(i.e. it satisfies (R1)-(R6)) if and only if there exists a faithful
assignment that maps each base $\psi$ to a total pre-order
$\leq_{\psi}$ such that $mod(\psi\circ\mu)$= min$(mod(\mu),\leq
_{\psi})$.
\end{theorem}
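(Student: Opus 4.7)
The plan is to prove both directions of this representation theorem, which is the classical Katsuno--Mendelzon characterization.

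For the soundness direction (faithful assignment gives an AGM operator), I would define $\psi \circ \mu$ by fixing $mod(\psi \circ \mu) = \min(mod(\mu), \leq_\psi)$ and then check each postulate. Postulate (R1) is immediate from the inclusion $\min(mod(\mu), \leq_\psi) \subseteq mod(\mu)$. For (R2), observe that when $\psi \land \mu$ is satisfiable, clause~(1) of faithfulness puts all models of $\psi$ at the bottom of $\leq_\psi$, and clause~(2) places every non-model strictly above them, so taking the $\leq_\psi$-minimum inside $mod(\mu)$ recovers exactly $mod(\psi) \cap mod(\mu)$. Postulate (R3) uses that $\leq_\psi$ is a \emph{total} pre-order on a finite set, so the minimum of any non-empty $mod(\mu)$ is non-empty. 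Postulate (R4) follows from clause~(3) of faithfulness combined with the fact that $\mu_1 \equiv \mu_2$ gives $mod(\mu_1) = mod(\mu_2)$. Finally, (R5) and (R6) are the standard ``Arrow-style'' conditions on choice functions induced by a total pre-order: intersecting $\min(mod(\mu), \leq_\psi)$ with $mod(\phi)$ always yields a subset of $\min(mod(\mu \land \phi), \leq_\psi)$, with equality whenever the intersection is non-empty.

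For the completeness direction I would, given $\circ$ satisfying (R1)--(R6), construct $\leq_\psi$ from the operator. The standard definition is: $\omega_1 \leq_\psi \omega_2$ iff $\omega_1 \in mod(\psi \circ \mu_{12})$, where $\mu_{12}$ is any formula whose models are exactly $\{\omega_1, \omega_2\}$ (for $\omega_1 = \omega_2$ we set reflexivity by fiat). I would then check the three faithfulness clauses and verify that $\leq_\psi$ is a total pre-order. Reflexivity and totality are easy using (R3) applied to $\mu_{12}$: the set $\min(mod(\mu_{12}), \leq_\psi)$ is non-empty, so at least one of $\omega_1, \omega_2$ must be in $mod(\psi \circ \mu_{12})$. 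Clause~(1) and clause~(2) of faithfulness are read off from (R2): if $\omega \models \psi$ then $\psi \land \mu_{\omega,\omega'}$ is satisfiable, so $\psi \circ \mu_{\omega,\omega'} \equiv \psi \land \mu_{\omega,\omega'}$, fixing the positions of models of $\psi$. Clause~(3) is (R4).

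The main obstacle, and where I expect to spend most of the effort, is proving transitivity of $\leq_\psi$ and the central identity $mod(\psi \circ \mu) = \min(mod(\mu), \leq_\psi)$. Both rely crucially on the supplementary postulates (R5) and (R6). For transitivity, given $\omega_1 \leq_\psi \omega_2$ and $\omega_2 \leq_\psi \omega_3$, I would consider the formula $\mu_{123}$ whose models are $\{\omega_1, \omega_2, \omega_3\}$, apply (R5) and (R6) to relate $\psi \circ \mu_{123}$ with its restrictions to the two-element subsets, and derive $\omega_1 \in mod(\psi \circ \mu_{13})$, i.e.\ $\omega_1 \leq_\psi \omega_3$. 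For the identity, the inclusion $mod(\psi \circ \mu) \subseteq \min(mod(\mu), \leq_\psi)$ follows by a similar (R5)/(R6) argument: if $\omega \in mod(\psi \circ \mu)$ and $\omega' \in mod(\mu)$, applying the postulates to $\mu \land \mu_{\omega,\omega'}$ forces $\omega \leq_\psi \omega'$. The reverse inclusion uses (R5) in the other direction together with the non-emptiness guaranteed by (R3).

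Finally I would note the indispensable role of the finiteness of the propositional language: the reduction to two- and three-element restrictions $\mu_{ij}$, $\mu_{ijk}$, as well as the existence of minima of $\leq_\psi$ on arbitrary $mod(\mu)$, all tacitly use that there are only finitely many interpretations. Without this hypothesis a limit assumption would be needed to recover totality and the representation identity.
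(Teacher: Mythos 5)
Your proposal is a correct and essentially complete sketch of the classical Katsuno--Mendelzon representation proof, but it is worth noting that the paper itself does not actually prove this theorem: its entire ``proof'' is the single line ``Follows from the definition of faithful assignment and the result of Konieczny 2011,'' i.e.\ a citation to the original source. So you have supplied an argument where the paper supplies none. Your soundness direction (checking (R1)--(R6) from the three faithfulness clauses plus totality and finiteness) is the standard one and each step is right; your completeness direction correctly identifies the canonical construction $\omega_1 \leq_\psi \omega_2$ iff $\omega_1 \in mod(\psi \circ \mu_{12})$, correctly isolates transitivity and the identity $mod(\psi\circ\mu)=\min(mod(\mu),\leq_\psi)$ as the places where (R5) and (R6) do the real work via two- and three-element restrictions, and correctly flags the finiteness of the language as the hypothesis that makes minima exist and makes the pointwise reduction legitimate. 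The only caveat is that what you give is a plan rather than a worked proof --- the (R5)/(R6) manipulations for transitivity and for the reverse inclusion of the representation identity are exactly the steps where the bookkeeping is delicate (one must be careful with the case $\psi$ inconsistent, and with interpretations outside $mod(\mu)$) --- but the decomposition you propose is the correct one and each subclaim you assert is true.
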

\end{sloppypar}

\begin{proof} Follows from the definition \ref{D4} and the result of Konieczny 2011.
\end{proof}

One of the main problems of this characterization of belief revision
is that it does not constrain the operators enough for ensuring a
good behavior when we do iteratively several revisions. So one needs
to add more postulates and to represent the beliefs of the agent
with a more complex structure than a simple belief base. In
(Darwiche \& Pearl 1997) Darwiche and Pearl proposed a convincing
extension of AGM revision. This proposal have improved by an
additional condition in ((Booth \& Meyer 2006) and (Jin \&
Thielscher 2007), (Konieczny, et al. 2010) and (Konieczny \&
Pino Prez 2008)) define improvement operators that are a
generalization of iterated revision operators.

Whereas belief revision should be used to improve the beliefs by
incorporating more reliable pieces of evidence, belief update
operators aim at maintaining the belief base up-to-date, by allowing
modifications of the base according to a reported change in the
world. This distinction between revision and update was made clear
in  Katsuno $\&$ Mendelzon 1991 and 1992, where  Katsuno $\&$ Mendelzon 1991 proposed postulates for belief update.

\begin{definition} [Belief Update] \label{D5}

An operator $\diamond$ is a (partial) update operator if it
satisfied the properties (U1)-(U8). It is a total update operator if
it satisfies the property (U1)-(U5),(U8),(U9).\\
\item[]\begin{enumerate}
\item[]\hspace{-0.6cm}(U1) $\psi\diamond\mu$ implies $\mu$.
\item[]\hspace{-0.6cm}(U2) $if \psi$ implies $\mu$, then $\psi\diamond\mu \equiv
\psi$
\item[]\hspace{-0.6cm}(U3) $if \psi$ not implies $\bot$ and $\mu$ not implies
$\bot$ then $\psi\diamond\mu$ not implies $\bot$
\item[]\hspace{-0.6cm}(U4) If $\psi_{1}\equiv\psi_{2}$ and $\mu_{1}\equiv\mu_{2}$,
then $\psi_{1}\diamond\mu_{1}\equiv \psi_{2}\diamond\mu_{2}$.
\item[]\hspace{-0.6cm}(U5) If $(\psi\diamond\mu)\land \phi$ implies
$\psi\diamond(\mu\land\phi)$
\item[]\hspace{-0.6cm}(U6) If $(\psi\diamond\mu_{1})$ implies $\mu_{2}$ and $(\psi\diamond\mu_{2})$ implies $\mu_{2}$
then $\psi\diamond\mu_{1}\equiv\psi\diamond\mu_{2}$
\item []\hspace{-0.6cm}(U7) If $\psi$ is a complete formula, then
$(\psi\diamond\mu_{1})\land(\psi\diamond\mu_{2})$ implies
$\psi\diamond(\mu_{1}\vee\mu_{2})$
\item []\hspace{-0.6cm}(U8)
$(\psi_{1}\vee\psi_{2})\diamond\mu\equiv(\psi_{1}\diamond\mu)\vee(\psi_{2}\diamond\mu)$
\item[]\hspace{-0.6cm}(U9) If $\psi$ is a complete formula and
$(\psi\diamond\mu)\land\psi$ not implies $\perp$ then
$\psi\diamond(\mu\land\psi)$ implies $(\psi\diamond\mu)\land\psi$
\end{enumerate}
\end{definition}

Most of these postulates are close to the ones of revision. The main
differences lie in postulate (U2) that is much weaker than (R2):
conversely to revision, even if the new piece of information is
consistent with the belief base, the result is generally not simply
the conjunction. This illustrates the fact that revision can be seen
as a selection process of the most plausible worlds of the current
beliefs with respect to the new piece information, whereas update is
a transition process: each world of the current beliefs have to be
translated to the closest world allowed by the new piece of
information. This world-by-world treatment is expressed by postulate
(U8).

As for revision, there is a representation theorem in terms of
faithful assignment.

\begin{definition} [\text{[25]}] \label{D6} A faithful assignment is a function mapping each
interpretation $\omega$ to a pre-order $\leq_{\omega}$ over
interpretations such that if $\omega \neq \omega'$, then $\omega
<_{\omega} \omega'$
\end{definition}

\begin{theorem} \label{T2} An update operator $\diamond$ satisfies (U1)-(U8) if
and only if there exists a faithful assignment that maps each
interpretation $\omega$ to a partial pre-order $\leq_{\omega}$ such
that $mod(\psi\diamond\mu)$=$\bigcup_{\omega\models\psi}$
min$(mod(\mu),\leq_{\omega})$.
\end{theorem}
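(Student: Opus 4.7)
The plan is to establish the two implications separately, exploiting the fact that postulate (U8) reduces the general case to the case of a complete formula (a single interpretation).

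For the easy direction (faithful assignment $\Rightarrow$ (U1)--(U8)), I would simply check the eight postulates against the semantic definition $mod(\psi\diamond\mu)=\bigcup_{\omega\models\psi}\min(mod(\mu),\leq_{\omega})$. (U1), (U4), (U8) are immediate from the set-theoretic form of the definition; (U2) uses the faithfulness clause $\omega<_\omega\omega'$ for $\omega\neq\omega'$, which forces $\min(mod(\mu),\leq_\omega)=\{\omega\}$ whenever $\omega\models\mu$; (U3) uses that $\min(mod(\mu),\leq_\omega)$ is non-empty when $mod(\mu)\neq\emptyset$ and $\leq_\omega$ is a pre-order on a finite domain; (U5)--(U7) follow from standard properties of minima of pre-orders under restriction to a subset $mod(\phi)\subseteq mod(\mu)$.

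For the hard direction ((U1)--(U8) $\Rightarrow$ existence of a faithful assignment), I would construct $\leq_\omega$ for each interpretation $\omega$ directly from the operator. Let $\mathit{form}_\omega$ denote a formula whose unique model is $\omega$, and define
\[
\omega_1\leq_\omega\omega_2\ \ \text{iff}\ \ \omega_1\models \mathit{form}_\omega\diamond(\mathit{form}_{\omega_1}\vee\mathit{form}_{\omega_2}).
\]
Reflexivity follows from (U1); transitivity requires combining (U5), (U6) and (U7) applied to the complete formula $\mathit{form}_\omega$, pushing intersections of update results through disjunctions of the triggering information. Faithfulness ($\omega<_\omega\omega'$ when $\omega\neq\omega'$) is a direct consequence of (U2): since $\mathit{form}_\omega$ implies $\mathit{form}_\omega\vee\mathit{form}_{\omega'}$, we have $\mathit{form}_\omega\diamond(\mathit{form}_\omega\vee\mathit{form}_{\omega'})\equiv\mathit{form}_\omega$, giving $\omega\leq_\omega\omega'$ and $\omega'\not\leq_\omega\omega$.

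It remains to verify $mod(\mathit{form}_\omega\diamond\mu)=\min(mod(\mu),\leq_\omega)$ for arbitrary $\mu$, and then lift to arbitrary $\psi$ via (U8) by writing $\psi\equiv\bigvee_{\omega\models\psi}\mathit{form}_\omega$. The lifting step is mechanical, but the single-world equality is where the work lies. The inclusion $mod(\mathit{form}_\omega\diamond\mu)\subseteq\min(mod(\mu),\leq_\omega)$ uses (U5) and (U6) to compare $\mathit{form}_\omega\diamond\mu$ with $\mathit{form}_\omega\diamond(\mathit{form}_{\omega_1}\vee\mathit{form}_{\omega_2})$ for candidate models $\omega_1,\omega_2$ of $\mu$; the reverse inclusion uses (U7) to aggregate the pairwise minima into a model of $\mathit{form}_\omega\diamond\mu$, together with (U1) to ensure all results land in $mod(\mu)$.

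The main obstacle will be establishing transitivity of $\leq_\omega$ and the reverse inclusion in the single-world representation, since both require carefully orchestrating (U5), (U6) and (U7) to bridge between pairwise updates $\mathit{form}_\omega\diamond(\mathit{form}_{\omega_i}\vee\mathit{form}_{\omega_j})$ and the joint update $\mathit{form}_\omega\diamond\mu$. Everything else is bookkeeping with the definitions, and the theorem then follows directly once the constructed assignment is shown to satisfy Definition~\ref{D6}.
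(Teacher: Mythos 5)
The paper does not actually prove this theorem: its ``proof'' is the single line ``Follows from the observation and the result of Konieczny 2011,'' i.e., it defers entirely to the Katsuno--Mendelzon representation theorem for update as reported in the cited survey. So you are supplying an argument where the paper supplies only a citation, and your overall strategy (verify the postulates for the easy direction; use (U8) to reduce to complete formulae and build $\leq_\omega$ from pairwise updates for the hard direction) is indeed the standard Katsuno--Mendelzon proof skeleton.

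There is, however, a genuine gap in your construction of $\leq_\omega$. You define $\omega_1\leq_\omega\omega_2$ iff $\omega_1\models \mathit{form}_\omega\diamond(\mathit{form}_{\omega_1}\vee\mathit{form}_{\omega_2})$. By (U1) and (U3), $mod(\mathit{form}_\omega\diamond(\mathit{form}_{\omega_1}\vee\mathit{form}_{\omega_2}))$ is a nonempty subset of $\{\omega_1,\omega_2\}$, so your relation is automatically \emph{total}: for every pair at least one direction holds. That is the right construction for Theorem~\ref{T4} (postulates (U1)--(U5),(U8),(U9), total preorders), but Theorem~\ref{T2} concerns only (U1)--(U8) and \emph{partial} preorders, and without (U9) your relation need not be transitive. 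Concretely, take any faithful assignment to partial preorders in which, relative to $\omega$, one has $\omega_1<_\omega\omega_3$ while $\omega_2$ is incomparable to both; the induced operator satisfies (U1)--(U8) by the easy direction, yet $mod(\mathit{form}_\omega\diamond\mathit{form}(\{\omega_1,\omega_2\}))=\{\omega_1,\omega_2\}$, $mod(\mathit{form}_\omega\diamond\mathit{form}(\{\omega_2,\omega_3\}))=\{\omega_2,\omega_3\}$ and $mod(\mathit{form}_\omega\diamond\mathit{form}(\{\omega_1,\omega_3\}))=\{\omega_1\}$, so under your definition $\omega_3\leq_\omega\omega_2$ and $\omega_2\leq_\omega\omega_1$ but $\omega_3\not\leq_\omega\omega_1$. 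No amount of orchestration of (U5)--(U7) will repair this, because the statement being proved is false for that relation. The fix is to define only the strict part, $\omega_1<_\omega\omega_2$ iff $\omega_1\neq\omega_2$ and $mod(\mathit{form}_\omega\diamond(\mathit{form}_{\omega_1}\vee\mathit{form}_{\omega_2}))=\{\omega_1\}$, take $\leq_\omega$ to be its reflexive closure (leaving pairs with a two-element update result incomparable), and then prove transitivity of $<_\omega$ and the single-world identity $mod(\mathit{form}_\omega\diamond\mu)=\min(mod(\mu),\leq_\omega)$ from (U5)--(U7). Your faithfulness argument via (U2) and your lifting via (U8) are correct and carry over unchanged to the repaired order.
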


\begin{proof} Follows from the observation and the
result of Konieczny 2011.
\end{proof}

But there is also a second theorem corresponding to total
pre-orders.

\begin{theorem} \label{T4} An update operator $\diamond$ satisfies (U1)-(U5),
(U8) and (U9) if and only if there exists a faithful assignment that
maps each interpretation $\omega$ to a total preorder
$\leq_{\omega}$ such that
$mod(\psi\diamond\mu)$=$\bigcup_{\omega\models\psi}$
min$(mod(\mu),\leq_{\omega})$.
\end{theorem}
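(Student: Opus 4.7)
The plan is to mirror the proof architecture of Theorem~\ref{T2}, replacing the role of postulates (U6) and (U7) by (U9) so that the locally associated pre-orders are forced to be total rather than merely partial. I would split the argument into the two directions in the usual Katsuno--Mendelzon fashion. For the ($\Leftarrow$) direction, assume a faithful assignment sending each world $\omega$ to a total pre-order $\leq_\omega$, declare $mod(\psi\diamond\mu)=\bigcup_{\omega\models\psi}\mathrm{min}(mod(\mu),\leq_\omega)$, and verify the postulates in turn. Postulates (U1), (U4), (U5), (U8) fall out of generic properties of the $\mathrm{min}$ operator and the fact that $\leq_\omega$ depends only on the semantic $\omega$; (U2) uses the faithfulness clause $\omega<_\omega\omega'$ for $\omega\neq\omega'$, which pins $\min$ to $\{\omega\}$ whenever $\omega\models\mu$; (U3) uses totality of $\leq_\omega$ so that every nonempty set of $\mu$-models has a nonempty minimum. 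The real content is (U9): when $\psi=\phi_\omega$ is complete and $(\psi\diamond\mu)\wedge\psi$ is satisfiable, we must have $\omega\in\mathrm{min}(mod(\mu),\leq_\omega)$, and totality then forces $\mathrm{min}(mod(\mu\wedge\phi_\omega),\leq_\omega)=\{\omega\}$, giving the desired implication.

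For the ($\Rightarrow$) direction, start from an operator $\diamond$ satisfying (U1)--(U5), (U8), (U9), and for each world $\omega$ (identified with its complete formula $\phi_\omega$) define the candidate relation
\[ \omega_1\leq_\omega\omega_2 \ \text{ iff }\ \omega_1\in mod\bigl(\phi_\omega\diamond(\phi_{\omega_1}\vee\phi_{\omega_2})\bigr). \]
Reflexivity follows from (U1) combined with (U3); the two faithfulness clauses of Definition~\ref{D6} follow from (U2) applied respectively to $\phi_\omega\diamond\phi_\omega$ and to the pair $(\omega,\omega')$. Transitivity is obtained by applying (U5) and (U8) to decompose $\phi_\omega\diamond(\phi_{\omega_1}\vee\phi_{\omega_2}\vee\phi_{\omega_3})$ and tracing which worlds survive as minima across the three pairwise restrictions. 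For totality, by (U3) and (U1) the set $mod(\phi_\omega\diamond(\phi_{\omega_1}\vee\phi_{\omega_2}))$ is a nonempty subset of $\{\omega_1,\omega_2\}$, so at least one of $\omega_1\leq_\omega\omega_2$ and $\omega_2\leq_\omega\omega_1$ must hold.

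The remaining task is to establish $mod(\psi\diamond\mu)=\bigcup_{\omega\models\psi}\mathrm{min}(mod(\mu),\leq_\omega)$. By (U8) it suffices to treat the case $\psi=\phi_\omega$ complete, and then each inclusion is handled separately: $\subseteq$ uses (U5) together with the definition of $\leq_\omega$ applied to each pair $(\omega_1,\omega_2)$ with $\omega_2\models\mu$, while $\supseteq$ uses (U9) to transport minimality of $\omega_1$ in $\phi_\omega\diamond(\phi_{\omega_1}\vee\phi_{\omega_2})$ up to minimality of $\omega_1$ in the full model set of $\mu$. The step I expect to be the main obstacle is precisely this last transport: pairwise comparability of $\omega_1$ and $\omega_2$ is only directly accessible through the operator applied to $\phi_{\omega_1}\vee\phi_{\omega_2}$, and (U9) is the sole postulate linking that pairwise information to the operator's behaviour on an arbitrary $\mu$. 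This is exactly the gap that the move from partial to total pre-orders, and hence from (U6)--(U7) to (U9), is designed to close, and the verification that the choice of $\leq_\omega$ defined above is consistent with $\diamond$ on every $\mu$ (not just on pair-disjunctions) is where the delicate use of (U9), (U5) and (U8) together becomes unavoidable.
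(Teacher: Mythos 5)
Your proposal is essentially correct, but it bears little resemblance to what the paper actually does: the paper's entire proof of Theorem~\ref{T4} is the single line ``Follows from the observation and the result of Konieczny 2011'', i.e.\ it defers the representation theorem wholesale to the literature (ultimately to Katsuno and Mendelzon's update paper, which Konieczny's survey restates). What you have written is a reconstruction of that classical Katsuno--Mendelzon argument: the ($\Leftarrow$) direction by direct verification of the postulates, and the ($\Rightarrow$) direction by defining $\omega_1\leq_\omega\omega_2$ through the operator applied to the two-world disjunction $\phi_{\omega_1}\vee\phi_{\omega_2}$, checking reflexivity, totality, faithfulness and transitivity, and then proving the model equation by reducing to complete $\psi$ via (U8) and handling the two inclusions with (U5) and (U9) respectively. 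You correctly isolate the two places where this theorem differs from Theorem~\ref{T2}: totality of $\leq_\omega$ is what makes (U9) verifiable in the ($\Leftarrow$) direction, and (U9) together with totality is what closes the $\supseteq$ inclusion in the ($\Rightarrow$) direction (pick any $\omega_2\in mod(\phi_\omega\diamond\mu)$, use totality and minimality of $\omega_1$ to get $\omega_1\leq_\omega\omega_2$, then use (U9) to pull $\omega_1$ into $mod(\phi_\omega\diamond\mu)$). The paper's citation buys brevity; your reconstruction supplies the actual content.

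One caveat: the transport step you flag as delicate genuinely requires (U9) in its Katsuno--Mendelzon form, namely ``if $\psi$ is complete and $(\psi\diamond\mu_1)\wedge\mu_2$ is satisfiable, then $\psi\diamond(\mu_1\wedge\mu_2)$ implies $(\psi\diamond\mu_1)\wedge\mu_2$'' for an \emph{arbitrary} second formula $\mu_2$. The paper's Definition~\ref{D5} states (U9) only with $\psi$ itself in the role of $\mu_2$, and that weakened form cannot be instantiated with $\mu_2=\phi_{\omega_1}\vee\phi_{\omega_2}$ as your $\supseteq$ argument needs. Your proof is therefore correct against the standard postulate set, but you are silently repairing the paper's statement of (U9) in order to make the ($\Rightarrow$) direction go through; this should be made explicit.
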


\begin{proof} Follows from the observation and the
result of Konieczny 2011.
\end{proof}

\begin{definition}[\text{[3]}] \label{D8} Let M=(W.w) be a $K$-model and $\mu$ a formula.
A k-model $M'=(W',w')$ is called a possible resulting $k$-model
after updating M with $\mu$ if and only if the following conditions
hold:
\begin{enumerate}
  \item M'$\models \mu$;
  \item there does not exist another $k$-model $M''=(W'',w'')$ such
  that $M'' \models \mu$ and $M''<_M M'$.
\end{enumerate}
The set of all possible resulting k-models after updating M with
$\mu$ as Res(M,$\mu$).
\end{definition}

\begin{theorem} \label{T5} Knowledge update operator $\diamond$ defined in definition \ref{D8} satisfies
(U1)-(U9).
\end{theorem}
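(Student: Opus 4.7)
My plan is to verify each of the nine postulates in turn, using the definition of $\diamond$ via $Res(M,\mu)$ from Definition \ref{D8} and treating $\psi\diamond\mu$ semantically as the disjunction over all $M\models\psi$ of the resulting $k$-models in $Res(M,\mu)$. The guiding observation is that Theorems \ref{T2} and \ref{T4} already establish that pointwise minimization under a faithful assignment yields operators obeying (U1)--(U8) (and (U9) in the total case); so the real work is to show that the relation $<_M$ underlying Definition \ref{D8} is itself a faithful assignment in the sense of Definition \ref{D6}, at which point (U1)--(U8) come by appeal to Theorem \ref{T2}, and the extra content needed for (U9) comes from the totality of the induced preorder on $k$-models.

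First I would dispose of the easy postulates. (U1) is immediate from condition~1 of Definition \ref{D8}. (U2) follows because, when $M\models\mu$, the reflexivity of $<_M$ at $M$ makes $M$ itself the unique $<_M$-minimum of the models of $\mu$, so $Res(M,\mu)=\{M\}$. (U4) is a consequence of the purely semantic formulation of $Res$: if $\psi_1\equiv\psi_2$ and $\mu_1\equiv\mu_2$, then the set of $k$-models and the constraint ``$\models\mu$'' are unchanged. (U3) requires an argument that $Res(M,\mu)$ is nonempty whenever $\mu$ is satisfiable; I would establish this from well-foundedness of $<_M$ restricted to the (nonempty) set of $k$-models of $\mu$, so that minima exist.

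Next I would handle the compound postulates. For (U5), I would show that if $M'\in Res(M,\mu)$ and $M'\models\phi$, then $M'\in Res(M,\mu\wedge\phi)$: any strictly $<_M$-smaller $k$-model of $\mu\wedge\phi$ would also be a strictly smaller $k$-model of $\mu$, contradicting minimality of $M'$ for $\mu$. (U6) follows by a symmetric minimization argument: if every $M'\in Res(M,\mu_1)$ satisfies $\mu_2$ and vice versa, then the $<_M$-minima of $mod(\mu_1)$ and of $mod(\mu_2)$ coincide. (U8) is automatic because $Res$ is defined pointwise in the first argument: $Res((M_1\vee M_2),\mu)=Res(M_1,\mu)\cup Res(M_2,\mu)$, so $\diamond$ distributes over $\vee$ on the left.

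The main obstacle will be (U7) and (U9), both of which hinge on the hypothesis that $\psi$ is complete, i.e.\ has a unique $k$-model $M$. For (U7), given a single $M$, I would argue that any $k$-model $M'$ lying in both $Res(M,\mu_1)$ and $Res(M,\mu_2)$ is a $<_M$-minimum of $mod(\mu_1)\cup mod(\mu_2)=mod(\mu_1\vee\mu_2)$: a strictly smaller model of $\mu_1\vee\mu_2$ would witness either the non-minimality of $M'$ in $Res(M,\mu_1)$ or in $Res(M,\mu_2)$. For (U9), under the completeness hypothesis and the additional assumption that $(\psi\diamond\mu)\wedge\psi$ is satisfiable, I would show that the unique $M$ already lies in $Res(M,\mu)$, so $Res(M,\mu\wedge\psi)=\{M\}$ and the implication reduces to what (U2) guarantees; this step requires that the preorder $<_M$ be total on the relevant $k$-models, which is exactly the strengthened ingredient behind Theorem \ref{T4}. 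I expect (U9) to be the most delicate, because it is precisely where the partial-versus-total preorder distinction of Konieczny bites, and it is where I would invest most of the verification effort.
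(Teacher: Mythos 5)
Your overall strategy---reduce (U1)--(U8) to Theorem \ref{T2} by exhibiting the assignment of $<_M$ to each $k$-model $M$ as a faithful assignment, then treat (U9) separately via totality---is the standard one, and it is genuinely more informative than what the paper offers: the paper's entire proof of Theorem \ref{T5} is the sentence ``Follows from the definition \ref{D8} and the result of Baral \& Zhang 2005,'' i.e.\ a bare citation. However, there is a gap at exactly the point you yourself flag as ``the real work.'' Definition \ref{D8} never defines the relation $<_M$; it only quantifies over it. Every postulate you verify silently imports a property of that undefined order. (U2) needs $M <_M M'$ for every $M' \neq M$, which is the faithfulness condition of Definition \ref{D6}; note that this is \emph{not} reflexivity, which only gives $M \leq_M M$ and does not exclude other minima, so your justification of (U2) as written is incorrect even granting the missing definition. (U3) needs well-foundedness (a limit assumption) of $<_M$ on $mod(\mu)$, which is automatic in the finite propositional case but not obviously so for $k$-models. (U9), on your own account, needs totality. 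None of these properties can be extracted from the paper; they are precisely the facts established for the concrete order (defined by comparing set differences of possible-world sets against the original model) in Baral \& Zhang 2005. Your argument is therefore conditional on an unproved premise, and the theorem is not established until that premise is discharged by actually writing down $<_M$ and checking faithfulness, well-foundedness and totality for it.

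The parts of the sketch that do not depend on the missing definition are fine: the minimality arguments for (U5), (U6) and (U7), and the pointwise distribution argument for (U8), are correct for any pointwise-minimization operator, and your reduction of (U9) under the completeness hypothesis---satisfiability of $(\psi\diamond\mu)\wedge\psi$ forces $M \in Res(M,\mu)$, whence $Res(M,\mu\wedge\psi)=\{M\}$---is also sound. If you supply the definition of $<_M$ from Baral \& Zhang and verify the three order-theoretic properties above, the proof closes; without that step you have reproduced the shape of the argument but not its substance, which is essentially the same deficiency as the paper's one-line citation, only laid out more explicitly.
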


\begin{proof} Follows from the definition \ref{D8} and
the result of Baral $\&$ Zhang 2005.
\end{proof}

\begin{note}
Horn knowledge base is a subset of belief base, $KB\subseteq B$, so
everything that follows for belief base, also follows for Horn
Knowledge base.
\end{note}


\section{Horn Knowledge base dynamics}

In the AGM framework, a belief set is represented by a
deductively closed set of propositional formulae. While such sets
are infinite, they can't always be finitely representable. However,
working with deductively closed, infinite belief sets is not very
attractive from a computational point of view. The AGM approach to
belief dynamics is very attractive in its capture the rationality
of change, but it is not always easy to implement either Horn
formula based partial meet revision or generalized kernel revision.
In artificial intelligence and database applications, what is
required is to represent the knowledge using a finite Horn knowledge
base. Further, a certain part of the knowledge is treated as
immutable and should not be changed.

AGM (Alchourron et al. 1985b) proposed a formal framework in which
revision is interpreted as belief change. In this section, we focus
on the Horn knowledge base revision and propose new rationality
postulates that are adopted from AGM postulates for revision.

\begin{definition} \label{D9} Let KB be a Horn knowledge base with an immutable part
$KB_{I}$. Let $\alpha$ and $\beta$ be any two Horn clauses from
$\mathcal{L_H}$. Then, $\alpha$ and $\beta$ are said to be
\emph{KB-equivalent} iff the following condition is satisfied:
$\forall$ set of Horn clauses E $\subseteq \mathcal{L_H}$ and IC:
$KB_{I}\cup E \cup IC \vdash\alpha$ iff $KB_{I}\cup E \cup IC\vdash\beta$.
\end{definition}

These postulates stem from three main principles: the new item of
information has to appear in the revised Horn knowledge base, the
revised base has to be consistent and revision operation has to
change the least possible beliefs. Now we consider the revision of a
Horn clause $\alpha$ with respect to KB, written as $KB*\alpha$. The rationality
postulates for revising $\alpha$ from KB can be formulated.

\vspace{1cm}
\newpage

\begin{definition} [Rationality postulates for Horn knowledge base
revision] \label{10}

\hspace{0.3cm}

\begin{enumerate}
\item[]\hspace{-0.6cm}(KB*1)\hspace{0.2cm}  \emph{Closure:} $KB*\alpha$ is a Horn knowledge base.
\item[]\hspace{-0.6cm}(KB*2)\hspace{0.2cm}  \emph{Weak Success:} if $\alpha$ is consistent with $KB_{I}\cup KB_{IC}$ then
$\alpha \subseteq KB*\alpha$.
\item[]\hspace{-0.6cm}(KB*3.1)  \emph{Inclusion:} $KB*\alpha\subseteq
Cn(KB\cup\alpha)$.
\item[]\hspace{-0.6cm}(KB*3.2)  \emph{Immutable-inclusion:} $KB_{I}\subseteq
Cn(KB*\alpha)$.
\item[]\hspace{-0.6cm}(KB*4.1)  \emph{Vacuity 1:} if $\alpha$ is
inconsistent with $KB_{I}\cup KB_{IC}$ then $KB*\alpha=KB$.
\item[]\hspace{-0.6cm}(KB*4.2)  \emph{Vacuity 2:} if $KB\cup \alpha \nvdash \perp$ then $KB*\alpha$ = $KB \cup
\alpha$.
\item[]\hspace{-0.6cm}(KB*5)\hspace{0.3cm}   \emph{Consistency:} if $\alpha$ is consistent with $KB_{I}\cup KB_{IC}$
 then $KB*\alpha$ consistent with $KB_{I}\cup KB_{IC}$.
\item[]\hspace{-0.6cm}(KB*6)  \hspace{0.2cm} \emph{Preservation:} If $\alpha$ and $\beta$ are
KB-equivalent, then $KB*\alpha \leftrightarrow KB*\beta$.
\item[]\hspace{-0.6cm}(KB*7.1)  \emph{Strong relevance:} $KB*\alpha\vdash \alpha$ if $KB_{I}\nvdash\neg\alpha$
\item[]\hspace{-0.6cm}(KB*7.2)  \emph{Relevance:} If $\beta\in KB\backslash KB*\alpha$,
then there is a set $KB'$ such that\\ $KB*\alpha\subseteq
KB'\subseteq KB\cup\alpha$, $KB'$ is consistent $KB_{I}\cup KB_{IC}$
with $\alpha$, but $KB' \cup \{\beta\}$ is inconsistent $KB_{I}\cup
KB_{IC}$ with $\alpha$.
\item[]\hspace{-0.6cm}(KB*7.3)  \emph{Weak relevance:} If $\beta\in KB\backslash KB*\alpha$,
then there is a set $KB'$ such that $KB'\subseteq KB\cup\alpha$,
$KB'$ is consistent $KB_{I}\cup KB_{IC}$ with $\alpha$, but $KB'
\cup \{\beta\}$ is inconsistent $KB_{I}\cup KB_{IC}$ with $\alpha$.
\end{enumerate}
\end{definition}

To revise KB by $\alpha$, only those informations that are
relevant to $\alpha$ in some sense can be added (as example in the
introduction illustrates). $(KB*7.1)$  is very strong axiom allowing
only minimum changes, and certain rational revision can not be
carried out. So, relaxing this condition (example with more details
can be found in (Aravindan 1995, Hansson, SO 1997a, Ferme \& Hansson 2011 and Falappa, MA et al 2012), this can be weakened to
relevance. $(KB*7.2)$ is relevance policy that still can not permit
rational revisions, so we need to go next step. With $(KB*7.3)$ the
relevance axiom is further weakened and it is referred to as
"core-retainment".

\subsection{Revision function}

Suppose that we want to revise Horn knowledge base KB with respect
to a single clause without using negation. We may construct revision
using generalizing techniques from classical belief (base) revision
(Falappa et al. 2012). Partial meet revision operator is syntax
dependent and based on the foundational approach. In order to define
it, first we need to define $\alpha$-consistent-remainders.

\begin{definition}[Remainder Set] \label{D11} Let a Horn knowledge base KB be a set of Horn formulae, where
$\alpha$ is Horn clause. The $\alpha$-consistent-remainders of KB,
noted by $KB\downarrow_{\top}\alpha$, is the set of $KB'$ such that:

\begin{enumerate}
  \item $KB'\subseteq KB$, ensuring that $KB_{I}\subseteq KB'$ and $KB_{IC}\subseteq
  KB'$.
  \item $KB'\cup\alpha$ is consistent with $KB_{I}\cup KB_{IC}$.
  \item For any KB'' such that $KB'\subset KB''\subseteq KB$ then
  $KB''\cup\alpha$ is inconsistent with $KB_{I}\cup KB_{IC}$.

\end{enumerate}
\end{definition}

That is, $KB\downarrow_{\top}\alpha$ is the set of maximal
KB-subsets consistent with $\alpha$.

\begin{example} Suppose that KB=\{$KB_{I}: p\leftarrow a\wedge b,p\leftarrow
a,q\leftarrow a\wedge b;~\raggedright{KB_{U}: a\leftarrow,} \newline
b\leftarrow;~KB_{IC}: \emptyset$\} and $\alpha$= $\leftarrow p$.
Then we have that:

-~$KB\downarrow_{\top}\alpha$= \{\{$p\leftarrow a\wedge b$\},
\{$p\leftarrow a$\},\{$\leftarrow a$\}, \{$\leftarrow b$\}\}.
\end{example}

Revision by a Horn clause is based on the concept of a
$\alpha$-consistent-remainders. In order to complete the
construction, we must define a selection function that selects
consistent remainders.

\subsection{Principle of minimal change}

Let a Horn knowledge base KB be a set of Horn formulae and $\psi$ is
a Horn clause such that $KB=\{\phi~|~\psi\vdash\phi\}$ is derived by
$\phi$. Now we consider the revision of a Horn clause $\alpha$ wrt
KB, that is $KB*\alpha$.

The principle of minimal change (PMC) leads to the definition of
orders between interpretations. Let $\mathcal{I}$ be the set of all
the interpretations and $Mod(\psi)$ be the set of models of $\psi$.
A pre-order on $\mathcal{I}$, denoted $\leq_{\psi}$ is linked with
$\psi$. The relation $<_{\psi}$ is defined from $\leq_{\psi}$ as
usual:
$$I<_{\psi}I'~\text{iff}~I\leq_{\psi}I'~\text{and}~I'\nleq_{\psi}I.$$

The pre-order $\leq_{\psi}$ is \it faithful \rm to $\psi$ if it
verifies the following conditions:

\begin{enumerate}
\item[1)] If $I,I'\in Mod(\psi)$ then $I<_{\psi}I'$ does not hold;
\item[2)] If $I\in Mod(\psi)$ and $I'\notin Mod(\psi)$ then
$I<_{\psi}I'$ holds;
\item[3)] if $\psi\equiv \phi$ then $\leq_{\psi}=\leq_{\phi}$.
\end{enumerate}

A minimal interpretation may thus be defined by:

$\mathcal{M}\subseteq \mathcal{I}$, the set of minimal
interpretations in $\mathcal{M}$ according to $\leq_{\psi}$ is
denoted $Min(\mathcal{M},\leq_{\psi})$. And $I$ is minimal in
$\mathcal{M}$ according to $\leq_{\psi}$, if $I\in\mathcal{M}$ and
there is no $I'\in \mathcal{M}$ such that $I'<_{\psi}I$.

Revision operation * satisfies the postulates (KB*1) to (KB*6) and
(KB*7.3) if and only if there exists a total pre-order $\leq_{\psi}$
such that: \begin{equation*} Mod(\psi *
\phi)=Min(Mod(\phi),\leq_{\psi}).
\end{equation*}

\begin{definition} [Selection function] \label{D12} Let KB be a Horn knowledge
base. $\gamma$ is a selection function for KB iff for all Horn
clauses $\alpha$

\begin{enumerate}
  \item If $KB\downarrow_{\top} \alpha \neq {\emptyset}$ then ${\emptyset} \neq
\gamma(KB\downarrow_{\top} \alpha)\subseteq KB\downarrow_{\top}
\alpha$.

  \item If $KB\downarrow_{\top} \alpha = {\emptyset}$ then $\gamma(KB\downarrow_{\top} \alpha) = \{KB\}$
\end{enumerate}
\end{definition}

\begin{observation} \label{z2}
Let KB, KB' be an Horn knowledge base, KB' be consistent. Suppose that $\alpha \in KB$ and $\alpha \in KB'$. Then $\alpha \in X$ for all $X \in KB\downarrow_{\top} KB'$ and, therefore, $\alpha\in\bigcap(KB\downarrow_{\top} KB')$.\\
\end{observation}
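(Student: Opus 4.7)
The plan is to peel open the definition of the remainder set, extended from a single clause to the set $KB'$, and then use a short maximality argument. Specifically, I read $X \in KB \downarrow_{\top} KB'$ as saying: $X \subseteq KB$ (with $KB_I, KB_{IC} \subseteq X$), $X \cup KB'$ is consistent with $KB_I \cup KB_{IC}$, and no proper superset of $X$ inside $KB$ has this property.

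First I would fix an arbitrary $X \in KB \downarrow_{\top} KB'$ and suppose, for contradiction, that $\alpha \notin X$. Because $\alpha \in KB$, the set $X \cup \{\alpha\}$ is a subset of $KB$ that strictly contains $X$. The key observation is that adjoining $\alpha$ is redundant with respect to $KB'$: since $\alpha \in KB'$, we have the set-theoretic equality
\[
(X \cup \{\alpha\}) \cup KB' \;=\; X \cup KB',
\]
so the consistency of $X \cup KB'$ with $KB_I \cup KB_{IC}$ (guaranteed by clause~2 of Definition~\ref{D11}) transfers verbatim to $X \cup \{\alpha\}$. This directly violates the maximality clause~3 of Definition~\ref{D11}, which demands that every strict superset of $X$ in $KB$ already be inconsistent with $KB_I \cup KB_{IC}$ when joined with $KB'$. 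Hence the assumption fails, and $\alpha \in X$.

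Since the $X$ was arbitrary, $\alpha$ belongs to every element of $KB \downarrow_{\top} KB'$, and therefore $\alpha \in \bigcap (KB \downarrow_{\top} KB')$. The hypothesis that $KB'$ is consistent is used only implicitly to ensure that the remainder set is nonempty in the interesting case (otherwise clause~2 of Definition~\ref{D12} would fire and there would be nothing to check); the intersection statement follows from the pointwise one regardless.

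I do not anticipate a real obstacle here — the argument is essentially a one-line maximality contradiction. The only mildly delicate point is notational: the definition of $KB \downarrow_{\top} \cdot$ as stated in Definition~\ref{D11} takes a single Horn clause, whereas the observation applies it to a set $KB'$, so I would begin the proof by making the natural generalization explicit (replace ``$X \cup \alpha$ is consistent'' with ``$X \cup KB'$ is consistent'') to justify using clauses~2 and~3 of the definition in that form.
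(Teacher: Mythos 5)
Your argument is correct: fixing $X \in KB\downarrow_{\top} KB'$ and noting that, because $\alpha \in KB'$, the set $(X \cup \{\alpha\}) \cup KB'$ coincides with $X \cup KB'$, so the maximality clause of Definition~\ref{D11} immediately forces $\alpha \in X$. The paper actually states Observation~\ref{z2} without any proof (it is treated as following directly from Definition~\ref{D11}), and your one-line maximality contradiction is precisely the standard justification the paper implicitly relies on; your explicit remark that $\downarrow_{\top}$ must first be read with the set $KB'$ in place of a single clause is a point the paper glosses over and is worth making.
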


From the above observation and definition \ref{D11} it follows that all the Horn knowledge base of $KB\cap\alpha$ are "protected", in the sense that they are included in the intersection of any collection of remainders. That is, a consolidated selection function selects a subset of the set $KB\downarrow_{\top} \alpha$ whose elements all contain the set $KB\cap\alpha$.

\begin{definition} [Partial meet revision] \label{D13} Let KB be a Horn knowledge base with an immutable part
$KB_{I}$ and $\gamma$ a selection function for KB. The partial meet
revision on KB that is generated by $\gamma$ is the operator
$*_\gamma$ such that, for all Horn clauses $\alpha$:

$KB*_\gamma \alpha =\left\{ \begin{array}{cc} \cap
\gamma(KB\downarrow_{\top}\alpha)\cup \alpha~~~~~&
\text{if}~\alpha ~\text{is consistent with}~KB_{I}\cup KB_{IC}\\
KB&\text{otherwise.}
\end{array}\right.$
\end{definition}

An operator * is a generalized revision (partial meet revision) on
KB if and only if there is a selection function $\gamma$ for KB such
that for all Horn clauses $\alpha$, KB*$\alpha$ = KB$*_\gamma
\alpha$.

\begin{example}

Given KB=\{$KB_{I}: p\leftarrow a\wedge b,p\leftarrow a,q\leftarrow
a\wedge b ;~\raggedright{KB_{U}: a\leftarrow, b\leftarrow;}
\newline ~KB_{IC}: \emptyset$\}, $\alpha$= $\leftarrow p$ and
$KB\downarrow_{\top}\alpha$= \{\{$p\leftarrow a\wedge b$\},
\{$p\leftarrow a$\}, \{$\leftarrow a$\}, \{$\leftarrow b$\}\}. We
have two possible results for the selection function and its
associated partial meet revision operator
\begin{eqnarray*}
\gamma_1(KB\downarrow_{\top}\alpha)&=&\{\leftarrow p\}~\text{and}~
KB*_{\gamma_1} \alpha=\{p\leftarrow a\wedge b ,\leftarrow a,
\leftarrow b\}\\
\gamma_2(KB\downarrow_{\top}\alpha)&=&\{\leftarrow p\}~\text{and}~
KB*_{\gamma_2} \alpha=\{p\leftarrow a, \leftarrow a\}
\end{eqnarray*}

The partial meet revision on KB that is generated by $\gamma_2$
gives minimal interpretation with respect to PMC.
\end{example}

\begin{theorem} \label{T6} For every Horn knowledge base $KB$, $*$ is a
generalized revision function iff it satisfies the postulates (KB*1)
to (KB*6) and (KB*7.3).
\end{theorem}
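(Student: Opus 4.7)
The plan is to prove the two directions of the representation theorem separately, following the classical Alchourr\'on--G\"ardenfors--Makinson strategy adapted to the Horn setting with an immutable part and integrity constraints.

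For the soundness direction ($\Rightarrow$), I assume $* = *_\gamma$ for some selection function $\gamma$ on $KB$ as in Definition~\ref{D13}, and verify each postulate in turn. Closure (KB*1) is immediate from the fact that intersections of Horn knowledge bases are Horn knowledge bases and $\alpha$ is a Horn clause. Weak Success (KB*2) and Vacuity 1/2 (KB*4.1, KB*4.2) fall directly out of the case split in the definition of $*_\gamma$ together with condition~1 of Definition~\ref{D11} (every remainder contains $KB_I$ and $KB_{IC}$). Inclusion (KB*3.1) follows because $\bigcap \gamma(KB\downarrow_\top\alpha) \subseteq KB$, and Immutable-inclusion (KB*3.2) uses condition~1 of Definition~\ref{D11} again. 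Consistency (KB*5) is exactly condition~2 of Definition~\ref{D11} lifted through intersection. Preservation (KB*6) reduces to the observation that $KB$-equivalent clauses $\alpha,\beta$ give the same remainder set, since consistency with $KB_I\cup KB_{IC}$ is preserved by Definition~\ref{D9}. For Weak Relevance (KB*7.3), given $\beta \in KB\setminus KB*_\gamma\alpha$, I pick some $X\in\gamma(KB\downarrow_\top\alpha)$ with $\beta\notin X$, and take $KB' = X$; then $KB'\cup\alpha$ is consistent by condition~2 of Definition~\ref{D11}, while maximality (condition~3) gives inconsistency of $KB'\cup\{\beta\}\cup\alpha$.

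For the completeness direction ($\Leftarrow$), I assume $*$ satisfies (KB*1)--(KB*6) and (KB*7.3), and must construct a selection function $\gamma$ with $KB*\alpha = KB*_\gamma\alpha$ for every Horn clause $\alpha$. The natural candidate, following the standard construction, is
\[
\gamma(KB\downarrow_\top\alpha) \;=\; \{\,X \in KB\downarrow_\top\alpha \;:\; (KB*\alpha)\setminus\alpha \subseteq X\,\},
\]
when $\alpha$ is consistent with $KB_I\cup KB_{IC}$, and $\gamma(KB\downarrow_\top\alpha)=\{KB\}$ otherwise. I must check three things: well-definedness (the set on the right is non-empty whenever $KB\downarrow_\top\alpha$ is), equality of the resulting operator with $*$, and independence of $\gamma$ from the syntactic choice of $\alpha$ among $KB$-equivalent clauses (this is where (KB*6) is used).

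The main obstacle, as in the classical AGM proof, is the \emph{upper bound lemma}: showing that $(KB*\alpha)\setminus\alpha$ can actually be extended to a remainder in $KB\downarrow_\top\alpha$. Non-emptiness of $\gamma(KB\downarrow_\top\alpha)$ follows from an application of Zorn's lemma (or, in the finite Horn case, simple maximal extension) once I know $(KB*\alpha)\setminus\alpha$ is itself consistent with $\alpha$ modulo $KB_I\cup KB_{IC}$ --- this is delivered by (KB*3.2) and (KB*5). The harder half is showing $KB*\alpha = \bigcap\gamma(KB\downarrow_\top\alpha)\cup\alpha$: the inclusion $\subseteq$ is immediate from the definition of $\gamma$, while the reverse inclusion is precisely what (KB*7.3) buys us. Concretely, if $\beta\in KB\setminus KB*\alpha$, weak relevance yields a witness set $KB'$ that is $\alpha$-consistent but $\alpha$-inconsistent once $\beta$ is added; extending $KB'$ to a maximal $\alpha$-consistent subset $X$ of $KB$ (again by Zorn / finiteness) produces an $X\in KB\downarrow_\top\alpha$ with $(KB*\alpha)\setminus\alpha\subseteq X$ but $\beta\notin X$, so $\beta\notin\bigcap\gamma(KB\downarrow_\top\alpha)\cup\alpha$. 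This is the step I expect to be the most delicate, because the immutable part $KB_I$ and the constraints $KB_{IC}$ must be carried through the maximal extension argument without being violated, and because the weakened form (KB*7.3) --- as opposed to the stronger (KB*7.2) --- forces the witness $KB'$ and the extension $X$ to be constructed separately rather than in one stroke.
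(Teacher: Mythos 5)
Your overall strategy coincides with the paper's: your soundness direction is the same postulate-by-postulate verification as the paper's ``only if'' part, and your completeness direction, built on the canonical selection function $\gamma(KB\downarrow_{\top}\alpha)=\{X\in KB\downarrow_{\top}\alpha : (KB*\alpha)\setminus\alpha\subseteq X\}$, makes explicit exactly the two facts the paper's ``if'' part records as its two numbered observations (that $KB*\alpha$ is contained in some remainder, and that weak relevance controls what may be discarded).

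There is, however, a genuine gap in the final step of your completeness argument, and it sits precisely at the point you yourself flag as delicate. Given $\beta\in KB\setminus KB*\alpha$, postulate (KB*7.3) supplies only a witness $KB'\subseteq KB\cup\alpha$ that is $\alpha$-consistent and becomes $\alpha$-inconsistent once $\beta$ is added; unlike (KB*7.2), it does \emph{not} guarantee $KB*\alpha\subseteq KB'$. So when you extend $KB'$ to a maximal $\alpha$-consistent $X\in KB\downarrow_{\top}\alpha$, you do obtain $\beta\notin X$, but you have no grounds for the claim that $(KB*\alpha)\setminus\alpha\subseteq X$, i.e.\ that $X$ is actually selected by your $\gamma$. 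Without that, $\beta$ may still lie in $\bigcap\gamma(KB\downarrow_{\top}\alpha)$ and the reverse inclusion fails. This is the classical reason why core-retainment characterizes kernel (incision-based) operations rather than partial meet ones; closing the gap requires either strengthening the hypothesis to (KB*7.2) or changing the target construction. You should be aware that the paper's own proof does not close this gap either: its two observations merely sandwich $KB*\alpha$ between the full meet $\bigcap(KB\downarrow_{\top}\alpha)$ and a single remainder, which by itself does not exhibit $KB*\alpha$ as the intersection of a selected family of remainders.
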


\begin{proof}
\hspace{0.5cm}

(\textbf{If part})~* satisfies (KB*1) to (KB*6) and (KB*7.3). We
must show that $*$ is a generalized revision. When
$KB_{I}\vdash\alpha$, (KB*1) to (KB*6) and (KB*7.3) imply that
$KB*\alpha=KB$ is coinciding with generalized revision.

When $KB_{I}\vdash\neg\alpha$, the required result follows from the
two observations:
\begin{itemize} 
\item[1.] $\exists KB'\in KB\downarrow_{\top}\alpha$ s.t.$KB*\alpha\subseteq
KB'$ (when $KB*\alpha = KB \cup \{ \alpha \}$)\\
Let $\gamma$ be an selection function for $KB$ and $*_\gamma$ be the
generalized revision on $KB$ that is generated by $\gamma$. Since
* satisfies closure (KB*1), $KB*_{\gamma}\alpha$ is KB contained in $\alpha$.
Also, satisfaction of weak success postulate (KB*2) ensures that
$\alpha\subseteq KB*_{\gamma}\alpha$. Every element of
$KB\downarrow_{\top}\alpha$ is a inclusion maximal subset that does
drive $\alpha$, and so any subset of KB that does derive $\alpha$
must be contained in a member of $KB\downarrow_{\top}\alpha$.

\item[2.] $\bigcap(KB\downarrow_{\top}\alpha)\subseteq KB*_{\gamma}\alpha$
(when $KB*\alpha = KB \cup \{ \alpha \}$)\\
Consider any $\beta \in \bigcap(KB\downarrow_{\top}\alpha)$. Assume
that $\beta\not\in KB*\alpha$. Since * satisfies weak relevance
postulate (KB*7.3), it follows that there exists a set KB' s.t.
$KB'\subseteq KB\cup \alpha$; $KB'$ is consistent with $\alpha$;
and $KB' \cup \{\beta\}$ is inconsistent with $\alpha$. But this
contradicts the that $\beta$ is present in every maximal subset of
KB that does derive $\alpha$. Hence $\beta$ must not be in
$KB*_{\gamma}\alpha$.
\end{itemize}

(\textbf{Only if part})~Let $KB*\alpha$ be a generalized revision of
$\alpha$ for KB. We have to show that $KB*\alpha$ satisfies the
postulate (KB*1) to (KB*6) and (KB*7.3).

Let $\gamma$ be an selection function for $KB$ and $*_\gamma$ be the
generalized revision on $KB$ that is generated by $\gamma$.
\begin{description}
\item[Closure] Since $KB*_{\gamma}\alpha$ is a Horn knowledge base, this
postulate is trivially shown.

\item[Weak Success] Suppose that $\alpha$ is consistent. Then it is trivial by definition that $\alpha \subseteq
KB*_{\gamma} \alpha$.

\item[Inclusion] Since every $X \in KB\downarrow_{\top}
\alpha$ is such that $X \subseteq KB$ then this postulate is
trivially shown.

\item[Immutable-inclusion] Since every $X \in KB\downarrow_{\top}
\alpha$ is such that $X \subseteq KB_I$ then this postulate is
trivially shown.

\item[Vacuity 1] Trivial by definition.

\item[Vacuity 2] If $KB\cup \{\alpha\}$ is consistent then
$KB\downarrow_{\top} \alpha$ = \{\{KB\}\}. Hence
$KB*_{\gamma}\alpha$ = $KB\cup \{\alpha\}$.

\item[Consistency] Suppose that $\alpha$ is consistent. Then $KB \downarrow_{\top} \alpha \neq= \emptyset$ and by
definition, every $X \in KB \downarrow_{\top} \alpha$ is consistent
with $\alpha$. Therefore, the intersection of any subset of $KB
\downarrow_{\top} \alpha$ is consistent with $\alpha$. Finally,
$KB*_{\gamma} \alpha$ is consistent.

\item[Uniformity] If $\alpha$ and $\beta$ are KB-equivalent, then $KB\downarrow_{\top}
\alpha$ = $KB\downarrow_{\top} \beta$

\item[Weak relevance] Suppose that $KB \downarrow_{\top} \alpha \neq \emptyset$. Let $\beta\in KB$.
Then there is some $X \in KB \downarrow_{\top} \alpha$ such that
$\beta\notin X$. Therefore, there is some $X$ such that $\beta\notin
X \subseteq KB$, $X \cup \alpha$ is consistent but $X \cup \alpha
\cup \{\beta\}$ is inconsistent.

Suppose that $KB \downarrow_{\top} \alpha = \{\emptyset\}$ in which
case $\alpha$ is inconsistent. By definition, $KB*_{\gamma} \alpha =
KB$ and weak relevance is vacuously satisfied.
\end{description}
\end{proof}

\subsection{Horn knowledge base revision with hitting set} \label{s11}

In this section, we show that Horn knowledge base revision has an
interesting connection with kernel change via hitting set.

\subsubsection{Kernel revision system}\hspace{0.5cm}

\hspace{0.1cm}

To revise a Horn formula $\alpha$ from a Horn knowledge base KB, the
idea of kernel revision is to \emph{keep at least} one element from
every inclusion-minimal subset of KB that derives $\alpha$. Because
of the immutable-inclusion postulate, no Horn formula from $KB_{I}$
can be deleted.

\begin{definition} [Kernel sets] \label{D14}
Let a Horn knowledge base KB be a set of Horn formulae, where
$\alpha$ is Horn clause. The $\alpha$-inconsistent kernel of KB,
noted by $KB\bot_{\bot}\alpha$, is the set of $KB'$ such that:

\begin{enumerate}
  \item $KB'\subseteq KB$ ensuring that $KB_{I}\subseteq KB'$ and $KB_{IC}\subseteq
  KB'$.
  \item $KB'\cup\alpha$ is inconsistent with $KB_{I}\cup KB_{IC}$ .
  \item For any KB'' such that $KB'' \subset KB'\subseteq KB$ then
  $KB''\cup\alpha$ is consistent with $KB_{I}\cup KB_{IC}$.
\end{enumerate}
\end{definition}

That is, given a consistent $\alpha$, $KB\bot_{\bot}\alpha$ is the
set of minimal KB-subsets inconsistent with $\alpha$.

\begin{example}

Suppose that KB=\{$KB_{I}: p\leftarrow a\wedge b,p\leftarrow
a,q\leftarrow a\wedge b ;~KB_{U}: a\leftarrow, b\leftarrow;
~KB_{IC}: {\emptyset}$\} and $\alpha$= $\leftarrow p$. Then we have
that:

$KB\bot_{\bot}\alpha$= \{$\{p\leftarrow a \wedge b\}, \{p\leftarrow
a \} $\}.
\end{example}

Revision by a Horn clause is based on the concept of a
$\alpha$-inconsistent-kernels. In order to complete the
construction, we must define a incision function that cuts in each
inconsistent-kernel.

\begin{definition}[Incision function] \label{D15} Let $KB$ be a set of Horn formulae.
$\sigma$ is a incision function for $KB$ if and only if, for all
consistent Horn clauses $\alpha$
\begin{enumerate}
  \item $\sigma(KB\bot_\bot \alpha) \subseteq \bigcup KB\bot_\bot
  \alpha$
  \item If $KB' \in KB\bot_\bot\alpha$ then
  $KB'\cap(\sigma(KB\bot_\bot\alpha))\neq 0$
\end{enumerate}
\end{definition}

\begin{definition} [Hitting set] \label{D16} A \emph{hitting set} H for $KB\bot_{\bot}\alpha$ is defined as
a set s.t. (i) $H\subseteq\bigcup(KB\bot_{\bot}\alpha)$, (ii) $H\cap
KB_{I}$ is empty and (iii) $\forall X \in KB\bot_{\bot}\alpha$,
$X\neq\emptyset$ and $X\cap KB_{U}$ is not empty, then $X\cap H
\neq\emptyset$.
\end{definition}

A hitting set is said to be \emph{maximal} when $H$ consists of all
updatable statements from $\bigcup(KB\bot_{\bot}\alpha)$ and
\emph{minimal} if no proper subset of $H$ is a hitting set for
$KB\bot_{\bot}\alpha$.

\begin{observation}
Let KB, KB' be an Horn knowledge base, KB' be consistent. Suppose that $\alpha \in KB$ and $\alpha \in KB'$. Then $\alpha\not\in \bigcup(KB\bot_{\bot} KB')$ and, therefore, $KB' \cap \bigcup(KB\bot_{\bot} KB')=0$
\end{observation}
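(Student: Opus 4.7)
The plan is to exploit the minimality condition (clause 3) in Definition \ref{D14}: for any element of a kernel, deleting it must restore consistency with $KB'$. The key observation is that if $\alpha \in KB'$, then adding or removing $\alpha$ from a set $X \subseteq KB$ does not alter the union $X \cup KB'$ set-theoretically, and hence cannot affect its consistency. This makes $\alpha$ structurally redundant in any purported minimal inconsistent subset, which will quickly yield a contradiction.

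More concretely, I would argue the first claim by contradiction. Suppose there exists some $X \in KB \bot_{\bot} KB'$ with $\alpha \in X$. Set $X' = X \setminus \{\alpha\}$, so $X' \subsetneq X \subseteq KB$. By clause 2 of Definition \ref{D14}, $X \cup KB'$ is inconsistent with $KB_I \cup KB_{IC}$. But since $\alpha \in KB'$, we have the set identity $X \cup KB' = X' \cup \{\alpha\} \cup KB' = X' \cup KB'$, so $X' \cup KB'$ is also inconsistent with $KB_I \cup KB_{IC}$. This directly contradicts clause 3 of Definition \ref{D14}, which demands that $X' \cup KB'$ be consistent with $KB_I \cup KB_{IC}$ for every proper subset $X' \subsetneq X$. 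Hence no $X \in KB \bot_{\bot} KB'$ can contain $\alpha$, proving $\alpha \notin \bigcup (KB \bot_{\bot} KB')$.

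For the second conclusion, I would bootstrap the first claim as follows. Take any $\beta \in KB' \cap \bigcup(KB \bot_{\bot} KB')$. Then $\beta$ lies in some $X \in KB \bot_{\bot} KB'$, so by clause 1 of Definition \ref{D14}, $\beta \in X \subseteq KB$. Thus $\beta$ satisfies the hypotheses of the first claim ($\beta \in KB$ and $\beta \in KB'$), whence the first claim forces $\beta \notin \bigcup(KB \bot_{\bot} KB')$, contradicting the choice of $\beta$. Therefore $KB' \cap \bigcup (KB \bot_{\bot} KB') = \emptyset$.

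I do not expect a genuine obstacle here: the observation is essentially a bookkeeping consequence of minimality combined with the trivial set-theoretic identity $X \cup KB' = (X \setminus \{\alpha\}) \cup KB'$ whenever $\alpha \in KB'$. The only subtlety worth flagging is that the minimality clause of Definition \ref{D14} quantifies over all $X' \subsetneq X$ (not just those still containing $KB_I \cup KB_{IC}$), which is precisely what legitimises removing $\alpha$ even if $\alpha$ happened to sit in the immutable part; no separate case analysis on whether $\alpha \in KB_I$ is needed.
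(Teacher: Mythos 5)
Your proof is correct. The paper states this observation without supplying any proof of its own (as it also does for the analogous remainder-set observation), so there is nothing to compare against; your argument — the set identity $X \cup KB' = (X\setminus\{\alpha\})\cup KB'$ when $\alpha\in KB'$ combined with the minimality clause of the kernel-set definition — is the natural and complete one, and your remark that minimality quantifies over \emph{all} proper subsets, so no case split on whether $\alpha\in KB_I$ is needed, flags exactly the right subtlety.
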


From the above observation and definition \ref{D15} it follows that all the Horn knowledge base of $\alpha$ are "protected", in the sense that they can not be considered for removing
by the consolidated incision function. That is, a consolidated incision function
selects among the sentences of $KB\backslash\alpha$ that make $KB\cup\alpha$ inconsistent.

\begin{definition} [Generalized Kernel revision] \label{D17} An incision function for KB is a function s.t. for all
$\alpha$, $\sigma(KB\bot_{\bot}\alpha)$ is a hitting set for
$KB\bot_{\bot}\alpha$.  Generalized kernel revision on KB that is generated by $\sigma$ is the operator $*_{\sigma}$ such that, for all Horn clauses $\alpha$:
$$KB*_\sigma \alpha =\left\{ \begin{array}{cc} (KB\backslash
\sigma(KB\bot_{\bot}\alpha)\cup \alpha~~~~~&
\text{if}~\alpha ~\text{is consistent}~KB_{I}\cup KB_{IC}\\
KB&\text{otherwise.}
\end{array}\right.$$
\end{definition}

An operator $*$ is a generalized kernel revision for KB if and only if there is an incision function $\sigma$ for KB such that for all Horn clauses $\alpha$, $KB*\alpha$
= $KB*_{\sigma}\alpha$.

From the definition of hitting set, it is clear that when
$KB\vdash\neg\alpha$, $\alpha$ is the hitting set of
$KB\bot_\bot\alpha$. On the other hand, when $KB_{I}\vdash\alpha$,
the definition ensures that only updatable elements are inserted,
and $\alpha$ does follow from the revision. Thus, weak success
(KB*2), immutable-inclusion (KB*3.2) and vacuity (KB*4.1) are
satisfied by generalized kernel revision of  $\alpha$ from KB.

\begin{example}

Given KB=\{$KB_{I}: p\leftarrow a\wedge b,p\leftarrow a,q\leftarrow
a\wedge b ;~KB_{U}: a\leftarrow, b\leftarrow; ~KB_{IC}: {\emptyset}$
\}, $\alpha$= $\leftarrow p$ and $KB\bot_{\bot}\alpha =
\{\{p\leftarrow a\wedge b\}, \{ p\leftarrow a \}\}.$ We have two
possible results for the incision function  and its associated
kernel revision operator:
\begin{eqnarray*}
\sigma_1(KB\bot_{\bot}\alpha)&=&\{p\leftarrow a\wedge
b\}~\text{and}~
KB*_{\sigma_1} \alpha=\{\{\leftarrow a \}, \{ \leftarrow b \} \},\\
\sigma_2(KB\bot_{\bot}\alpha)&=&\{p\leftarrow a \}~\text{and}~
KB*_{\sigma_2} \alpha=\{\{\leftarrow a \} \}.
\end{eqnarray*}
Incision function $\sigma_2$ produces minimal hitting set for
$KB\bot_{\bot}\alpha$.
\end{example}

\begin{theorem} \label{T7} For every Horn knowledge base $KB$, $*_{\sigma}$ is a
generalized kernel revision function iff it satisfies the postulates
(KB*1) to (KB*6) and (KB*7.3).
\end{theorem}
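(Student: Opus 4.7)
The plan is to mirror the proof of Theorem \ref{T6}, replacing remainder sets and selection functions with kernel sets and incision functions. The strategy is the standard representation-theorem pattern in belief base revision: in one direction verify the postulates from the construction, and in the other direction reverse-engineer an incision function from the operator and show the two agree on every input.

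For the \emph{only if} direction, assume $* = *_\sigma$ for some incision function $\sigma$. Postulates (KB*1), (KB*3.1), and (KB*3.2) follow directly from the formula $KB *_\sigma \alpha = (KB \setminus \sigma(KB\bot_\bot \alpha)) \cup \alpha$ together with clause (i) of Definition \ref{D15} and the hitting-set condition in Definition \ref{D16}, which forbids $\sigma$ from touching $KB_I$ or $KB_{IC}$. Weak Success (KB*2) is immediate because $\alpha$ is explicitly added whenever it is consistent with the immutable part. Vacuity 1 (KB*4.1) is built into the case split of Definition \ref{D17}. Vacuity 2 (KB*4.2) follows because $KB \cup \alpha$ consistent with $KB_I \cup KB_{IC}$ forces $KB\bot_\bot \alpha = \emptyset$, so $\sigma$ removes nothing. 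Consistency (KB*5) is precisely what clause (ii) of Definition \ref{D15} guarantees: every minimal $\alpha$-inconsistent subset loses at least one member, so no such subset survives in $(KB \setminus \sigma(KB\bot_\bot\alpha)) \cup \alpha$. Preservation (KB*6) follows because KB-equivalence of $\alpha$ and $\beta$ gives $KB\bot_\bot \alpha = KB\bot_\bot \beta$. Finally, Weak Relevance (KB*7.3) is witnessed as follows: if $\beta \in KB \setminus KB*_\sigma\alpha$, then $\beta \in \sigma(KB\bot_\bot \alpha)$, hence $\beta$ belongs to some kernel $X \in KB\bot_\bot \alpha$, and $KB' := X \setminus \{\beta\}$ is consistent with $\alpha$ by minimality of $X$ while $KB' \cup \{\beta\} = X$ is inconsistent with $\alpha$.

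For the \emph{if} direction, suppose $*$ satisfies (KB*1)--(KB*6) and (KB*7.3). When $\alpha$ is inconsistent with $KB_I \cup KB_{IC}$, Vacuity 1 gives $KB * \alpha = KB$, matching the second branch of Definition \ref{D17}. Otherwise, define
\begin{equation*}
\sigma(KB\bot_\bot \alpha) \;=\; KB \setminus (KB * \alpha).
\end{equation*}
By Preservation, this definition depends only on $KB\bot_\bot \alpha$ (since KB-equivalent sentences yield the same kernel family and hence must yield the same result), so $\sigma$ is well-defined on the family of kernel sets. I must verify the two clauses of Definition \ref{D15}. Clause (i) --- that $\sigma(KB\bot_\bot\alpha) \subseteq \bigcup KB\bot_\bot \alpha$ --- follows from Weak Relevance: any $\beta \in KB \setminus KB*\alpha$ yields a set $KB'$ witnessing that $\{\beta\} \cup (KB' \cap KB)$ contains an $\alpha$-inconsistent subset of which $\beta$ is an essential member, whose minimisation lies in $KB\bot_\bot \alpha$ and contains $\beta$. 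Clause (ii) --- that $\sigma$ intersects every kernel --- is exactly Consistency: if some $X \in KB\bot_\bot \alpha$ were disjoint from $KB \setminus KB*\alpha$, then $X \subseteq KB*\alpha$, and since $\alpha \subseteq KB*\alpha$ by Weak Success, $KB*\alpha$ would contain the inconsistent set $X \cup \alpha$, contradicting Consistency.

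The final step is to check $KB * \alpha = KB *_\sigma \alpha$. Using Inclusion, $KB*\alpha \subseteq (KB \cap KB*\alpha) \cup \alpha = (KB \setminus \sigma(KB\bot_\bot \alpha)) \cup \alpha = KB *_\sigma \alpha$; the reverse inclusion uses Weak Success for the $\alpha$ part and the definition of $\sigma$ for the $KB$ part. The \textbf{main obstacle} is clause (i) of the incision-function conditions: turning the existential set $KB'$ supplied by Weak Relevance into membership of a genuine minimal $\alpha$-inconsistent subset. The trick is to take $KB' \cup \{\beta\}$ (inconsistent with $\alpha$), intersect it with $KB$, and extract a minimal $\alpha$-inconsistent subset still containing $\beta$ by a downward finiteness argument in $\mathcal{L_H}$; combined with Immutable-inclusion this keeps $\beta$ outside $KB_I \cup KB_{IC}$, so the resulting kernel is exactly the witness required by Definition \ref{D15}(i).
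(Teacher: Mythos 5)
Your proof is correct and follows essentially the same representation-theorem pattern as the paper: verify the postulates from the kernel/incision construction postulate by postulate (your weak-relevance witness $X\setminus\{\beta\}$ is exactly the paper's), and in the converse direction recover an incision function from the operator satisfying the postulates. If anything, your converse direction is more explicit than the paper's, which only records two containment ``observations'' and never actually defines $\sigma(KB\bot_\bot\alpha) := KB\setminus(KB*\alpha)$ or checks the two clauses of the incision-function definition, so your added detail is a completion of the paper's sketch rather than a different route.
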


\begin{proof}
\hspace{0.5cm}

(\textbf{If part})~* satisfies (KB*1) to (KB*6) and (KB*7.3). We
must show that $*$ is a generalized kernel revision. Let $\sigma$ be
a incision function and $\alpha$ Horn formula. When
$KB_{I}\vdash\alpha$, (KB*1) to (KB*6) and (KB*7.3) imply that
$KB*\alpha=KB$ coincides with generalized revision and follow PMC.

When $KB_{I}\vdash\neg\alpha$, the required result follows from the
two observations:
\begin{itemize}
\item[1.] $\exists KB'\in KB\bot_{\bot}\alpha$ s.t.$KB*\alpha\subseteq
KB'$ (when $KB_{I}\vdash \alpha$)\\
Let $\sigma$ be an incision function for $KB$ and $*_\sigma$ be the
generalized revision on $KB$ that is generated by $\sigma$. Since
* satisfies closure (KB*1), $KB*_{\sigma}\alpha$ is KB contained in $\alpha$.
Also, satisfaction of weak success postulate (KB*2) ensures that
$\alpha\subseteq KB*_{\sigma}\alpha$. Every element of
$KB\bot_{\bot}\alpha$ is a inclusion minimal subset that does derive
$\alpha$, and so any subset of KB that does derive $\alpha$ must be
contained in a member of $KB\bot_{\bot}\alpha$.

\item[2.] $\bigcap(KB\bot_{\bot}\alpha)\subseteq KB*_{\sigma}\alpha$
(when $KB_{I}\vdash \alpha$)\\
Consider any $\beta \in \bigcap(KB\bot_{\bot}\alpha)$. Assume that
$\beta\not\in KB*\alpha$. Since * satisfies weak relevance postulate
(KB*7.3), it follows that there exists a set KB' s.t. $KB'\subseteq
KB\cup \alpha$; $KB'$ is a consistent with $\alpha$; and $KB' \cup
\{\beta\}$ is inconsistent with $\alpha$. But this contradicts that
$\beta$ is present in every minimal subset of KB that does derive
$\alpha$. Hence $\beta$ must not be in $KB*_{\sigma}\alpha$.
\end{itemize}

(\textbf{Only if part})~Let $KB*\alpha$ be a generalized revision of
$\alpha$ for KB. We have to show that $KB*\alpha$ satisfies the
postulate (KB*1) to (KB*6) and (KB*7.3).

Let $\sigma$ be an incision function for $KB$ and $*_\sigma$ be the
generalized revision on $KB$ that is generated by $\sigma$.

\begin{description}
\item[Closure] Since $KB*_{\sigma}\alpha$ is a Horn knowledge base, this
postulate is trivially shown.

\item[Weak Success] Suppose that $\alpha$ is consistent. Then it is trivial by definition that $\alpha \subseteq
KB*_{\sigma} \alpha$.

\item[Inclusion] Trivial by definition.

\item[Immutable-inclusion] Since every $X \in KB\bot_{\bot}
\alpha$ is such that $X \subseteq KB_I$ then this postulate is
trivially shown.

\item[Vacuity 1] Trivial by definition.

\item[Vacuity 2] If $KB\cup \{\alpha\}$ is consistent then
$KB\bot_{\bot} \alpha$ = \{\{KB\}\}. Hence $KB*_{\sigma}\alpha$ =
$KB\cup \{\alpha\}$.

\item[Consistency] Suppose that $\alpha$ is consistent. Then $KB \bot_{\bot} \alpha \neq= \emptyset$ and by
definition, every $X \in KB \bot_{\bot} \alpha$ is consistent with
$\alpha$. Therefore, the intersection of any subset of $KB
\bot_{\bot} \alpha$ is consistent with $\alpha$. Finally,
$KB*_{\sigma} \alpha$ is consistent.

\item[Uniformity] If $\alpha$ and $\beta$ are KB-equivalent, then $KB\bot_{\bot}
\alpha$ = $KB\bot_{\bot} \beta$

\item[Weak relevance] Let $\beta\in KB$ and $\beta\notin KB*_{\sigma}\alpha$. Then
$KB*_{\sigma}\alpha \neq KB$ and, from the definition of
$*_{\sigma}$,it follows that:\\

\hspace{2.5cm}$KB*_{\sigma}\alpha$=$(KB\backslash\sigma(KB\bot_{\bot}\alpha))\cup\alpha$\\

Therefore, from $\beta\not\in
(KB\backslash\sigma(KB\bot_{\bot}\alpha))\cup\alpha$ and $\beta\in
KB$, we can conclude that $\beta\in\sigma(KB\bot_{\bot}\alpha)$. By
definition $\sigma(KB\bot_{\bot}\alpha) \subseteq \bigcup
KB\bot_{\bot}\alpha$, and it follows that there is some $X\in
KB\bot_{\bot}\alpha$ such that $\beta \in X$. X is a minimal
KB-subset inconsistent with $\alpha$. Let $Y=X\backslash \{\beta\}$.
Then Y is such that $Y\subset X\subseteq KB \subseteq KB\cup\alpha$.
Y is consistent with $\alpha$ but $Y\cup\{\beta\}$ is consistent
with $\alpha$.
\end{description}
\end{proof}

From the Theorem \ref{T6} and \ref{T7}, it immediately follows that
a revision operation on a Horn knowledge base is a generalized
kernel revision iff it is a generalized revision. The following
theorem formalizes this with additional insights into the
relationship between kernel and generalized revisions.

\begin{theorem} \label{T8}
\hspace{0.5cm}
\begin{enumerate} 
\item[1.] A revision operation over a Horn knowledge base KB is a
generalized kernel revision over KB iff it is a generalized revision
over KB.
\item[2.] When the incision function $\sigma$ is minimal, i.e. the
hitting set defined by $\sigma$ is inclusion-minimal, then the
generalized kernel revision defined by $\sigma$ is a partial meet
revision of $\alpha$ from KB.
\item[3.]When the incision function $\sigma$ is maximal, i.e.
$\sigma (KB\bot_\bot\alpha)$ consists of all updatable statements
from $\bigcup (KB\bot_\bot\alpha)$, then the kernel contraction
defined by $\sigma$ is the minimal generalized revision of $\alpha$
from KB.
\end{enumerate}
\end{theorem}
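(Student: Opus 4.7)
The plan is to leverage the two characterization theorems already proved (Theorem \ref{T6} and Theorem \ref{T7}), together with the classical duality between minimal hitting sets of $KB\bot_{\bot}\alpha$ and maximal $\alpha$-consistent remainders in $KB\downarrow_{\top}\alpha$.

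For part~1, the argument is essentially immediate. Theorem~\ref{T6} shows that an operator $*$ is a generalized revision iff it satisfies (KB*1)--(KB*6) and (KB*7.3), and Theorem~\ref{T7} shows that $*$ is a generalized kernel revision iff it satisfies exactly the same set of postulates. So I would simply chain the two equivalences: $*$ is a generalized kernel revision $\iff$ $*$ satisfies (KB*1)--(KB*6)+(KB*7.3) $\iff$ $*$ is a generalized revision. Nothing new is needed beyond what was already established.

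For part~2, the core lemma I need to establish is the duality: $H$ is an inclusion-minimal hitting set of $KB\bot_{\bot}\alpha$ (avoiding $KB_{I}$) iff $KB\setminus H$ is an element of $KB\downarrow_{\top}\alpha$. The ``$\Rightarrow$'' direction uses that removing a hitting set breaks every minimal inconsistent kernel, so $KB\setminus H$ becomes consistent with $\alpha$, and minimality of $H$ forces maximality of $KB\setminus H$ (adding back any removed Horn clause reinstates some kernel). The ``$\Leftarrow$'' direction runs the argument in reverse. Once this duality is in place, given a minimal incision $\sigma$, I define a selection function $\gamma$ by setting $\gamma(KB\downarrow_{\top}\alpha) = \{KB\setminus \sigma(KB\bot_{\bot}\alpha)\}$. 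Then, by Definition~\ref{D13} and Definition~\ref{D17}, both operators produce $(KB\setminus \sigma(KB\bot_{\bot}\alpha))\cup\alpha$ when $\alpha$ is consistent with $KB_{I}\cup KB_{IC}$, and both fall back to $KB$ otherwise, so $KB *_{\sigma}\alpha = KB *_{\gamma}\alpha$.

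For part~3, when $\sigma$ is maximal, $\sigma(KB\bot_{\bot}\alpha) = \bigcup(KB\bot_{\bot}\alpha)\cap KB_{U}$, so every updatable element that appears in any minimal inconsistent kernel is cut. I would argue that $KB\setminus\sigma(KB\bot_{\bot}\alpha) = \bigcap(KB\downarrow_{\top}\alpha)$, using the duality above together with the observation that the updatable sentences surviving in every maximal $\alpha$-consistent remainder are precisely those lying in no minimal $\alpha$-inconsistent kernel. Hence $KB *_{\sigma}\alpha = \bigcap(KB\downarrow_{\top}\alpha)\cup\alpha$, which coincides with the full-meet / minimal partial meet revision (the one whose selection function picks the entire remainder set); this produces the $\subseteq$-smallest generalized revision consistent with the postulates.

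The main obstacle will be the bijection in part~2: care is needed to account for the immutable part $KB_{I}$ and the integrity constraints $KB_{IC}$, which sit inside every remainder and outside every hitting set by Definitions~\ref{D11} and~\ref{D16}. I will need to check that inclusion-minimality of the hitting set is preserved exactly by inclusion-maximality of its complement relative to the updatable part, and that no pathology arises when $KB\bot_{\bot}\alpha = \emptyset$ (handled by the ``otherwise'' branch of both operators). Once that is sorted out cleanly, parts~2 and~3 both reduce to arithmetic on sets, and part~1 is a one-line corollary.
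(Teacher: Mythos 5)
Your proposal is correct and follows the same route the paper intends: part~1 by chaining the postulate characterizations of Theorems~\ref{T6} and~\ref{T7}, and parts~2 and~3 via the duality between inclusion-minimal hitting sets of $KB\bot_{\bot}\alpha$ and maximal consistent remainders in $KB\downarrow_{\top}\alpha$ built into Definitions~\ref{D11} and~\ref{D16}. The only difference is one of completeness: the paper's proof is a bare citation, whereas you state and sketch the hitting-set/remainder duality lemma (and flag the handling of $KB_{I}$, $KB_{IC}$ and the empty-kernel-set case) that is the actual mathematical content being relied upon.
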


\begin{proof}
Follows from the Definition \ref{D16}, Theorem \ref{T6} and \ref{T7}
\end{proof}

\section{Knowledge base dynamics and abduction}
In this Section, we study the relationship between Horn knowledge
base dynamics, discussed in the previous Section, and abduction that
was introduced by the philosopher Peirce (see Aliseda 2006, Boutilier \& Beche 1995 and Pagnucco 1996). We show
how an abduction grammar could be used to realize revision with an
immutability condition. A special subset of literals (atoms) of
language $\mathcal{L_{H}}$, \emph{abducibles} Ab, are designated for
abductive reasoning. Our work is based on atoms (literals), so we
combine Christiansen and Dahl (Christiansen \& Dahl 2009) grammars
approach. Simply, we want to compute abducibles for Horn knowledge
base (Horn or Horn logic with stratified negation).

\begin{example} \label{E15}
Consider a Horn logic with stratified negation knowledge base KB
with immutable part $KB_I$, updatable part $KB_U$ and integrity
constraint $KB_{IC}$.

$$\begin{array}{cccccc} KB_I:&flies(x)\leftarrow bird(x), not~ab(x),&KB_U:&\hspace{-1.6cm}bird(tweety)\leftarrow
&\hspace{-1cm}KB_{IC}:\emptyset\\
&\hspace{-0.4cm}ab(x)\leftarrow broken\_wing(x)&&\hspace{-1.9cm}bird(opus)\leftarrow&&\\
&&&broken\_wing(tweety)\leftarrow&&\end{array}$$
\end{example}

If we observe that tweety flies, there is a good reason to assume
that the wound has already healed. Then, removing the fact
broken\_wing(tweety) from the $KB$ explains the observation
flies(tweety). On the other hand, suppose that we later notice that
opus does not fly anymore. Since flies(opus) is entailed by $KB_I$,
we now have to revise the Horn knowledge base to block the
derivation of flies(opus) by assuming, for instance,
broken\_wing(opus). In nonmonotonic theories, deletion of formulae
may introduce new formulae, thus positive ($\Delta^{+}$) and
negative ($\Delta^{-}$) explanations play a complementary role in
accounting for an observation in nonmonotonic theories. (more
explanation in Sakama \& Inoue 2003)

\begin{definition}[Abductive grammar] \label{D18} A abductive grammar $\Gamma$ is a 6-tuple $\langle
N,T,IC,\\KB,R,S\rangle$ where:

\begin{enumerate}
\item[-] N are nonterminal symbols in the immutable part ($KB_I$).
\item[-] T is a set of terminal symbols in the updatable part ($KB_U$).
\item[-] IC is the set of integrity constraints for the Horn knowledge base ($KB_{IC}$).
\item[-] KB is the Horn knowledge base which consists of $KB=KB_{I}\cup KB_{U}\cup KB_{IC}$.
\item[-] R is a set of rules, $R\subseteq KB$.
\item[-] S is the revision of literals (atoms), called the start symbol.
\end{enumerate}
\end{definition}

\begin{example} \label{E6}
Consider a Horn knowledge base KB (with immutable part $KB_I$, updatable part $KB_U$ and integrity constraint $KB_{IC}$) and a Horn clause $\alpha$ (p is $\alpha$) be revise.

$$\begin{array}{cccccc} KB_I:&p\leftarrow
q\wedge a&\hspace{0.5cm}KB_U:&a\leftarrow&\hspace{1.2cm}KB_{IC}:&\leftarrow b\\
&p\leftarrow r\wedge b&&r\leftarrow&&\\
&q\leftarrow c\wedge d&&&&\\
&r\leftarrow e\wedge f&&&&\\
&\hspace{-0.6cm}p\leftarrow b&&&&\end{array}$$
\end{example}

KB be a Horn knowledge base, represented by the grammar
($\Gamma=\langle N,T,KB,R,S\rangle$) as follows:

\begin{center}
\begin{itemize}
  \item []\hspace{1cm} N=\{p\}
  \item []\hspace{1cm} T=\{a,b,c,d,e,f,q,r\}
  \item []\hspace{1cm} IC=\{b\}
  \item []\hspace{1cm} KB=$KB_I\cup KB_U\cup KB_{IC}$
  \item []\hspace{1cm} R=\{$p\leftarrow q,a;~p\leftarrow r,b;~q\leftarrow c,d;~r\leftarrow
  e,f;~p\leftarrow b;~a;~r$\}
  \item []\hspace{1cm} S=\{p\}
\end{itemize}
\end{center}

\begin{definition}[Constraint system] \label{D19}
A constraint system for abduction is a pair $\langle
KB^{Ab},\\KB^{BG} \rangle$, where $KB^{Ab}(\Delta)$ is a set of
propositions (abducibles) and $KB^{BG}$ a background Horn knowledge
base.
\end{definition}

\begin{note} In the sequel, without any loss of generality, we assume that
$KB_I$ is a set of rules and $KB_U$ is a set of abducibles from Horn
knowledge base perspective. With respect to the considered grammars,
$KB^{BG}$ is a set all Horn formulae from R and $KB^{Ab}$ is set of
abducibles from T.
\end{note}

\begin{note} \label{N1}
Given a Horn knowledge base KB and a Horn clause $\alpha$, the
problem of abduction is to explain $\alpha$ in terms of an
abduction, i.e. to generate a set of abducibles $KB^{Ab}$, $\Delta$
s.t. $KB^{BG}\cup\Delta\vdash \alpha$.
\end{note}

\begin{definition}[Minimal abductive explanation] \label{D20} Let KB be a Horn knowledge base and $\alpha$ an
observation to be explained. Then, for a set of abducibles
$(KB^{Ab})$, $\Delta$ is said to be an abductive explanation with respect to
$KB^{BG}$ iff $KB^{BG}\cup \Delta\vdash \alpha$. $\Delta$ is said to
be \emph{minimal} with respect to $KB^{BG}$ iff no proper subset of $\Delta$ is
an abductive explanation for $\alpha$, i.e. $\nexists\Delta^{'}$
s.t. $KB^{BG}\cup\Delta^{'}\vdash\alpha$.
\end{definition}

Since an incision function is adding and removing only updatable
elements from each member of the kernel set, to compute a
generalized revision of $\alpha$ from KB, we need to compute only
the abduction in every $\alpha$-kernel of KB. So, it is now
necessary to characterize precisely the abducibles present in every
$\alpha$-kernel of KB. The notion of minimal abductive explanation
is not enough to capture this, and we introduce locally minimal and
KB-closed abductive explanations explanations.

\begin{definition}[Local minimal abductive explanations] \label{D21}
Let $KB^{BG'}$ be a subset of $KB^{BG}$, s.t $\Delta$ is a minimal
abductive explanation of $\alpha$ with respect to $KB^{BG'}$ (for some
$\Delta$). Then $\Delta$ is called local minimal for $\alpha$ with respect to
$KB^{BG}$.
\end{definition}

\begin{example} \label{E7}
From example \ref{E6}, suppose $\{ p \leftarrow q \wedge a, p
\leftarrow a\}$, where $a$ and $f$ are abducibles in the grammar
system R. Clearly, $\Delta_1=\{a\}$ is the only minimal abductive
explanation for $p$ with respect to R. $\Delta_2=\{a,q\}$ is an abductive
explanation for $p$ with respect to R, but not a minimal one. However,
$\Delta_2$ is a locally minimal abductive explanation for $p$ with respect to R,
since it is a minimal explanation for $p$ with respect to $\{ p \leftarrow q
\wedge a\}$ which is a subset of R.
\end{example}

The concept of locally minimal abductive explanation is
computationally attractive, since minimal abductive explanation is
more expensive to compute (Aravindan 1995). To find a minimal
admissible and denial literal (atom) from $KB^{Ab}$ that is positive
and negative literal (atom) from $KB^{Ab}$, we need to introduce a
constraint system (C) with integrity constraint (IC).

\begin{definition} [Constraint abduction system] \label{22} A constrained abductive grammar is a pair
$\langle \Gamma,C\rangle$, where $\Gamma$ is an abductive grammar
and $C$ a constraint system for abduction, $\Gamma$=$\langle N,T,R,S
\rangle$ and C=$\langle KB^{BG},KB^{Ab},IC\rangle$.
\end{definition}

Given a constrained abductive grammar $\langle \Gamma,C\rangle$ as
above, the constrained abductive recognition problem for $\tau \in
T^*$ is the problem of finding an admissible and denial knowledge
base (Horn knowledge base contained set of positive and negative
literal (atoms)) from $KB^{Ab}$ and such that $\tau \in
\mathcal{L}_{P}(\Gamma_{KB^{Ab}})$ where
$\mathcal{L}_{P}(\Gamma_{KB^{Ab}})$ is propositional language over
abducibles in $\Gamma$, where $\Gamma_{KB^{Ab}}$ = $\langle N,T,
KB^{BG}\cup KB^{Ab},R,S \rangle$. In this case, $KB^{Ab}$ is called
a \emph{constrained (abductive) system of} $\tau$. Such that
$KB^{Ab}$ is minimal whenever no proper subset of it is in $\tau$
given $\langle \Gamma,C\rangle$.

\begin{example} \label{E8}
We extend example \ref{E6}, in order to show that $C$ is constraint
system C, with C =$\langle KB^{BG},KB^{Ab},IC\rangle$
\begin{itemize}
  \item [] $KB^{BG}=\{p\leftarrow q,a;~p\leftarrow r,b;~q\leftarrow c,d;~r\leftarrow
  e,f;~p\leftarrow b;~a;~r$\}
  \item [] $KB^{Ab}=\{a,b,c,d,e,f,q,r\}$
  \item [] IC\hspace{0.6cm}=\{$\leftarrow b$\}
  \end{itemize}
\end{example}

\begin{note} \label{N2}
Let $KB^{Ab}\in(\{\Delta^{+},\Delta^{-}\})$. Here $\Delta^{+}$
refers to admission Horn knowledge base (positive atoms) and
$\Delta^{-}$ refers to denial Horn knowledge base (negative atoms)
with respect to given $\alpha$. The abduction problem is to explain $\Delta$
with abducibles $(KB^{Ab})$, s.t.
$KB^{BG}\cup\Delta^{+}\cup\Delta^{-}\models\alpha$ and
$KB^{BG}\cup\Delta^{+}\models\alpha\cup\Delta^{-}$ are both
consistent with IC.
\end{note}

An admission and denial Horn knowledge base, based on $\langle
KB^{BG},KB^{Ab} \rangle$ is a set $KB^{Ab}$ of atoms (literals)
whose propositions are in $KB^{Ab}$ such that $KB^{BG}\cup KB^{Ab}$
is consistent with IC.

\begin{example} \label{E9}
From Example \ref{E8} and Note \ref{N2}, the constraint system C,
with C =$\langle KB^{BG},KB^{Ab},IC\rangle$
\begin{itemize}
 \item [] $KB^{BG}=\{p\leftarrow q,a;~p\leftarrow r,b;~q\leftarrow c,d;~r\leftarrow
  e,f;~p\leftarrow b;~a;~r\}$
  \item [] $KB^{Ab}= \{\Delta^+ = \{a,c,d,e,f,q,r\}$ and $ \Delta^- = \{a,b,r\}\}$
  \item [] IC\hspace{0.6cm}=\{$\leftarrow b$\}
\end{itemize}
\end{example}


\begin{definition}[KB-closed abductive explanations] \label{D55}
For a set of abducibles $(KB^{Ab})$, $\Delta^{+}$ and $\Delta^{-}$
are said to be closed abductive explanations with respect to $KB^{BG}$ iff
$KB^{BG}\cup \Delta^{+} \cup \Delta^{-} \models \alpha$ and
$KB^{BG}\cup\Delta^{+}\models \alpha\cup\Delta^{-}$. $\Delta^{+}$
and $\Delta^{-}$ are said to be \emph{minimal} with respect to $KB^{BG}$ iff no
proper subset of $\Delta^{+}$ and $\Delta^{-}$ is an abductive
explanation for $\alpha$, i.e.
$\nexists\Delta^{{+}^{'}}\subsetneq\Delta^{+}$ and
$\nexists\Delta^{{-}^{'}}\subsetneq\Delta^{-}$ s.t.
$KB^{BG}\cup\Delta^{{+}^{'}}\cup\Delta^{{-}^{'}}\models \alpha$ and
$KB^{BG}\cup\Delta^{{+}^{'}}\models\alpha\cup\Delta^{{-}^{'}}$ both
consistent with IC.
\end{definition}

KB-closed abductive explanations are also known as KB-closed local
minimal explanations.

\begin{observation} \label{z1}
Let  $KB^{BG'}$ be a smallest subset of $KB^{BG}$ s.t, $\Delta^{+}$
and $\Delta^{-}$ minimal abductive explanations of $\alpha$ with respect to
$KB^{Ab'}$ and $KB^{BG'}$ (for some $\Delta^{+}$ and $\Delta^{-}$).
Then $\Delta^{+}$ and $\Delta^{-}$ are called \emph{locally minimal}
for $\alpha$ with respect to $KB^{Ab'}$ and $KB^{BG}$ and consistent with IC.
\end{observation}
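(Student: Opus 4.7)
The plan is to treat this observation as a naming lemma: it extends Definition \ref{D21} (the single-set notion of locally minimal explanation) to the two-sided KB-closed setting of Definition \ref{D55}. So the "proof" is really a coherence check: given a pair $(\Delta^{+},\Delta^{-})$ that qualifies as a minimal KB-closed explanation with respect to the witnessing subset $KB^{BG'}$, I need to verify that calling the pair locally minimal with respect to the full $KB^{BG}$ is consistent with how Definition \ref{D21} uses the word locally in the one-sided case, and that IC-consistency is inherited correctly.

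First, I would unpack the hypothesis through Definition \ref{D55} applied to $KB^{BG'}$ rather than $KB^{BG}$: the minimality of $(\Delta^{+},\Delta^{-})$ with respect to $KB^{BG'}$ gives the two entailments $KB^{BG'}\cup\Delta^{+}\cup\Delta^{-}\models\alpha$ and $KB^{BG'}\cup\Delta^{+}\models\alpha\cup\Delta^{-}$, both consistent with $IC$, together with the non-existence of proper sub-pairs with the same property inside $KB^{BG'}$. Then I would invoke the hypothesis that $KB^{BG'}$ is the smallest such subset of $KB^{BG}$: this is precisely the role played by the subset in Definition \ref{D21}, namely a minimal carrier of background rules inside the full background that still supports the pair as an explanation. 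Relabelling the pair locally minimal with respect to $KB^{Ab'}$ and $KB^{BG}$ is then, by construction, the two-sided analog of the single-sided relabelling in Definition \ref{D21}.

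For the IC-consistency clause in the conclusion, I would note that it is already part of the minimality condition of Definition \ref{D55} applied to $KB^{BG'}$, so it is inherited into the relabelled statement without any extra argument. The only point I would emphasize explicitly, to avoid over-claiming, is that being locally minimal with respect to $KB^{BG}$ is strictly weaker than being minimal with respect to $KB^{BG}$: the observation does not assert that $(\Delta^{+},\Delta^{-})$ is minimal w.r.t. the full background, only that it is witnessed by a smallest sub-background. The only genuine obstacle is definitional hygiene, ensuring that smallest is read as setwise minimal carrier of $KB^{BG}$ in the sense of Definition \ref{D21}, rather than cardinality-minimal, so that the analogy with the single-sided case is exact; once that reading is fixed, the observation follows directly by rewriting Definition \ref{D55} with $KB^{BG'}$ in the role of $KB^{BG}$ and then importing the locally naming convention from Definition \ref{D21}.
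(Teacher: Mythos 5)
Your reading is correct and matches the paper's own treatment: the paper supplies no proof for Observation \ref{z1} at all, because the phrase ``are called'' makes it a definition in disguise --- the two-sided $(\Delta^{+},\Delta^{-})$ analog of Definition \ref{D21}, with IC-consistency imported directly from Definition \ref{D55}. Your coherence check (including the caveat that ``locally minimal w.r.t.\ $KB^{BG}$'' is weaker than ``minimal w.r.t.\ $KB^{BG}$'', and that ``smallest'' must be read as subset-minimal) is exactly the right level of argument, and is in fact more careful than what the paper itself provides.
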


\begin{example} \label{E10} $\Delta^{+} =\{a,c,d\}$ and $\Delta^{-} =\{c\}$ with respect
to IC are only locally minimal abductive explanations for $p$ with respect to
$KB^{BG'}$ (more explanations can be found in (Lu W 1999)).

From example \ref{E9} and definition \ref{D55},  we want to show
that the constraint system C, with C =$\langle
KB^{BG},KB^{Ab},IC\rangle$
\begin{itemize}
  \item [] $KB^{BG}=\{p\leftarrow q,a;~q\leftarrow c,d;~a;~c;~d\}$
  \item [] $KB^{Ab}= \{\Delta^+ = \{a,c,d\}$ and $ \Delta^- =
  \{b,r\}\}$ and IC\hspace{0.6cm}=\{$\leftarrow b$\}
\end{itemize}

\end{example}

Now, we need to connect the grammar system $\Gamma$ to the Horn
(stratified) knowledge base $KB$, such that $KB_I\cup KB_U\cup
KB_{IC}=KB^{BG}\cup KB^{Ab}\cup IC$ holds. The connection between
locally minimal abductive explanation for $\alpha$ with respect to $KB_I$ and
$\alpha$-kernel of KB, which is shown by the following lemma
immediately follows from their respective definitions.

\begin{observation} \label{l1}
\hspace{0.5cm}
\begin{enumerate} 
\item[1.] Let KB be a Horn (stratified) knowledge base and $\alpha$ a Horn clause s.t.
$\nvdash\neg \alpha$. Let $\Delta^{+}$ and $\Delta^{-}$ be a
KB-closed locally minimal abductive explanation for $\alpha$ with respect to
$KB_{I}$. Then, there exists an $\alpha$-kernel X of KB s.t. $X\cap
KB_{U}=\Delta^{+} \cup \Delta^{-}$.
\item[2.] Let KB be a Horn (Horn logic
with stratified negation) knowledge base and $\alpha$ a Horn clause
s.t. $\nvdash\neg \alpha$. Let X be a $\alpha$-kernel of KB and
$\Delta^{+} \cup \Delta^{-}=X\cap KB_{U}$. Then, $\Delta^{+}$ and
$\Delta^{-}$ are KB - closed locally minimal abductive explanations
for $\alpha$ with respect to $KB_{I}$.
\end{enumerate}
\end{observation}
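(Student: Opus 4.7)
The plan is to establish both directions by unpacking the two definitions under the natural identification of the abductive grammar with the Horn knowledge base, namely $KB^{BG}=KB_I\cup KB_{IC}$ and $KB^{Ab}=KB_U$. Under this identification the kernel condition ``$X\cup\alpha$ is inconsistent with $KB_I\cup KB_{IC}$'' and the abductive entailment ``$KB^{BG}\cup\Delta^+\cup\Delta^-\models\alpha$'' become two readings of the same fact, since a Horn denial $\alpha$ is unsatisfiable together with $X$ iff $X$ derives the positive counterpart of $\alpha$. The correspondence $X\longleftrightarrow X\cap KB_U$ is then inclusion-preserving, so minimality on one side matches minimality on the other.

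For part~1, I would take a KB-closed locally minimal abductive explanation $(\Delta^+,\Delta^-)$ and set $X=KB_I\cup KB_{IC}\cup\Delta^+\cup\Delta^-$. Condition~(1) of Definition~\ref{D14} holds by construction; condition~(2) follows from $KB^{BG}\cup\Delta^+\cup\Delta^-\models\alpha$ together with the hypothesis $\nvdash\neg\alpha$; and condition~(3), kernel minimality, follows from local minimality of the explanation, since dropping any element of $\Delta^+\cup\Delta^-$ breaks derivability of $\alpha$ (by Observation~\ref{z1}) and therefore restores consistency of the shrunken $X'\cup\alpha$. By construction $X\cap KB_U=\Delta^+\cup\Delta^-$.

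For part~2, starting from an $\alpha$-kernel $X$ I would set $\Delta^+\cup\Delta^-=X\cap KB_U$, splitting the updatable atoms into $\Delta^+$ and $\Delta^-$ according to the polarity with which they occur in a stratified derivation of the positive counterpart of $\alpha$ from $X$. The inconsistency of $X\cup\alpha$ directly yields $KB_I\cup\Delta^+\cup\Delta^-\models\alpha$, and restricted to the atoms actually used in that derivation the stability condition $KB_I\cup\Delta^+\models\alpha\cup\Delta^-$ of Definition~\ref{D55}. Kernel minimality, combined with Observation~\ref{z1}, gives local minimality of $(\Delta^+,\Delta^-)$; KB-closure follows from $X\subseteq KB$ and $KB_{IC}\subseteq X$.

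The main obstacle I expect is justifying the split of $X\cap KB_U$ into $\Delta^+$ and $\Delta^-$ when the knowledge base is stratified rather than purely Horn. In the plain Horn case $\Delta^-$ is empty and the bijection is transparent, but with negation-as-failure one must show both closure conditions of Definition~\ref{D55} can be arranged simultaneously from a single kernel and that every stratified derivation witnessing the kernel selects a canonical pair $(\Delta^+,\Delta^-)$ independent of proof-theoretic choices. A secondary subtlety is aligning integrity constraints: the kernel definition forces $KB_{IC}\subseteq X$ whereas the abductive framework only asks that the explanation be consistent with $IC$, so one must verify that under the identification $KB_{IC}=IC$ these are equivalent demands.
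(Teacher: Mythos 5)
There is a genuine gap in your Part~1, and it sits exactly at the point where the paper's own proof does something different. You construct the candidate kernel as $X=KB_I\cup KB_{IC}\cup\Delta^{+}\cup\Delta^{-}$ and claim that condition~(3) of Definition~\ref{D14} (minimality of the inconsistent subset) follows because ``dropping any element of $\Delta^{+}\cup\Delta^{-}$ breaks derivability of $\alpha$.'' That is not what local minimality gives you. By Definition~\ref{D21} a \emph{locally} minimal explanation is only minimal relative to some subset $KB^{BG'}\subseteq KB^{BG}$, not relative to the whole of $KB_I$; the paper's Example~\ref{E7} exhibits exactly this: $\Delta_2=\{a,q\}$ is locally minimal for $p$ (minimal with respect to $\{p\leftarrow q\wedge a\}$) while $\{a\}$ alone already derives $p$ against the full rule set. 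For such a $\Delta_2$ your set $KB_I\cup KB_{IC}\cup\Delta_2$ has the proper subset obtained by discarding $q$ which is still inconsistent with $\alpha$, so $X$ is not a kernel. The paper's proof avoids this by building the kernel as $KB_I'\cup\Delta^{+}\cup\Delta^{-}$, where $KB_I'$ is the (smallest) witnessing subset of $KB_I$ from Observation~\ref{z1}; minimality of the explanation \emph{with respect to $KB_I'$}, together with minimality of $KB_I'$ itself, is what makes every proper subset of the candidate consistent with $\alpha$. Your argument also never addresses why no element of the immutable part can be dropped from $X$ while preserving inconsistency, which is a second face of the same problem.

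Your Part~2 is essentially the paper's argument (read off $\Delta=X\cap KB_U$ and observe that kernel minimality makes $\Delta$ a minimal explanation with respect to the non-abducible part $X\setminus\Delta$ of the kernel, hence locally minimal with respect to $KB_I$), and your closing caveats about the $\Delta^{+}/\Delta^{-}$ split under stratified negation and about the role of $KB_{IC}$ are legitimate concerns that the paper itself glosses over. But as written, Part~1 proves the statement only for explanations that happen to be globally minimal; to repair it, replace $KB_I$ by the minimal background subset $KB_I'$ guaranteed by local minimality before forming the kernel.
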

\vspace{0.5cm}
\begin{proof}
\hspace{0.5cm}
\begin{enumerate}
\item[1.] \rm The fact that $\nvdash\neg\alpha$ and there exists a KB -
closed locally minimal abductive explanation for $\alpha$ with respect to
$KB_{I}$, it is clear that there exists at least one $\alpha$-
kernel of KB. Suppose $\Delta$ ($\Delta\in\Delta^{+} \cup
\Delta^{-}$) is empty (i.e. $KB_{I}\models\neg\alpha$), then the
required result follows immediately. If not, since $\Delta$ is a
locally minimal abductive explanation, there exists a minimal subset
$KB_{I}'\subseteq KB_{I}$, s.t. $\Delta$ is minimal abductive
explanation of $\alpha$ with respect to $KB_{I}'$. Since, $\Delta$ is KB-closed,
it is not difficult to see that $KB_{I}'\cup \Delta^{+} \cup
\Delta^{-}$ is a $\alpha$ - kernel of KB.
\item[2.] Since X is a $\alpha$ - kernel of KB and $\Delta$ is the
set of all abducibles in X, it follows that $\Delta^{+} \cup
\Delta^{-}$ is a minimal abductive explanation of $\Delta$ with respect to
$X\backslash \Delta^{-} \cup \Delta^{+}$. It is obvious that
$\Delta^{+} \cup \Delta^{-}$ is KB- closed, and so $\Delta$ is a
KB-closed locally minimal abductive explanation for $\alpha$ with respect to
$KB_{I}$.
\end{enumerate}
\end{proof}

\begin{theorem} \label{T9}
Consider a constrained abductive grammar $AG = \langle
\Gamma,C\rangle$ with $\Gamma = \langle N,T,KB,R, S\rangle$ and $C =
\langle KB^{BG},KB^{Ab},IC\rangle$. Construct a abductive grammar
$\Delta(AG) = \langle N, T,KB^{BG},R,S\rangle$ by having, for any
$(\Delta^{+})$ (or) $(\Delta^{-})$ from $KB^{Ab}$, the set of
acceptable results for accommodate $(\alpha,KB^{BG}\in\Delta^{+})$
being of the form $(KB^{Ab}\backslash \Delta^{+})$ where
($\Delta^{+}\in KB^{Ab'}$). $\Delta^{+}$ is a locally minimal set of
atoms (literals) $KB^{BG}\cup\Delta^{+}$ and $KB^{BG}\cup\Delta^{+}
\models \alpha$ is consistent with IC; if $(\Delta^{-})$ exists
procedure is similar, $($like denial $(\Delta^{-})$ being of the
form $(KB^{Ab}\backslash \Delta^{-})$. $\Delta^{-}$ is a locally
minimal set of atoms (literals) $KB^{BG}\cup\Delta^{-}$ and
$KB^{BG}\cup\Delta^{-} \models \alpha$ is consistent with IC$)$,
otherwise accommodate $(\alpha,KB^{BG}\in\Delta^{-})$ is not
possible.
\end{theorem}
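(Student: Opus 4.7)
My plan is to establish the theorem by building a two-way correspondence between the acceptable results of the constructed abductive grammar $\Delta(AG)$ and the KB-closed locally minimal abductive explanations guaranteed by Definition~\ref{D55}. The bridge between the two viewpoints is Observation~\ref{l1}, which already aligns KB-closed locally minimal abductive explanations with $\alpha$-kernels of $KB$. The key observation is that the construction $\Delta(AG) = \langle N, T, KB^{BG}, R, S\rangle$ is designed precisely so that a set of abducibles is ``acceptable'' for accommodating $\alpha$ exactly when it is a locally minimal (and IC-consistent) explanation in the sense of Definition~\ref{D21} extended with integrity constraints.

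I would first treat the positive case. Given an acceptable result of the form $(KB^{Ab}\setminus \Delta^{+})$ for accommodate$(\alpha, KB^{BG} \in \Delta^{+})$, I would show that $\Delta^{+}$ satisfies the two required conditions: $KB^{BG}\cup\Delta^{+} \models \alpha$ and local minimality, and is consistent with $IC$. Forward direction: the grammar construction keeps $\Delta^{+}$ minimal with respect to some subset $KB^{BG'}\subseteq KB^{BG}$ that derives $\alpha$; by Observation~\ref{l1}(2) any such $\Delta^+$ arises from an $\alpha$-kernel of $KB$. Reverse direction: if $\Delta^{+}$ is a KB-closed locally minimal abductive explanation with respect to $KB^{BG}$ consistent with $IC$, then by Observation~\ref{l1}(1) there is an $\alpha$-kernel $X$ of $KB$ whose updatable elements are precisely $\Delta^{+}$, and the grammar construction then admits $(KB^{Ab}\setminus \Delta^{+})$ as an acceptable result.

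Next I would handle the negative case $\Delta^{-}$ by the symmetric argument, using the dual clause of Definition~\ref{D55}: $KB^{BG}\cup\Delta^{+}\models \alpha\cup\Delta^{-}$. Here the construction treats negative atoms as denials that must also hit the $\alpha$-kernels but from the denial side. I would then verify, using Note~\ref{N2}, that the joint consistency condition $KB^{BG}\cup\Delta^{+}\cup\Delta^{-}$ with $IC$ survives the passage through the grammar, which is immediate because $IC$ is carried into the constraint system $C$ unchanged. Finally, the ``otherwise not possible'' clause follows by contrapositive: if no locally minimal $IC$-consistent $\Delta^{-}$ exists, then by Observation~\ref{l1} no $\alpha$-kernel contains only denial abducibles of the required shape, hence the grammar has no acceptable result to offer.

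The main obstacle I anticipate is pinning down ``acceptable result'' in the grammar formalism tightly enough to make the correspondence airtight, since the statement mixes grammar-theoretic and abductive vocabulary; in particular, keeping track of the difference between local and global minimality (since local minimality is what the grammar naturally enforces, as highlighted after Definition~\ref{D21} and in Example~\ref{E7}) requires always choosing the subset $KB^{BG'} \subseteq KB^{BG}$ carefully. The second delicate point is ensuring that consistency with $IC$ is preserved on both sides: $IC$ must be rechecked for $\Delta^{+}\cup\Delta^{-}$ jointly rather than separately, and the grammar construction has to reject any candidate violating $IC$. Once these bookkeeping issues are handled, the theorem reduces to invoking Observation~\ref{l1} in each direction.
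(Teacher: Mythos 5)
Your proposal follows essentially the same route as the paper's own proof, which simply invokes Observation~\ref{z1} (local minimality with respect to a smallest subset $KB^{BG'}\subseteq KB^{BG}$, consistent with $IC$) together with Observation~\ref{l1} (the two-way correspondence between KB-closed locally minimal abductive explanations and $\alpha$-kernels) to handle $\Delta^{+}$ and $\Delta^{-}$ and the ``otherwise not possible'' case. Your write-up is in fact more careful than the paper's terse argument about separating the two directions and about where $IC$-consistency must be checked jointly for $\Delta^{+}\cup\Delta^{-}$, but the underlying decomposition and the key lemmas used are the same.
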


\begin{proof} 
From Observation \ref{z1}, Let  $KB^{BG'}$ be a smallest subset of $KB^{BG}$ s.t, $\Delta^{+}$
and $\Delta^{-}$ minimal abductive explanations of $\alpha$ with respect to
$KB^{Ab'}$ and $KB^{BG'}$ (for some $\Delta^{+}$ and $\Delta^{-}$).
Then $\Delta^{+}$ and $\Delta^{-}$ are called \emph{locally minimal}
for $\alpha$ with respect to $KB^{Ab'}$ and $KB^{BG}$ and consistent with IC.

From Observation \ref{l1}, $(\Delta^{-})$ is follow to the kernel of KB and $\Delta$ is the
set of all abducibles in $(\alpha,KB^{BG}\in\Delta^{+})$, it follows that $\Delta^{+} \cup
\Delta^{-}$ is a minimal abductive explanation of $\Delta$ with respect to
$KB\backslash \Delta^{-} \cup \Delta^{+}$. It is obvious that
$\Delta^{+} \cup \Delta^{-}$ is KB- closed, and so $\Delta$ is a
KB-closed locally minimal abductive explanation for $\alpha$ with respect to
$KB_{I}$. 

$(\Delta^{-})$ is not follow from KB -
closed locally minimal abductive explanation for $\alpha$ with respect to
$KB_{I}$, it is clear that there exists at least one $\alpha$-
kernel of KB. 
\end{proof}

An immediate consequence of the above observation \ref{l1} is that it is
enough to compute all the KB-closed locally minimal abductive
explanations for $\alpha$ with respect to $KB_{I}$ in order to revise $\alpha$
from KB. Thus, a well-known abductive procedure to compute an
abductive explanation for $\alpha$ with respect to $KB_I$ could be used:

\begin{theorem} \label{T10}
 Let KB be a Horn (stratified) knowledge base and $\alpha$ a Horn clause.
\begin{enumerate}
\item[1.] If Algorithm 1 produces $KB'$ as a result of revision
$\alpha$ to KB, then $KB'$ is a generalized revision of $\alpha$
from KB.
\item[2.] If $KB'$ is a generalized revision of $\alpha$ from KB,
then there exists an incision function $\sigma$ s.t. $KB'$ is
produced by Algorithm 1 as a result of revision $\alpha$
from KB, using $\sigma$.
\end{enumerate}
\end{theorem}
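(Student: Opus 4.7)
The plan is to reduce the theorem to the already-established equivalence between generalized revision and generalized kernel revision (Theorem~\ref{T8}), together with the correspondence between $\alpha$-kernels of $KB$ and $KB$-closed locally minimal abductive explanations (Observation~\ref{l1}). The role of Algorithm~1 is to enumerate all $KB$-closed locally minimal abductive explanations for $\alpha$ with respect to $KB_I$ and then select from each one a non-empty subset of updatable atoms to retract; so proving the theorem amounts to showing that this enumerate-and-cut procedure is exactly the kernel-change construction of Definition~\ref{D17} viewed through the abductive lens.

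For the soundness direction (1), I would start with a run of Algorithm~1 that produces $KB'$. By construction the algorithm visits a family $\{\Delta^{+}_i\cup\Delta^{-}_i\}$ of $KB$-closed locally minimal abductive explanations for $\alpha$ w.r.t.\ $KB_I$, and retracts a chosen non-empty subset of updatable atoms from each. Observation~\ref{l1}(2) identifies each such $\Delta^{+}_i\cup\Delta^{-}_i$ with $X_i\cap KB_U$ for some $X_i\in KB\bot_\bot\alpha$, and moreover guarantees that \emph{every} $\alpha$-kernel arises in this way. The selection made by the algorithm therefore induces a map $\sigma$ which (i) lies inside $\bigcup KB\bot_\bot\alpha$ and disjoint from $KB_I$, and (ii) meets every $X\in KB\bot_\bot\alpha$ non-trivially; that is, $\sigma$ is an incision function in the sense of Definition~\ref{D15}. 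Consequently $KB'=(KB\setminus\sigma(KB\bot_\bot\alpha))\cup\alpha$ is a generalized kernel revision, and Theorem~\ref{T8}(1) upgrades this to the claim that $KB'$ is a generalized revision of $\alpha$ from $KB$.

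For the completeness direction (2), I would run the previous argument in reverse. Let $KB'$ be any generalized revision of $\alpha$ from $KB$. Theorem~\ref{T8}(1) provides an incision function $\sigma$ with $KB'=(KB\setminus\sigma(KB\bot_\bot\alpha))\cup\alpha$. Using Observation~\ref{l1}(1), every $\alpha$-kernel $X$ of $KB$ satisfies $X\cap KB_U = \Delta^{+}\cup\Delta^{-}$ for a unique $KB$-closed locally minimal abductive explanation of $\alpha$ w.r.t.\ $KB_I$. Hence $\sigma$ can be transported to a selection on the set of such explanations; feeding this selection to Algorithm~1 makes it enumerate precisely those explanations and remove precisely the updatable atoms that $\sigma$ picks, so the output of Algorithm~1 coincides with $KB'$. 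The degenerate case $KB_I\cup KB_{IC}\vdash\neg\alpha$ (where both constructions simply return $KB$) should be treated separately but is immediate from Vacuity~1 (KB*4.1) and Definition~\ref{D17}.

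The main obstacle will be the faithfulness of the translation between abductive explanations and kernels under the stratified-negation setting: one must verify that Algorithm~1's enumeration produces \emph{all} and \emph{only} $KB$-closed locally minimal abductive explanations, so that the induced $\sigma$ really hits every kernel (satisfying condition~2 of Definition~\ref{D15}) without ever touching $KB_I$ or $KB_{IC}$, and that the bookkeeping between $\Delta^{+}$ and $\Delta^{-}$ respects the integrity constraints $KB_{IC}$ on both sides. Once this correspondence is pinned down precisely, the two directions reduce, as sketched above, to direct invocations of Observation~\ref{l1} and Theorem~\ref{T8}.
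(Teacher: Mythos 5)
Your proposal is correct and follows essentially the same route as the paper, whose own proof simply cites Observation~\ref{l1} (the kernel/abductive-explanation correspondence) and Theorem~\ref{T8} (the equivalence of generalized and kernel revision) — exactly the two pillars you build on. Your version usefully spells out the induced incision function and the degenerate case, but it is an elaboration of the paper's argument rather than a different one.
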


\begin{proof}
Follows from Observation \ref{l1} and Theorem \ref{T8} \end{proof}


\subsection{Generalized revision algorithm}

The problem of Horn knowledge base revision is concerned with
determining how a request to change can be appropriately translated
into one or more atoms or literals.  In this section we develop a
new generalized revision algorithm. Note that it is enough to
compute all the KB-locally minimal abduction explanations for
$\alpha$ with respect to $KB_I \cup KB_U \cup KB_{IC}$. If $\alpha$ is
consistent with KB then a well-known abductive procedure for compute
an abductive explanation for $\alpha$ with respect to $KB_{I}$ could
be used to compute kernel revision.\\

$$\begin{array}{cc}\hline \text{\bf Algorithm 1} & \hspace{-4cm}
\text{\rm Generalized revision algorithm}\\\hline \text{\rm Input}:&
\hspace{-0.6cm}\text{\rm A Horn knowledge base}~ KB=KB_{I}\cup KB_{U}\cup KB_{IC}\\
&\text{\rm and a Horn clause}~
\alpha~ \text{\rm to be revised.}\\
\text{\rm Output:} & \text{\rm A new Horn knowledge base
}~KB'=KB_{I}\cup
KB_{U}^*\cup KB_{IC},\\
&\text{s.t.}~ KB'\text{\rm is
a generalized revision}~ \alpha~\text{\rm to KB.}\\
\text{\rm Procedure}~KB(KB,\alpha)&\\
\text{\rm begin}&\\
~~1.&\hspace{-0.5cm}\text{\rm Let V:=}~\{c\in KB_{IC}~|~ KB_I\cup
KB_{IC}~\text{\rm inconsistent
with}~\alpha~\text{\rm with respect to}~c\}\\
&P:=N:=\emptyset~\text{\rm and}~KB'=KB\\
~~2.&\text{\rm While}~(V\neq \emptyset)\\
&\text{\rm select a subset}~V'\subseteq V\\
&\text{\rm For each}~v\in~V',~\text{\rm select a literal to be}\\
&\hspace{-0.1cm}\text{\rm remove (add to N) or a literal to be added (add to P) with respect to KB}\\
&\text{\rm Let KB}~:=KR(KB,P,N)\\
&\hspace{-0.3cm}\text{\rm Let V:=}~\{c\in KB_{IC}~|~ KB_I~\text{\rm
inconsistent
with}~\alpha~\text{\rm with respect to}~c\}\\
&\hspace{-0.7cm}\text{\rm return}\\
~~3.&\text{\rm Produce a new Horn knowledge base}~KB'\\
\text{\rm end.}&\\ \hline
\end{array}$$

$$\begin{array}{cc}\hline
\text{\rm Procedure}~
KR(KB,\Delta^{+},\Delta^{-})&\\
\text{\rm begin}&\\
1.&\hspace{-1.6cm}\text{\rm Let}~ P :=\{ e \in \Delta^{+} |~
KB_I\not\models e\} ~\text{\rm and}~ N :=\{ e \in \Delta^{-}
 |~KB_I\models e\}\\
2.&\text{\rm While}~(P\neq 0)~\text{\rm or}~(N\neq 0)\\
&\text{\rm select a subset}~P'\subseteq P~ or ~N'\subseteq N \\
&\hspace{-1.7cm}\text{\rm Construct a set}~S_1=\{X~|~X~\text{\rm is
a KB-closed
locally}\\
&\text{\rm minimal abductive
wrt P explanation for}~\alpha~\text{\rm wrt}~KB_{I}\}.\\
&\hspace{-1.7cm}\text{\rm Construct a set}~S_2=\{X~|~X~\text{\rm is
a KB-closed
locally}\\
&\text{\rm  minimal abductive wrt N explanation for}~\alpha~\text{\rm wrt}~KB_{I}\}.\\
3.&\text{\rm Determine a hitting set}~\sigma (S_1) \text{\rm ~and}~\sigma (S_2)\\
&\hspace{-5.5cm}
\text{\rm If}~((N=0)~and~(P\neq0))\\
&\hspace{-1cm}\text{\rm Produce}~KB'=KB_{I}\cup \{(KB_{U} \cup \sigma (S_1)\}\\
&\hspace{-8.8cm}
\text{\rm else}\\
&\text{\rm Produce}~KB'=KB_{I}\cup \{(KB_{U}\backslash
\sigma(S_2) \cup \sigma (S_1)\}\\
&\hspace{-8.5cm}
\text{\rm end if}\\
&\hspace{-5.5cm}
\text{\rm If}~((N\neq0)~\text{\rm and}~(P=0))\\
&\hspace{-1.2cm}\text{\rm Produce}~KB'=KB_{I}\cup
\{(KB_{U}\backslash
\sigma(S_2)\}\\
&\hspace{-8.8cm}
\text{\rm else}\\
&\text{\rm Produce}~KB'=KB_{I}\cup \{(KB_{U}\backslash
\sigma(S_2) \cup \sigma (S_1)\}\\
&\hspace{-8.5cm}
\text{\rm end if}\\
4.& \hspace{-1.5cm}
\text{\rm return}~ KB'\\
\text{\rm end.}&\\ \hline
\end{array}$$

\subsubsection{Reasoning about Abduction}\hspace{0.5cm}

\begin{definition}[\text{(Teniente \& Olive 1995)}] \label{D23}
Let KB=($KB_I,KB_U,KB_{IC}$) be a Horn knowledge base, $T$ is
updatable part from KB. We define the abduction framework $\langle
KB^{BG},KB^{Ab},IC\rangle$. After Algorithm 1 is executed,
$u$ is derived part from $KB'$. The abduction explanation for $u$ in
$\langle KB_I\cup KB_U^*, KB_{IC}\rangle$ is any set $T_i$, where
$T_i \subseteq KB^{Ab}$ such that: $KB_I\cup KB_U^* \cup T \models
u$.

An explanation $T_i$ is minimal if no proper subset of $T_i$ is also
an explanation, i.e. if it does not exist any explanation $T_j$ for
$u$ such that $T_j \subset T_i$
\end{definition}

\subsubsection{Reasoning about Deduction}\hspace{0.5cm}

\begin{definition}[\text{(Teniente \& Olive 1995)}] \label{D24}
Let KB=($KB_I,KB_U,KB_{IC}$) be a Horn knowledge base, $T$ is
updatable part from KB. After Algorithm 1 is executed, $u$ is
derived part from $KB'$. The deduction consequence on $u$ due to the
application of $T$, $KB_I\cup KB_U^* \cup T\cup u$ is the answer to
any question.
\end{definition}

\begin{example}
Consider a Horn knowledge base KB with immutable part $KB_I$,
updatable part $KB_U$ and integrity constraint $KB_{IC}$, compute
closed local minimum with respect to to p.

$$\begin{array}{cccccc} KB_I:&p\leftarrow
q\wedge a&\hspace{0.5cm}KB_U:&a\leftarrow&\hspace{1.2cm}KB_{IC}:&\leftarrow b\\
&p\leftarrow r\wedge b&&r\leftarrow&&\\
&q\leftarrow c\wedge d&&&&\\
&r\leftarrow e\wedge f&&&&\\
&\hspace{-0.6cm}p\leftarrow b&&&&\end{array}$$

From algorithm 1, the above example execute following
steps:

\begin{enumerate}
\item[] \begin{center} \textbf{Step number with execution} \end{center}
\hspace{1cm}
\item[(Input)] $KB_I: p\leftarrow q\wedge a, p\leftarrow r\wedge b, q\leftarrow c\wedge d, r\leftarrow e\wedge f, p\leftarrow
b\\ KB_U: a\leftarrow, r\leftarrow\\ KB_{IC}:
\leftarrow b$
\item[(0)] $\{p\leftarrow q,a;~p\leftarrow r,b;~q\leftarrow c,d;~r\leftarrow
  e,f;~p\leftarrow b;~a;~r$\}
\item[(1)] $\{V = b\}$
\item[(2)] $\{P = \{a,c,d,e,f,q,r\}$ and $ N=
\{a, r\}\}$
\item[(2.1)] $\{\Delta^+ = \{a,c,d,e,f,q,r\}$ and $ \Delta^- =
\{a, r\}\}$
\item[(2.2)] $\{\Delta^+ = \{a,c,d\}$ and $ \Delta^- =
  \{ \}\}$
\item[(3)] $\{p\leftarrow q,a;~q\leftarrow c,d;~a;~c;~d;~r\}$
\item[(Output)] $KB_I: p\leftarrow q\wedge a, p\leftarrow r\wedge b, q\leftarrow c\wedge d, r\leftarrow e\wedge f, p\leftarrow a, p\leftarrow
b\\
KB_{U}^*: a\leftarrow, c\leftarrow, d\leftarrow, r\leftarrow\\
KB_{IC}: \leftarrow b$
\end{enumerate}
\end{example}

\begin{theorem} \label{T81} Let KB be a Horn knowledge base and $\alpha$
is (Horn or Horn logic with stratified negation) formula.
\begin{enumerate} 
  \item If Algorithm 1 produced KB' as a result of revising $\alpha$
  from KB, then KB' satisfies all the rationality postulates (KB*1) to
(KB*6) and (KB*7.3).
  \item Suppose $KB''$ satisfies all these rationality postulates
  for revising $\alpha$ from KB, then $KB''$ can be produced by Algorithm 1.
\end{enumerate}
\end{theorem}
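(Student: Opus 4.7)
The plan is to prove this theorem by chaining the representation results already established earlier in the paper, rather than verifying each of the nine postulates directly against the algorithm's pseudocode. The key observation is that Theorem~\ref{T6} characterises generalized (partial meet) revision exactly by the postulates (KB*1)--(KB*6) and (KB*7.3); Theorem~\ref{T7} gives the analogous characterisation for generalized kernel revision; Theorem~\ref{T8} shows these two classes coincide; and Theorem~\ref{T10} already says that the outputs of Algorithm~1 are in bijection (via the choice of incision function~$\sigma$) with the generalized revisions of $\alpha$ from $KB$. So the entire proof should reduce to composing these equivalences, with the abductive machinery of Observation~\ref{l1} acting as the bridge between the algorithm's inner loop (computing KB-closed locally minimal abductive explanations of $\alpha$ with respect to~$KB_I$) and the $\alpha$-kernels needed by kernel revision.

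For part~1, I would argue as follows. Suppose Algorithm~1 terminates with output $KB'$ when revising $KB$ by $\alpha$. Inside the procedure $KR$, the sets $S_1$ and $S_2$ collect KB-closed locally minimal abductive explanations of $\alpha$ with respect to $KB_I$, and $\sigma(S_1)$, $\sigma(S_2)$ are hitting sets through them. By Observation~\ref{l1}, these hitting sets correspond exactly to a hitting set through the $\alpha$-kernels of $KB$ restricted to $KB_U$, so $KB'$ has the form $KB \setminus \sigma(KB \bot_{\bot} \alpha) \cup \alpha$ required by Definition~\ref{D17}; hence $KB'$ is a generalized kernel revision. Theorem~\ref{T7} (together with Theorem~\ref{T8}) then immediately yields that $KB'$ satisfies (KB*1)--(KB*6) and (KB*7.3). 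The degenerate case, where $\alpha$ is inconsistent with $KB_I \cup KB_{IC}$, is handled by the branch of Definition~\ref{D17} that returns $KB$ unchanged, and trivially satisfies the postulates (in particular (KB*4.1)).

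For part~2, I would argue in reverse. Let $KB''$ satisfy the postulates for revising $\alpha$ from $KB$. By Theorem~\ref{T6}, $KB''$ is a generalized (partial meet) revision, hence by Theorem~\ref{T8}(1) also a generalized kernel revision; so there is some incision function $\sigma$ for $KB$ with $KB'' = KB \setminus \sigma(KB \bot_{\bot} \alpha) \cup \alpha$. Using the second part of Observation~\ref{l1}, the restriction of this incision function to $KB_U$ selects a hitting set through the KB-closed locally minimal abductive explanations for $\alpha$ with respect to~$KB_I$, and this is exactly the data used by $KR$ to construct $\sigma(S_1)$ and $\sigma(S_2)$. Consequently one can choose the internal selections in the $\mathtt{while}$ loops of Algorithm~1 to realise this $\sigma$, whence the algorithm produces $KB''$.

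The main obstacle I expect is not the logical skeleton but the careful bookkeeping around the positive/negative split $\Delta^+, \Delta^-$ and the integrity constraints. Algorithm~1 separately tracks literals to be added ($P$) and literals to be removed ($N$), triggered by constraints in~$V$; to match this to the uniform incision function used in the abstract characterisation one must show that every rational revision can indeed be decomposed into such a $(P,N)$-split consistent with $KB_{IC}$, and conversely that each split produced by a run of the algorithm is exactly the restriction of some incision to $KB_U$. In the stratified-negation setting this is slightly delicate because, as noted around Example~\ref{E15}, deleting a fact may introduce new derivable literals, so ``locally minimal'' must be interpreted in the KB-closed sense of Definition~\ref{D55}. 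Once that correspondence is stated precisely, appealing to Observation~\ref{l1} and Theorem~\ref{T10} closes the argument in both directions.
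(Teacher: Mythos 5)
Your proposal is correct and follows essentially the same route as the paper, whose proof of this theorem is simply the composition of Theorem~\ref{T10} (Algorithm~1's outputs coincide with generalized revisions, via Observation~\ref{l1}) with Theorem~\ref{T7} (generalized kernel revisions are exactly the operators satisfying (KB*1)--(KB*6) and (KB*7.3)). You merely spell out the intermediate links (Theorems~\ref{T6} and~\ref{T8}, and the $\Delta^+/\Delta^-$ bookkeeping) that the paper leaves implicit in its one-line citation.
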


\begin{proof}
Follows from Theorem \ref{T7} and Theorem \ref{T10}
\end{proof}

\section{Belief update VS Knowledge base update} In this section
we give overview of how belief update is related to knowledge base
update. This section is motivated by the works of Konieczny
2011 and Baral $\&$ Zhang 2005.

\subsection{Belief revision vs Belief update} Intuitively,
revision operators bring a minimal change to the base by selecting
the most plausible models among the models of the new information.
Whereas update operators Konieczny
2011 bring a minimal change to each
possible world (model) of the base in order to take into account the
change described by the new information, whatever the possible
world.

\begin{theorem}[\text{[29]}] \label{T11} If $\circ$ is a revision operator (i.e. it satisfies
(R1)-(R6)), then the update operator $\diamond$ defined by
$\psi\diamond\mu$ = $\bigvee_{w\models\psi} \psi_{\{w\}} \circ \mu$
is an update operator that satisfies (U1)-(U9). \end{theorem}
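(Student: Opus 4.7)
The plan is to exploit the Katsuno--Mendelzon representation theorem for revision (Theorem \ref{T1}) together with the representation theorems for update (Theorems \ref{T2} and \ref{T4}). Since $\circ$ satisfies (R1)--(R6), Theorem \ref{T1} supplies a faithful assignment $\psi \mapsto\, \leq_{\psi}$ into total pre-orders with $mod(\psi\circ\mu)=\min(mod(\mu),\leq_{\psi})$. For every possible world $w$, let $\psi_{\{w\}}$ denote a complete formula whose unique model is $w$; any two such formulae are logically equivalent, so (R4) guarantees that the induced pre-order on interpretations is independent of the syntactic choice. I will therefore set $\leq_{w}\;:=\;\leq_{\psi_{\{w\}}}$.

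Next I will check that $w\mapsto\,\leq_{w}$ is a faithful assignment in the update sense of Definition \ref{D6}: for every $w'\neq w$ one needs $w<_{w}w'$. This is immediate from the faithful-assignment conditions for revision applied to $\psi_{\{w\}}$: since $mod(\psi_{\{w\}})=\{w\}$, condition~1 is vacuous, and condition~2 gives $w<_{\psi_{\{w\}}} w'$ for every $w'\neq w$, i.e.\ $w<_{w}w'$.

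With the assignment in place I compute the models of the candidate update. Using (R4) on each disjunct and the representation of $\circ$,
\begin{equation*}
mod(\psi\diamond\mu)\;=\;mod\bigl(\textstyle\bigvee_{w\models\psi}\psi_{\{w\}}\circ\mu\bigr)
\;=\;\bigcup_{w\models\psi}mod(\psi_{\{w\}}\circ\mu)
\;=\;\bigcup_{w\models\psi}\min\bigl(mod(\mu),\leq_{w}\bigr).
\end{equation*}
Hence $\diamond$ admits exactly the semantic form required in Theorems \ref{T2} and \ref{T4}, with the very same assignment $w\mapsto\,\leq_{w}$. Because each $\leq_{w}$ inherited from $\leq_{\psi_{\{w\}}}$ is a total pre-order (and in particular a partial pre-order), Theorem \ref{T2} yields (U1)--(U8), while Theorem \ref{T4} additionally yields (U9). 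Taken together this establishes that $\diamond$ satisfies (U1)--(U9), as required.

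The main obstacle I anticipate is purely a bookkeeping one: making sure that the disjunctive syntactic definition $\psi\diamond\mu=\bigvee_{w\models\psi}\psi_{\{w\}}\circ\mu$ really passes through to a single semantic description that matches the update representation theorems. The delicate point is invoking (R4) to show that (i) the choice of the complete formula $\psi_{\{w\}}$ is irrelevant, and (ii) if $\psi_{1}\equiv\psi_{2}$ then the two disjunctions have the same set of models, so the definition depends only on $mod(\psi)$ and therefore qualifies as an operator on bases. Once this is verified, the rest is a direct appeal to the representation theorems already proved in the paper.
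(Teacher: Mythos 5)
Your proof is correct, and it is essentially the argument the paper itself relies on: the paper states this result by citation (to Katsuno--Mendelzon) and supplies no proof of its own, and the intended derivation is exactly the one you give --- pass from $\circ$ to the revision faithful assignment via Theorem~\ref{T1}, restrict it to the complete formulae $\psi_{\{w\}}$ to obtain a worldwise assignment $w\mapsto\,\leq_{w}$ that is faithful in the sense of Definition~\ref{D6}, identify $mod(\psi\diamond\mu)$ with $\bigcup_{w\models\psi}\min(mod(\mu),\leq_{w})$, and then read off (U1)--(U8) from Theorem~\ref{T2} (each $\leq_{w}$ being in particular a partial pre-order) and (U9) from Theorem~\ref{T4} (each $\leq_{w}$ being total). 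Your attention to well-definedness via (R4) is a point the paper glosses over entirely, so the write-up is if anything more complete than the source it leans on.
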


This theorem states that update can be viewed as a kind of pointwise
revision.

\subsection{Knowledge base revision vs Knowledge base update}
Generalized revision algorithm brings principle of minimal change,
according to new information; how a request to change Horn knowledge
base can be appropriately translated into one or more literals.
Whereas update operators (Baral $\&$ Zhang 2005). bring a minimal change to each
possible world (model) of the base in order to take into account the
change described by the new information.

\begin{theorem}[\text{[3]}] \label{T12} If $*_{\sigma}$  is a revision operator (i.e. it satisfies
(KB*1)-(KB*6) and (KB*7.3) and Theorem \ref{T10} and Lemma
\ref{l1}), then the update operator $\diamond$ defined by
$\psi\diamond\mu$ = $\bigvee_{w\models\psi} \psi_{\{w\}} *_{\sigma}
\mu$ is an update operator that satisfies (U1)-(U9).
\end{theorem}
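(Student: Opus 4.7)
The plan is to verify each of the update postulates (U1)--(U9) by exploiting the pointwise construction $\psi\diamond\mu = \bigvee_{w\models\psi}\psi_{\{w\}}*_\sigma\mu$, reducing every property of $\diamond$ to a property of $*_\sigma$ applied on singleton-world (complete) bases. Since the construction mirrors the one used in Theorem~\ref{T11} for classical belief change, the architecture of the argument is parallel; the new content is that each step must be re-justified from (KB*1)--(KB*6) and (KB*7.3) together with Theorem~\ref{T10} and Observation~\ref{l1}, so that the singleton-base revisions are well behaved.

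First I would handle the ``easy'' postulates by a direct componentwise argument. Postulate (U1) follows because each disjunct $\psi_{\{w\}}*_\sigma\mu$ implies $\mu$ by weak success (KB*2) together with closure (KB*1), so their disjunction does too. For (U2), if $\psi\vdash\mu$ then for every $w\models\psi$ we have $w\models\mu$, so $\psi_{\{w\}}\cup\{\mu\}$ is consistent and Vacuity~2 (KB*4.2) gives $\psi_{\{w\}}*_\sigma\mu\equiv\psi_{\{w\}}$; disjoining reconstructs $\psi$. (U3) comes from Consistency (KB*5) applied to each singleton disjunct, and (U4) follows from Preservation (KB*6), noting that logically equivalent $\psi$'s yield the same set of worlds $w$ and thus the same disjunction.

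Next I would treat the conjunctive postulates (U5)--(U7). For (U5) I would use Inclusion (KB*3.1) on each singleton base $\psi_{\{w\}}$: $(\psi_{\{w\}}*_\sigma\mu)\land\phi$ sits inside $Cn(\psi_{\{w\}}\cup\mu\cup\phi)$ and, since on a complete base the partial-meet construction of Definition~\ref{D13} collapses to either $\psi_{\{w\}}\cup(\mu\land\phi)$ or a subset selected by $\gamma$, one can check that it entails $\psi_{\{w\}}*_\sigma(\mu\land\phi)$; then one distributes over the disjunction. Postulates (U6) and (U7) use the fact that on a \emph{complete} base the faithful assignment of Theorem~\ref{T1} induces a total preorder whose minima in $mod(\mu_1)$ and $mod(\mu_2)$ coincide precisely under the hypotheses of these postulates; this lets me invoke the standard Katsuno--Mendelzon argument used in Theorem~\ref{T11}. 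Postulate (U8) is essentially definitional: $\psi_1\lor\psi_2$ has models $mod(\psi_1)\cup mod(\psi_2)$, and the disjunctive construction of $\diamond$ is therefore additive over this union.

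The main obstacle is (U9), the totality clause, because it requires more than componentwise reasoning: one must show that when $\psi$ is complete (a singleton) and $(\psi\diamond\mu)\land\psi$ is satisfiable, then the preorder induced on worlds by $*_\sigma$ orders $\psi$'s world strictly below all alternatives in $mod(\mu)$. Here I would invoke the representation theorem, Theorem~\ref{T1}, to obtain a faithful total preorder $\leq_{\psi_{\{w\}}}$ associated with $*_\sigma$, apply condition~(2) of Definition~\ref{D4} to force $w$ to be strictly below any $w'\not\models\psi$, and then translate the minimality statement back through the equality $mod(\psi_{\{w\}}*_\sigma\mu)=\min(mod(\mu),\leq_{\psi_{\{w\}}})$ to obtain (U9). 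Closing the argument, I would simply cite Theorem~\ref{T11} as the template: the construction $\diamond$ inherits the update postulates from the revision postulates in exactly the same manner, but now read off from Horn knowledge base revision rather than classical revision, which is legitimate by the Note following Theorem~\ref{T5} that Horn bases are a subset of belief bases.
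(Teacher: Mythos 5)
The paper itself offers no proof of Theorem~\ref{T12}: it is stated as an imported result (the tag \text{[3]} and the surrounding discussion defer to Baral \& Zhang / Katsuno--Mendelzon), so you are not deviating from a written argument --- you are supplying one where the paper supplies none. That said, your sketch has two genuine gaps that prevent it from going through as written.

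First, your verification of (U1) and (U3) leans on Weak Success (KB*2) and Consistency (KB*5), but both of these postulates are \emph{conditional} on $\mu$ being consistent with $KB_{I}\cup KB_{IC}$; if $\mu$ conflicts with the immutable part, Vacuity~1 (KB*4.1) forces $\psi_{\{w\}}*_{\sigma}\mu=\psi_{\{w\}}$, which need not imply $\mu$, and (U1) fails. You never say what the immutable part and integrity constraints of the singleton base $\psi_{\{w\}}$ are; the argument only works if you stipulate that each $\psi_{\{w\}}$ is packaged as a knowledge base with empty (or trivially satisfied) $KB_{I}\cup KB_{IC}$, and that stipulation has to be made explicit because the whole point of the KB postulates is that they diverge from (R1)--(R6) exactly on this conditionality. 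Second, for (U5)--(U7) and (U9) you invoke Theorem~\ref{T1}, i.e.\ the Katsuno--Mendelzon representation for operators satisfying (R1)--(R6). The hypothesis here is only that $*_{\sigma}$ satisfies (KB*1)--(KB*6) and (KB*7.3), and the paper nowhere establishes that this postulate set entails (R1)--(R6); in particular (KB*7.3) is weak relevance, not the conjunction pair (R5)--(R6), and without supplementary conjunction postulates the standard KM argument for (U5)--(U7) does not apply. The closest available substitute is the unproved semantic claim in Section~5.2 that (KB*1)--(KB*6)$+$(KB*7.3) correspond to minimization under a total pre-order $\leq_{\psi}$; if you want the faithful-preorder machinery you should cite that characterization and then still check that it delivers the two conjunction directions, rather than importing Theorem~\ref{T1}, whose hypotheses are not met. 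The remaining componentwise steps --- (U2) via Vacuity~2, (U4) via Preservation, (U8) by additivity of the disjunctive construction --- are fine.
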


\subsection{Belief update vs Knowledge base update} Formally
speaking, both updates  are aiming at maintaining the base of the
knowledge or belief up-to-date.

\begin{theorem} \label{T13} If $\circ$ are revision operators (i.e.
they satisfy (R1)-(R6)), then the update operator $\diamond$ defined
by $\psi\diamond\mu$ = $\bigvee_{w\models\psi} \psi_{\{w\}} \circ
\mu$ is an update operator that satisfies (U1)-(U9).
\end{theorem}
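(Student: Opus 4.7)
The plan is to verify each of the update postulates (U1)--(U9) for the operator $\diamond$ defined by $\psi\diamond\mu = \bigvee_{w\models\psi}\psi_{\{w\}}\circ\mu$, where $\psi_{\{w\}}$ is a complete formula whose unique model is $w$. The central observation that organises the whole argument is that $\mathrm{mod}(\psi\diamond\mu)=\bigcup_{w\models\psi}\mathrm{mod}(\psi_{\{w\}}\circ\mu)$, so each postulate can be analysed disjunct by disjunct and then lifted. In fact, since this statement is precisely the content of Theorem~\ref{T11} (which was already attributed to Konieczny), the cleanest route is to read T13 as a restatement of T11 and appeal to it; I will, however, sketch the direct verification as well.

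I would first dispatch the easy postulates. (U1) follows from (R1) applied termwise, since every model of some $\psi_{\{w\}}\circ\mu$ satisfies $\mu$. (U3) follows from (R3): if $\mu$ is satisfiable, pick any $w\models\psi$ (which exists if $\psi$ is satisfiable) and then $\psi_{\{w\}}\circ\mu$ is satisfiable. (U4) follows from (R4) plus the observation that equivalent bases share the same set of models, so the disjunctions coincide up to equivalence. (U8) is immediate from the definition, since $\mathrm{mod}(\psi_1\vee\psi_2)=\mathrm{mod}(\psi_1)\cup\mathrm{mod}(\psi_2)$ and the disjunctions then split. For (U2), if $\psi\models\mu$ then for each $w\models\psi$ the formula $\psi_{\{w\}}\wedge\mu$ equals $\psi_{\{w\}}$ and is satisfiable, so (R2) yields $\psi_{\{w\}}\circ\mu\equiv\psi_{\{w\}}$; disjoining over $w\models\psi$ gives $\psi\diamond\mu\equiv\psi$. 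For (U5), I would apply (R5) to each disjunct to get $(\psi_{\{w\}}\circ\mu)\wedge\phi\models\psi_{\{w\}}\circ(\mu\wedge\phi)$ and then take the disjunction over $w$.

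The main obstacles will be (U6), (U7) and (U9), which couple the behaviour of $\diamond$ on different inputs and therefore cannot be reduced to a single disjunct in isolation. For (U7) and (U9) the hypothesis that $\psi$ is complete saves the day: $\psi\equiv\psi_{\{w\}}$ for a unique $w$, so $\psi\diamond\mu$ collapses to $\psi\circ\mu$ and the two postulates become direct translations of (R5) and (R6) for $\circ$. For (U6) one argues that $\psi\diamond\mu_1\models\mu_2$ forces $\psi_{\{w\}}\circ\mu_1\models\mu_2$ for every $w\models\psi$, and symmetrically for the other hypothesis; combining (R5) with (R6) at each fixed $\psi_{\{w\}}$ gives $\psi_{\{w\}}\circ\mu_1\equiv\psi_{\{w\}}\circ\mu_2$, and disjoining over $w$ yields the required equivalence. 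The delicate point throughout is the interplay between the local (pointwise) behaviour of $\circ$ on singletons and the global behaviour obtained by disjunction; having $\psi_{\{w\}}$ be complete is what keeps (R2) usable inside the disjunction and is the decisive ingredient that turns a revision operator into an update operator.
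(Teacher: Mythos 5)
The paper gives no proof of Theorem \ref{T13} at all: the statement is a verbatim duplicate of Theorem \ref{T11}, which is itself left unproved and attributed to Katsuno and Mendelzon, and no proof environment follows \ref{T13} in the text. So your observation that the cleanest route is to read \ref{T13} as a restatement of \ref{T11} is exactly right, and your termwise verification actually supplies an argument that the paper omits. Your organising principle --- $mod(\psi\diamond\mu)=\bigcup_{w\models\psi} mod(\psi_{\{w\}}\circ\mu)$, verify each postulate disjunct by disjunct --- is the standard Katsuno--Mendelzon argument, and your treatments of (U1)--(U5), (U8), (U6) and (U9) are essentially correct (for (U6) you should also dispose of the degenerate case where one of the $\mu_i$ is unsatisfiable, which is immediate from (R1) and the hypothesis, and note that the step combining (R5) and (R6) requires the satisfiability side-condition of (R6), which (R3) supplies).

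The one genuine gap is (U7). You claim that for complete $\psi$ it becomes a ``direct translation'' of (R5) and (R6), but (U7) asserts $(\psi\circ\mu_1)\land(\psi\circ\mu_2)$ implies $\psi\circ(\mu_1\vee\mu_2)$, which concerns revision by a \emph{disjunction}, whereas (R5) and (R6) only relate $\psi\circ(\mu\land\phi)$ to $(\psi\circ\mu)\land\phi$. No substitution of $\mu$ and $\phi$ turns one into the other. To close this you need the disjunctive factoring property of AGM revision (that $\psi\circ(\mu_1\vee\mu_2)$ is equivalent to $\psi\circ\mu_1$, to $\psi\circ\mu_2$, or to their disjunction --- a derivable but nontrivial consequence of (R1)--(R6)), or, more directly, an appeal to the faithful-assignment representation of Theorem \ref{T1}: a world lying in $\min(mod(\mu_1),\leq_\psi)\cap\min(mod(\mu_2),\leq_\psi)$ must lie in $\min(mod(\mu_1)\cup mod(\mu_2),\leq_\psi)$, since any $\leq_\psi$-smaller world in $mod(\mu_1\vee\mu_2)$ would already violate minimality in one of the two sets. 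With that one step repaired, your proof is complete and goes beyond what the paper provides.
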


\begin{figure}
\begin{center}
   \includegraphics[height=3cm,angle=0]{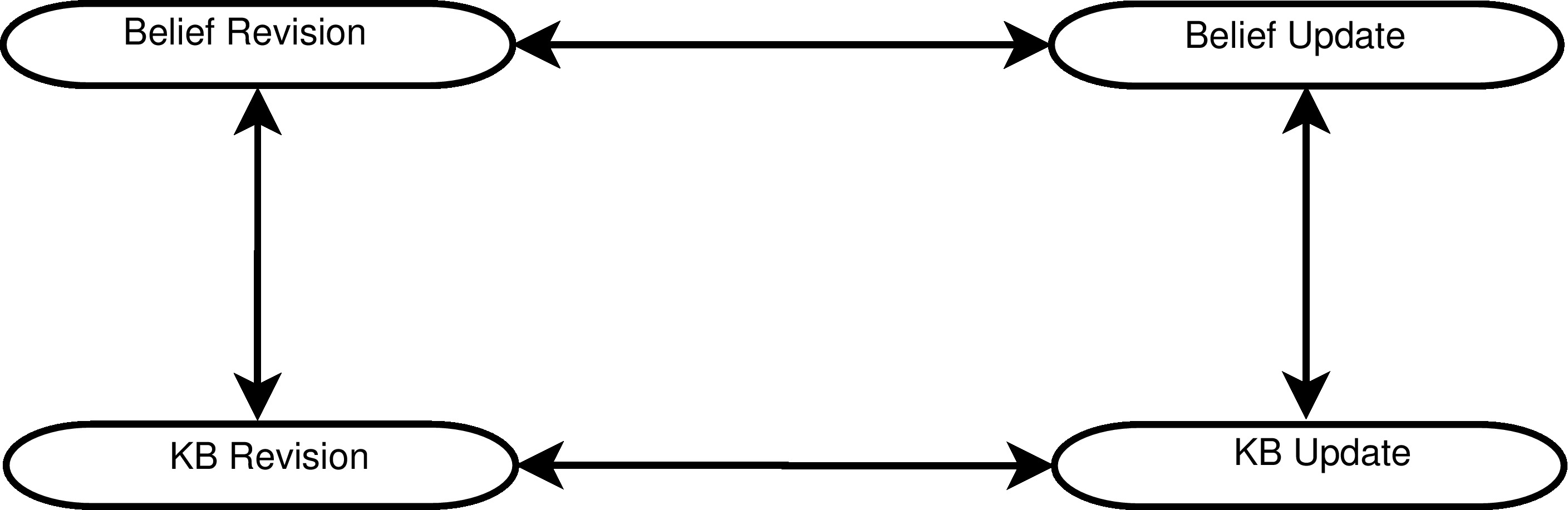}

   \caption{Belief Update Vs Knowledge base Update}
   \end{center}
\end{figure}


\section{Deductive database} A \it Deductive database \rm $DDB$ consists of three parts:
an \it intensional database \rm $IDB$ ($KB_I$), a set of definite
program clauses, \it extensional database \rm $EDB$ ($KB_U$), a set
of ground facts; and \it integrity constraints\rm ~$IC$. The
intuitive meaning of $DDB$ is provided by the \it Least Herbrand
model semantics \rm and all the inferences are carried out through
\rm SLD-derivation. All the predicates that are defined in $IDB$ are
referred to as \emph{view predicates}~and those defined in $EDB$ are
referred to as \emph{base predicates}. \rm Extending this notion, an
atom with a view predicate is said to be a \it view atom,\rm~and
similarly an atom with base predicate is a \it base atom. \rm
Further we assume that $IDB$ does not contain any unit clauses and
no predicate defined in a given $DDB$ is both view and base.

Two kinds of view updates can be carried out on a $DDB$: An atom,
that does not currently follow from $DDB$, can be \it inserted, \rm
or an atom, that currently follows from $DDB$ can be \it deleted.
 \rm When an atom $A$ is to be updated, the view update problem is to
insert or delete only some relevant $EDB$ facts, so that the
modified $EDB$ together with $IDB$ will satisfy the update of $A$ to
$DDB$.

Note that a $DDB$ can be considered as a Horn knowledge base to be
revised. The $IDB$ is the immutable part of the Horn knowledge base
dynamics, while the $EDB$ forms the updatable part. In general, it
is assumed that the language underlying a $DDB$ is fixed and the
semantics of $DDB$ is the least Herbrand model over this fixed
language. We assume that there are no function symbols implying that
the Herbrand Base is finite. Therefore, the $IDB$ is practically a
shorthand of its ground instantiation \footnotemark \footnotetext{a
ground instantiation of a definite program $P$ is the set of clauses
obtained by substituting terms in the Herbrand Universe for
variables in $P$ in all possible ways} written as $IDB_G$. In the
sequel, technically we mean $IDB_G$ when we refer simply to $IDB$.
Thus, a $DDB$ represents a Horn knowledge base dynamics  where the
immutable part is given by $IDB_G$ and updatable part is the $EDB$.
Hence, the rationality postulates (KB*1)-(KB*6) and (KB*7.3) provide
an axiomatic characterization for updating (insert and delete) a
view atom $A$ into a definite database $DDB$.


But before discussing the rationality postulates and algorithm, we
want to make it precise, how a relational database, along with
operations on relations, can be represented by definite deductive
database. We assume the reader is familiar with relational database
concepts. \it A relation scheme \rm $R$ can be thought of as a base
predicate whose arguments define the \it attributes $\mathbb A$ of
the scheme. Its \it relational extension \rm $r$, is a finite set of
base atoms $R(\mathbb A)$ containing the predicate $R$. A \it
database schema \rm consists of finite collection of relational
schemes $<R_1,\ldots,R_n>$, and a \it relational database \rm is a
specific extension of database schema, denoted as
$<r_1,\ldots,r_n>$. In our context, relational database can be
represented by $EDB=\bigcup_{i=1,\ldots,n}R_i(\mathbb{A}_i)$.

\it Join \rm is a binary operator for combining two relations. Let
$r$ and $s$ be two relational extensions of schema $R$ (with
attributes $\mathbb R$) and $S$ (with attributes $\mathbb S$),
respectively. Let $\mathbb T=\mathbb R\cup\mathbb S$. The join of
$r$ and $s$, written as $r\otimes s$, is the relational extension
$q(\mathbb T)$ of all tuples $t$ over $\mathbb T$ such that there
are $t_r\in r$ and $t_s\in s$, with $t_r=t(\mathbb R)$ and
$t_s=t(\mathbb S)$. Join can be captured by a constraint clause
$Q(\mathbb T)\leftarrow R(\mathbb R), S(\mathbb S)$.

Let us consider two relational schemes $R$ and $S$ from Example
\ref{e10}, with attributes $R$=$\{$Group,Chair$\}$ and
$S$=$\{$Staff,Group$\}$.Consider the following extensions $r$ and
$s$: (see definition and properties of similarity in works of
Christiansen (Christiansen \& Rekouts 2007) and Godfrey (Godfrey et
al. 1998)).

\begin{example} \label{E11}
\begin{table}[h]
\begin{center}
$\begin{array}{c|cc} \text{\rm s}&\text{Staff}&\text{Group}\\\hline
&\text{\rm delhibabu}&\text{\rm infor1}\\
&\text{\rm aravindan}&\text{\rm infor2}\\
\end{array}$
$~~~~~~~\begin{array}{c|cc} \text{\rm
r}&\text{Group}&\text{Chair}\\\hline
&\text{\rm infor1}&\text{\rm matthias}\\
&\text{\rm infor2}&\text{\rm gerhard}\\
\end{array}$
\end{center}

\begin{center} \caption{\rm Base table for $s$ and $r$} \end{center} \end{table}
\end{example}

\noindent The following rule, T(Staff,Group,Chair) $\leftarrow$
S(Staff,Group),R(Group,Chair) represents the join of $s$ and $r$,
which is given in Table 5.2:

\begin{table}[h]
$$\begin{array}{c|ccc}
s\otimes r&Staff&Group&Chair\\\hline
&\text{\rm delhibabu}&\text{\rm infor1}&\text{\rm matthias}\\
&\text{\rm aravindan}&\text{\rm infor2}&\text{\rm gerhard}\\
\end{array}$$

\centering \caption{\rm $s\otimes r$ }
\end{table}

Our first integrity constraint (IC) is that each research group has
only one chair ie. $\forall x,y,z$ (y=z) $\leftarrow$
group\_chair(x,y) $\wedge$ group\_chair(x,z). Second integrity
constraint is that a person can be a chair for only one research
group ie. $\forall x,y,z$ (y=z)$\leftarrow$ group\_chair(y,x)
$\wedge$ group\_chair(z,x).

An update request U = $A$, where $A$ is a set of base facts that are
not true in KB. Then, we need to find a transaction $T=T_{ins} \cup
T_{del}$, where $T_{ins} (\Delta_i)$ (resp. $T_{del}(\Delta_j)$) is
the set of facts, such that U is true in $DDB'=((EDB - T_{del} \cup
T_{ins}) \cup IDB \cup IC)$. Since we consider stratifiable
(definite) deductive databases, SLD-trees can be used to compute the
required abductive explanations. The idea is to get all EDB facts
used in a SLD-derivation of $A$ with respect to DDB, and construct that as an
abductive explanation for $A$ with respect to $IDB_G$.

Traditional methods translate a view update request into a
\textbf{transaction combining insertions and deletions of base
relations} for satisfying the request (Mota-Herranz et al. 2000).
Furthermore, a stratifiable (definite) deductive database can be
considered as a knowledge base, and thus the rationality postulates
and insertion algorithm from the previous section can be applied for
solving view update requests in deductive databases.

There are two ways to find minimal elements (insertion and deletion)
in the presence of integrity constraints. Algorithm 2
first checks consistency with integrity constraints and then reduces
steps with abductive explanation for $A$ . Algorithm 3 is
doing \emph{vice versa}, but both algorithm outputs are similar.

$$\begin{array}{cc}\hline
\text{\bf Algorithm 2} & \text{\rm Algorithm to compute all
DDB-closed locally minimal}\\  &\text{\rm abductive explanation of
an atom(literals)}\\\hline \text{\rm Input}:& \text{\rm A definite
deductive database}~DDB=IDB\cup EDB\cup IC~\text{\rm
an literals}\\
&\mathcal{A}\\
\text{\rm Output}:&\text{\rm Set of all DDB-closed locally minimal
abductive explanations}\\
&\text{\rm for}~\mathcal{A}~\text{\rm wrt}~IDB_{G}\\
\text{\rm begin}&\\
~~1.&\text{\rm Let}~ V :=\{ c\in IC~|~IDB\cup IC~\text{\rm
inconsistent
with}~\mathcal{A}~\text{\rm wrt}~c~\}\\
&\text{\rm While}~(V\neq 0)\\
&\hspace{-1.9cm}\text{\rm Construct a complete SLD-tree for} \leftarrow\mathcal{A}~\text{\rm wrt DDB.}\\
&\hspace{-0.8cm}\text{\rm For every successful branch $i$: construct}~\Delta_{i}=\{D~|~D \in EDB\\
&\text{\rm and D is used as an input clause in branch $i$}\}\\
&\hspace{-0.3cm}\text{\rm For every unsuccessful branch $j$: construct}~\Delta_{j}=\{D~|~D \in EDB\\
&\text{\rm and D is used as an input clause in branch $j$}\}\\
&\text{\rm Produce set of all}~\Delta_{i}~\text{\rm
and}~\Delta_{j}~\text{\rm computed in
 the previous step}\\
 &\text{\rm as the result.}\\
&\hspace{-0.7cm}\text{\rm return}\\
~~2.&\text{\rm Produce all DDB-closed locally minimal abductive}\\
&\text{\rm explanations in}~\Delta_{i}~\text{\rm and}~\Delta_{j}\\
 \text{\rm end.}\\\hline
\end{array}$$

Horn knowledge base revision algorithm 1, may be applied
to compute all DDB-closed locally minimal abductive explanation of
an atom (literals).  
Unfortunately, this algorithm does not work as intended for any
deductive database, and a counter example is produced below. Thus,
general algorithms 2 and 3 produced some unexpected sets in addition
to locally minimal abductive explanations

$$\begin{array}{cc}\hline
\text{\bf Algorithm 3} & \text{\rm Algorithm to compute all
DDB-closed locally minimal}\\  &\text{\rm abductive explanation of
an atom(literals)}\\\hline \text{\rm Input}:& \text{\rm A definite
deductive database}~DDB=IDB\cup EDB\cup IC~\text{\rm
an literals}\\
&\mathcal{A}\\
\text{\rm Output}:&\text{\rm Set of all DDB-closed locally minimal
abductive explanations}\\
&\text{\rm for}~\mathcal{A}~\text{\rm wrt}~IDB_{G}\\
\text{\rm begin}&\\
~~1.&\hspace{-1.9cm}\text{\rm Construct a complete SLD-tree for} \leftarrow\mathcal{A}~\text{\rm wrt DDB.}\\
&\hspace{-0.8cm}\text{\rm For every successful branch $i$: construct}~\Delta_{i}=\{D~|~D \in EDB\\
&\text{\rm and D is used as an input clause in branch $i$}\}\\
&\hspace{-0.3cm}\text{\rm For every unsuccessful branch $j$: construct}~\Delta_{j}=\{D~|~D \in EDB\\
&\text{\rm and D is used as an input clause in branch $j$}\}\\
~~2.&\text{\rm Let}~ V :=\{ c\in IC~|~IDB\cup IC~\text{\rm
inconsistent with}~\mathcal{A}~\text{\rm wrt}~c~\}\\
&\text{\rm While}~(V\neq 0)\\
&\text{\rm Produce set of all}~\Delta_{i}~\text{\rm
and}~\Delta_{j}~\text{\rm is consistent with IC}\\
 &\text{\rm as the result.}\\
&\hspace{-0.7cm}\text{\rm return}\\
&\text{\rm Produce all DDB-closed locally minimal abductive}\\
&\text{\rm explanations in}~\Delta_{i}~\text{\rm and}~\Delta_{j}\\
 \text{\rm end.}\\\hline
\end{array}$$

\begin{example} \label{E12} Consider a stratifiable (definite) deductive database DDB as
follows:

$$\begin{array}{cccccc} IDB:&p\leftarrow
a\wedge e&\hspace{0.5cm}EDB:&e\leftarrow&\hspace{1.2cm}IC:&\leftarrow b\\
&q\leftarrow a\wedge f&&f\leftarrow&&\\
&p\leftarrow b\wedge f&&&&\\
&q\leftarrow b\wedge e&&&&\\
&\hspace{-0.6cm}p\leftarrow q&&&&\\
&\hspace{-0.6cm}q\leftarrow a&&&&\end{array}$$
\end{example}

Suppose we want to insert $p$. First, we need to check consistency
with IC and afterwards, we have to find $\Delta_{i}$ and
$\Delta_{j}$ via tree deduction.


\begin{enumerate}
\item[(Input)] $IDB: p\leftarrow a\wedge e, q\leftarrow a\wedge f, p\leftarrow b\wedge f, q\leftarrow b\wedge e, p\leftarrow q, q\leftarrow a\\
EDB: e\leftarrow, f\leftarrow\\
IC: \leftarrow b$
\item[(0)] $\{p\leftarrow a,e;~q\leftarrow a,f;~p\leftarrow b,f;~q\leftarrow b,e;~p\leftarrow q;~q\leftarrow a;~e;~f$\}
\item[(1)] $\{V = b\}$
\item[(2)] \Tree[ {$\leftarrow a,e$\\$\blacksquare$} [.$\leftarrow q$
{$\leftarrow a,f$\\$\blacksquare$} {$\leftarrow a$\\$\blacksquare$}
{$\leftarrow b,e$\\$\Box$} ].$\leftarrow q$ {$\leftarrow
b,f$\\$\Box$} ].$\leftarrow p$
\item[(3-4)] $\Delta_{i} =\{a,e\}$ and $\Delta_{j} =\{ \}$
\item[(5)] $p\leftarrow a,e;~q\leftarrow a,f;~p\leftarrow q;~q\leftarrow a;~b;~e;~f$
\item[(Output)] $IDB: p\leftarrow a\wedge e, q\leftarrow a\wedge f, p\leftarrow q, q\leftarrow
a\\
EDB': a\leftarrow, e\leftarrow, f\leftarrow\\ IC: \leftarrow b$
\end{enumerate}

From the step, it is easy to conclude which branches are consistent
with respect to IC (indicated in the depicted tree by the symbol
$\blacksquare$). For the next step, we need to find minimal
accommodate (positive literal) and denial literal (negative literal)
with with respect to to $p$. The subgoals of the tree are $\leftarrow a,e$ and
$\leftarrow a,f$, which are minimal tree deductions of only facts.
Clearly, $\Delta_{i} =\{a,e\}$ and $\Delta_{j} =\{f\}$ with respect
to IC, are the only locally minimal abductive explanations for $p$
with respect to $IDB_G$, but these result are not closed-locally minimal
explanations.

%


For processing a given view update request, a set of all
explanations for that atom has to be generated through a complete
$SLD$-tree. The resulting hitting set of these explanations is then
a base update of the $EDB$ satsifying the view update request. We
present a different approach which is also rational. The generation
of a hitting set is carried out through a hyper tableaux calculus
(bottom-up) for implementing the deletion process as well as through
the magic sets approach (top-down) for performing insertions
focussed on the particular goal given.


\subsection{View update method}

View update (Behrend \& Manthey 2008) aims at determining one or
more base relation updates such that all given update requests with
respect to derived relations are satisfied after the base updates
have been successfully applied.

\begin{definition}[View update] Let $DDB = \langle IDB,EDB,IC\rangle$ be a
stratifiable (definite) deductive database $DDB(D)$. A VU request
$\nu_{D}$ is a pair $\langle \nu^+_{D},\nu^-_{D}\rangle$ where
$\nu^+_{D}$ and $\nu^-_{D}$ are sets of ground atoms representing
the facts to be inserted into $D$ or deleted from $D$, resp., such
that $pred(\nu^+_{D}\cup \nu^-_{D}) \subseteq pred(IDB)$,
$\nu^+_{D}\cap \nu^-_{D} = \emptyset$, $\nu^+_{D}\cap
PM_{D}=\emptyset$ and $\nu^-_{D}\subseteq PM_{D}$.\rm
\end{definition}

Note that we consider again true view updates only, i.e. ground
atoms which are presently not derivable for atoms to be inserted, or
are derivable for atoms to be deleted, respectively. A method for
view update determines sets of alternative updates satisfying a
given request. A set of updates leaving the given database
consistent after its execution is called \emph{VU realization}.

\begin{definition} [Induced update] Let $DDB = \langle IDB,EDB,
IC\rangle$ be a stratifiable (definite) deductive database and
$DDB=\nu_{D}$ a VU request. A VU realization is a base update
$u_{D}$ which leads to an induced update $u_{D\rightarrow D'}$ from
$D$ to $D'$ such that $\nu^+_{D}\subseteq PM_{D'}$ and
$\nu^-_{D}\cap PM_{D'}=\emptyset$.\rm
\end{definition}

There may be infinitely many realizations and even realizations of
infinite size which satisfy a given VU request. A breadth-first
search (BFS) is employed for determining a set of minimal
realizations $\tau_{D}= \{u^1_{D},\ldots, u^i_{D}\}$. Any $u^i_{D}$
is minimal in the sense that none of its updates can be removed
without losing the property of being a realization for $\nu_{D}$.

\subsubsection{Magic Set (Top-down computation):} \hspace{0.5cm}

Given a VU request $\nu_{DDB}$, view update methods usually determine further VU requests in order
to find relevant base updates. Similar to delta relations for UP we
will use the notion VU relation to access individual view updates
with respect to the relations of our system. For each relation $p\in
pred(IDB\cup EDB)$ we use the VU relation $\nabla^+_ p(\vec{x})$ for
tuples to be inserted into $DDB$ and $\nabla^-_ p(\vec{x})$ for
tuples to be deleted from $DDB$. The initial set of delta facts
resulting from a given VU request is again represented by so-called
\emph{VU seeds}.

\begin{definition}[View update seeds]  Let $DDB(D)$ be a stratifiable (definite) deductive database and
$\nu_{DDB}=\langle \nu^+_{D},\nu^-_{D}\rangle$ a VU request. The set
of VU seeds $vu\_seeds(\nu_D)$ with respect to $\nu_D$ is defined as
follows:
$$vu\_seeds(\nu_D) := \left\{\nabla^{\pi}_p (c_1,\ldots , c_n) | p(c_1,\ldots, c_n)\in \nu^{\pi}_D~and~\pi\in\{+,
-\}\right\} .$$ \rm
\end{definition}

\begin{definition}[View update rules]  Let $IDB$ be a normalized
stratifiable (definite) deductive rule set. The set of VU rules for
true view updates is denoted $IDB^{\nabla}$ and is defined as the
smallest set satisfying the following conditions:
\begin{enumerate}
\item[1.] For each rule of the
form $p(\vec{x})\leftarrow q(\vec{y})\land r(\vec{z})\in IDB$ with
$vars(p(\vec{x})) = (vars(q(\vec{y}))\cup vars(r(\vec{z})))$ the
following three VU rules are in $IDB^{\nabla}$:
$$\begin{array}{ccc} \nabla^+_p(\vec{x})\land \neg
q(\vec{y})\rightarrow \nabla^+_ q (\vec{y})&~~& \nabla^-_ p
(\vec{x})
\rightarrow \nabla^-_ q (\vec{y})\lor \nabla^-_ r (\vec{z})\\
\nabla^+_ p (\vec{x}) \land \neg r(\vec{z}) \rightarrow\nabla^+_ r
(\vec{z})&~~&\end{array}$$
\item[2.] For each rule of the form $p(\vec{x})\leftarrow
q(\vec{x})\land \neg r(\vec{x})\in IDB$ the following three VU rules
are in $IDB^{\nabla}$: $$\begin{array}{ccc} \nabla^+_ p
(\vec{x})\land\neg q(\vec{x}) \rightarrow\nabla^+_ q (\vec{x})&~~&
\nabla^- _p (\vec{x}) \rightarrow \nabla^-_ q (\vec{x}) \lor
\nabla^+_ r (\vec{x})\\ \nabla^+_ p (\vec{x})\land r(\vec{x})
\rightarrow \nabla^-_ r (\vec{x})&~~&\end{array}$$
\item[3.] For each two rules of the form $p(\vec{x})\leftarrow q(\vec{x})$ and
$p(\vec{x})\leftarrow r(\vec{x})$ the following three VU rules are
in $IDB^{\nabla}$: $$\begin{array}{ccc} \nabla^-_ p (\vec{x})\land
q(\vec{x}) \rightarrow\nabla^-_ q (\vec{x})&~~& \nabla^+_ p
(\vec{x}) \rightarrow \nabla^+_ q (\vec{x}) \lor \nabla^+_ r
(\vec{x})\\ \nabla^-_ p (\vec{x}) \land r(\vec{x})
\rightarrow\nabla^-_ r (\vec{x})&~~&\end{array}$$\\
\item[4.]\begin{enumerate} \item[a)] For each relation $p$ defined
by a single rule $p(\vec{x}) \leftarrow  q(\vec{y})\in IDB$ with
$vars(p(\vec{x})) = vars(q(\vec{y}))$ the following two VU rules are
in $IDB^{\nabla}$:
$$\begin{array}{ccc}
 \nabla^+_ p (\vec{x}) \rightarrow \nabla^+_ q (\vec{y})&~~& \nabla^-_ p (\vec{x}) \rightarrow \nabla^-_ q
(\vec{y})\end{array}$$
\item[b)] For each relation $p$ defined by a
single rule $p \leftarrow \neg q \in IDB$ the following two VU rules
are in $IDB^{\nabla}$: $$\begin{array}{ccc}\nabla^+_ p \rightarrow
\nabla^-_ q &~~& \nabla^-_ p \rightarrow \nabla^+_ q
\end{array}$$
\end{enumerate}
\item[5.] Assume without loss of
generality that each projection rule in $IDB$ is of the form
$p(\vec{x})\leftarrow q(\vec{x}, Y )\in IDB$ with $Y \notin
vars(p(\vec{x}))$. Then the following two VU rules
\begin{eqnarray*} &&\nabla^- p_(\vec{x})\land q(\vec{x}, Y )\rightarrow \nabla^-_ q (\vec{x}, Y
)\\
&&\nabla^+_ p (\vec{x})\rightarrow\nabla^+_ q (\vec{x},
c_1)\lor\ldots\lor \nabla^+_ q (\vec{x}, c_n) \lor \nabla^+_ q
(\vec{x}, c^{new})\end{eqnarray*} are in $IDB^{\nabla}$ where all
$c_i$ are constants from the Herbrand universe $\mathcal{U}_{DDB}$
of $DDB$ and $c^{new}$ is a new constant, i.e. $c^{new} \notin
\mathcal{U}_{DDB}$.
\end{enumerate}
\end{definition}

\begin{theorem} \label{T14} Let $DDB = \langle IDB,EDB,IC\rangle$ be a
stratifiable (definite)deductive database(D), $\nu_{D}$ a view
update request and $\tau_{D} = \{u^1_{D},\ldots, u^n_{D}\}$ the
corresponding set of minimal realizations. Let ${D}^{\nabla} =
\langle EDB \cup vu\_seeds(\nu_{D}), IDB \cup IDB^{\nabla}\rangle$
be the transformed deductive database of ${D}$. Then the VU
relations in ${PM_D^{\nabla}}$ with respect to base relations of
${D}$ correctly represent all direct consequences of $\nu_{D}$. That
is, for each realization $u^i_{D} = \langle u^{i^+}_{D} ,
u^{i^{-}}_{D}\rangle \in \tau_{D}$ the following condition holds:
\begin{equation*} \exists p(\vec{t})\in u^{i^+}_{D} : \nabla^+_ p
(\vec{t} )\in {MS_D^{\nabla}} \lor \exists p(\vec{t} )\in
u^{i^{-}}_{D} : \nabla^-_p (\vec{t} ) \in {MS_D^{\nabla}}.
\end{equation*}
\end{theorem}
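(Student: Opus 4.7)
The plan is to establish the theorem by structural induction on the derivation depth of VU facts in the transformed database $D^\nabla$, paired with a careful case analysis over the five rule-forms in Definition of $IDB^\nabla$. The inductive statement I would aim for is slightly stronger than the theorem as stated: for every minimal realization $u^i_D \in \tau_D$ and for every intermediate view-level delta fact $\nabla^\pi_p(\vec{t})$ needed to justify $\nu_D$ under $u^i_D$, that delta fact appears in $MS_D^\nabla$. The theorem then drops out by taking $p$ to be a base predicate.

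The base case covers the initial seeds: by construction $vu\_seeds(\nu_D) \subseteq EDB \cup vu\_seeds(\nu_D)$, so for every atom in $\nu^+_D \cup \nu^-_D$ the corresponding $\nabla^\pi_p$-fact is trivially in the perfect model of $D^\nabla$. For the inductive step I would walk through the five cases of the VU rules. For a conjunctive rule $p(\vec{x}) \leftarrow q(\vec{y}) \wedge r(\vec{z})$ an insertion of $p(\vec{t})$ must be realized either by making $q$ true (if $q$ does not currently hold) or by making $r$ true, which is precisely what the two $\nabla^+$ rules generate; a deletion of $p(\vec{t})$ must break at least one of $q,r$, captured by the single disjunctive $\nabla^-$ rule. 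The stratified negation case, the alternative-rule case, and the simple rule case follow the same pattern: each minimal way to justify the update at the head is in bijection with one of the disjuncts or conjuncts produced by the VU rules.

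The most delicate case, and the one I expect to be the main obstacle, is the projection rule $p(\vec{x}) \leftarrow q(\vec{x}, Y)$ with $Y$ projected away. Here the insertion rule must enumerate all existing constants $c_1,\ldots,c_n$ from $\mathcal{U}_{DDB}$ together with a fresh $c^{\text{new}}$, because the minimal realization may either re-use an existing witness or require a genuinely new individual. Arguing this correctly requires (a) showing that any minimal realization chooses at most one such witness per projected variable, so that one of the disjuncts in the $\nabla^+$ rule matches, and (b) showing that when the minimal realization introduces a new constant, the use of the single symbol $c^{\text{new}}$ is without loss of generality because the realization is defined up to renaming of fresh constants. The deletion branch is easier since one simply needs to delete some $q(\vec{x}, Y)$ witness that actually occurs in the current $PM_D$.

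Finally, to close the induction I would combine it with minimality of the realizations in $\tau_D$: the BFS procedure selecting $\tau_D$ only retains realizations whose every delta fact is forced by the propagation above, so there is no realization in $\tau_D$ that escapes the rules. The top-down (magic-set) restriction simply filters the computed $\nabla$-facts to those relevant to $\nu_D$, and since $vu\_seeds(\nu_D)$ are themselves magic, every relevant $\nabla^\pi_p(\vec{t})$ for a base predicate $p$ ends up in $MS_D^\nabla$, yielding the existential statement in the theorem.
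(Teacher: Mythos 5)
Your proposal takes a genuinely different route from the paper, for the simple reason that the paper does not prove this theorem at all: its entire proof is the single line ``Follows from the result of (Behrend \& Manthey 2008).'' You instead reconstruct the argument from first principles, by induction on the derivation depth of $\nabla$-facts in $D^{\nabla}$ with a case analysis over the five rule forms of $IDB^{\nabla}$. That is the standard shape of the argument in the update-propagation literature, and you correctly isolate the projection rule $p(\vec{x}) \leftarrow q(\vec{x},Y)$ as the genuinely delicate case (the finite enumeration over $\mathcal{U}_{DDB}$ plus the fresh constant $c^{new}$, justified up to renaming). What the paper's approach buys is brevity and delegation of correctness to the cited source; what yours buys is an actual, checkable argument, which is valuable here precisely because the paper supplies none.

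One concrete caveat. Your strengthened inductive statement --- that \emph{every} intermediate view-level delta fact needed to justify $\nu_{D}$ under $u^{i}_{D}$ appears in $MS_{D}^{\nabla}$ --- is in tension with the fact that the VU rules have disjunctive heads (e.g.\ $\nabla^{-}_{p}(\vec{x}) \rightarrow \nabla^{-}_{q}(\vec{y}) \lor \nabla^{-}_{r}(\vec{z})$). Under minimal/perfect model semantics a single model of the transformed program commits to only one disjunct per such rule, so a single set $MS_{D}^{\nabla}$ cannot in general contain all delta facts relevant to all realizations simultaneously; different realizations correspond to different models (or different disjuncts) of $D^{\nabla}$. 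The invariant you can actually carry through the induction is existential --- for each minimal realization, at least one base-level delta fact it requires is derivable --- which is exactly the form of the conclusion in the theorem statement. Weakening your inductive hypothesis accordingly (per realization, there is a chain of disjunct choices leading from the seed down to some base-level $\nabla^{\pi}_{p}(\vec{t})$ in $u^{i}_{D}$) repairs the argument without changing its structure.
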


\begin{proof}
Follows from the result of (Behrend \& Manthey 2008).
\end{proof}

\subsubsection{Hyper Tableau (Bottom-up computation):} \hspace{0.5cm}

In (Aravindan \& Baumgartner
1997) a variant of clausal normal form tableaux called "hyper
tableaux" is introduced for view deletion method. Since the hyper
tableaux calculus constitutes the basis for our view update
algorithm, \it Clauses, \rm i.e. multisets of literals, are usually
written as the disjunction $A_1\lor A_2\lor\cdots\lor
A_m\lor~\text{not}~B_1\lor
~\text{not}~B_2\\\cdots\lor~\text{not}~B_n$ ($M\geq 0,n\geq 0$). The
literals $A_1,A_2,\ldots A_m$ (resp. $B_1,B_2,\ldots, B_n$) are
called the \it head (\rm resp. \it body) \rm of a clause. With
$\overline{L}$ we denote the complement of a literal $L$. Two
literals $L$ and $K$ are complementary if $\overline{L}=K$.

From now on $D$ always denotes a finite ground clause set, also
called \it database, \rm and $\Sigma$ denotes its signature, i.e.
the set of all predicate symbols occurring in it. We consider finite
ordered trees $T$ where the nodes, except the root node, are labeled
with literals. In the following we will represent a branch $b$ in
$T$ by the sequence $b=L_1,L_2,\ldots, L_n$ ($n\geq 0$) of its
literal labels, where $L_1$ labels an immediate successor of the
root node, and $L_n$ labels the leaf of $b$. The branch $b$ is
called \it regular \rm iff $L_i\neq L_j$ for $1\leq i,j\leq n$ and
$i\neq j$, otherwise it is called \it irregular. \rm The tree $T$ is
\it regular\rm~iff every of its branches is regular, otherwise it is
\it irregular. \rm The set of \it branch literals \rm of $b$ is
$lit(b)= \{L_1,L_2,\ldots,L_n\}$. For brevity, we will write
expressions like $A \in b$ instead of $A\in lit(b)$.  In order to
memorize the fact that a branch contains a contradiction, we allow
to label a branch as either \it open \rm or \it closed. \rm A
tableau is closed if each of its branches is closed, otherwise it is
open.

\begin{definition} [Hyper Tableau]
A literal set is called \it inconsistent \rm iff it contains a pair
of complementary literals, otherwise it is called \it consistent.
Hyper tableaux \rm for $D$ are inductively defined as follows: \bf

Initialization step: \rm  The empty tree, consisting of the root
node only, is a hyper tableau for $D$. Its single branch is marked
as "open". \bf

Hyper extension step: \rm  If (1) $T$ is an open hyper tableau for
$D$ with open branch $b$,  and (2) $C=A_1\lor A_2\lor\cdots\lor
A_m\leftarrow B_1\land B_2\cdots\land B_n$ is  a clause from $D$
($n\geq 0,m\geq 0$), called \it extending clause \rm in this
context, and (3) $\{B_1,B_2,\ldots, B_n\}\subseteq b$ (equivalently,
we say that $C$ is \it applicable to $b$)\rm  then the tree $T$ is a
hyper tableau for $D$, where $T$ is obtained from $T$ by extension
of $b$ by $C$:
 replace $b$ in $T$ by the \it new branches \rm \begin{equation*}
 (b,A_1),(b,A_2),\ldots,(b,A_m),(b,\neg B_1),(b,\neg B_2),\ldots,
 (b,\neg B_n)
 \end{equation*}

and then mark every inconsistent new branch as "closed", and the
other new branches as "open".
\end{definition}

The applicability condition of an extension expresses that all body
literals have to be satisfied by the branch to be extended. From now
on, we consider only regular hyper tableaux. This restriction
guarantees that for finite clause sets no branch can be extended
infinitely often. Hence, in particular, no open finished branch can
be extended any further. This fact will be made use of below
occasionally. Notice as an immediate consequence of the above
definition that open branches never contain negative literals.

%
%
%
%

This paper work focused on stratified (definite) deductive database
without any auxiliary variable. In magic set rule play in minimal
case, our future goal is similar foundation using auxiliary variable
(Deductive Databases) side and more details found in
Behrend's (Behrend \& Manthey 2008) work.

\subsection{View update algorithm}

The key idea of the algorithm presented in this paper is to
transform the given database along with the view update request into
a logic program and apply known minimality techniques
to solve the original view update problem. The intuition behind the
transformation is to obtain a logic program in such a
way that each (minimal) model of this transformed program represent
a way to update the given view atom. We present two variants of our
algorithm. The one that is discussed in this section employs a
trivial transformation procedure but has to look for minimal models;
and another performs a costly transformation, but dispenses with the
requirement of computing the minimal models.

\subsubsection{Minimality test}\hspace{0.5cm}

We start presenting an algorithm for stratifiable (definite)
deductive databases by first defining precisely how the given
database is transformed into a logic program for the
view deletion process (Aravindan \& Baumgartner 1997)

\begin{definition} [$IDB$ Transformation] Given an $IDB$ and a set of ground atoms $S$, the
transformation of $IDB$ with respect to $S$ is obtained by translating each
clause $C\in IDB$ as follows: Every atom $A$ in the body (resp.
head) of $C$ that is also in $S$ is moved to the head (resp. body)
as $\neg A$.
\end{definition}

\begin{note} If $IDB$ is a stratifiable deductive database then the
transformation introduced above is not necessary.
\end{note}

\begin{definition} [$IDB^*$ Transformation] Let $IDB\cup EDB$ be a given database.
Let $S_0=EDB\cup \{A~|~A~\text{ is a ground \it IDB \rm atom}\}$.
Then, $IDB^*$ is defined as the transformation of $IDB$ with respect to $S_0$.
\end{definition}

\begin{note} $IDB^*$ is in general a logic
program. The negative literals $(\neg A)$ appearing in the clauses
are intuitively interpreted as deletion of the corresponding atom
($A$) from the database. Technically, a literal $\neg A$ is to be
read as a \it positive \rm atom, by taking the $\neg$-sign as part
of the predicate symbol. To be more precise, we treat $\neg A$ as an
atom with respect to $IDB^*$, but as a negative literal with respect to $IDB$. \end{note}

Note that there are no facts in $IDB^*$. So when we add a delete
request such as $\neg A$ to this, the added request is the only fact
and any bottom-up reasoning strategy is fully focused on the goal
(here the delete request)

\begin{definition}[Update Tableaux Hitting Set] An update tableau for a database
$IDB\cup EDB$ and delete request $\neg A$ is a hyper tableau $T$ for
$IDB^*\cup\{\neg A\leftarrow\}$ such that every open branch is
finished. For every open finished branch $b$ in $T$ we define the
\it hitting set (of b in $T$) \rm as $HS(b)=\{A \in EDB | \neg A \in
b\}$.
\end{definition}

The next step is to consider the view insertion process (Behrend \&
Manthey 2008):

\begin{definition} [$IDB^{\bullet}$ Transformation] Let $IDB\cup EDB$ be a given database.
Let $S_1=EDB\cup \{A~|~A~\text{ is a ground \it IDB \rm atom}\}$ (that is either body or head  empty).
Then, $IDB^{\bullet}$ is defined as the transformation of $IDB$ with respect to
$S_1$.
\end{definition}

\begin{note} $IDB$ is in general a (stratifiable)
logic program. The positive literals $(A)$ appearing in
the clauses are intuitively interpreted as an insertion of the
corresponding atom ($A$) from the database.
\end{note}

\begin{definition}[Update magic Hitting Set] An update magic set rule for a database
$IDB\cup EDB$ and insertion request $A$ is a magic set rule $M$ for
$IDB^{\bullet}\cup\{A\leftarrow\}$ such that every close branch is
finished. For every close finished branch $b$ in $M$ we define the
\it magic set rule (of b in $M$) \rm as $HS(b)=\{A \in EDB | A \in
b\}$.
\end{definition}

\begin{example} \label{E20}

Given stratifiable (definite) deductive database $DDB=IDB \cup EDB\cup IC$ and insert $p$.

$$\begin{array}{cccccc} IDB:&p \leftarrow
a\wedge e&\hspace{0.5cm}EDB:&a\leftarrow&\hspace{1.2cm}IC:&\emptyset\\
&q \leftarrow a \wedge e&&c\leftarrow&&\\
&p \leftarrow a \wedge f&&&&\\
&q\leftarrow c&&&&\\
\end{array}$$

$IDB^*$ Transformation:

$$\begin{array}{cccccc} IDB^*:&\neg a \lor \neg e \leftarrow
\neg p&\hspace{0.5cm}EDB:&a\leftarrow&\hspace{1.2cm}IC:&\emptyset\\
&\neg a \lor \neg e \leftarrow \neg q&& c\leftarrow&&\\
&\neg a \lor \neg f \leftarrow \neg p&&&&\\
&\neg c \leftarrow \neg q&&&&\\
\end{array}$$

$IDB^{\bullet}$ Transformation: (Body empty)

$$\begin{array}{cccccc} IDB^{\bullet}:&p \lor \neg a \lor \neg e \leftarrow
&\hspace{0.5cm}EDB:&a\leftarrow&\hspace{1.2cm}IC:&\emptyset\\
&q \lor \neg a \lor \neg e \leftarrow&& c\leftarrow&&\\
&p \lor \neg a \lor \neg f \leftarrow&&&&\\
&q \lor \neg c \leftarrow &&&&\\
\end{array}$$

$IDB^{\bullet}$ Transformation: (Head empty)

$$\begin{array}{cccccc} IDB^{\bullet}:&\leftarrow \neg p \wedge a \wedge e
&\hspace{0.5cm}EDB:&a\leftarrow&\hspace{1.2cm}IC:&\emptyset\\
&\leftarrow \neg q \wedge a \wedge e && c\leftarrow&&\\
&\leftarrow \neg p \wedge a \wedge f &&&&\\
&\leftarrow \neg q \wedge c &&&&\\
\end{array}$$

The set $S_{0}$ is determined by all the $IDB$ atoms and the current
$EDB$ atoms and in our case it is $\{ p,q,a,c,e,f \}$. $IDB^*$ and
$IDB^{\bullet}$ is the transformation of $IDB$ with respect to $S_{0}$ which is
given above.
\end{example}

Suppose a ground view atom $\mathcal{A}$ is to be insert. Then, an
update tableau for $IDB^{\bullet}$ with insert request $\mathcal{A}$
($IDB^*$ with delete request $\neg \mathcal{A}$) is built. The
intuition is that the set of $EDB$ atoms appearing in a model
(open/close branch) constitute a hitting set, and removing/adding
this set from EDB should achieve the required view insertion.
Unfortunately, this does not result in a rational insertion, as
relevance policy may be violated.

\begin{example}
Let us continue with example \ref{E20} Suppose the view atom $p$ is
to be insert. Then according to the above proposal, an update
tableau for $IDB^{\bullet}$ ( $IDB^*$ ) and $p$ $(\neg p$) is to be
built. This is illustrated in the accompanying figure below. As
shown, open/close branches constitute two hitting sets $\{a\}$ and
$\{f,a\}$ ($\{ \neg a\}$ and $\{ \neg f, \neg a\}$). It is not
difficult to see that $\{f,a\}$($\{ \neg f, \neg a\}$) does not
satisfy any of the relevance policies (KB*7.1) or (KB*7.2) or
(KB*7.3). Hence simple model computation using hyper tableau
calculus does not result in rational hitting sets. The branch is
closed if the corresponding hitting set does not satisfy this strong
relevance postulate.
\end{example}

\begin{figure}[h]
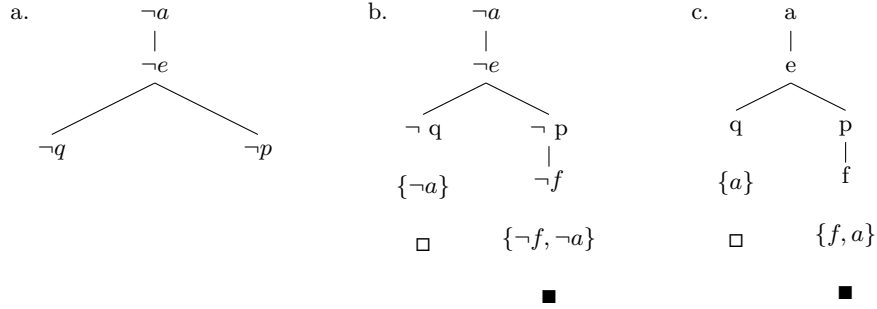

\begin{center}
\qtreecenterfalse a. \Tree [ [ ${\neg q}$ !\qsetw{5cm} ${\neg p}$ ]
.${\neg e}$ ].${\neg a}$ \hskip 0.5in b. \Tree [ [ {$\neg$ q \\\\
$\{ \neg a \}$ \\\\ $\Box$} !\qsetw{2cm} [ {$\neg f$
\\\\ $\{
\neg f, \neg a \}$ \\\\ $\blacksquare$} ] .{$\neg$ p} ] .{$\neg e$}
] .${\neg a}$ \hskip 0.5in c. \Tree [ [ {q \\\\
$\{ a \}$ \\\\ $\Box$} !\qsetw{2cm} [ {f \\\\ $\{ f, a \}$
\\\\ $\blacksquare$} ] .p ] .e ] .a
\caption{$IDB^*$ and $IDB^{\bullet}$ transformation with hitting
set}
   \label{Figure 5.2}
\end{center}
\end{figure}

\begin{definition}[Minimality test] Let $T$ be an update tableau for $IDB^*\cup
EDB$ and delete request $\neg A$. We say that open finished branch
$b$ in $T$ satisfies \it the strong minimality test \rm iff $\forall
s\in HS(b):IDB^*\cup EDB\backslash HS(b)\cup\{s\}\vdash \neg A$.
\end{definition}

\begin{definition}[Update Tableau satisfying strong minimality] An update tableau
for given $IDB\cup  EDB$ and delete request $\neg A$ is transformed
into an update tableau satisfying strong minimality by marking every
open finished branch as closed which does not satisfy strong
minimality.
\end{definition}

The next step is to consider the view insertion process (Behrend \&
Manthey 2008):

\begin{definition}[Minimality test] Let $M$ be an update magic set rule for $IDB\cup
EDB$ and insert request $A$. We say that close finished branch $b$
in $M$ satisfies \it the strong minimality test \rm iff $\forall
s\in HS(b):IDB^{\bullet}\cup EDB\backslash HS(b)\cup\{s\}\vdash A$.
\end{definition}

\begin{definition}[Update magic set rule satisfying strong minimality] An update
magic set rule for given $IDB\cup  EDB$ and insert request $A$ is
transformed into an update magic set rule satisfying strong
minimality by marking every close finished branch as open which does
not satisfy strong minimality.
\end{definition}

\begin{example}
Continuing with the above example, after constructing the branch
corresponding to the hitting set $\{f,a\}$($\{ \neg f, \neg a\}$),
the strong minimality test is carried out as follows: It is checked
if the resulting database with each member of hitting set implies
the insert atom $p$. For example, $IDB\cup EDB\backslash \{ f,a\}
\cup \{a\}\nvdash p$, and hence this branch fails the strong
minimality test.
\end{example}

Interestingly, this minimality test is equivalent to the
groundedness test used for generating minimal models of
logic programs. The key idea of the groundedness test is to check if
the members in the model are implied by the program together with
the negation/positive of the atoms not present in the model. The
groundedness test for generating minimal models can be stated as
follows: Let T be an update tableau for $IDB\cup EDB$ and insert
request $\mathcal{A}$.We say that open finished branch b in T
satisfies the groundedness test iff $\forall s\in
HS(b):IDB^{\bullet}\cup EDB\backslash HS(b)\cup\{\mathcal{A}\}\vdash
s$, similar for $IDB^*$ ($\forall s\in HS(b):IDB^*\cup EDB\backslash
HS(b)\cup\{\mathcal{\neg A}\}\vdash \neg s$). It is not difficult to
see that this is equivalent to the minimality test. This means that
every minimal model (minimal with respect to the base atoms) of $IDB^* \cup
\{\mathcal{A}\}$ provides a minimal hitting set for insertion the
ground view atom $\mathcal{A}$.

$$\begin{array}{cc}\hline
\text{\bf Algorithm 4} &  \text{\rm View
update Algorithm based on minimality test}\\\hline \text{\rm
Input}:&
\text{\rm A stratifiable (definite) deductive database}~DDB=IDB\cup EDB\cup IC\\
&\text{\rm an literals}~\mathcal{A}\\
\text{\rm Output:}&\text{\rm A new stratifiable (definite) database}~IDB\cup EDB'\cup IC\\
\text{\rm begin}&\\
~~1.&\text{\rm Let}~ V :=\{ c\in IC~|~IDB\cup IC~\text{\rm
inconsistent
with}~\mathcal{A}~\text{\rm with respect to}~c~\}\\
&\text{\rm While}~(V\neq \emptyset)\\
~~2.&\hspace{-1.9cm}\text{\rm Construct a complete SLD-tree for} \leftarrow\mathcal{A}~\text{\rm with respect to DDB.}\\
~~3.&\hspace{-0.7cm}\text{\rm For every successful branch $i$:construct}~\Delta_{i}=\{D~|~D\in EDB \}\\
&\text{\rm and D is used as an input clause in branch $i$}.\\
&\text{\rm Construct a branch i of an update tableau satisfying
minimality}\\
&\text{\rm for}~ IDB\cup EDB~\text{\rm and delete request}~\neg A.\\
&\text{\rm  Produce}~IDB\cup EDB \backslash HS(i)~\text{\rm as a result}\\
~~4.&\text{\rm For every unsuccessful branch $j$:construct}~\Delta_{j}=\{D~|~D\in EDB \}\\
&\text{\rm and D is used as an input clause in branch $j$}.\\
&\text{\rm Construct a branch j of an update magic set rule
satisfying minimality}\\
&\text{\rm for}~ IDB\cup EDB~\text{\rm and insert request}~A.\\
&\text{\rm  Produce}~IDB\cup EDB \backslash HS(j)~\text{\rm as a
result}\\
&\text{\rm Let}~ V :=\{ c\in IC~|~IDB\cup IC~\text{\rm inconsistent
with}~\mathcal{A}~\text{\rm with respect to}~c~\}\\
&\hspace{-0.7cm}\text{\rm return}\\
~~5.&\text{\rm Produce}~DDB~\text{\rm as the result.}\\
\text{\rm end.}&\\\hline
\end{array}$$

This means that every minimal model (minimal with respect to the base atoms) of
$IDB^* \cup \{\neg A\}$ ($IDB^{\bullet} \cup \{A\}$ )provides a
minimal hitting set for deleting the ground view atom $A$.
Similarly, $IDB^* \cup \{A\}$ provides a minimal hitting set for
inserting the ground view atom $A$. We are formally present our
algorithms.

Given a database and a view atom to be updated, we first transform
the database into a logic program and use hyper
tableaux calculus to generate models of this transformed program for
deletion of an atom.  Second, magic sets transformed rules are used
is used to generate models of this transformed program for
determining an induced insertion of an atom. Models that do not
represent rational update are filtered out using the strong
minimality test. The procedure for stratifiable (definite) deductive
databases is presented in Algorithms in 4 and 5.

$$\begin{array}{cc}\hline
\text{\bf Algorithm 5} & \text{\rm View
update Algorithm based on minimality test}\\\hline \text{\rm
Input}:&
\text{\rm A stratifiable (definite) deductive database}~DDB=IDB\cup EDB\cup IC\\
&\text{\rm an literals}~\mathcal{A}\\
\text{\rm Output:}&\text{\rm A new stratifiable (definite) database}~IDB\cup EDB'\cup IC\\
\text{\rm begin}&\\
~~1.&\hspace{-1.9cm}\text{\rm Construct a complete SLD-tree for} \leftarrow\mathcal{A}~\text{\rm with respect to DDB.}\\
~~2.&\hspace{-0.7cm}\text{\rm For every successful branch $i$:construct}~\Delta_{i}=\{D~|~D\in EDB \}\\
&\text{\rm and D is used as an input clause in branch $i$}.\\
~~3.&\text{\rm For every unsuccessful branch $j$:construct}~\Delta_{j}=\{D~|~D\in EDB \}\\
&\text{\rm and D is used as an input clause in branch $j$}.\\
~~4.&\text{\rm Let}~ V :=\{ c\in IC~|~IDB\cup IC~\text{\rm
inconsistent
with}~\mathcal{A}~\text{\rm with respect to}~c~\}\\
&\text{\rm While}~(V\neq \emptyset)\\
&\text{\rm Construct a branch i of an update tableau satisfying
minimality}\\
&\text{\rm for}~ IDB\cup EDB~\text{\rm and delete request}~\neg A.\\
&\text{\rm  Produce}~IDB\cup EDB \backslash HS(i)~\text{\rm as a result}\\
&\text{\rm Construct a branch j of an update magic set rule
satisfying minimality}\\
&\text{\rm for}~ IDB\cup EDB~\text{\rm and insert request}~A.\\
&\text{\rm  Produce}~IDB\cup EDB \backslash HS(j)~\text{\rm as a
result}\\
&\text{\rm Let}~ V :=\{ c\in IC~|~IDB\cup IC~\text{\rm inconsistent
with}~\mathcal{A}~\text{\rm with respect to}~c~\}\\
&\hspace{-0.7cm}\text{\rm return}\\
~~5.&\text{\rm Produce}~DDB~\text{\rm as the result.}\\
\text{\rm end.}&\\\hline
\end{array}$$

\begin{lemma} \label{l2}
The strong minimality test and the groundedness test are equivalent.
\end{lemma}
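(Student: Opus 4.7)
The plan is to establish both directions of the equivalence by exploiting the duality built into the $IDB^*$ transformation. Recall that each rule $H \leftarrow B_1 \wedge \ldots \wedge B_n$ of $IDB$ gets replaced by $\neg B_1 \vee \ldots \vee \neg B_n \leftarrow \neg H$ in $IDB^*$, and the two are propositionally equivalent, since both represent the clause $H \vee \neg B_1 \vee \ldots \vee \neg B_n$. This classical equivalence is what allows the roles of $s$ (an EDB atom whose complement appears in the hitting set) and $\neg A$ (the deletion request) to be swapped between the two tests.

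For the forward direction, strong minimality implies groundedness. Fix $s \in HS(b)$ and assume $IDB^{*} \cup (EDB \setminus HS(b)) \cup \{s\} \vdash \neg A$. By the deduction theorem, $IDB^{*} \cup (EDB \setminus HS(b)) \vdash s \rightarrow \neg A$, which is classically the clause $\neg s \vee \neg A$. Adding $\neg A$ as a hypothesis and resolving yields $IDB^{*} \cup (EDB \setminus HS(b)) \cup \{\neg A\} \vdash \neg s$, which is precisely the groundedness condition at $s$. For the converse, I would run the argument in reverse: $IDB^{*} \cup (EDB \setminus HS(b)) \cup \{\neg A\} \vdash \neg s$ yields $\neg A \rightarrow \neg s$, equivalently the clause $A \vee \neg s$; adjoining $s$ and resolving recovers $\neg A$. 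Because $s$ is arbitrary, the universal quantifications over $HS(b)$ match up, and both directions are established.

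The main obstacle I anticipate is reconciling the abstract entailment relation $\vdash$ used in the statements of the two tests with the hyper-tableaux and magic-set proof systems actually used to construct the branches. Strictly speaking, the deduction theorem is a classical metatheorem, whereas the hyper-tableaux calculus operates on disjunctive clauses under a specific extension discipline, and the magic-set rewriting used on the insertion side uses $IDB^{\bullet}$ rather than $IDB^{*}$. I would argue that since $IDB^{*}$ and $IDB^{\bullet}$ are ground clausal theories and both tests refer only to derivability of ground literals from ground premises, the classical deduction theorem is admissible in this restricted propositional fragment, and the hyper-tableaux and magic-set calculi are sound and complete for the relevant entailment. A careful check of this completeness, together with a symmetric restatement of the argument for the insertion case (using $IDB^{\bullet}$ and request $A$ in place of $IDB^{*}$ and $\neg A$), would complete the lemma.
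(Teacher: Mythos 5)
The paper itself offers no argument here---it simply cites Aravindan \& Baumgartner (1997)---so your attempt to supply an actual proof is welcome, and your overall strategy (classical equivalence of $IDB$ and $IDB^*$ as clause sets, then contraposition via the deduction theorem) is indeed the right intuition behind the cited result. However, the central inference step in your forward direction is not valid. From $\Gamma \vdash s \rightarrow \neg A$, i.e.\ the clause $\neg s \vee \neg A$, you cannot ``add $\neg A$ as a hypothesis and resolve'' to obtain $\neg s$: resolution needs complementary literals, and $\neg A$ is not complementary to $\neg A$. The interpretation $s$ true, $A$ false satisfies both $\neg s \vee \neg A$ and $\neg A$ while falsifying $\neg s$, so $\{\neg s \vee \neg A,\ \neg A\} \not\models \neg s$. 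The correct contrapositive of $\Gamma \cup \{s\} \vdash \neg A$ is $\Gamma \cup \{A\} \vdash \neg s$, which is not the groundedness condition as stated. Your converse direction has the mirror-image defect: resolving $A \vee \neg s$ against $s$ recovers $A$, not $\neg A$.

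The repair is to get the polarities of the update request right before contraposing. The equivalence that actually holds classically is
$\Gamma \cup \{s\} \vdash A \iff \Gamma \vdash \neg s \vee A \iff \Gamma \cup \{\neg A\} \vdash \neg s$,
with $\Gamma = IDB \cup (EDB \setminus HS(b))$, and this is the pairing the paper's own worked example uses (it checks $IDB\cup EDB\backslash \{f,a\}\cup\{a\} \nvdash p$, deriving $p$, not $\neg p$). If instead you insist on the literal text of the minimality test (deriving $\neg A$ from $IDB^*$, where $\neg A$ is officially a fresh atom with the $\neg$-sign folded into the predicate symbol), then neither the deduction theorem nor contraposition applies in the form you use them, and the equivalence must be argued at the level of the hyper-tableaux/SLD derivations themselves, which is what the cited reference does. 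Your closing paragraph correctly flags this tension between the abstract $\vdash$ and the concrete calculi, but flagging it does not discharge it; as written, the propositional core of your argument fails before that issue even arises.
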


\begin{proof}
Follows from the result of (Aravindan \& Baumgartner 1997).
\end{proof}

\begin{example}
$$\begin{array}{cccccc} IDB:&p \leftarrow
a\wedge e&\hspace{0.5cm}EDB:&a\leftarrow&\hspace{1.2cm}IC:&\leftarrow b\\
&q \leftarrow a \wedge e&&f\leftarrow&&\\
&p\leftarrow b \wedge f&&&&\\
&q\leftarrow b \wedge f&&&&\\
&p\leftarrow g \wedge a&&&&\\
&q\leftarrow p&&&&\\
\end{array}$$
\end{example}

Suppose we want to insert $p$. First, we need to check consistency
with IC and afterwards, we have to find $\Delta_{i}$ and
$\Delta_{j}$ via tree deduction.

From algorithm 4 or 5 (only different is
checking IC condition), the above example execute following steps:

\begin{enumerate}
\item[] \begin{center} \textbf{Step number with execution} \end{center}
\hspace{1cm}
\item[(Input)] $IDB: p \leftarrow
a \wedge e;~ q \leftarrow
a \wedge e;~p \leftarrow
b \wedge f;~q \leftarrow
b \wedge f;~p \leftarrow
g \wedge a;~ q\leftarrow p\\
EDB: a\leftarrow, f\leftarrow\\
IC: \leftarrow b$
\item[(0)] $\{p \leftarrow
a,e;~q \leftarrow
a,e;~p \leftarrow
b,f;~q \leftarrow
b,f;~p \leftarrow
g,a;~q\leftarrow p;~a;~f$\}
\item[(1)] $\{V = b\}$
\item[(2.1)] \qtreecenterfalse a. \Tree [ [ q !\qsetw{5cm} p ] .e ].a \hskip
0.5in b. \Tree [ [ [ [ q !\qsetw{5cm} p ].e ].a ] .b ] .f \hskip
0.5in c. \Tree [ [ [ [ q !\qsetw{7cm} [ q ] .p ] .e ].a ] .b ] .f \\
d. \Tree [ [ [ [ [ q !\qsetw{7cm} [ q ] .p ] .e ].a ] .b ] .f ] .g
\hskip 1.5in e. \Tree [ [ [ [ [ {q\\\\$\Box$} !\qsetw{7cm} [
q\\\\$\blacksquare$ ] .p ] .e ].a ] .b ] .f ] .g
\item[(2.2)] \qtreecenterfalse a. \Tree [ [ q !\qsetw{5cm} p ] .e ].a \hskip
0.5in b. \Tree [ [ q !\qsetw{5cm} p ] .e ].a \hskip
0.5in c. \Tree [ [ q !\qsetw{7cm} [ q ] .p ] .e ].a \\
d. \Tree [ [ [ q !\qsetw{7cm} [ q ] .p ] .e ].a ] .g \hskip 1.5in e.
\Tree [ [ [ {q\\\\$\Box$} !\qsetw{7cm} [ q\\\\$\blacksquare$ ] .p ]
.e ].a ] .g
\item[(3-4)] $\Delta_{i} =\{a,e,g\}$ and $\Delta_{j} =\{ \}$
\item[(5)] $p\leftarrow a,e;~q\leftarrow a,e;~p\leftarrow g,a;~q\leftarrow p;~a;~e;~f;~g$
\item[(Output)] $IDB:p \leftarrow
a \wedge e;~ q \leftarrow
a \wedge e;~p \leftarrow
g \wedge a;~ q\leftarrow p\\
EDB': a\leftarrow, e\leftarrow, f\leftarrow, g\leftarrow\\ IC:
\leftarrow b$
\end{enumerate}

To show the rationality of this approach, we study how this is
related to the previous approach presented in the last section, i.e.
generating explanations and computing hitting sets of these
explanations. To better understand the relationship it is imperative
to study where the explanations are in the hyper tableau approach
and magic set rules. We first define the notion of an $EDB$-cut and
then view update seeds.

\begin{definition}[$EDB$-Cut] Let $T$ be update tableau with open branches
$b_1,b_2,\ldots,b_n$. A set $S=\{A_1,A_2,\ldots,A_n\}\subseteq EDB$
is said to be $EDB$-cut of $T$ iff  $\neg A_i\in b_i$ ($A_i\in
b_i$), for $1\leq i\leq n$.
\end{definition}

\begin{definition}[$EDB$ seeds]
Let $M$ be an update seeds with close branches
$b_1,b_2,\ldots,$$b_n$. A set $S=\{A_1,A_2,\ldots,A_n\}\subseteq
EDB$ is said to be a $EDB$-seeds of $M$ iff EDB seeds
$vu\_seeds(\nu_D)$ with respect to $\nu_D$ is defined as follows:
$$vu\_seeds(\nu_D) := \left\{\nabla^{\pi}_p (c_1,\ldots , c_n) | p(c_1,\ldots, c_n)\in \nu^{\pi}_D~and~\pi\in\{+,
-\}\right\} .$$ \rm
\end{definition}

\begin{lemma} \label{l3} Let $T$ be an update tableau for $IDB\cup EDB$ and
update request $A$. Similarly, for $M$ be an update magic set rule.
Let $S$ be the set of all $EDB$-closed minimal abductive
explanations for $A$ with respect to. $IDB$.  Let $S'$ be the set of all
$EDB$-cuts of $T$ and $EDB$-seeds of $M$ . Then the following hold
\begin{enumerate}
\item[$\bullet$] $S\subseteq S'$.\\
\item[$\bullet$] $\forall \Delta'\in S':\exists \Delta\in S s.t. \Delta\subseteq \Delta'$.
\end{enumerate}
\end{lemma}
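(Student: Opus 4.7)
The plan is to establish both directions by relating the hyper tableau (and magic set) derivations to SLD-derivations of $A$ from $IDB\cup EDB$, using the fact that the $IDB^*$ transformation is a contrapositive reformulation of $IDB$. Concretely, each hyper extension step by a transformed clause of $IDB^*$ corresponds, in reverse, to an SLD-resolution step against the original $IDB$-clause, and dually for $IDB^{\bullet}$ in the magic set setting. This correspondence between derivations is the engine behind both inclusions.

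For the inclusion $S\subseteq S'$, I would fix $\Delta\in S$, i.e.\ a $DDB$-closed locally minimal abductive explanation of $A$ with respect to $IDB$. By Definition \ref{D55} together with Observation \ref{l1}, $\Delta$ arises as the $EDB$-part of some $A$-kernel of $DDB$; equivalently there is a minimal SLD-derivation (or bundle of derivations, one per subgoal branching) of $A$ from $IDB\cup\Delta$ using precisely the facts in $\Delta$. I would then argue by induction on the structure of this derivation that running the hyper tableau calculus on $IDB^{*}\cup\{\neg A\leftarrow\}$ produces a regular tableau whose open finished branches $b_1,\ldots,b_n$ terminate in literals $\neg A_1,\ldots,\neg A_n$ with $\{A_1,\ldots,A_n\}=\Delta$, so $\Delta$ appears as an $EDB$-cut. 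The magic set case proceeds symmetrically, using the $IDB^{\bullet}$ transformation and Theorem \ref{T14} to identify $\Delta$ with a set of $EDB$-seeds propagated by the $\nabla^{+}$/$\nabla^{-}$ rules.

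For the second part, I would fix an $EDB$-cut/$EDB$-seed $\Delta'\in S'$ coming from open branches $b_1,\ldots,b_n$ of $T$ (or close branches of $M$). Along each $b_i$, the transformed clauses used as extenders correspond, after undoing the transformation, to $IDB$-clauses; stitching them together yields an SLD-proof of $A$ from $IDB\cup\Delta'$. Hence $IDB\cup\Delta'\vdash A$, so $\Delta'$ is an abductive explanation (possibly non-minimal) of $A$ with respect to $IDB$. Shrinking $\Delta'$ to a $\subseteq$-minimal subset $\Delta$ that still entails $A$ and closing it under the $DDB$-closedness requirement of Definition \ref{D55} yields the desired $\Delta\in S$ with $\Delta\subseteq\Delta'$.

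The main obstacle will be making the branch-to-derivation dictionary precise in the presence of the integrity constraints in $KB_{IC}$ and of the disjunctive head literals arising from multiple $IDB$-clauses with the same head: open branches of the tableau correspond to \emph{choices} among alternative SLD-proof paths, so one has to track how a single abductive explanation may spread across several tableau branches to form a joint $EDB$-cut, and conversely why an $EDB$-cut assembled from independent branches still witnesses derivability of $A$. Lemma \ref{l2}, equating the strong minimality test with the groundedness test, will be the technical lever ensuring that the minimality conditions on both sides of the lemma match up, closing the argument.
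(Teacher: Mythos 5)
Your proposal is correct and follows essentially the same route as the paper's own proof: both directions rest on Observation \ref{l1} to pass between $EDB$-closed locally minimal explanations and $A$-kernels, with the first inclusion obtained by turning a kernel into a successful derivation that must surface as a branch (hence an $EDB$-cut/seed), and the second by extracting a set-inclusion-minimal subset of the input clauses of a branch and reading off its $EDB$ part. The only difference is one of explicitness — you spell out the branch-to-SLD-derivation correspondence and lean on Lemma \ref{l2} and Theorem \ref{T14} where the paper simply asserts the correspondence — but the underlying argument is the same.
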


\begin{proof}
\hspace{0.5cm}
\begin{enumerate} 
\item[1.] Consider a $\Delta (\Delta\in\Delta_i \cup \Delta_j)\in S$.
We need to show that $\Delta$ is generated by algorithm 4
at step 2. From observation \ref{l1}, it is clear that there exists a
$A$-kernel $X$ of $DDB_G$ s.t. $X \cap EDB = \Delta_j$ and $X \cup
EDB = \Delta_i$. Since $X \vdash A$, there must exist a successful
derivation for $A$ using only the elements of $X$ as input clauses
and similarly $X \nvdash A$. Consequently $\Delta$ must have been
constructed at step 2.
\item[2.] Consider a $\Delta'((\Delta'\in\Delta_i \cup \Delta_j)\in S'$. Let $\Delta'$ be
constructed from a successful (unsuccessful) branch $i$ via
$\Delta_i$($\Delta_j$). Let $X$ be the set of all input clauses used
in the refutation $i$. Clearly $X\vdash A$($X\nvdash A$). Further,
there exists a minimal (with respect to set-inclusion) subset $Y$ of $X$ that
derives $A$ (i.e. no proper subset of $Y$ derives $A$). Let $\Delta
= Y \cap EDB$ ($Y \cup EDB$). Since IDB does not (does) have any
unit clauses, $Y$ must contain some EDB facts, and so $\Delta$ is
not empty (empty) and obviously $\Delta\subseteq \Delta'$. But, $Y$
need not (need) be a $A$-kernel for $IDB_G$ since $Y$ is not ground
in general. But it stands for several $A$-kernels with the same
(different) EDB facts $\Delta$ in them. Thus, from observation \ref{l1},
$\Delta$ is a DDB-closed locally minimal abductive explanation for
$A$ with respect to $IDB_G$ and is contained in $\Delta'$.
minimal.
\end{enumerate}
\end{proof}

The above lemma precisely characterizes what explanations are
generated by an update tableau. It is obvious then that a branch
cuts through all the explanations and constitutes a hitting set for
all the generated explanations. This is formalized below.

\begin{lemma} \label{l4} Let $S$ and $S'$ be sets of sets s.t. $S\subseteq S'$ and
every member  of $S'\backslash S$ contains an element of S. Then, a
set $H$ is a minimal hitting set for $S$ iff it is a minimal hitting
set for $S'$.
\end{lemma}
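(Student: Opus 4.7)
The plan is to reduce the minimality statement to the stronger claim that the collection of hitting sets for $S$ coincides exactly with the collection of hitting sets for $S'$. Once that equality of collections is established, minimality (which is defined purely in terms of which subsets are hitting sets) transfers automatically, since $H$ being a minimal hitting set for a collection is the property that $H$ is a hitting set and no proper subset of $H$ is.

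First I would prove the easy direction: if $H$ hits every member of $S'$, then since $S \subseteq S'$, it certainly hits every member of $S$. So every hitting set for $S'$ is a hitting set for $S$.

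The substantive step is the converse. Suppose $H$ is a hitting set for $S$ and let $X \in S'$. If $X \in S$, then $H \cap X \neq \emptyset$ by assumption. Otherwise $X \in S' \setminus S$, and by hypothesis there exists $Y \in S$ with $Y \subseteq X$. Then $\emptyset \neq H \cap Y \subseteq H \cap X$, so $H$ hits $X$ as well. Hence the two collections of hitting sets coincide.

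Finally, I would conclude minimality by a one-line argument: since the family of hitting sets for $S$ equals the family of hitting sets for $S'$, the $\subseteq$-minimal elements of these two families are the same, i.e.\ $H$ is a minimal hitting set for $S$ iff $H$ is a minimal hitting set for $S'$. I do not anticipate any real obstacle here; the only care needed is in the converse direction where one must use the hypothesis that each ``new'' set in $S' \setminus S$ \emph{contains} (as a superset) some member of $S$, rather than merely intersecting one.
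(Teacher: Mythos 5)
There is a genuine gap, and it sits exactly at the step you describe as ``easy.'' In this paper a hitting set for a collection $\mathcal{C}$ is not merely a set meeting every member of $\mathcal{C}$: by the paper's definition (Definition of hitting set for $KB\bot_{\bot}\alpha$, condition (i)), it must also satisfy $H \subseteq \bigcup \mathcal{C}$. Under that definition your claim that the hitting sets for $S$ and $S'$ coincide is false: a hitting set for $S'$ may select, from some $Y \in S' \setminus S$, an element lying in $\bigcup S' \setminus \bigcup S$, and such an $H$ is not a hitting set for $S$ at all. So the direction ``$H$ hits every member of $S'$ implies $H$ is a hitting set for $S$'' does not follow from $S \subseteq S'$ alone, and the reduction of the lemma to an equality of the two families of hitting sets collapses. (Your other direction is fine: a hitting set for $S$ satisfies $H \subseteq \bigcup S \subseteq \bigcup S'$ and hits each $X \in S' \setminus S$ via the member of $S$ it contains.)

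This is precisely where the paper's proof has to invoke minimality rather than deferring it to the end: if $H$ is a \emph{minimal} hitting set for $S'$ and some $E \in H$ lies outside $\bigcup S$, then $E$ was chosen from some $Y \in S' \setminus S$ which contains a member $X$ of $S$; since $X \in S'$ must itself be hit, $H$ already meets $Y$ inside $\bigcup S$, so $E$ is redundant and $H$ was not minimal. Only after this argument does one know $H \subseteq \bigcup S$, whence $H$ is a hitting set for $S$, and minimality then transfers. The corroborating evidence that your stronger claim needs an extra hypothesis is the companion lemma for the materialized-view case (Lemma \ref{l7}), which adds the assumption that every $X \in S' \setminus S$ satisfies $X \subseteq \bigcup S$ precisely in order to conclude the equivalence for \emph{plain} hitting sets. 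To repair your proof, keep your superset argument for the only-if direction, but replace the ``equality of families'' step by the minimality-based argument that a minimal hitting set for $S'$ cannot contain elements outside $\bigcup S$.
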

\vspace{0.5cm}

\begin{proof}
\hspace{0.5cm}
\begin{enumerate}
\item[1.] (\textbf{Only if part})~Suppose $H$ is a minimal hitting set for $S$. Since $S
\subseteq S'$, it follows that $H \subseteq \bigcup S'$. Further,
$H$ hits every element of $S'$, which is evident from the fact that
every element of $S'$ contains an element of $S$. Hence $H$ is a
hitting set for $S'$. By the same arguments, it is not difficult to
see that $H$ is minimal for $S'$ too.\\

(\textbf{If part})~Given that $H$ is a minimal hitting set for $S'$,
we have to show that it is a minimal hitting set for $S$ too. Assume
that there is an element $E \in H$ that is not in $\bigcup S$. This
means that $E$ is selected from some $Y \in S'\backslash S$. But $Y$
contains an element of $S$, say $X$. Since $X$ is also a member of
$S'$, one member of $X$ must appear in $H$. This implies that two
elements have been selected from $Y$ and hence $H$ is not minimal.
This is a contradiction and hence $H \subseteq \bigcup S$. Since $S
\subseteq S'$, it is clear that $H$ hits every element in $S$, and
so $H$ is a hitting set for $S$. It remains to be shown that $H$ is
minimal. Assume the contrary, that a proper subset $H'$ of $H$ is a
hitting set for $S$. Then from the proof of the only if part, it
follows that $H'$ is a hitting set for $S'$ too, and contradicts the
fact that $H$ is a minimal hitting set for $S'$ .
Hence, $H$ must be a minimal hitting set for $S$.
\end{enumerate}
\end{proof}

\begin{lemma} \label{l5} Let $T$ be an update tableau for $IDB\cup EDB$ and
update request $A$ that satisfies the strong minimality test.
Similarly, for $M$ be an updating magic set rule. Then, for every
open (close) finished branch $b$ in $T$, $HS(b)$ ($M$, $HS(b)$) is a
minimal hitting set of all the abductive explanations of $A$.
\end{lemma}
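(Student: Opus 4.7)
The plan is to use Lemmas \ref{l3} and \ref{l4} to reduce the claim to showing that $HS(b)$ is a minimal hitting set of the set of all $EDB$-cuts (respectively $EDB$-seeds), rather than of the set of abductive explanations directly. By Lemma \ref{l3}, the collection $S$ of $DDB$-closed locally minimal abductive explanations of $A$ satisfies $S\subseteq S'$, where $S'$ is the set of $EDB$-cuts of $T$ (or $EDB$-seeds of $M$), and every element of $S'\setminus S$ contains an element of $S$. Lemma \ref{l4} then guarantees that minimal hitting sets for $S$ and for $S'$ coincide, so it suffices to prove the claim with respect to $S'$.

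First, I would verify that $HS(b)$ is a hitting set for $S'$. For the tableau case, take any $EDB$-cut $\{A_1,\ldots,A_n\}$ with $\neg A_i\in b_i$ for each open finished branch $b_i$. Since $b$ is one of these branches, say $b=b_j$, we have $\neg A_j\in b$, so $A_j\in HS(b)$, showing $HS(b)\cap\{A_1,\ldots,A_n\}\neq\emptyset$. The magic set case is analogous, using the definition of $EDB$-seeds in place of $EDB$-cuts.

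For minimality, I would argue by contradiction: suppose some $s\in HS(b)$ could be removed, so that $HS(b)\setminus\{s\}$ is still a hitting set for $S'$. Then for every open finished branch $b_i$ there is an atom $A_i$ with $\neg A_i\in b_i$ and $A_i\in HS(b)\setminus\{s\}$, yielding an $EDB$-cut avoiding $s$. The strong minimality test on $b$, however, asserts that each $s\in HS(b)$ is indispensable: restoring only $s$ from $HS(b)$ still yields a derivation of the update target, which (via Lemma \ref{l2}, equating strong minimality with the groundedness test) means $s$ is justified in $HS(b)$ solely through its role on branch $b$. This forces every purported hitting set constructed from $b$'s literals to include $s$, contradicting the supposition.

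The main obstacle is exactly this last step: translating the logical condition of the strong minimality test (an entailment statement about $IDB^*\cup EDB\setminus HS(b)\cup\{s\}$) into the combinatorial statement that $s$ is the unique member of $HS(b)$ intersecting some particular $EDB$-cut. Lemma \ref{l2} is the bridge, since groundedness characterizes precisely those models in which no atom can be dropped without losing derivability of a justifying rule instance; applied to the transformed program $IDB^*$ (or $IDB^\bullet$), this yields the required irredundancy of each $s\in HS(b)$. Combining this with the reduction via Lemmas \ref{l3} and \ref{l4} then delivers the lemma.
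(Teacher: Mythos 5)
Your proposal is correct and follows the route the paper clearly intends: Lemma \ref{l3} identifies the $DDB$-closed locally minimal explanations with (a subcollection of) the $EDB$-cuts/seeds, Lemma \ref{l4} transfers minimal hitting sets between the two collections, and the strong minimality test (equivalently groundedness, via Lemma \ref{l2}) supplies minimality. The paper's own proof is only a one-line citation of Observation \ref{l1} and Behrend \& Manthey (2008), so your write-up actually supplies the missing detail; the one rough spot is the phrase about ``an $EDB$-cut avoiding $s$'' in the minimality step --- the cleaner argument is that if $HS(b)\setminus\{s\}$ still hit every explanation, then $IDB\cup(EDB\setminus HS(b))\cup\{s\}$ could not derive $A$, directly contradicting the strong minimality test for $b$.
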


\begin{proof}
Follows from the Observation \ref{l1} (minimal test) in and (Behrend \&
Manthey 2008).
\end{proof}

So, Algorithms 4 and 5 generate a minimal
hitting set (in polynomial space) of all $EDB$-closed locally
minimal abductive explanations of the view atom to be deleted. From
the belief dynamics results recalled in section \ref{s11}, it
immediately follows that Algorithms 4 and 5 are
rational, and satisfy the strong relevance postulate (KB*7.1).

\begin{theorem} \label{T15} Algorithms 4 and 5 are rational, in the sense that
they satisfy all the rationality postulates (KB*1)-(KB*6) and the
strong relevance postulate (KB*7.1). Further, any update that
satisfies these postulates can be computed by these algorithms.
\end{theorem}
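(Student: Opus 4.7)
The plan is to chain together the structural results already in hand rather than redo the postulate-by-postulate verification. By Lemma \ref{l5}, each open finished branch $b$ of an update tableau (respectively, each close finished branch of an update magic set derivation) satisfying the strong minimality test yields a set $HS(b)$ that is a minimal hitting set of all $EDB$-closed locally minimal abductive explanations of $A$ with respect to $IDB_G$. By Observation \ref{l1}, these $EDB$-closed locally minimal abductive explanations are in bijection with the $EDB$-portions of the $A$-kernels of $DDB$. Hence $HS(b)$ is a minimal hitting set of $DDB\bot_{\bot}A$ restricted to the updatable part $EDB$, which is exactly an incision produced by a minimal incision function $\sigma$ in the sense of Definition \ref{D15}.

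First I would invoke Theorem \ref{T8}(2): when $\sigma$ is minimal, the generalized kernel revision $DDB*_{\sigma}A$ coincides with a partial meet revision. Then, by Theorem \ref{T6} (or equivalently Theorem \ref{T7} and Theorem \ref{T81}), the output $IDB\cup (EDB\setminus HS(b))\cup IC$ for a deletion, and the analogous insertion output, already satisfies the postulates $(KB*1)$--$(KB*6)$ and the weak relevance postulate $(KB*7.3)$. The upgrade from $(KB*7.3)$ to the strong relevance postulate $(KB*7.1)$ is precisely what the strong minimality test buys us: the test ensures that every member of $HS(b)$ is individually essential for causing (resp.\ blocking) derivation of $A$, so after the update the view atom $A$ is derivable whenever $IDB\nvdash\neg A$, which is the content of $(KB*7.1)$. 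Consistency with $IC$ is preserved because the outer loop of Algorithms 4 and 5 iterates over the constraints in $V$ and closes the branches that reintroduce an inconsistency.

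For the converse direction, suppose $DDB''$ is any update of $DDB$ by $A$ that satisfies $(KB*1)$--$(KB*6)$ and $(KB*7.1)$. By Theorem \ref{T8}(1), $DDB''$ is a generalized kernel revision, so there exists an incision function $\sigma''$ with $DDB''=(DDB\setminus\sigma''(DDB\bot_{\bot}A))\cup A$; the strong relevance assumption forces $\sigma''$ to be minimal, i.e.\ to pick exactly one element out of each $A$-kernel. Using Lemma \ref{l3}, every such minimal choice across all kernels corresponds to an $EDB$-cut (resp.\ $EDB$-seed) of the tableau (resp.\ magic set derivation); selecting the appropriate branch in step 3/4 of Algorithm 4 or 5 and passing the strong minimality test then reproduces $\sigma''(DDB\bot_{\bot}A)$ and hence $DDB''$.

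The main obstacle I expect is the joint bookkeeping of the two directions: insertions go through magic-set (top-down) derivations on $IDB^{\bullet}$ while deletions go through hyper-tableau (bottom-up) derivations on $IDB^{*}$, and I must verify that the correspondence of Lemma \ref{l3} between branches and abductive explanations behaves symmetrically in both modes when integrity constraints from $IC$ force additional closures. A secondary subtlety is arguing that the strong minimality test is not merely sufficient but also necessary for $(KB*7.1)$, which is where Lemma \ref{l2} (equivalence with the groundedness test) should close the loop.
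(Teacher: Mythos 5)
Your proposal is correct and follows essentially the same route as the paper: the paper's own proof is simply the citation ``Follows from Observation \ref{l1}, \ref{l5} and Theorem \ref{T8},'' and your argument chains exactly these results (with Lemma \ref{l3}, Lemma \ref{l2}, and Theorems \ref{T6}--\ref{T7} supplying the intermediate steps the paper leaves implicit). The additional detail you supply --- minimal hitting sets as minimal incision functions, Theorem \ref{T8}(2) for the forward direction, Theorem \ref{T8}(1) plus the branch/explanation correspondence for the converse --- is a faithful unpacking of what the paper's one-line proof presupposes rather than a different method.
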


\begin{proof}
Follows from Observation \ref{l1},\ref{l5} and Theorem \ref{T8}.
\end{proof}

\subsection{Materialized view}
In many cases, the view update to be materialized, i.e. the least
Herbrand Model is computed and kept, for efficient query answering.
In such a situation, rational hitting sets can be computed without
performing any minimality test. The idea is to transform the given
$IDB$ with respect to the materialized view.

\begin{definition} [$IDB^+$ Transformation] Let $IDB\cup  EDB$ be a given
database. Let $S$ be the Least Herbrand Model of this database.
Then, $IDB^+$ is defined as the transformation of $IDB$ with respect to $S$.
\end{definition}

\begin{note} If $IDB$ is a stratifiable deductive database then the
transformation introduced above is not necessary.
\end{note}

\begin{definition}[Update Tableau based on Materialized view] An update tableau
based on materialized view for a database $IDB\cup EDB$ and delete
request $\neg A$ is a hyper tableau $T$ for $IDB^+\cup\{\neg
A\leftarrow \}$ such that  every open branch is finished.
\end{definition}

\begin{definition} [$IDB^-$ Transformation] Let $IDB\cup  EDB$ be a given
database. Let $S_1$ be the Least Herbrand Model of this database  (that is either body or head  empty).
Then, $IDB^-$ is defined as the transformation of $IDB$ with respect to $S_1$.
\end{definition}

\begin{definition}[Update magic set rule based on Materialized view] An update
magic set rule based on materialized view for a database $IDB\cup
EDB$ and insert request $A$ is a magic set $M$ for
$IDB^+\cup\{A\leftarrow \}$ such that  every close branch is
finished.
\end{definition}

Now the claim is that every model of $IDB^+\cup\{\neg A\leftarrow
\}$ ($A\leftarrow$) constitutes a rational hitting set for the
deletion and insertion of the ground view atom $A$. So, the
algorithm works as follows: Given a database and a view update
request, we first transform the database with respect to its Least Herbrand
Model (computation of the Least Herbrand Model can be done as a
offline preprocessing step. Note that it serves as materialized view
for efficient query answering). Then the hyper tableaux calculus
(magic set rule) is used to compute models of this transformed
program. Each model represents a rational way of accomplishing the
given view update request. This is formalized in Algorithms
6 and 7.

Like the approach with minimality test, this algorithm runs on not
polynomial space. This approach require minimality test, but our
focus on integrity constrain open/close branch. Again, this requires
some offline pre-processing of computing the Least Herbrand Model.
Note that, our future direction to construct minimality test, this
method may generate a non-minimal (but rational) hitting set.

\begin{example}

Given stratifiable (definite) deductive database $DDB=IDB \cup EDB\cup IC$ and insert $p$.

$$\begin{array}{cccccc} IDB:&p \leftarrow
a&\hspace{0.5cm}EDB:&c\leftarrow&\hspace{1.2cm}IC:&\emptyset\\
&q \leftarrow a&&d\leftarrow&&\\
&q \leftarrow c \wedge b&&&&\\
&q\leftarrow p&&&&\\
\end{array}$$
\end{example}

$IDB^+$ Transformation:

$$\begin{array}{cccccc} IDB^+:&\neg a \leftarrow
\neg p&\hspace{0.5cm}EDB:&c\leftarrow&\hspace{1.2cm}IC:&\emptyset\\
&\neg a \leftarrow \neg q&& d\leftarrow&&\\
&\neg c \lor \neg b \leftarrow \neg p&&&&\\
&\neg p \leftarrow \neg q&&&&\\
\end{array}$$

$IDB^-$ Transformation:  (Body empty)

$$\begin{array}{cccccc} IDB^-:&p \lor \neg a \leftarrow
&\hspace{0.5cm}EDB:&c\leftarrow&\hspace{1.2cm}IC:&\emptyset\\
&q \lor \neg a \leftarrow &&d\leftarrow&&\\
&q \lor \neg c \lor \neg b \leftarrow &&&&\\
&q\lor \neg p \leftarrow &&&&\\
\end{array}$$

$IDB^-$ Transformation:  (Head empty)

$$\begin{array}{cccccc} IDB^-:&\leftarrow
\neg p \wedge a&\hspace{0.5cm}EDB:&c\leftarrow&\hspace{1.2cm}IC:&\emptyset\\
&\leftarrow \neg q \wedge a&&d\leftarrow&&\\
&\leftarrow \neg g \wedge c \wedge b&&&&\\
&\leftarrow \neg q \wedge p&&&&\\
\end{array}$$

\begin{figure}[h]
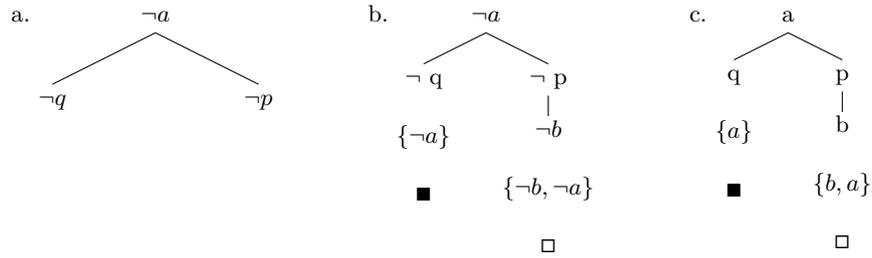

\begin{center}
\qtreecenterfalse a. \Tree [ ${\neg q}$ !\qsetw{5cm} ${\neg p}$ ]
.${\neg a}$ \hskip 0.5in b. \Tree [ {$\neg$ q \\\\
$\{ \neg a \}$ \\\\ $\blacksquare$} !\qsetw{2cm} [ {$\neg b$
\\\\ $\{ \neg b, \neg a \}$ \\\\ $\Box$ } ] .{$\neg$ p} ] .${\neg a}$
\hskip 0.5in c. \Tree [ {q \\\\ $\{ a \}$ \\\\ $\blacksquare$}
!\qsetw{2cm} [ {b \\\\ $\{ b, a \}$ \\\\ $\Box$ } ] .p ] .a
\caption{$IDB^+$ and $IDB^-$ transformation with hitting set}
\label{Figure 5.3}
\end{center}
\end{figure}

The Least Herbrand Model of this database is $\{p,q,a,b\}$. The
transformed database $IDB^+$ and $IDB^-$  based on this model,
together with an update tableaux for insertion request $p$ based on
materialised view is as above figure:

Observe that the last two clauses are never used and the necessarily
failing attempt of deleting t to delete p is never made, thus
greatly reducing the search space. Also note that the two cuts with
only EDB atoms $\{a,b\}$ and $\{a\}$ are exactly the two locally
minimal explanations for $p$. The two open branches provide the two
models of $IDB^+ \cup \{\neg p\}$ ($IDB^- \cup \{p\}$ which stand
for the hitting sets $\{a,b\}$ and $\{a\}$. Clearly, $\{a,b\}$ not
minimal.

$$\begin{array}{cc}\hline
\text{\bf Algorithm 6} &  \text{\rm View
update algorithm based on Materialized view}\\\hline \text{\rm
Input}:&
\text{\rm A stratifiable (definite) deductive database}~DDB=IDB\cup EDB\cup IC\\
&\text{\rm an literals}~\mathcal{A}\\
\text{\rm Output:}&\text{\rm A new stratifiable (definite) database}~IDB\cup EDB'\cup IC\\
\text{\rm begin}&\\
~~1.&\text{\rm Let}~ V :=\{ c\in IC~|~IDB\cup IC~\text{\rm
inconsistent
with}~\mathcal{A}~\text{\rm with respect to}~c~\}\\
&\text{\rm While}~(V\neq \emptyset)\\
~~2.&\hspace{-1.9cm}\text{\rm Construct a complete SLD-tree for} \leftarrow\mathcal{A}~\text{\rm with respect to DDB.}\\
~~3.&\hspace{-0.7cm}\text{\rm For every successful branch $i$:construct}~\Delta_{i}=\{D~|~D\in EDB \}\\
&\text{\rm and D is used as an input clause in branch $i$}.\\
&\text{\rm Construct a branch i of an update tableau based on view}\\
&\text{\rm for}~ IDB\cup EDB~\text{\rm and delete request}~\neg A.\\
&\text{\rm  Produce}~IDB\cup EDB \backslash HS(i)~\text{\rm as a result}\\
~~4.&\text{\rm For every unsuccessful branch $j$:construct}~\Delta_{j}=\{D~|~D\in EDB \}\\
&\text{\rm and D is used as an input clause in branch $j$}.\\
&\text{\rm Construct a branch j of an update magic set rule
based on view}\\
&\text{\rm for}~ IDB\cup EDB~\text{\rm and insert request}~A.\\
&\text{\rm  Produce}~IDB\cup EDB \backslash HS(j)~\text{\rm as a
result}\\
&\text{\rm Let}~ V :=\{ c\in IC~|~IDB\cup IC~\text{\rm inconsistent
with}~\mathcal{A}~\text{\rm with respect to}~c~\}\\
&\hspace{-0.7cm}\text{\rm return}\\
~~5.&\text{\rm Produce}~DDB~\text{\rm as the result.}\\
\text{\rm end.}&\\\hline
\end{array}$$

\begin{example}
$$\begin{array}{cccccc} IDB:&p \leftarrow
a&\hspace{0.5cm}EDB:&f\leftarrow&\hspace{1.2cm}IC:&\leftarrow b\\
&q \leftarrow a&&g\leftarrow&&\\
&p \leftarrow b \wedge f&&&&\\
&q \leftarrow b \wedge f&&&&\\
&p \leftarrow g \wedge a&&&&\\
\end{array}$$
\end{example}

Suppose we want to insert $p$. First, we need to check consistency
with IC and afterwards, we have to find $\Delta_{i}$ and
$\Delta_{j}$ via tree deduction.

From algorithm 6 or 7~(only different is
checking IC condition), the above example execute following steps:

$$\begin{array}{cc}\hline
\text{\bf Algorithm 7} &  \text{\rm View
update algorithm based on Materialized view}\\\hline \text{\rm
Input}:&
\text{\rm A stratifiable (definite) deductive database}~DDB=IDB\cup EDB\cup IC\\
&\text{\rm an literals}~\mathcal{A}\\
\text{\rm Output:}&\text{\rm A new stratifiable (definite) database}~IDB\cup EDB'\cup IC\\
\text{\rm begin}&\\
~~1.&\hspace{-1.9cm}\text{\rm Construct a complete SLD-tree for} \leftarrow\mathcal{A}~\text{\rm with respect to DDB.}\\
~~2.&\hspace{-0.7cm}\text{\rm For every successful branch $i$:construct}~\Delta_{i}=\{D~|~D\in EDB \}\\
&\text{\rm and D is used as an input clause in branch $i$}.\\
~~3.&\text{\rm For every unsuccessful branch $j$:construct}~\Delta_{j}=\{D~|~D\in EDB \}\\
&\text{\rm and D is used as an input clause in branch $j$}.\\
~~4.&\text{\rm Let}~ V :=\{ c\in IC~|~IDB\cup IC~\text{\rm
inconsistent
with}~\mathcal{A}~\text{\rm with respect to}~c~\}\\
&\text{\rm While}~(V\neq \emptyset)\\
&\text{\rm Construct a branch i of an update tableau satisfying
based on view}\\
&\text{\rm for}~ IDB\cup EDB~\text{\rm and delete request}~ A.\\
&\text{\rm  Produce}~IDB\cup EDB \backslash HS(i)~\text{\rm as a result}\\
&\text{\rm Construct a branch j of an update magic set rule
based on view}\\
&\text{\rm for}~ IDB\cup EDB~\text{\rm and insert request}~A.\\
&\text{\rm  Produce}~IDB\cup EDB \backslash HS(j)~\text{\rm as a
result}\\
&\text{\rm Let}~ V :=\{ c\in IC~|~IDB\cup IC~\text{\rm inconsistent
with}~\mathcal{A}~\text{\rm with respect to}~c~\}\\
&\hspace{-0.7cm}\text{\rm return}\\
~~5.&\text{\rm Produce}~DDB~\text{\rm as the result.}\\
\text{\rm end.}&\\\hline
\end{array}$$

\begin{enumerate}
\item[] \begin{center} \textbf{Step number with execution} \end{center}
\hspace{1cm}
\item[(Input)] $IDB: p \leftarrow
a, q\leftarrow a, p\leftarrow b \wedge f,
q\leftarrow b \wedge f, p\leftarrow g \wedge a\\
EDB: f\leftarrow, g\leftarrow\\
IC: \leftarrow b$
\item[(0)] $\{p \leftarrow a;~q \leftarrow a;~p\leftarrow b,f;~q\leftarrow b,f;~p\leftarrow g,a;~b;~g$\}
\item[(1)] $\{V = b\}$
\item[(2.1)] \qtreecenterfalse a. \Tree [ q !\qsetw{5cm} p ] .a \hskip
0.5in b. \Tree [ [ [ q !\qsetw{5cm} p ] .a ] .b ] .f \hskip
0.5in c. \Tree [ [ [ q !\qsetw{7cm} [ g ] .p ] .a ] .b ] .f \\
\begin{center} d. \Tree [ [ [ [ {q\\\\$\blacksquare$}  !\qsetw{7cm} [ {g\\\\$\Box$} ] .p ] .a ] .b ] .f ]
.g \end{center}
\item[(2.2)] \qtreecenterfalse a. \Tree [ q !\qsetw{5cm} p ] .a \hskip
0.5in b. \Tree [ q !\qsetw{5cm} p ] .a \hskip
0.5in c. \Tree [ q !\qsetw{7cm} [ g ] .p ] .a \\
\begin{center} d. \Tree [ [ {q\\\\$\blacksquare$} !\qsetw{7cm} [ {g\\\\$\Box$} ] .p ] .a ] .g
\end{center}
\item[(3-4)] $\Delta_{i} =\{a,g\}$ and $\Delta_{j} =\{ \}$
\item[(5)] $p\leftarrow a;~q \leftarrow a;~p\leftarrow g\wedge a;~a,~f,~g$
\item[(Output)] $IDB: p\leftarrow
a, q\leftarrow a, p\leftarrow g \wedge a\\
EDB': a\leftarrow;~f\leftarrow;~g\leftarrow\\ IC: \leftarrow b$
\end{enumerate}

This approach for view update may not satisfy (KB*7.1) in general.
But, as shown in the sequel, conformation to (KB*7.3) is guaranteed
and thus this approach results in rational update.

\begin{lemma} \label{l6} Let $T$ be an update tableau based on a materialized view for
$IDB\cup  EDB$ and delete request $\neg A$ ($A$), Similarly, let $M$
be an update magic set rule. Let $S$ be the set of all $EDB$-closed
locally minimal abductive explanations for $A$ with respect to $IDB$. Let $S'$
be the set of all $EDB$-cuts of $T$ and  $EDB$-seeds of $M$. Then,
the following hold:
\begin{enumerate}
\item[$\bullet$] $S\subseteq S'$.
\item[$\bullet$] $\forall \Delta'\in S':\exists \Delta\in S ~ s.t.~ \Delta\subseteq \Delta'$.
\item[$\bullet$] $\forall \Delta'\in S':\Delta'\subseteq \bigcup S$.
\end{enumerate}
\end{lemma}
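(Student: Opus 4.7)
The plan is to mimic the proof of Lemma \ref{l3}, since the first two clauses are verbatim analogues of those in that lemma, and then establish the third clause, which is the genuinely new content and depends critically on the use of the materialized view (Least Herbrand Model) in the $IDB^+$ and $IDB^-$ transformations.

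For clause one ($S \subseteq S'$), I would argue that any $EDB$-closed locally minimal abductive explanation $\Delta \in S$ corresponds, via Observation \ref{l1}, to some $A$-kernel $X$ of $DDB_G$ with $X \cap EDB = \Delta$. The clauses of $X$ give rise to a derivation that, after the $IDB^+$/$IDB^-$ transformation, yields an open (resp.\ close) branch whose $EDB$-atoms are exactly $\Delta$; hence $\Delta$ is an $EDB$-cut (seed) of $T$ (of $M$). For clause two, given any cut $\Delta' \in S'$, I would collect the input clauses used along the branch producing $\Delta'$, extract a minimal subset $Y$ that still derives $A$, and show $Y \cap EDB \subseteq \Delta'$ is an $EDB$-closed locally minimal abductive explanation via Observation \ref{l1}, paralleling the argument in Lemma \ref{l3}.

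The main obstacle, and the step I would devote the most care to, is clause three: $\Delta' \subseteq \bigcup S$. Here the use of the Least Herbrand Model is essential. The $IDB^+$ transformation moves exactly those atoms that are in the Least Herbrand Model $\mathrm{LHM}$ of $IDB \cup EDB$, so any open/close branch of $T$ (resp.\ $M$) can only place in its literal set $EDB$-atoms that actually participate in some derivation of $A$ (otherwise the branch would be pruned, since the transformed clause would be inapplicable relative to $\mathrm{LHM}$). I would make this precise by showing: if $a \in \Delta'$, then $a$ occurs as an input $EDB$-fact in some refutation of $A$ relative to the untransformed $IDB \cup EDB$; applying the minimization argument from clause two to that refutation produces some $\Delta \in S$ with $a \in \Delta$, so $a \in \bigcup S$. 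This differs from Lemma \ref{l3} precisely because in Algorithms~4--5 the transformation does not filter by $\mathrm{LHM}$, and non-minimal branches can pull in irrelevant $EDB$ atoms; the materialized-view transformation rules those out a priori.

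Finally I would note that all three clauses hold symmetrically for the insertion case (magic-set rule $M$ with seeds in place of cuts), since the $IDB^-$ transformation is obtained from $IDB$ by an identical focusing on $\mathrm{LHM}$, and cite the result of Behrend \& Manthey 2008 together with Aravindan \& Baumgartner 1997 for the underlying correctness of the hyper tableaux and magic sets calculi, as is done in Lemmas \ref{l3} and \ref{l5}.
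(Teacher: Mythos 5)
Your proposal is correct, and for the first two bullets it follows essentially the same route as the paper: the paper's own proof of Lemma \ref{l6} is a near-verbatim copy of the proof of Lemma \ref{l3}, using Observation \ref{l1} to pass between $A$-kernels and branches in one direction and the minimal-subset extraction $Y\subseteq X$ in the other. Where you genuinely diverge is on the third bullet, $\forall \Delta'\in S':\Delta'\subseteq\bigcup S$ --- the paper's proof simply never addresses it; it stops after the two items inherited from Lemma \ref{l3}. Your argument for that clause is the right idea and is the one actually needed to distinguish this lemma from Lemma \ref{l3}: because $IDB^+$ (resp.\ $IDB^-$) is the transformation of $IDB$ with respect to the Least Herbrand Model, any clause with a body atom outside that model remains inapplicable in the transformed program, so a branch can only accumulate $EDB$-atoms that occur as input facts in some genuine refutation of $A$; minimizing that refutation places each such atom inside some member of $S$. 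This is consistent with the paper's own informal remark in the worked example preceding Algorithm 6 (``the last two clauses are never used \ldots the two cuts with only EDB atoms are exactly the two locally minimal explanations''), and it is precisely the property that Lemma \ref{l7} then consumes. In short, your proof is more complete than the paper's; the only caution is that the claim that every surviving branch atom ``participates in some derivation of $A$'' deserves the explicit induction over branch extensions you allude to, since that is the step the paper leaves entirely unargued.
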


\vspace{0.5cm}
\begin{proof}
\hspace{0.5cm}
\begin{enumerate}
\item[1.] Consider a $\Delta (\Delta\in\Delta_i \cup \Delta_j)\in S$.
 We need to show that $\Delta$ is generated by algorithm
6 at step 2. From observation \ref{l1}, it is clear that there
exists a $A$-kernel $X$ of $DDB_G$ s.t. $X \cap EDB = \Delta_j$ and
$X \cup EDB = \Delta_i$. Since $X \vdash A$, there must exist a
successful derivation for $A$ using only the elements of $X$ as
input clauses and similarly $X \nvdash A$. Consequently $\Delta$
must have been constructed at step 2.
\item[2.] Consider a $\Delta'((\Delta'\in\Delta_i \cup \Delta_j)\in S'$. Let $\Delta'$ be
constructed from a successful(unsuccessful) branch $i$ via
$\Delta_i$($\Delta_j$). Let $X$ be the set of all input clauses used
in the refutation $i$. Clearly $X\vdash A$($X\nvdash A$). Further,
there exists a minimal (with respect to set-inclusion) subset $Y$ of $X$ that
derives $A$ (i.e. no proper subset of $Y$ derives $A$. Let $\Delta =
Y \cap EDB$ ($Y \cup EDB$). Since IDB does not (does) have any unit
clauses, $Y$ must contain some EDB facts, and so $\Delta$ is not
empty (empty) and obviously $\Delta\subseteq \Delta'$. But, $Y$ need
not (need) be a $A$-kernel for $IDB_G$ since $Y$ is not ground in
general. But it stands for several $A$-kernels with the same
(different) EDB facts $\Delta$ in them. Thus, from observation \ref{l1},
$\Delta$ is a DDB-closed locally minimal abductive explanation for
$A$ with respect to $IDB_G$ and is contained in $\Delta'$.
minimal.
\end{enumerate}
\end{proof}

\begin{lemma} \label{l7}Let S and S' be sets of sets s.t.
$S\subseteq S'$ and for every member X of $S'\backslash S$: X is a superset of some member of S and X is a subset of $\bigcup S$. Then, a set H is a hitting set for S iff it
is a hitting set for S'
\end{lemma}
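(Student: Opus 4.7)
The plan is to exploit the two hypotheses on $S' \setminus S$ separately: the superset condition will handle the hit-every-set requirement, while the subset-of-$\bigcup S$ condition will ensure that $\bigcup S = \bigcup S'$, so that the "choose elements from the union" part of the hitting-set definition is unaffected by passing between $S$ and $S'$.

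First I would establish the key observation that $\bigcup S = \bigcup S'$. The inclusion $\bigcup S \subseteq \bigcup S'$ is immediate from $S \subseteq S'$. For the reverse, any element of $\bigcup S'$ lies in some $Y \in S'$; if $Y \in S$ we are done, and if $Y \in S' \setminus S$ then by hypothesis $Y \subseteq \bigcup S$, so again the element lies in $\bigcup S$. This reduces the lemma to showing that $H$ meets every member of $S$ iff $H$ meets every member of $S'$.

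For the only-if part, assume $H$ is a hitting set for $S$. Given any $Y \in S'$, either $Y \in S$, in which case $H \cap Y \neq \emptyset$ directly, or $Y \in S' \setminus S$, in which case there exists $X \in S$ with $X \subseteq Y$; since $H \cap X \neq \emptyset$ and $X \subseteq Y$, we get $H \cap Y \neq \emptyset$. Combined with $H \subseteq \bigcup S = \bigcup S'$, this shows $H$ is a hitting set for $S'$. For the if part, assume $H$ is a hitting set for $S'$. Then for every $X \in S \subseteq S'$ we have $H \cap X \neq \emptyset$, and $H \subseteq \bigcup S' = \bigcup S$, so $H$ is a hitting set for $S$.

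There is no real obstacle here; this is essentially the same argument as Lemma \ref{l4} but the extra condition $X \subseteq \bigcup S$ is precisely what is needed to drop the minimality requirement, since without it one could have $H \subseteq \bigcup S'$ but $H \not\subseteq \bigcup S$, breaking the equivalence. The proof is purely set-theoretic and independent of the tableau/magic-set machinery.
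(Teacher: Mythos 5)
Your proof is correct and follows essentially the same route as the paper's: both arguments rest on the observation that $\bigcup S = \bigcup S'$ (which you, unlike the paper, actually justify from the hypothesis $X \subseteq \bigcup S$) and then use the superset condition for the only-if direction and $S \subseteq S'$ for the if direction. No substantive difference.
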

\vspace{0.5cm}

\begin{proof}
\hspace{0.5cm}
\begin{enumerate} 

\item[1.] (\textbf{If part})~Given that $H$ is a hitting set for $S'$, we have to
show that it is a hitting set for $S$ too. First of all, observe
that $\bigcup S = \bigcup S'$, and so $H \subseteq \bigcup S$.
Moreover, by definition, for every non-empty member $X$ of $S'$, $H
\cap X$ is not empty. Since $S \subseteq S'$, it follows that $H$
is a hitting set for $S$ too.\\

(\textbf{Only if part})~Suppose $H$ is a hitting set for $S$. As
observed above, $H \subseteq \bigcup S'$ . By definition, for every
non-empty member $X\in S$, $X \cap H$ is not empty. Since every
member of $S'$ is a superset of some member of $S$, it is clear that $H$ hits
every member of $S'$, and hence a hitting set for $S'$ .
\end{enumerate}
\end{proof}

\begin{lemma} \label{l8} Let $T$ and $M$ as in Lemma \ref{l6}. Then $HS(b)$ is a rational hitting set
for $A$, for every open finished branch $b$ in $T$ (close finished
branch $b$ in $M$).
\end{lemma}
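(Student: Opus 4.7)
The plan is to reduce the claim to Lemmas~\ref{l6} and \ref{l7}, and then to appeal to the correspondence between hitting sets of abductive explanations and rational revisions established earlier.

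First I would fix an open finished branch $b$ of $T$ (resp.\ a closed finished branch of $M$) and recall from the construction of update tableaux (respectively update magic-set rules) that $HS(b)$ hits every $EDB$-cut of $T$ (resp.\ every $EDB$-seed of $M$). This is essentially the soundness property of hyper tableaux used in (Aravindan \& Baumgartner 1997): each branch selects, for every extending clause applied in the derivation, one literal that lies in it, so the collection of negated $EDB$ literals gathered along $b$ meets every branch-cut of $T$. In symbols, writing $S'$ for the set of all $EDB$-cuts of $T$ together with the $EDB$-seeds of $M$, we have that $HS(b)$ is a hitting set for $S'$.

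Second, I would invoke Lemma~\ref{l6}, which tells us exactly that the set $S$ of all $DDB$-closed locally minimal abductive explanations of $A$ with respect to $IDB$ and the set $S'$ above satisfy the three conditions: $S\subseteq S'$; every $\Delta'\in S'$ is a superset of some $\Delta\in S$; and every $\Delta'\in S'$ is contained in $\bigcup S$. These are precisely the hypotheses of Lemma~\ref{l7}, which then yields that $H$ is a hitting set for $S$ if and only if $H$ is a hitting set for $S'$. Applying this to $H=HS(b)$, we conclude that $HS(b)$ is a hitting set for the set $S$ of all $DDB$-closed locally minimal abductive explanations of $A$ with respect to $IDB$.

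Finally, to convert "hitting set of the explanations" into "rational update", I would appeal to the abductive characterisation of kernel revision established earlier. By Observation~\ref{l1}, the $A$-kernels of $DDB$ (restricted to $EDB$) are precisely the $DDB$-closed locally minimal abductive explanations of $A$; hence a hitting set of $S$ is an incision through the kernel set in the sense of Definition~\ref{D15}. Theorems~\ref{T7} and \ref{T8} then guarantee that the induced revision satisfies the postulates (KB*1)--(KB*6) together with weak relevance (KB*7.3), which is exactly what "rational" means in the sense of this paper. Therefore $HS(b)$ is a rational hitting set for $A$.

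The main obstacle I anticipate is the first step: carefully justifying that $HS(b)$ hits \emph{every} $EDB$-cut and $EDB$-seed, not merely the cut corresponding to $b$ itself. This requires an appeal to the branch-independence property of the hyper-tableau construction for $IDB^+$ (and the dual property of the magic-set transformation for $IDB^-$), which is why the transformation with respect to the least Herbrand model is essential — it guarantees that open branches exhaust the space of minimal explanations, so that a single branch's literal collection already cuts through every alternative branch's $EDB$-cut. Once this branch-independence is in hand, the rest of the argument is a bookkeeping application of Lemmas~\ref{l6}, \ref{l7} and the kernel-revision theorems.
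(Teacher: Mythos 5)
Your proposal is correct and follows the route the paper intends: the paper's own proof of Lemma~\ref{l8} is only a one-line citation to Observation~\ref{l1} and (Behrend \& Manthey 2008), and your argument via Lemmas~\ref{l6} and \ref{l7} together with Observation~\ref{l1} and the kernel-revision theorems (Theorems~\ref{T7}, \ref{T8}) is exactly the chain those surrounding lemmas are set up to support. The only remark worth adding is that the ``main obstacle'' you flag at the end is in fact immediate and needs no branch-independence argument: an $EDB$-cut is by definition a transversal that selects one literal $\neg A_i$ from \emph{every} open branch $b_i$, including $b$ itself, so $HS(b)$ (which collects all $EDB$ literals occurring on $b$) automatically meets every cut.
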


\begin{proof}
Follows from the observation \ref{l1}~(materialized view) in and (Behrend
\& Manthey 2008)
\end{proof}

\begin{theorem} \label{T16} Algorithms 6 and 7 are rational, in the sense that they satisfy all
the rationality postulates (KB*1) to (KB*6) and (KB*7.3).
\end{theorem}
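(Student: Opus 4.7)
The plan is to mirror the structure of the proof of Theorem \ref{T15}, but replacing the minimality test machinery with the materialized view characterization, and correspondingly weakening the relevance conclusion from (KB*7.1) to (KB*7.3). The backbone of the argument is the chain Lemma \ref{l6} $\Rightarrow$ Lemma \ref{l7} $\Rightarrow$ Lemma \ref{l8} $\Rightarrow$ Theorem \ref{T8}, which lets us translate the outputs of Algorithms 6 and 7 into a generalized kernel revision whose axiomatic behavior is already pinned down by Theorem \ref{T7}.

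First, I would fix an arbitrary view update request $\mathcal{A}$ and let $KB' = IDB \cup EDB' \cup IC$ denote an output produced by Algorithm 6 (the argument for Algorithm 7 is identical, as the two algorithms differ only in the order in which the IC check is interleaved with tree construction). By Lemma \ref{l6}, if $S$ denotes the set of all $EDB$-closed locally minimal abductive explanations for $\mathcal{A}$ with respect to $IDB$, and $S'$ the set of $EDB$-cuts/seeds appearing in the materialized-view tableau or magic set rule, then $S \subseteq S'$, every $\Delta' \in S'$ contains some $\Delta \in S$, and every $\Delta' \in S'$ lies inside $\bigcup S$. These are exactly the hypotheses of Lemma \ref{l7}, so hitting sets of $S'$ coincide with hitting sets of $S$. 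Combined with Lemma \ref{l8}, this shows that for every open (resp.\ close) finished branch $b$ the set $HS(b)$ is a hitting set of all locally minimal abductive explanations for $\mathcal{A}$ with respect to $IDB_G$, i.e.\ a rational hitting set.

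Next I would package this into the kernel revision framework of Section \ref{s11}. Define an incision function $\sigma$ by $\sigma(KB\bot_\bot\mathcal{A}) := HS(b)$ for the branch $b$ chosen by the algorithm; Observation \ref{l1} guarantees that this $\sigma$ cuts every $\mathcal{A}$-kernel of $KB$, since every such kernel intersects $EDB$ in some $\Delta \in S$ and $HS(b)$ hits $\Delta$. Consequently the algorithm's output $KB'$ equals $KB *_\sigma \mathcal{A}$ in the sense of Definition \ref{D17}, so by Theorem \ref{T7} it satisfies (KB*1)--(KB*6) and (KB*7.3). The IC-handling loop of Algorithm 6 ensures consistency with $KB_I \cup KB_{IC}$ and handles the vacuous case $KB_I \cup KB_{IC} \vdash \neg \mathcal{A}$ by returning $KB$ unchanged, which takes care of (KB*4.1) and (KB*5) directly. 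Invoking Theorem \ref{T8} then closes the loop by identifying this generalized kernel revision with a generalized (partial meet) revision.

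The main obstacle, and the reason the conclusion stops at (KB*7.3) rather than (KB*7.1), is precisely the third clause of Lemma \ref{l6}: an $EDB$-cut/seed $\Delta'$ may strictly contain a minimal explanation $\Delta$, so $HS(b)$ is guaranteed to be a hitting set but not necessarily an inclusion-minimal one. Any attempt to strengthen the theorem to (KB*7.1) via this route would require an additional minimality filter, which is exactly what distinguishes Algorithms 4--5 from Algorithms 6--7; here we trade off minimality against avoiding an explicit minimality test, and the $IDB^+$/$IDB^-$ transformation with respect to the least Herbrand model is what keeps the non-minimality controlled enough for (KB*7.3) to survive. The routine postulate-by-postulate verification (closure, weak success, inclusion, immutable-inclusion, vacuity 2, uniformity) then proceeds exactly as in the ``only if'' part of the proof of Theorem \ref{T7}, using that $KB'$ arises from a hitting set of $KB \bot_\bot \mathcal{A}$.
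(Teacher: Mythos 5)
Your proposal follows essentially the same route as the paper: the paper's own proof of Theorem \ref{T16} is a one-line citation of Observation \ref{l1}, Lemma \ref{l8} and Theorem \ref{T8}, and your argument simply fills in the intended details via the chain Lemma \ref{l6} $\Rightarrow$ Lemma \ref{l7} $\Rightarrow$ Lemma \ref{l8} $\Rightarrow$ Theorems \ref{T7}/\ref{T8}, which is exactly the machinery the paper develops immediately before this theorem for this purpose. Your added explanation of why the conclusion weakens from (KB*7.1) to (KB*7.3) — the materialized-view cuts need not be inclusion-minimal — is consistent with the paper's remarks and correctly identifies the distinction from Algorithms 4 and 5.
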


\begin{proof}
Follows from Observation \ref{l1},\ref{l8} and Theorem \ref{T8}.
\end{proof}

\subsection{Incomplete to Complete Information}
Many of the proposals in the literature on incomplete databases have
focussed on the extension of the relational model by the
introduction of null values. In this section, we show how view
update provides completion  of incomplete information. More detailed
surveys of this area can be found in (Meyden 1998).

The earliest extension of the relational model to incomplete
information was that of Codd (Codd 1979) who suggested that missing
values should be represented in tables by placing a special
\emph{null value} symbol $'*'$  at any table location for which the
value is unknown. Table 5.3, shows an example of a database using
this convention. Codd proposed an extension to the relational
algebra for tables containing such nulls, based on three valued
logic and a null substitution principle.

In terms of our general semantic scheme, the intended semantics of a
database $D$ consisting of Codd tables can be described by defining
$Mod(D)$ to be the set of structures $M_{D'}$, where $D'$ ranges
over the relational databases obtained by replacing each occurrence
of $'*'$ in the database $D$ by some domain value. Different values
may be substituted for different occurrences.

A plausible integrity constraint on the meaning of a relational
operator on tables in $\mathcal{T}$ is that the result should be a
table that represents the set of relations obtained by pointwise
application of the operator on the models of these tables.  For
example, if $R$ and $S$ are tables in $\mathcal{T}$ then the result
of the join $R \Join S$ should be equal to a table T in
$\mathcal{T}$ such that

$$Mod(T)=\{r\Join t~|~r\in Mod(R),~ s\in Mod(S)\}$$

In case the definitions of the operators satisfy this integrity
constraint (with respect to the definition of the semantics \it Mod
\rm on $\mathcal{T}$).

Let us consider what above equation requires if we take $R$ and $S$
to be the Codd Tables 5.3.  First of all, note that in each model,
if we take the value of the null in the tuple (delhibabu,*) to be
$v$, then the join will contain one tuples (delhibabu, $v$), which
include the value $v$. If $T$ is to be a Codd table, it will need to
contain tuples (delhibabu,$X$) to generate each of these tuples,
where $X$ are either constants or '*'. We now face a problem. First,
$X$ cannot be a constant $c$, for whatever the choice of $c$ we can
find an instance $r\in Mod(R)$ and $s\in Mod(S)$ for which the tuple
(delhibabu, $c$) does not occur in $r\Join s$. If they were, $X$
would have their values in models of $T$ assigned independently.

Here the repetition of $*$ indicates that the \it same \rm value is
to be occurrence of the null in constructing a model of the table.
Unfortunately, this extension does not suffice to satisfy the
integrity constraint ($\forall x,y,z$ (y=x) $\leftarrow$
group\_chair(x,y) $\wedge$ group\_chair(x,z)).
\begin{table}[h]
\begin{center}
$\begin{array}{|c|c|}\hline
  \text{\rm Staff}&\text{\rm Group}\\\hline
  \text{\rm delhibabu}&\text{\rm infor1}\\
  \text{\rm delhibabu}&\text{\rm *}\\\hline
 \end{array}$
  $~~~~~~~~~~\begin{array}{|c|c|} \hline
  \text{\rm Group}&\text{\rm Chair}\\\hline
  \text{\rm infor1}&\text{\rm mattias}\\
  \text{\rm *}&\text{\rm aravindan}\\  \hline
 \end{array}$
\end{center}

\centering \caption{\rm Base Table after Transaction}\end{table}

In the model of these tables in which $*=infor1$,  the join contains
the tuple (delhibabu, infor1) and (infor1, aravindan).

$$\text{\rm If}~*_1=\text{\rm infor1 then (delhibabu, infor1)}\in R\Join S$$
$$\text{\rm If}~*_2=\text{\rm infor1 then (infor1, aravindan)}\in R\Join S$$

The following table shows when transaction is made to base table:
\begin{table}[h]
$$\begin{array}{|c|c|c|}\hline
\text{\rm Staff}&\text{\rm Group}&\text{\rm Chair}\\\hline
  \text{\rm delhibabu}& \text{\rm infor1} &\text{\rm mattias}\\
\text{\rm delhibabu}& \text{\rm * } &\text{\rm aravindan}\\\hline
\end{array}$$

\centering \caption{\rm $s\otimes r$  after Transaction }\end{table}

The following table shows completion of incomplete information with
application of integrity constraint and redundancy:
\begin{table}[h]
$$\begin{array}{|c|c|c|}\hline
\text{\rm Staff}&\text{\rm Group}&\text{\rm Chair}\\\hline
  \text{\rm delhibabu}& \text{\rm infor1} &\text{\rm aravindan}\\\hline
\end{array}$$

\centering \caption{\rm~Redundant Table}\end{table}

\subsection{A Comparative Study of view update algorithm and integrity constraint with our
axiomatic method}

During the process of updating database, two interrelated problems
could arise. On one hand, when an update is applied to the database,
integrity constraints could become inconsistent with request, then
stop the process. On the other hand, when an update request consist
on updating some derived predicate, a view update mechanism must be
applied to translate the update request into correct updates on the
underlying base facts. Our work focus on the integrity
constraint maintenance approach. In this section, we extend Mayol
and Teniente's (Mayol \& Teniente 1999) survey for view update and
integrity constraint.

The main aspects that must be taken into account during the process
of view update and integrity constraint enforcement are the
following: the problem addressed, the considered database schema,
the allowed update requests, the used technique, update change and
the obtained solutions. These six aspects provide the basic
dimensions to be taken into account. We explain each dimension in
this section and results are presented in Appendix.

\textbf{Problem Addressed}
\begin{enumerate}
\item[]\hspace{-0.6cm}(\emph{Type})~-~What kind of program to be used (stratified (S), Horn clause (H), Disjunctive database (D), Normal Logic program (N) and Other (O)).
\item[]\hspace{-0.6cm}(\emph{View Update})~-~Whether they are able to deal with view update or not (indicated by Yes or
No in the second column in the appendix section).
\item[]\hspace{-0.6cm}(\emph{integrity-constraint Enforcement})~-~Whether they incorporate an integrity constraint checking (C)or an integrity
constraint maintenance (M) or both apply (C-M) approach (indicated
by check or maintain in the third column).
\item[]\hspace{-0.6cm}(\emph{Run/Comp})~-~Whether the method follows a run-time (transaction) or a compile-time approach (indicated
by Run or Compile in the fourth column).
\end{enumerate}

\textbf{Database Schema Considered}
\begin{enumerate}
\item[]\hspace{-0.6cm}(\emph{Definition Language})~-~The language mostly used is logic (L), although some
methods use a relational language (R) and also uses an
object-oriented (O-O).
\item[]\hspace{-0.6cm}(\emph{The DB Schema Contains Views})~-~ All methods
that deal with view update need views to be defined in the database
schema. Some of other method allow to define views.
\item[]\hspace{-0.6cm}(\emph{Restrictions Imposed on the Integrity Constraints})~-~ Some proposals impose
certain restrictions on the kind of integrity constraints that can
be defined and, thus, handled by their methods.
\item[]\hspace{-0.6cm}(\emph{Static vs Dynamic Integrity Constraints})~-~ Integrity
constraints may be either static (S), and impose restrictions
involving only a certain state of the database, or dynamic (D).
\end{enumerate}

\textbf{Update Request Allowed}
\begin{enumerate}
\item[]\hspace{-0.6cm}(\emph{Multiple Update Request})~-~An update request is multiple if it contains several
updates to be applied together to the database.
\item[]\hspace{-0.6cm}(\emph{Update Operators})~-~Traditionally, three different basic update operators are
distinguished: insertion ($\iota$), deletion ($\delta$) and
modification ($\chi$). Modification can always be simulated by a
deletion followed by an insertion.
\end{enumerate}

\textbf{Update Processing Mechanism}
\begin{enumerate} 
\item[]\hspace{-0.6cm}(\emph{Applied Technique})~-~ The techniques applied by these methods can be classified
according to four different kinds of procedures, unfolding, SLD,
active and predefined programs, respectively.
\item[]\hspace{-0.6cm}(\emph{Taking Base Facts into Account}~-~ Base facts can either be taken into
account or not during update processing.
\item[]\hspace{-0.6cm}(\emph{User Participation})~-~ User participation during update
processing or not.
\end{enumerate}

\textbf{Update Changing Mechanism}
\begin{enumerate}

\item[]\hspace{-0.6cm}(\emph{Type of modification})~-~
Changing table by singleton like atom (S), sets of each types of
modification(SS) and group of changes (G).
\item[]\hspace{-0.6cm}(\emph{Changing Base Fact})~-~
Base fact can be changed either using principle of minimal change or
complete change (maximal change).
\item[]\hspace{-0.6cm}(\emph{Changing View Definition})~-~ Whether
update process view definition is changed or not.
\end{enumerate}

\textbf{Obtained Solution}
\begin{enumerate}
\item[]\hspace{-0.6cm}(\emph{Our Axiom follow})~-~ When update
process done, we are comparing our axiomatized method and which
relevance policy holds ((KB*1) to (KB*6),(KB*7.1),(KB*7.2) and
(KB*7.3) is enumerated 1 to 9)
\item[]\hspace{-0.6cm}(\emph{Soundness})~-~ A
method is correct if it only obtains solutions that satisfy the
requested update, note NP mean Not Proved.
\item[]\hspace{-0.6cm}(\emph{Completeness)}~-~ A method is complete if it is able to obtain all
solutions that satisfy a given update request.
\end{enumerate}

Results of each method according to these features are summarized in Appendix.

\section{Belief Update Vs Database Update} In this section we give
overview of how belief update is related to database update. This
section is motivated by works of Hansson's (Hansson 1991) and Keller's (Keller 1985)

\subsection{View update vs Database update}

\begin{figure}
\begin{center}
   \includegraphics[height=4cm,angle=0]{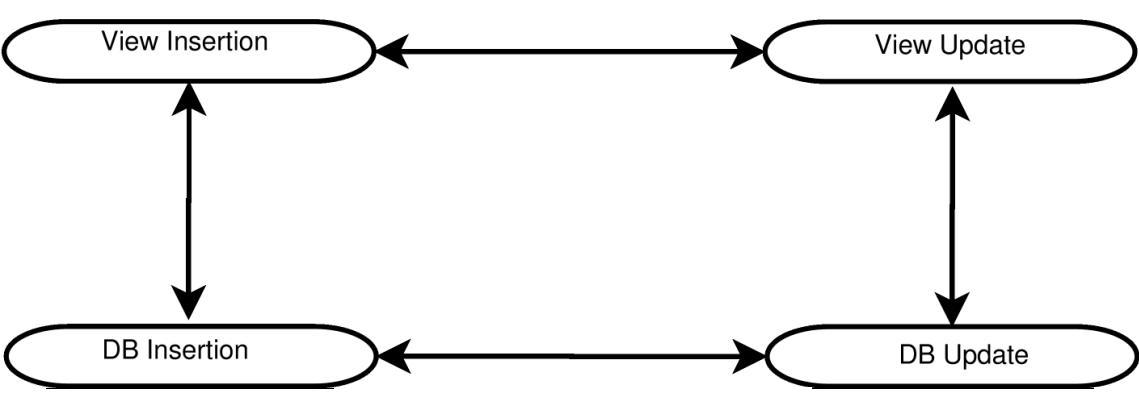}
   \caption{View Update Vs Database Update}
   \end{center}
\end{figure}

The view update problem exists already three decades Chen $\&$ Liao 2010 and Minker 1996.
We are taking proof from Keller 1985, given a view definition of the
question of choosing a view update translator arises.

This requires understanding the ways in which individual view update
requests may be satisfied by database updates. Any particular view
update request may result in a view state that does not correspond
to any database state. Such a view update request may not be
translated without relaxing the constraint which precludes view side
effects. Otherwise, the update request is rejected by the view
update translator. If we are lucky, there will be precisely one way
to perform the database update that results in the desired view
update. Since the view is many-to-one, the new view state may
correspond to many database states. Of these database states, we
would like to choose one that is "as close as possible", under some
measure, to the original database state. That is, we would like to
minimize the effect of the view update on the database.

\subsection{Belief update vs Database update}

If we look closely to the section (6.3 and 8.1), we easily find the
following results. With evidence of Hansson's (Hansson 1991) and
Liberatore (Liberatore $\&$ Schaerf 2004). Here BR and BU mean Belief Revision and
Belief Update, respectively.

\begin{figure}
\begin{center}
   \includegraphics[height=5cm,angle=0]{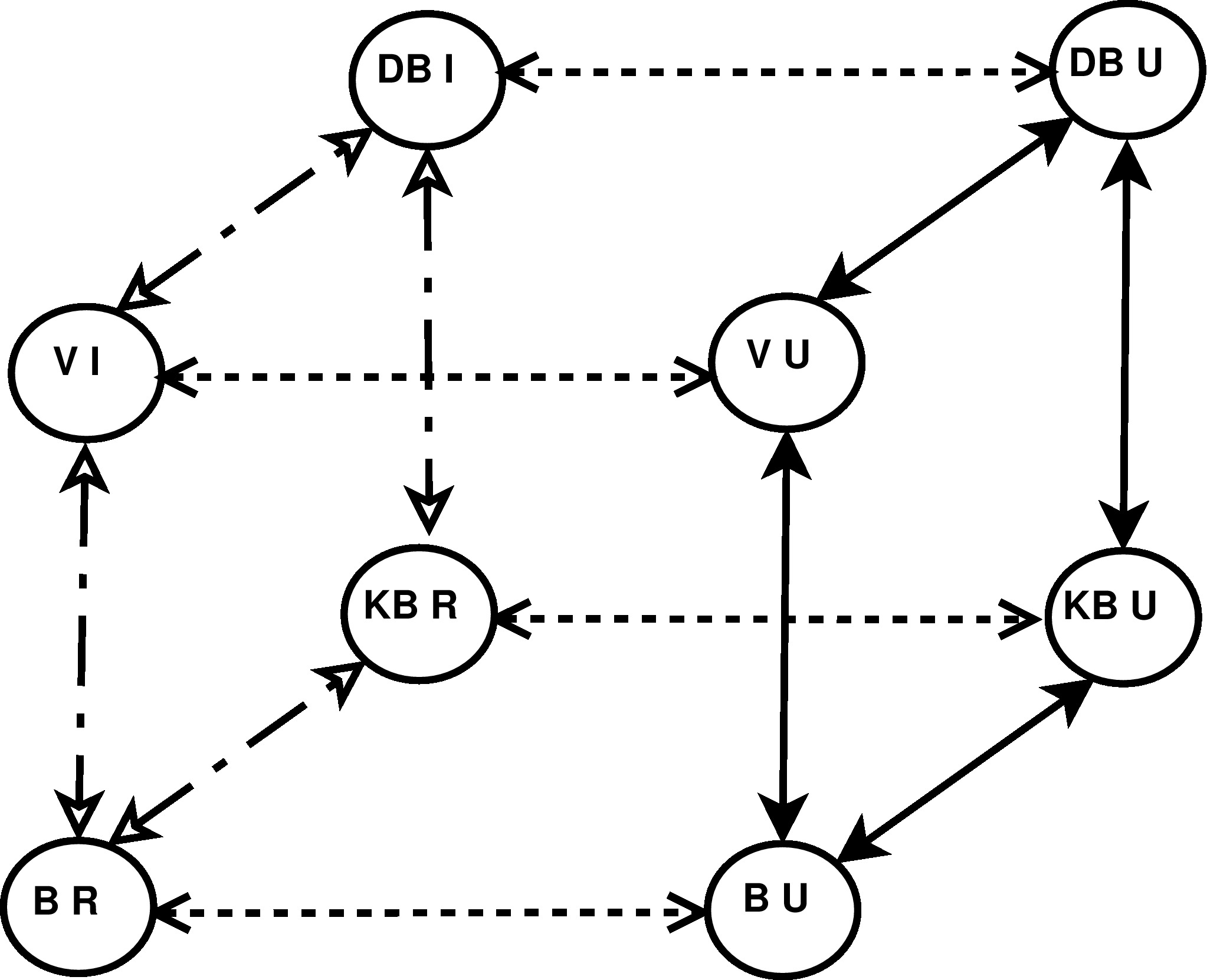}
   \caption{Belief Update Vs Database Update}
   \end{center}
\end{figure}

\section{Abductive framework for Horn knowledge base dynamics}

As discussed in Section 5, we introduced Horn knowledge
base dynamics to deal with two important points: first, to handle
belief states that need not be deductively closed; and the second
point is the ability to declare certain parts of the belief as
immutable. There is yet another, radically new approach to handle
this problem, and this Section addresses this. In fact, this
approach is very close to the Hansson's (Hansson 1992) dyadic
representation of belief. In the similar abduction model by Boutilier \& Beche 1995 and Pagnucco 1996 Here, we consider the immutable part as
defining a new logical system. By a logical system, we mean that it
defines its own consequence relation and closure operator. Based on
this, we provide an abductive framework for Horn knowledge base
dynamics.

A first order language consists of an alphabet $\mathcal{A}$ of a
language $\mathcal{L}$. We assume a countable universe of variables
\textmd{Var}, ranged over x,y,z, and a countable universe of
relation (i.e. predicate) symbols, ranged over by $\mathcal{A}$ are finite. The
following defines \textmd{FOL}, the language of first order
logic with equality and binary relations:

\begin{center}
$\varphi ::=$ $x=x$ $|$ $a(x,x)$ $|$ $\neg\varphi$ $|$ $\bigvee
\phi$ $|$ $\bigwedge\phi$ $|$ $\exists X:\phi$.
\end{center}

Here $\phi\subseteq FOL$ and $X\subseteq Var$ are finite sets of
formulae and variables, respectively.

\begin{definition}[Normal Logic Program (NLP) \text{[22]}]
By an alphabet $\mathcal{A}$ of a language $\mathcal{L}$ we mean
disjoint sets of constants, predicate symbols, and function symbols,
with at least one constant. In addition, any alphabet is assumed to
contain a countably infinite set of distinguished variable symbols.
A term over $\mathcal{A}$ is defined recursively as either a
variable, a constant or an expression of the form $f(t_1,...,t_n)$
where f is a function symbol of $\mathcal{A}$, n its arity, and the
$t_i$ are terms. An atom over  $\mathcal{A}$ is an expression of the
form $P(t_1,...,t_n)$ where P is a predicate symbol of $\mathcal{A}$
and the $t_i$ are terms. A literal is either an atom A or its
default negation not A. We dub default literals those of the form
not A. A term (atom or literal) is said ground if it does not
contain variables. The set of all ground terms (atoms) of
$\mathcal{A}$ is called the Herbrand universe (base) of
A. A Normal Logic Program is a possibly infinite set of
rules (with no infinite descending chains of syntactical dependency)
of the form:
\begin{center}
$H\leftarrow B_1,...,B_n,not~C_1,...,not~C_m,~(with~m,n \geq 0)$
\end{center}

Where H,$B_i$ and $C_j$ are atoms, and each rule stands for
all its ground instances. In conformity with the standard
convention, we write rules of the form $H\leftarrow$ also simply as
H (known as fact). An NLP P is called definite if none of its rules
contain default literals. H is the head of the rule r, denoted by
head(r), and body(r) denotes the set $\{B_1,...,B_n,not~C_1,...,not~
C_m\}$ of all the literals in the body of r.
\end{definition}

When doing problem  modeling with logic programs, rules of the form

\begin{center}
$\bot\leftarrow B_1,...,B_n,not~C_1,...,not~C_m,~(with~m,n \geq 0)$
\end{center}

with a non-empty body are known as a type of integrity constraints
(ICs), specifically denials, and they are normally used to prune out
unwanted candidate solutions. We abuse the $\emph{not}$ default
negation notation applying it to non-empty sets of literals too: we
write not S to denote $\{not ~ s: s\in S\}$, and duality of $not~ not~
a\equiv a$. When S is an arbitrary, non-empty set of literals
$S=\{B_1,...,B_n,not~C_1,...,not~ C_m\}$ we use:

\begin{enumerate}
\item[-] $S^+$ denotes the set $\{B_1,\ldots,B_n\}$ of positive literals in $S$ .
\item[-] $S^-$ denotes the set $\{not~ C_1,\ldots, not~ C_m\}$ of negative literals
in $S$ .
\item[-] $|S| = S^+ \cup (not~ S^-)$ denotes the set
$\{B_1,\ldots,B_n,C_1,\ldots ,C_m\}$ of atoms of $S$.
\end{enumerate}

As expected, we say a set of literals $S$ is consistent iff $S^+
\cap |S^-| = \emptyset$. We also write $heads(P)$ to denote the set
of heads of non-IC rules of a (possibly constrained) program $P$,
i.e. $heads(P) = \{head(r) : r \in P\} \backslash \{\bot\}$, and
$facts(P)$ to denote the set of facts of $P$  - $facts(P) =
\{head(r) : r\in P \land body(r) =\emptyset\}$.

\begin{definition}[Level mapping\text{[4]}] Let P be a normal logic program and $B_{P}$ its
Herbrand base. A \emph{level} mapping for P is a function
$\parallel: B_{P} \rightarrow \mathbb{N}$ of ground atoms to natural
numbers. The mapping $\parallel$ is also extended to ground literals
by assigning $\mid\neg A \mid$ = $\mid A\mid$ for all ground atoms
$A\in B_{P}$. For every ground literal L, $\mid L \mid$ is called as
the \emph{level} of L in P.
\end{definition}

\begin{definition}[Acyclic program \text{[4]}] Let P be a normal logic program and $\parallel$ a level mapping
for P.  P is called as acyclic with respect to $\parallel$ if for
every ground clause $H\leftarrow L_{1},...,L_{n}~(with~n \geq 0
~and~ finit)$ in P the level of A is higher then the level of every
$L_{i}$ (1 $\leq $i$ \leq$ n). Moreover P is called acyclic if P is
acyclic with respect to some level mapping for P.
\end{definition}

Unlike Horn knowledge base dynamics, where knowledge is defined as a
set of sentences, here we wish to define a Horn knowledge base KB
with respect to a language $\mathcal{L}$, as an abductive framework
$<P,Ab,IC,K>$, where,

\begin{enumerate} 
\item[*] $P$ is an acyclic normal logic program with all abducibles in P at level 0
and no non-abducible at level 0. $P$ is referred to as a
\emph{logical system}. This in conjunction with the integrity
constraints corresponds to immutable part of the Horn knowledge
base, here $P$ is defined by immutable part. This is discussed
further in the next subsection;

\item[*] $Ab$ is a set of atoms from $\mathcal{L}$, called the \emph{abducibles}. This notion is required
in an abductive framework, and this corresponds to the atoms that
may appear in the updatable part of the knowledge;

\item[*] $IC$ is the set of \it integrity constraints, \rm a set of sentences from language $\mathcal{L}$.
This specifies the integrity of a Horn knowledge base and forms a
part of the knowledge that can not be modified over time;

\item[*] $K$ is a set of sentences from $\mathcal{L}$. It is the \emph{current knowledge}, and the only part of
$KB$ that changes over time. This corresponds to the updatable part
of the Horn knowledge base. The main requirement here is that no
sentence in $K$ can have an atom that does not appear in $Ab$.
\end{enumerate}

\subsection{Logical system}

The main idea of our approach is to consider the immutable part of
the knowledge to define a new logical system. By a logical system,
we mean that $P$ defines its own consequence relation $\models_{P}$
and its closure $Cn_{p}$. Given $P$, we have the Herbrand Base
$HB_{P}$ and $G_{P}$, the ground instantiation of $P$.

An \it abductive interpretation \rm $I$ is a set of abducibles, i.e.
$I\subseteq Ab$. How $I$ interprets all the ground atoms of $L$
\footnotemark \footnotetext{the set of all the ground atoms of $L$,
in fact depends of $L$, and is given as $HB_{P}$, the Herbrand Base
of P}  is defined, inductively on the level of atoms with respect to $P$, as
follows:

\begin{enumerate} 
\item[*] An atom $A$ at level 0 (note that only abducibles are at level 0) is interpreted as: $A$ is \it true
\rm in I iff $A\in I$, else it is \it false \rm in $I$.
\item[*] An atom (literal) $A$ at level $n$ is interpreted as: $A$ is true in $I$ iff $\exists$ clause $A\leftarrow L_{1},\ldots,L_{n}$
in $G_{P}$ s.t. $\forall L_{j}\;(1\leq j\leq n)$ if $L_{j}$
 is an atom then $L_{j}$ is true in $I$, else if $L_{j}$ is a negative literal
$\neg B_{j}$, then $B_{j}$ is false in I.
\end{enumerate}

This interpretation of ground atoms can be extended, in the usual
way, to interpret  sentences in $L$, as follows (where $\alpha$ and
$\beta$ are sentences):
\begin{enumerate}
\item[*] $\neg \alpha$ is true in $I$ iff $\alpha$ is false in $I$.
\item[*] $\alpha \land  \beta$ is true in $I$ iff both $\alpha$ and $\beta$ are true in $I$.
\item[*] $\alpha \lor  \beta$ is true in $I$ iff either $\alpha$ is true in $I$ or $\beta$ is true in $I$.
\item[*] $\forall \alpha$ is true in $I$ iff all ground instantiations of $\alpha$ are true in $I$.
\item[*] $\exists \alpha$ is true in $I$ iff some ground instantiation of $\alpha$ is true in $I$.
\end{enumerate}

Given a sentence $\alpha$ in $L$, an abductive interpretation $I$ is
said to be an \it abductive model \rm of $\alpha$ iff $\alpha$ is
true in $I$. Extending this to a set of sentences $K$, $I$ is a
abductive model of $K$ iff $I$ is an abductive model of every
sentence $\alpha$ in $K$.

Given a set of sentences $K$ and a sentence $\alpha$, $\alpha$ is
said to be a $P$-\it consequence \rm of $K$, written as $K
\models_{P} \alpha$, iff every abductive model of $K$ is an
abductive model of $\alpha$ also. Putting it in other words, let
$Mod(K)$ be the set of all abductive models of $K$. Then $\alpha$ is
a $P$-consequence of $K$ iff $\alpha$ is true in all abductive
interpretations in $Mod(K)$. The \it consequence operator \rm
$Cn_{P}$ is then defined as $Cn_{P}(K)=\{\alpha~|~K\models _{P}
\alpha\}=\{\alpha~|~\alpha ~ \text{\rm is true in all abductive
interpretations in} ~ Mod(K)\}$. K is said to be P-\it consistent
\rm iff there is no expression $\alpha$ s.t. $\alpha \in Cn_{P}(K)$
and $\neg \alpha\in Cn_{P}(K)$. Two sentences $\alpha$ and $\beta$
are said to be $P$-\it equivalent \rm to each other, written as
$\alpha \equiv \beta$, iff they have the same set of abductive
models , i.e. $Mod(\alpha)=Mod(\beta)$.

\subsubsection{Properties of consequences operator}\hspace{0.5cm}

Since a new consequence operator is defined, it is reasonable, to
ask whether it satisfies certain properties that are required in the
Horn knowledge base dynamics context. Here, we observe that all the
required properties, listed by various researchers in Horn knowledge
base dynamics, are satisfied by the defined consequence operator.
The following propositions follow from the above definitions, and
can be verified easily.

\begin{center} $Cn_{P}$ satisfies \it inclusion, i.e. $K\subseteq Cn_{P}(K)$. \end{center}
  \begin{center} $Cn_{P}$ satisfies \it iteration, i.e. $Cn_{P}(K)=Cn_{P}(Cn_{P}(K))$. \end{center}

Anther interesting property is \it monotony, \rm i.e. if $K\subseteq
K'$, then $Cn_{P}(K)\subseteq Cn_{P}(K')$. $Cn_{P}$ satisfies
monotony. To see this, first observe that $Mod(K')\subseteq Mod(K)$.

$Cn_{P}$ satisfies \it superclassicality \rm, i.e. if $\alpha$ can
be derived from K by first order classical logic, then $\alpha \in
Cn_{P}(K)$.

$Cn_{P}$ satisfies \it deduction \rm, i.e. if $\beta \in
Cn_{P}(K\cup \{\alpha\})$, then $(\beta\leftarrow \alpha)\in Cn(K)$.

$Cn_{P}$ satisfies \it compactness \rm, i.e. if $\alpha \in
Cn_{P}(K)$, then $\alpha\in Cn_{P}(K')$ for some finite subset $K'$
of $K$.

\subsubsection{Statics of a Horn knowledge base}\hspace{0.5cm}

The statics of a Horn knowledge base $KB$, is given by the current
knowledge K and the integrity constraints $IC$. An abductive
interpretation $M$ is an abductive model of $KB$ iff it is an
abductive model of $K\cup IC$. Let $Mod(KB)$ be the set of all
abductive models of $KB$. The \it belief set \rm represented by
$KB$, written as $KB^{\bullet}$ is given as,
$$KB^{\bullet}=Cn_{P}(K\cup IC)=\{\alpha| \alpha \; \text{\rm is true in
every abductive model of}\; KB \}.$$ A belief (represented by a
sentence in $\mathcal{L}$) $\alpha$ is \emph{accepted} in $KB$ iff
$\alpha\in KB^{\bullet}$ (i.e. $\alpha$ is true in every model of
$KB$). $\alpha$ is \it rejected \rm in $KB$ iff $\neg \alpha \in
KB^{\bullet}$ (i.e. $\alpha$ is false in every model of $KB$). Note
that there may exist a sentence $\alpha$ s.t. $\alpha$ is neither
accepted nor rejected in $KB$ (i.e. $\alpha$ is true in some but not
all models of $KB$), and so $KB$ represents a partial description of
the world.

Two Horn knowledge bases $KB_{1}$ and $KB_{2}$ are said to be \it
equivalent \rm to each other, written as $KB_{1}\equiv KB_{2}$, iff
they are based on the same logical system and their current
knowledge are $P$-equivalent, i.e. $P_{1}=P_{2},\;Ab_{1}=Ab_{2},\;
IC_{1}=IC_{2}$ and $K_{1}\equiv K_{2}$. Obviously, two equivalent
Horn knowledge bases $KB_{1}$ and $KB_{2}$ represent the same belief
set, i.e. $KB^{\bullet}_{1}=KB^{\bullet}_{2}$.


\subsection{Horn knowledge base dynamics}
In AGM (Alchourron et al. 1985b) three kinds of belief dynamics are
defined: expansion, contraction and revision. We consider all of
them, one by one, in the sequel.

\subsubsection{Expansion}\hspace{0.5cm}

Let $\alpha$ be new information that has to be added to a knowledge
base $KB$. Suppose $\neg \alpha$ is not accepted in $KB$. Then,
obviously $\alpha$ is $P$ - consistent with $IC$, and $KB$ can be
\it expanded \rm by $\alpha$, by modifying $K$ as follows:
$$KB+\alpha\equiv <P, Ab, IC, K\cup \{\alpha\}>$$
Note that we do not force the presence of $\alpha$ in the new $K$,
but only say that $\alpha$ must be in the belief set represented by
the expanded Horn knowledge base. If in case $\neg \alpha$ is
accepted in $KB$ (in other words, $\alpha$ is inconsistent with IC),
then expansion of $KB$ by $\alpha$ results in a inconsistent Horn
knowledge base with no abductive models, i.e.
$(KB+\alpha)^{\bullet}$ is the set of all sentences in
$\mathcal{L}$.

Putting it in model-theoretic terms, $KB$ can be expanded by a
sentence $\alpha$, when $\alpha$ is not false in all models of $KB$.
The expansion is defined as:
$$Mod(KB+\alpha)=Mod(KB)\cap Mod(\alpha).$$

If $\alpha$ is false in all models of $KB$, then clearly
$Mod(KB+\alpha)$ is empty, implying that expanded Horn knowledge
base is inconsistent.

\subsubsection{Revision} \label{l89}\hspace{0.5cm}

As usual, for revising and contracting a Horn knowledge base, the
rationality of the change is discussed first. Later a construction
is provided that complies with the proposed rationality postulates.

\subsubsection{Rationality postulates}\hspace{0.5cm}

Let $KB=<P,Ab,IC,K>$ be revised by a sentence $\alpha$ to result in
a new Horn knowledge base $KB\dotplus\alpha=<P',Ab',IC',K'>$.

When a Horn knowledge base is revised, we do not (generally) wish to
modify the underlying logical system P or the set of abducibles
$Ab$. This is refereed to as \it inferential constancy \rm by
Hansson (Hansson 1991 \& 1992).

\begin{enumerate} 
\item[$(\dotplus1)$] (\it Inferential constancy) $P'=P$ and $Ab'=Ab$,$IC'=IC$.
\item[$(\dotplus 2)$] (\it Success)\rm $\alpha$ is accepted in $KB\dotplus\alpha$ , i.e. $\alpha$ is true in all models of $KB\dotplus\alpha$.
\item[$(\dotplus 3)$] (\it Consistency) \rm $\alpha$ is satisfiable and $P$-consistent with IC iff $KB\dotplus \alpha$ is P-consistent,
i.e. $Mod(\{\alpha\}\cup IC)$ is not empty iff $Mod(KB\dotplus
\alpha)$ is not empty.
\item[$(\dotplus 4)$] (\it Vacuity) \rm If $\neg \alpha$ is not accepted in KB, then $KB\dotplus \alpha\equiv KB+\alpha$, i.e. if $\alpha$
is not false in all models of KB, then $Mod(KB\dotplus
\alpha)=Mod(KB) \cap Mod(\alpha)$.
\item[$(\dotplus 5)$] (\it Preservation)\rm If $KB \equiv KB'$ and $\alpha\equiv \beta$, then $KB\dotplus\alpha\equiv KB'\dotplus\beta$, i.e.
if $Mod(KB)=Mod(KB')$ and $Mod(\alpha)=Mod(\beta)$, then
$Mod(KB\dotplus\alpha)=Mod(KB\dotplus\beta)$.
\item[$(\dotplus 6)$] (\it Extended Vacuity 1)\rm $(KB\dotplus\alpha)+\beta$ implies $KB\dotplus(\alpha \land \beta)$, i.e.
$(Mod(KB\dotplus \alpha)\cap Mod(\beta))\subseteq
Mod(KB\dotplus(\alpha \land \beta))$.
\item[$(\dotplus 7)$] (\it Extended Vacuity 2)\rm If $\neg \beta$ is not accepted in $(KB\dotplus \alpha)$, then
$KB\dotplus(\alpha\land \beta)$ implies $(KB\dotplus \alpha)+\beta$,
i.e. if $\beta$ is not false in all models of $KB\dotplus\alpha$,
then $Mod(KB\dotplus(\alpha\land\beta))\subseteq
(Mod(KB\dotplus\alpha)\cap Mod(\beta))$.
\end{enumerate}

\subsubsection{Construction}\hspace{0.5cm}

Let $\mathcal{S}$ stand for the set of all abductive interpretations
that are consistent with $IC$, i.e. $\mathcal{S}=Mod(IC)$. We do not
consider abductive interpretations that are not models of $IC$,
simply because $IC$ does not change during revision. Observe that
when $IC$ is empty, $\mathcal{S}$ is the set of all abductive
interpretations. Given a Horn knowledge base $KB$, and two abductive
interpretations $I_{1}$ and $I_{2}$ from $\mathcal{S}$, we can
compare how close these interpretations are to $KB$ by using an
order $\leq_{KB}$ among abductive interpretations in $\mathcal{S}$.
$I_{1}<_{KB}I_{2}$ iff $I_{1}\leq_{KB}I_{2}$ and $I_{2}\nleq_{KB}
I_{1}$.

Let $\mathcal{F}\subseteq \mathcal{S}$. An abductive interpretation
$I\in \mathcal{F}$ is minimal in $\mathcal{F}$ with respect to $\leq_{KB}$ if
there is no $I'\in \mathcal{F}$ s.t. $I'<_{KB}I$. Let,
$Min(\mathcal{F},\leq_{KB})=\{I~|~I~\text{\rm is minimal in}~\\
\mathcal{F} $ with respect to$~\leq_{KB}\}$.

For any Horn knowledge base KB, the following are desired properties
of $\leq_{KB}$:
\begin{enumerate}
\item[($\leq 1$)] (\it Pre-order)\rm $\leq_{KB}$ is a \it pre-order \rm, i.e. it is transitive and reflexive.
\item[($\leq 2$)] (\it Connectivity)\rm $\leq_{KB}$ is \it total \rm in $\mathcal{S}$, i.e. $\forall I_{1},I_{2}\in \mathcal{S}$:
either $I_{1}\leq_{KB} I_{2}$ or $I_{2}\leq_{KB} I_{1}$.
\item[($\leq 3$)] (\it Faithfulness)\rm $\leq_{KB}$ is \it faithful \rm to KB, i.e. $I \in Min(\mathcal{S},\leq_{KB})$ iff
$I \in Mod(KB)$.
\item[($\leq 4$)] (\it Minimality)\rm For any non-empty subset $\mathcal{F}$ of $\mathcal{S}$, $Min(\mathcal{F},\leq_{KB})$
is not empty.
\item[($\leq 5$)] (\it Preservance)] \rm For any Horn knowledge base KB', if $KB\equiv KB'$ then $\leq_{KB}=\leq_{KB'}$.
\end{enumerate}

Let $KB$ (and consequently $K$) be revised by a sentence $\alpha$,
and $\leq_{KB}$ be a rational order that satisfies $(\leq 1)$ to
$(\leq 5)$. Then the abductive models of the revised Horn knowledge
base are given precisely by: $Min(Mod(\{\alpha\}\cup
IC),\leq_{KB})$. Note that, this construction does not say what the
resulting K is, but merely says what should be the abductive models
of the new Horn knowledge base.


\subsubsection{Representation theorem}\hspace{0.5cm}

Now, we proceed to show that revision of $KB$ by $\alpha$, as
constructed above, satisfies all the rationality postulates
stipulated in the beginning of this section. This is formalized by
the following lemma.

\begin{lemma} \label{l9} Let $KB$ be a Horn knowledge base, $\leq_{KB}$ an order
among $\mathcal{S}$  that satisfies $(\leq 1)$ to $(\leq 5)$. Let a
revision operator $\dotplus$ be defined as: for any sentence
$\alpha$, $Mod(KB\dotplus \alpha)=Min(Mod(\{\alpha\}\cup
IC),\leq_{KB})$. Then $\dotplus$ satisfies all the rationality
postulates for revision $(\dotplus 1)$ to $(\dotplus 7)$.
\end{lemma}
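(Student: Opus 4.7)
The plan is to verify each of the seven postulates $(\dotplus 1)$ through $(\dotplus 7)$ one at a time from the definition $Mod(KB\dotplus\alpha) = Min(Mod(\{\alpha\}\cup IC), \leq_{KB})$, using the properties $(\leq 1)$–$(\leq 5)$ of the faithful assignment. The argument is essentially a transcription into the present abductive-model setting of the classical Katsuno--Mendelzon representation theorem already invoked in Theorem~\ref{T1}, so no new machinery is required beyond what is assembled in Section 5 of the paper.

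First I would dispatch the easy postulates. $(\dotplus 1)$ is immediate from the construction, since the definition only alters the $K$-component. $(\dotplus 2)$ follows because $Min(Mod(\{\alpha\}\cup IC),\leq_{KB}) \subseteq Mod(\{\alpha\})$ by set inclusion. $(\dotplus 3)$ uses $(\leq 4)$: if $Mod(\{\alpha\}\cup IC)$ is nonempty then so is its $\leq_{KB}$-minimum, and the converse direction is trivial. $(\dotplus 5)$ is where $(\leq 5)$ enters: $KB\equiv KB'$ gives $\leq_{KB} = \leq_{KB'}$, and $\alpha\equiv\beta$ gives $Mod(\alpha)=Mod(\beta)$, so the two minima coincide. $(\dotplus 6)$ is a direct minimality argument: any $I \in Min(Mod(\{\alpha\}\cup IC),\leq_{KB})\cap Mod(\beta)$ automatically lies in $Mod(\{\alpha\wedge\beta\}\cup IC)$, and no $J$ in this smaller set can strictly $\leq_{KB}$-beat $I$, since such a $J$ would already lie in $Mod(\{\alpha\}\cup IC)$ and contradict the minimality of $I$.

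The two substantive postulates are $(\dotplus 4)$ and $(\dotplus 7)$, both of which rely on the interplay of faithfulness $(\leq 3)$ and connectivity $(\leq 2)$. For $(\dotplus 4)$, assume $\neg\alpha$ is not accepted in $KB$, so $Mod(KB)\cap Mod(\alpha)\neq\emptyset$. By $(\leq 3)$, $Mod(KB)=Min(\mathcal{S},\leq_{KB})$, and since $Mod(KB)\subseteq Mod(IC)$, we have $Mod(KB)\cap Mod(\alpha)\subseteq Mod(\{\alpha\}\cup IC)$. Any element of $Mod(KB)\cap Mod(\alpha)$ is already $\leq_{KB}$-minimal in $\mathcal{S}$, hence minimal in any subset. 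Conversely, for any $I \in Min(Mod(\{\alpha\}\cup IC),\leq_{KB})$, pick $J\in Mod(KB)\cap Mod(\alpha)$; then $J\leq_{KB}I$ by faithfulness, and minimality of $I$ forces $I\leq_{KB}J$, hence $I$ is itself $\leq_{KB}$-minimal in $\mathcal{S}$ and thus in $Mod(KB)$.

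The hardest step will be $(\dotplus 7)$, where I will need to show $Min(Mod(\{\alpha\wedge\beta\}\cup IC),\leq_{KB})\subseteq Min(Mod(\{\alpha\}\cup IC),\leq_{KB})\cap Mod(\beta)$ under the hypothesis that $Min(Mod(\{\alpha\}\cup IC),\leq_{KB})\cap Mod(\beta)\neq\emptyset$. The plan is: take $I$ minimal in $Mod(\{\alpha\wedge\beta\}\cup IC)$ and pick a witness $K$ in $Min(Mod(\{\alpha\}\cup IC),\leq_{KB})\cap Mod(\beta)$. Minimality of $K$ among $\alpha$-models gives $K\leq_{KB}I$, while minimality of $I$ among $(\alpha\wedge\beta)$-models, applied to $K$ (which lies in that set since $K\models\beta$), gives $I\leq_{KB}K$. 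So $I\simeq_{KB}K$, and any hypothetical $J\in Mod(\{\alpha\}\cup IC)$ with $J<_{KB}I$ would, by transitivity of $\leq_{KB}$ from $(\leq 1)$, satisfy $J<_{KB}K$, contradicting the minimality of $K$. Connectivity $(\leq 2)$ is what guarantees the comparability used throughout this chase. Concluding each postulate line-by-line gives the lemma.
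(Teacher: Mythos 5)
Your proposal is correct and follows essentially the same route as the paper: a postulate-by-postulate verification in which $(\leq 4)$ yields Consistency, $(\leq 5)$ yields Preservation, faithfulness $(\leq 3)$ yields Vacuity, and a minimality/comparability chase over $Mod(\{\alpha\}\cup IC)$ versus $Mod(\{\alpha\land\beta\}\cup IC)$ yields the two Extended Vacuity postulates. The only difference is cosmetic --- you prove $(\dotplus 6)$ directly and unconditionally via the subset-minimality argument, where the paper splits into cases and folds it into the proof of $(\dotplus 7)$; your handling of $(\dotplus 4)$ and $(\dotplus 7)$ is if anything more explicit than the paper's about where connectivity and transitivity are invoked.
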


\begin{proof}
\begin{enumerate} 
\item[$(\dotplus 1)$] $P'=P$ and $Ab'=Ab$ and $IC'=IC$\\ This is
satisfied obviously, since our construction does not touch $P$ and
$Ab$, and $IC$ follows from every abductive interpretation in
$Mod(KB\dotplus \alpha)$.
\item[$(\dotplus 2)$] $\alpha$ is accepted in $KB\dotplus \alpha$\\
Note that every abductive interpretation $M\in Mod(KB+\alpha)$ is a
model of $\alpha$. Hence $\alpha$ is accepted in $KB\dotplus
\alpha$.
\item[$(\dotplus 3)$] $\alpha$ is satisfiable and $P$-consistent with IC iff $KB\dotplus \alpha$ is
$P$-consistent.\\
If part: If $KB\dotplus \alpha$ is $P$-consistent , then
$Mod(KB\dotplus \alpha)$ is not empty. This implies that
$Mod(\{\alpha\}\cup IC)$ is not empty, and hence $\alpha$ is
satisfiable and $P$-consistent with $IC$. \\ Only if part: If
$\alpha$ is satisfiable and P-consistent with $IC$, then
$Mod(\{\alpha\}\cup IC)$ is not empty, and $(\leq 4)$ ensures that
$Mod(KB\dotplus \alpha)$ is not empty. Thus, $KB\dotplus \alpha$ is
$P$-consistent.
\item[$(\dotplus 4)$] If $\neg \alpha$ is not accepted in $KB$, then $KB\dotplus \alpha\equiv
KB+\alpha$.\\
We have to establish that  $Min(Mod(\{\alpha\}\cup
IC),\leq_{KB})=Mod(KB)\cap Mod(\alpha)$. Since $\neg \alpha$ is not
accepted in KB, $Mod(KB)\cap Mod(\alpha)$ is not empty. The required
result follows immediately from the fact that $\leq_{KB}$ is
faithful to KB (i.e. satisfies $\leq 3$), which selects only and all
those models of $\alpha$ which are also models of KB.
\item[$(\dotplus 5)$] If $KB \equiv KB'$ and $\alpha\equiv \beta$ then $KB\dotplus \alpha=KB'\dotplus
\beta$\\
$(\leq 5)$ ensures that $\leq_{KB}=\leq_{KB'}$. The required result
follows immediately from this and the fact that
$Mod(\alpha)=Mod(\beta)$.
\item[$(\dotplus 6)$] $(KB\dotplus \alpha)+\beta$ implies $KB\dotplus (\alpha\land
\beta)$.\\
We consider this in two cases. When $\neg \beta$ is accepted in
$KB\dotplus \alpha$, $(KB\dotplus \alpha)+\beta$ is the set of all
sentences from $\mathcal{L}$, and the postulate follows immediately.
Instead when $\neg \beta$ is not accepted in $KB\dotplus \alpha$,
this postulates coincides with the next one.

\item[$(\dotplus 7)$] If $\neg \beta$ is not accepted in $KB\dotplus \alpha$, then $KB\dotplus (\alpha\land \beta)$ implies  $(KB\dotplus
\alpha)+\beta$.\\
Together with the second case of previous postulate, we need to show
that $KB\dotplus (\alpha\land \beta)=(KB\dotplus \alpha)+\beta$. In
other words, we have to establish that  $Min(Mod(\{\alpha\land
\beta\}\cup IC),\leq_{KB})=Mod(KB\dotplus \alpha)\cap Mod(\beta)$.
For the sake of simplicity, let us represent $Min(Mod(\{\alpha\land
\beta\}\cup IC),\leq_{KB})$ by P, and $Mod(KB\dotplus \alpha)\cap
Mod(\beta)$, which is the same as $Min(Mod(\{\alpha\}\cup
IC),\leq_{KB})\cap Mod(\beta)$, by Q. The required result is
obtained in two parts:
\end{enumerate}
\begin{enumerate} 
\item[1)] $\forall$ (abductive interpretation)M: if $M\in P$, then $M\in
Q$\\
Obviously $M\in Mod(\beta)$. Assume that $M\notin
Min(Mod(\{\alpha\}\cup IC),\leq_{KB})$. This can happen in two
cases, and we show that both the cases lead to contradiction. \\
Case A: No model of $\beta$ is selected by $\leq _{KB}$ from
$Mod(\{\alpha\}\cup IC)$. But this contradicts our initial condition
that $\neg \beta$ is not accepted in $KB\dotplus \alpha$.\\
Case B: Some model, say $M'$, of $\beta$ is selected by $\leq_{KB}$
from $Mod(\{\alpha\}\cup IC)$. Since M is not selected, it follows
that $M'<_{KB}M$. But then this contradicts our initial assumption
that $M\in P$. So, $P\subseteq Q$.
\item[2)] $\forall$ (abductive interpretation)M: if $M\in Q$, then $M\in
P$\\
$M \in Q$ implies that $M$ is a model of both $\alpha$ and $\beta$,
and $M$ is selected by $\leq_{KB}$ from $Mod(\{\alpha\}\cup IC)$.
Note that $Mod(\{\alpha\land \beta\}\cup IC)\subseteq
Mod(\{\alpha\}\cup IC)$. Since $M$ is selected by $\leq_{KB}$ in a
bigger set (i.e. $Mod(\{\alpha\}\cup IC)$), $\leq_{KB}$ must select
$M$ from its subset $Mod(\{\alpha\land \beta\}\cup IC)$ also. Hence
$Q\subseteq P$.
\end{enumerate}
\end{proof}

But, that is not all. Any rational revision of $KB$ by $\alpha$,
that satisfies all the rationality postulates, can be constructed by
our construction method, and this is formalized below.

\begin{lemma} \label{l10} Let $KB$ be a Horn knowledge base and $\dotplus$ a revision
operator that satisfies all the rationality postulates for revision
$(\dotplus 1)$ to $(\dotplus 7)$. Then, there exists an order
$\leq_{KB}$ among $\mathcal{S}$, that satisfies $(\leq 1)$ to $(\leq
5)$, and for any sentence $\alpha$, $Mod(KB\dotplus \alpha)$ is
given in $Min(Mod(\{\alpha\}\cup IC),\leq_{KB})$.\it

\begin{proof} Let us construct an order $\leq_{KB}$ among
interpretations in $\mathcal{S}$ as follows: For any two abductive
interpretations $I$ and $I'$ in $\mathcal{S}$, define $I\leq_{KB}I'$
iff either $I\in Mod(KB)$ or $I \in Mod(KB\dotplus form(I,I'))$,
where $form(I,I')$ stands for sentence whose only models are $I$ and
$I'$. We will show that $\leq_{KB}$ thus constructed satisfies
$(\leq 1)$ to $(\leq 5)$ and $Min(Mod(\{\alpha\}\cup
IC),\leq_{KB})=Mod(KB\dotplus \alpha)$.

First, we show that $Min(Mod(\{\alpha\}\cup
IC),\leq_{KB})=Mod(KB\dotplus \alpha)$.Suppose $\alpha$ is not
satisfiable, i.e. $Mod(\alpha)$ is empty, or $\alpha$ does not
satisfy $IC$, then there are no abductive models of $\{\alpha\}\cup
IC$, and hence $Min(Mod(\{\alpha\}\cup IC),\leq_{KB})$ is empty.
From $(\dotplus 3)$, we infer that $Mod(KB\dotplus \alpha)$ is also
empty. When $\alpha$ is satisfiable and $\alpha$ satisfies $IC$, the
required result is obtained in two parts:

\begin{enumerate}
\item[1)] If $I\in Min(Mod(\{\alpha\}\cup IC),\leq_{KB})$, then $I\in Mod(KB\dotplus
\alpha)$\\
Since $\alpha$ is satisfiable and consistent with $IC$, $(\dotplus
3)$ implies that there exists at least one model, say $I'$, for
$KB\dotplus \alpha$. From $(\dotplus 1)$, it is clear that $I'$ is a
model of $IC$, from $(\dotplus 2)$ we also get that $I'$ is a model
of $\alpha$, and consequently $I\leq_{KB}I'$ (because $I\in
Min(Mod(\{\alpha\}\cup IC),\leq_{KB})$). Suppose $I\in Mod(KB)$,
then $(\dotplus 4)$ immediately gives $I \in Mod(KB\dotplus
\alpha)$. If not, from our definition of $\leq_{KB}$, it is clear
that $I \in Mod(KB\dotplus form(I,I'))$. Note that $\alpha \land
form(I,I')\equiv form(I,I')$, since both $I$ and $I'$ are models of
$\alpha$. From $(\dotplus 6)$ and $(\dotplus 7)$, we get
$Mod(KB\dotplus \alpha)\cap\{I,I'\}=Mod(KB\dotplus form(I,I'))$.
Since $I \in Mod(KB\dotplus form(I,I'))$, it immediately follows
that $I\in Mod(KB\dotplus \alpha)$.
\item[2)] If $I\in Mod(KB\dotplus \alpha)$, then $I \in Min(Mod(\{\alpha\}\cup
IC),\leq_{KB})$.\\
From $(\dotplus 1)$ we get $I$ is a model of $IC$, and from
$(\dotplus 2)$, we obtain $I\in Mod(\alpha)$. Suppose $I\in
Mod(KB)$, then from our definition of $\leq_{KB}$, we get
$I\leq_{KB}I'$, for any other model $I'$ of $\alpha$ and $IC$, and
hence $I \in Min(Mod(\{\alpha\}\cup IC),\leq_{KB})$. Instead, if $I$
is not a model of $KB$, then, to get the required result, we should
show that $I \in Mod(KB\dotplus form(I,I'))$, for every model $I'$
of $\alpha$  and $IC$. As we have observed previously, from
$(\dotplus 6)$ and $(\dotplus 7)$, we get $Mod(KB\dotplus
\alpha)\cap \{I,I'\}=Mod(KB\dotplus form(I,I'))$. Since $I\in
Mod(KB\dotplus \alpha)$, it immediately follows that $I \in
Mod(KB\dotplus form(I,I'))$. Hence $I\leq_{KB} I'$ for any model
$I'$ of $\alpha$ and $IC$, and consequently, $I\in
Min(Mod(\{\alpha\}\cup IC),\leq_{KB})$.
\end{enumerate}

Now we proceed to show that the order $\leq_{KB}$ among
$\mathcal{S}$, constructed as per our definition, satisfies all the
rationality axioms $(\leq 1)$ to $(\leq 5)$.

\begin{enumerate} 
\item[$(\leq 1)$] $\leq_{KB}$ is a pre-order.\\
Note that we need to consider only abductive interpretations from
$\mathcal{S}$. From $(\dotplus 2)$ and $(\dotplus 3)$, we have
$Mod(KB\dotplus form(I,I'))=\{I\}$, and so $I \leq_{KB}I$. Thus
$\leq_{KB}$ satisfies reflexivity. let $I_{1}\in Mod(IC)$ and
$I_{2}\notin Mod(IC)$. Clearly, it is possible that two
interpretations $I_{1}$ and $I_{2}$ are not models of $KB$, and
$Mod(KB\dotplus form(I_{1},I_{2})) =\{I_{1}\}$. So, $I_{1}\leq_{KB}
I_{2}$ does not necessarily imply $I_{2}\leq_{KB}I_{1}$, and thus
$\leq_{KB}$ satisfies anti-symmetry.

To show the transitivity, we have to prove that $I_{1}\leq_{KB}
I_{3}$, when $I_{1}\leq_{KB} I_{2}$ and $I_{2}\leq_{KB} I_{3}$ hold.
Suppose $I_{1}\in Mod(KB)$, then  $I_{1}\leq_{KB} I_{3}$ follows
immediately from our definition of $\leq_{KB}$. On the other case,
when $I_{1}\notin Mod(KB)$, we first observe that $I_{1}\in
Mod(KB\dotplus form(I_{1},I_{2}))$, which follows from definition of
$\leq_{KB}$ and $I_{1}\leq_{KB} I_{2}$. Also observe that
$I_{2}\notin Mod(KB)$. If $I_2$ were a model of $KB$, then it
follows from $(\dotplus 4)$ that $Mod(KB\dotplus
form(I_1,I_2))=Mod(KB)\cap \{I_1,I_2\}=\{I_2\}$, which is a
contradiction, and so $I_2\notin Mod(KB)$. This, together with
$I_{2}\leq_{KB}I_{3}$, implies that $I_{2}\in Mod(KB\dotplus
form(I_{2},I_{3}))$. Now consider $Mod(KB+form(I_{1},I_{2},I_{3}))$.
Since $\dotplus $ satisfies $(\dotplus 2)$ and $(\dotplus 3)$, it
follows that this is a non-empty subset of $\{I_{1},I_{2},I_{3}\}$.
We claim that $Mod(KB\dotplus form(I_{1},I_{2},I_{3}))\cap
\{I_{1},I_{2}\}$ can not be empty. If it is empty, then it means
that $Mod(KB\dotplus form(I_{1},I_{2},I_{3}))=\{I_{3}\}$. Since
$\dotplus $ satisfies $(\dotplus 6)$ and $(\dotplus 7)$, this
further implies that $Mod(KB\dotplus
form(I_{2},I_{3}))=Mod(KB\dotplus
form(I_{1},I_{2},I_{3}))\cap\{I_{2},I_{3}\}=\{I_{3}\}$. This
contradicts our observation that $I_{2}\in Mod(KB\dotplus
form(I_{2},\-I_{3}))$, and so $Mod(KB\dotplus
form(I_{1},I_{2},I_{3}))\cap \{I_{1},I_{2}\}$ can not be empty.
Using $(\dotplus 6)$ and $(\dotplus 7)$ again, we get
$Mod(KB\dotplus form(I_{1},I_{2}))=Mod(KB\dotplus
form(I_{1},I_{2},I_{3}))\cap\{I_{1},I_{2}\}$. Since we know that
$I_{1} \in Mod(KB\dotplus form(I_{1},I_{2}))$, it follows that
$I_{1}\in Mod(KB\dotplus form(I_{1},I_{2},I_{3}))$. From $(\dotplus
6)$ and $(\dotplus 7)$ we also get $Mod(KB\dotplus
form(I_{1},I_{3}))=Mod(KB+form(I_{1},I_{2},I_{3}))\cap
\{I_{1},I_{3}\}$, which clearly implies that $I_{1} \in
Mod(KB\dotplus form(I_{1},I_{3}))$. From our definition of
$\leq_{KB}$, we now obtain $I_{1}\leq_{KB}I_{3}$. Thus,  $\leq_{KB}$
is a pre-order.
\item[$(\leq 2)$] $\leq_{KB}$ is total.\\
Since $\dotplus $ satisfies $(\dotplus 2)$ and $(\dotplus 3)$, for
any two abductive interpretations $I$ and $I'$ in $\mathcal{S}$, it
follows that $Mod(KB\dotplus form(I,I'))$ is a non-empty subset of
$\{I,I'\}$. Hence, $\leq_{KB}$ is total.
\item[$(\leq 3)$] $\leq_{KB}$ is faithful to $KB$.\\
From our definition of $\leq_{KB}$, it follows that $\forall
I_{1},I_{2} \in Mod(KB):I_{1}<_{KB}I_{2}$ does not hold. Suppose
$I_{1}\in Mod(KB)$ and $I_{2}\notin Mod(KB)$. Then, we have
$I_{1}\leq_{KB}I_{2}$. Since $\dotplus $ satisfies $(\dotplus 4)$,
we also have $Mod(KB\dotplus form(I_{1},I_{2}))=\{I_{1}\}$. Thus,
from our definition of $\leq_{KB}$, we can not have
$I_{2}\leq_{KB}I_{1}$. So, if $I_{1}\in Mod(KB)$ and $I_{2}\notin
Mod(KB)$, then $I_{1}<_{KB}I_{2}$ holds. Thus, $\leq_{KB}$ is
faithful to $KB$.
\item[$(\leq 4)$] For any non-empty subset $\mathcal{F}$ of $\mathcal{S}$, $Min(\mathcal{F},\leq_{KB})$
is not empty.\\ Let $\alpha$ be a sentence such that
$Mod(\{\alpha\}\cup IC)=\mathcal{F}$. We have already shown that
$Mod(KB\dotplus \alpha)=Min(\mathcal{F},\leq_{KB})$. Since,
$\dotplus $ satisfies $(\dotplus 3)$, it follows that
$Mod(KB\dotplus \alpha)$ is not empty, and thus
$Min(\mathcal{F},\leq_{KB})$ is not empty.
\item[$(\leq 5)$] If $KB\equiv KB'$, then $\leq_{KB}=\leq_{KB'}$.\\
This follows immediately from the fact that $\dotplus $ satisfies
$(\dotplus 5)$.
\end{enumerate}
\end{proof}
\end{lemma}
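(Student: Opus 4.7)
My plan is to follow the classical Katsuno--Mendelzon style converse representation argument, reverse-engineering an order on $\mathcal{S}$ from the behavior of $\dotplus$ on small formulas. The key trick is to probe $\dotplus$ with formulas that have only two (or three) models, since then the outcome of revision is essentially forced by $(\dotplus 2)$ and $(\dotplus 3)$ to pick out one or more of those models, giving a handle on the relative plausibility of interpretations.

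Concretely, for any two interpretations $I, I' \in \mathcal{S}$, let $form(I,I')$ denote a sentence whose only abductive models (within $\mathcal{S}$) are $I$ and $I'$; such a sentence exists because we can form a disjunction of the formulas that characterize each interpretation and conjoin with $IC$. Define
\[
I \leq_{KB} I' \quad \text{iff} \quad I \in Mod(KB) \ \text{or}\ I \in Mod(KB \dotplus form(I,I')).
\]
I would then argue in two stages: first, that $Min(Mod(\{\alpha\}\cup IC), \leq_{KB}) = Mod(KB \dotplus \alpha)$, and second, that $\leq_{KB}$ as defined satisfies $(\leq 1)$--$(\leq 5)$.

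For the equality, I would split on whether $\alpha$ is satisfiable together with $IC$: if not, both sides are empty by $(\dotplus 3)$. Otherwise, to show $Min(\cdots) \subseteq Mod(KB\dotplus \alpha)$, take $I$ minimal; pick some $I' \in Mod(KB\dotplus \alpha)$ (nonempty by $(\dotplus 3)$), note $I' \in Mod(\{\alpha\}\cup IC)$ by $(\dotplus 1)$ and $(\dotplus 2)$, and use minimality to get $I \leq_{KB} I'$. Then either $I \in Mod(KB)$, in which case $(\dotplus 4)$ closes the case, or $I \in Mod(KB\dotplus form(I,I'))$; in the latter, observe $\alpha \wedge form(I,I') \equiv form(I,I')$, apply $(\dotplus 6)$ and $(\dotplus 7)$ to conclude $Mod(KB\dotplus\alpha) \cap \{I,I'\} = Mod(KB\dotplus form(I,I'))$, and read off $I \in Mod(KB\dotplus \alpha)$. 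The reverse inclusion is symmetric, comparing $I$ to an arbitrary competitor $I' \in Mod(\{\alpha\}\cup IC)$ via $form(I,I')$ and the same $(\dotplus 6)$--$(\dotplus 7)$ identity.

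For the order axioms, reflexivity and totality $(\leq 1)$--$(\leq 2)$ follow quickly from $(\dotplus 2)$--$(\dotplus 3)$ applied to $form(I,I)$ and $form(I,I')$. Faithfulness $(\leq 3)$ uses $(\dotplus 4)$: if $I \in Mod(KB)$ and $I' \notin Mod(KB)$, then $Mod(KB\dotplus form(I,I')) = \{I\}$, which both gives $I \leq_{KB} I'$ and blocks $I' \leq_{KB} I$. Minimality $(\leq 4)$ is immediate once the main equality is in hand, since $Min(\mathcal{F},\leq_{KB}) = Mod(KB\dotplus\alpha)$ for an $\alpha$ axiomatizing $\mathcal{F}\cup IC$, and this is nonempty by $(\dotplus 3)$. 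Preservation $(\leq 5)$ is immediate from $(\dotplus 5)$ since the defining condition for $\leq_{KB}$ mentions only $Mod(KB)$ and $Mod(KB\dotplus form(I,I'))$.

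I expect transitivity to be the main obstacle. The natural route is to consider $form(I_1, I_2, I_3)$, i.e.\ a sentence whose only models in $\mathcal{S}$ are these three interpretations, and to show $Mod(KB\dotplus form(I_1,I_2,I_3)) \cap \{I_1,I_2\}$ is nonempty under the hypotheses $I_1 \leq_{KB} I_2$ and $I_2 \leq_{KB} I_3$. The delicate point is to rule out $Mod(KB\dotplus form(I_1,I_2,I_3)) = \{I_3\}$: if this held, then $(\dotplus 6)$--$(\dotplus 7)$ applied with $\beta = form(I_2,I_3)$ would force $Mod(KB\dotplus form(I_2,I_3)) = \{I_3\}$, contradicting $I_2 \in Mod(KB\dotplus form(I_2,I_3))$ (which follows from $I_2 \leq_{KB} I_3$ together with $I_2 \notin Mod(KB)$, the latter obtained from $(\dotplus 4)$ and the assumption $I_1 \leq_{KB} I_2$ with $I_1 \notin Mod(KB)$). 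Once the intersection is nonempty, two more applications of $(\dotplus 6)$--$(\dotplus 7)$ transfer membership from $Mod(KB\dotplus form(I_1,I_2))$ up to $Mod(KB\dotplus form(I_1,I_2,I_3))$ and then down to $Mod(KB\dotplus form(I_1,I_3))$, yielding $I_1 \leq_{KB} I_3$. The case where $I_1 \in Mod(KB)$ is immediate from the definition.
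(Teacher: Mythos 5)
Your proposal is correct and follows essentially the same route as the paper's own proof: the same definition of $\leq_{KB}$ via $form(I,I')$, the same two-part argument for $Min(Mod(\{\alpha\}\cup IC),\leq_{KB})=Mod(KB\dotplus\alpha)$ using the $(\dotplus 6)$--$(\dotplus 7)$ identity $Mod(KB\dotplus\alpha)\cap\{I,I'\}=Mod(KB\dotplus form(I,I'))$, and the same three-model probe $form(I_1,I_2,I_3)$ to establish transitivity by ruling out $Mod(KB\dotplus form(I_1,I_2,I_3))=\{I_3\}$. No substantive differences.
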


Thus, the order among interpretations $\leq_{KB}$, constructed as
per our definition, satisfies $(\leq 1)$ to $(\leq 5)$, and
$Mod(KB\dotplus \alpha)=Min(Mod(\{\alpha\}\cup IC),\leq_{KB}).$

So, we have a one to one correspondence between the axiomatization
and the construction, which is highly desirable, and this is
summarized by the following \it representation theorem. \rm

\begin{theorem} \label{T17} Let $KB$ be revised by $\alpha$, and $KB\dotplus
\alpha$ be obtained by the construction discussed above. Then,
$\dotplus$ is a revision operator iff it satisfies all the
rationality postulates $(\dotplus 1)$ to $(\dotplus 7)$.
\end{theorem}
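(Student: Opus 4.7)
The plan is to combine Lemmas \ref{l9} and \ref{l10}, which together already accomplish the two directions of the representation theorem; no genuinely new work should be needed.

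First I would dispatch the ``only if'' direction. Given a revision operator $\dotplus$ defined by the construction
$$Mod(KB \dotplus \alpha) = Min(Mod(\{\alpha\} \cup IC), \leq_{KB})$$
for some faithful assignment $\leq_{KB}$ satisfying $(\leq 1)$ to $(\leq 5)$, Lemma \ref{l9} verifies each of the postulates $(\dotplus 1)$ through $(\dotplus 7)$. I would simply invoke that lemma, noting that inferential constancy $(\dotplus 1)$ is immediate from the fact that the construction leaves $P$, $Ab$ and $IC$ untouched, and that the more delicate postulates $(\dotplus 6)$ and $(\dotplus 7)$ follow from the minimality property and faithfulness of $\leq_{KB}$.

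For the converse ``if'' direction, I would invoke Lemma \ref{l10}. Given a revision operator $\dotplus$ satisfying $(\dotplus 1)$--$(\dotplus 7)$, that lemma exhibits an explicit order among abductive interpretations in $\mathcal{S} = Mod(IC)$, namely
$$I \leq_{KB} I' \text{ iff } I \in Mod(KB) \text{ or } I \in Mod(KB \dotplus form(I,I')),$$
and shows both that this $\leq_{KB}$ satisfies $(\leq 1)$--$(\leq 5)$ and that the equality $Mod(KB \dotplus \alpha) = Min(Mod(\{\alpha\} \cup IC), \leq_{KB})$ holds. Hence $\dotplus$ coincides with the operator produced by the construction.

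Since the heavy lifting, including the transitivity argument for $\leq_{KB}$ built from $form(I_1,I_2,I_3)$ and the case analysis on whether $\alpha$ is $P$-consistent with $IC$, has already been carried out in Lemmas \ref{l9} and \ref{l10}, the expected main obstacle is purely presentational: one needs to confirm that the two lemmas exactly match the biconditional statement of Theorem \ref{T17} and that no extra hypothesis (for example, an assumption on iteration or on the ordering being total) has been quietly used. Assuming this check goes through, the proof reduces to the single line: the forward direction is Lemma \ref{l9}, the reverse direction is Lemma \ref{l10}, and together they establish the representation theorem.
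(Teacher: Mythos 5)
Your proposal matches the paper's proof exactly: the paper establishes Theorem \ref{T17} by citing Lemma \ref{l9} for the direction that the construction yields an operator satisfying $(\dotplus 1)$--$(\dotplus 7)$, and Lemma \ref{l10} for the converse that any operator satisfying the postulates is induced by an order $\leq_{KB}$ obeying $(\leq 1)$--$(\leq 5)$. Your additional remarks about the $form(I,I')$ ordering and the consistency case analysis are just an unpacking of what those lemmas already contain, so no new work is needed.
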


\begin{proof} Follows from Lemma \ref{l9} and Lemma \ref{l10}
\end{proof}

\subsubsection{Contraction}\hspace{0.5cm}

Contraction of a sentence from a Horn knowledge base $KB$ is studied
in the same way as that of revision. We first discuss the
rationality of change during contraction and proceed to provide a
construction for contraction using duality between revision and
contraction.


\subsubsection{Rationality Postulates}\hspace{0.5cm}

Let $KB=<P,Ab,IC,K>$ be contracted by a sentence $\alpha$ to result
in a new Horn knowledge base $KB\dot{-}\alpha=<P',Ab',IC',K'>$.

\begin{enumerate} 
\item[$(\dot{-}1)$] (\it Inferential Constancy)\rm $P'=P$ and $Ab'=Ab$ and $IC'=IC$.
\item[$(\dot{-}2)$] (\it Success)\rm If $\alpha \notin
Cn_{P}(KB)$, then $\alpha$ is not accepted in $KB\dot{-}\alpha$,
i.e. if $\alpha$ is not true in all the abductive interpretations,
then $\alpha$ is not true in all abductive interpretations in
$Mod(KB\dot{-}\alpha)$.
\item[$(\dot{-}3)$](\it Inclusion)\rm $\forall$ (belief)
$\beta$:if $\beta$ is accepted in $KB\dot{-}\alpha$, then $\beta$ is
accepted in $KB$, i.e. $Mod(KB)\subseteq Mod(KB\dot{-}\alpha)$.
\item[$(\dot{-}4)$](\it Vacuity)\rm If $\alpha$ is not
accepted in $KB$, then $KB\dot{-}\alpha=KB$, i.e. if $\alpha$ is not
true in all the abductive models of $KB$, then
$Mod(KB\dot{-}\alpha)=Mod(KB)$.
\item[$(\dot{-}5)$](\it Recovery)\rm $(KB\dot{-}\alpha)+\alpha$
implies $KB$, i.e. $Mod(KB\dot{-}\alpha)\cap Mod(\alpha)\subseteq
Mod(KB)$.
\item[$(\dot{-}6)$](\it Preservation)\rm If $KB\equiv KB'$ and
$\alpha\equiv \beta$, then $KB\dot{-}\alpha=KB'\dot{-}\beta$, i.e.
if $Mod(KB)=Mod(KB')$ and $Mod(\alpha)=Mod(\beta)$, then
$Mod(KB\dot{-}\alpha)=Mod(KB'\dot{-}\beta)$.
\item[$(\dot{-}7)$] (\it Conjunction 1) \rm$KB\dot{-}(\alpha\land \beta)$
implies $KB\dot{-}\alpha\cap KB\dot{-}\beta$, i.e.
$Mod(KB\dot{-}(\alpha\land \beta))\subseteq Mod(KB\dot{-}\alpha)\cup
Mod(KB\dot{-}\beta)$.
\item[$(\dot{-}8)$] (\it Conjunction 2)\rm  If $\alpha$ is not accepted in
$KB\dot{-}(\alpha\land \beta)$, then $KB\dot{-}\alpha$ implies
$KB\dot{-}(\alpha\land \beta)$, i.e. if $\alpha$ is not true in all
the models of $KB\dot{-}(\alpha\land \beta)$, then
$Mod(KB\dot{-}\alpha)\subseteq Mod(KB\dot{-}(\alpha\land \beta))$.
\end{enumerate}

Before providing a construction for contraction, we wish to study
the duality between revision and contraction. The Levi and Harper
identities still holds in our case, and is discussed in the sequel.


\subsubsection{Relationship between contraction and
revision}
\hspace{0.5cm}

Contraction and revision are related to each other. Given a
contraction function $\dot{-}$, a revision function $\dotplus$ can
be obtained as follows:
$$\text{(\it Levi
Identity)}~~~~~Mod(KB\dotplus \alpha)=Mod(KB\dot{-}\neg\alpha)\cap
Mod(\alpha)$$ The following theorem formally states that Levi
identity holds in our approach.

\begin{theorem}  \label{T18} Let $\dot{-}$ be a contraction operator that
satisfies all the rationality postulates $(\dot{-}1)$ to
$(\dot{-}8)$. Then, the revision function $\dotplus$, obtained from
$\dot{-}$ using the Levi Identity, satisfies all the rationality
postulates $(\dotplus 1)$ to $(\dotplus 7)$.
\end{theorem}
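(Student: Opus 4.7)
\medskip

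\noindent\textbf{Proof plan for Theorem \ref{T18}.} The strategy is the classical AGM-style verification, adapted to our abductive framework: assume $\dot{-}$ satisfies $(\dot{-}1)$--$(\dot{-}8)$, define $\dotplus$ by the Levi Identity $Mod(KB\dotplus\alpha)=Mod(KB\dot{-}\neg\alpha)\cap Mod(\alpha)$, and check each of $(\dotplus 1)$--$(\dotplus 7)$ in turn by rewriting it in model-theoretic terms and invoking the appropriate contraction postulate. Throughout I will use the observation that $Mod(KB\dot{-}\neg\alpha)\subseteq Mod(IC)$ (a consequence of $(\dot{-}1)$ and the fact that $IC\subseteq KB^{\bullet}$), so that intersections with $Mod(\alpha)$ are automatically within the admissible interpretations $\mathcal{S}$.

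The four easy postulates come almost directly from the definition. For $(\dotplus 1)$, inferential constancy of $\dot{-}$ gives $P'=P$, $Ab'=Ab$, $IC'=IC$ for $KB\dot{-}\neg\alpha$, and intersecting with $Mod(\alpha)$ does not alter the underlying logical system. For $(\dotplus 2)$, every abductive interpretation in $Mod(KB\dot{-}\neg\alpha)\cap Mod(\alpha)$ is a model of $\alpha$ by construction, so $\alpha$ is accepted. For $(\dotplus 4)$, if $\neg\alpha$ is not accepted in $KB$, then $(\dot{-}4)$ (Vacuity) yields $KB\dot{-}\neg\alpha\equiv KB$, hence $Mod(KB\dotplus\alpha)=Mod(KB)\cap Mod(\alpha)=Mod(KB+\alpha)$. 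For $(\dotplus 5)$, $KB\equiv KB'$ and $\alpha\equiv\beta$ imply $\neg\alpha\equiv\neg\beta$, so $(\dot{-}6)$ (Preservation) yields $Mod(KB\dot{-}\neg\alpha)=Mod(KB'\dot{-}\neg\beta)$; intersecting with $Mod(\alpha)=Mod(\beta)$ delivers the conclusion.

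For $(\dotplus 3)$ I split into the two directions. If $\alpha$ is unsatisfiable or inconsistent with $IC$, then $Mod(\{\alpha\}\cup IC)=\emptyset$, and since $Mod(KB\dot{-}\neg\alpha)\subseteq Mod(IC)$, the intersection $Mod(KB\dotplus\alpha)$ is empty, i.e.\ $KB\dotplus\alpha$ is not $P$-consistent. Conversely, if $\alpha$ is satisfiable and $P$-consistent with $IC$, then $\neg\alpha$ fails in some model of $IC$; applying $(\dot{-}2)$ (Success) to $\neg\alpha$ (which is not a tautology of the logical system modulo $IC$) gives a model of $KB\dot{-}\neg\alpha$ in which $\neg\alpha$ is false, i.e.\ in which $\alpha$ holds, so the intersection is non-empty. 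The case when $\neg\alpha$ is already not accepted in $KB$ is handled separately via $(\dot{-}4)$ combined with $(\dot{-}3)$ (Inclusion).

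The main obstacle is $(\dotplus 6)$ and $(\dotplus 7)$, since they require the supplementary conjunction postulates for contraction. The key identity to exploit is $\neg(\alpha\land\beta)\equiv\neg\alpha\lor\neg\beta$, together with the fact that $Mod(\alpha\land\beta)\subseteq Mod(\alpha)$. For $(\dotplus 6)$, unfolding both sides gives the goal
\[
Mod(KB\dot{-}\neg\alpha)\cap Mod(\alpha\land\beta)\ \subseteq\ Mod(KB\dot{-}(\neg\alpha\lor\neg\beta))\cap Mod(\alpha\land\beta),
\]
which after cancelling $Mod(\alpha\land\beta)$ reduces to showing that any model of $KB\dot{-}\neg\alpha$ lying in $Mod(\alpha\land\beta)$ also lies in $Mod(KB\dot{-}(\neg\alpha\lor\neg\beta))$; this is exactly the content one extracts from $(\dot{-}7)$ (Conjunction~1) after substituting $\neg\alpha$ and $\neg\beta$ in place of its generic arguments and dualising. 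For $(\dotplus 7)$, the hypothesis that $\neg\beta$ is not accepted in $KB\dotplus\alpha$ translates into the precondition of $(\dot{-}8)$ (Conjunction~2), namely that $\neg\alpha$ is not accepted in $KB\dot{-}(\neg\alpha\lor\neg\beta)$; applying $(\dot{-}8)$ then yields $Mod(KB\dot{-}(\neg\alpha\lor\neg\beta))\supseteq Mod(KB\dot{-}\neg\alpha)$ restricted appropriately, and intersecting with $Mod(\alpha\land\beta)$ delivers the reverse inclusion. The delicate book-keeping with the negation/disjunction duality is the principal technical hurdle; once it is set up carefully the two verifications are symmetric and mirror the standard AGM argument, with Lemma~\ref{l9} providing a template for the model-theoretic reasoning.
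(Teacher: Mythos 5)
Your proposal is a genuine derivation, which already distinguishes it from the paper: the paper's ``proof'' of Theorem~\ref{T18} merely restates each postulate $(\dotplus 1)$--$(\dotplus 7)$ with $Mod(KB\dotplus\alpha)$ unfolded to $Mod(KB\dot{-}\neg\alpha)\cap Mod(\alpha)$, and never actually invokes $(\dot{-}1)$--$(\dot{-}8)$. Your treatment of $(\dotplus 1)$, $(\dotplus 2)$, $(\dotplus 4)$ and $(\dotplus 5)$ is correct and standard, and your two-directional argument for $(\dotplus 3)$ (using $Mod(KB\dot{-}\neg\alpha)\subseteq Mod(IC)$ one way and Success $(\dot{-}2)$ the other way) is sound.

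There is, however, a concrete gap in your handling of $(\dotplus 6)$ and $(\dotplus 7)$. You propose to obtain them from $(\dot{-}7)$ and $(\dot{-}8)$ ``after substituting $\neg\alpha$ and $\neg\beta$ in place of its generic arguments and dualising.'' That substitution does not work: instantiating $(\dot{-}7)$ or $(\dot{-}8)$ at $\gamma=\neg\alpha$, $\delta=\neg\beta$ yields statements about $KB\dot{-}(\neg\alpha\land\neg\beta)=KB\dot{-}\neg(\alpha\lor\beta)$, which is irrelevant here, whereas the Levi identity forces you to reason about $KB\dot{-}\neg(\alpha\land\beta)=KB\dot{-}(\neg\alpha\lor\neg\beta)$ --- a disjunction, to which the conjunction postulates do not directly apply. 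The correct move is the factorisation $\neg\alpha\equiv(\neg\alpha\lor\neg\beta)\land(\neg\alpha\lor\beta)$. For $(\dotplus 6)$: apply $(\dot{-}6)$ and $(\dot{-}7)$ with $\gamma=\neg\alpha\lor\neg\beta$, $\delta=\neg\alpha\lor\beta$ to get $Mod(KB\dot{-}\neg\alpha)\subseteq Mod(KB\dot{-}(\neg\alpha\lor\neg\beta))\cup Mod(KB\dot{-}(\neg\alpha\lor\beta))$; for a model $M$ of $KB\dot{-}\neg\alpha$ satisfying $\alpha\land\beta$ that falls into the second disjunct, use Recovery $(\dot{-}5)$ (since $M\models\beta$ implies $M\models\neg\alpha\lor\beta$) to conclude $M\in Mod(KB)$ and then Inclusion $(\dot{-}3)$ to place it in $Mod(KB\dot{-}(\neg\alpha\lor\neg\beta))$. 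For $(\dotplus 7)$: the same instantiation of $(\dot{-}8)$ has precondition ``$\neg\alpha\lor\neg\beta$ is not accepted in $KB\dot{-}\neg\alpha$'' (not, as you write, ``$\neg\alpha$ is not accepted in $KB\dot{-}(\neg\alpha\lor\neg\beta)$''); this precondition is exactly what the hypothesis that $\neg\beta$ is not accepted in $KB\dotplus\alpha$ delivers, and the conclusion $Mod(KB\dot{-}(\neg\alpha\lor\neg\beta))\subseteq Mod(KB\dot{-}\neg\alpha)$ then gives the required inclusion after intersecting with $Mod(\alpha\land\beta)$. So the overall strategy is right, but the two supplementary postulates need more than book-keeping: they need the factorisation trick plus Recovery and Inclusion, which your plan does not mention.
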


\begin{proof} 
Let $KB=<P,Ab,IC,K>$ be contracted by a sentence $\alpha$ to result
in a new Horn knowledge base $KB\dot{-}\alpha=<P',Ab',IC',K'>$.
\begin{enumerate} 
\item[$(\dotplus1)$] (\it Inferential constancy) $P'=P$ and $Ab'=Ab$,$IC'=IC$.
\item[$(\dotplus 2)$] (\it Success)\rm $\alpha$ is accepted in $KB\dotplus\alpha$ , i.e. $\alpha$ is true in all models of $(Mod(KB\dot{-}\neg\alpha)\cap
Mod(\alpha))$.
\item[$(\dotplus 3)$] (\it Consistency) \rm $\alpha$ is satisfiable and $P$-consistent with IC iff $KB\dotplus \alpha$ is P-consistent,
i.e. $(Mod(\{\alpha\}\cup IC))$ is not empty iff $Mod(KB\dot{-}\neg\alpha)\cap
Mod(\alpha)$ is not empty.
\item[$(\dotplus 4)$] (\it Vacuity) \rm If $\neg \alpha$ is not accepted in KB, then $KB\dotplus \alpha\equiv KB+\alpha$, i.e. if $\alpha$
is not false in all models of KB, then $(Mod(KB\dotplus
\alpha))=(Mod(KB\dot{-}\neg\alpha)\cap
Mod(\alpha)) \cap Mod(\alpha)$
\item[$(\dotplus 5)$] (\it Preservation)\rm If $KB \equiv KB'$ and $\alpha\equiv \beta$, then $KB\dotplus\alpha\equiv KB'\dotplus\beta$, i.e.
if $Mod(KB)=Mod(KB')$ and $Mod(\alpha)=Mod(\beta)$, then
$(Mod(KB\dot{-}\neg\alpha)\cap
Mod(\alpha))=(Mod(KB\dot{-}\neg\beta)\cap
Mod(\beta))$.
\item[$(\dotplus 6)$] (\it Extended Vacuity 1)\rm $(KB\dotplus\alpha)+\beta$ implies $((KB\dot{-}\neg\alpha)\cap (\alpha)) \land \beta)$, i.e.
$(Mod(KB\dotplus \alpha)\cap Mod(\beta))\subseteq
Mod(KB\dotplus(\alpha \land \beta))$.
\item[$(\dotplus 7)$] (\it Extended Vacuity 2)\rm If $\neg \beta$ is not accepted in $(KB\dotplus \alpha)$, then
$((KB\dot{-}\neg\alpha)\cap (\alpha)) \land \beta)$ implies $(KB\dotplus \alpha)+\beta$,
i.e. if $\beta$ is not false in all models of $KB\dotplus\alpha$,
then $Mod(KB\dotplus(\alpha\land\beta))\subseteq
(Mod(KB\dotplus\alpha)\cap Mod(\beta))$.
\end{enumerate}
\end{proof}

Similarly, a contraction function $\dot{-}$ can be constructed using
the given revision function $\dotplus $ as follows:
$$\text{(\it Harper
Identity)}~~~~~Mod(KB\dot{-}\alpha)=Mod(KB)\cup Mod(KB\dotplus \neg
\alpha)$$

\begin{theorem}  \label{T19} Let $\dotplus$ be a revision operator that
satisfies all the rationality postulates $(\dotplus 1)$ to
$(\dotplus 7)$. Then, the contraction function $\dot{-}$, obtained
from $\dotplus $ using the Harper Identity, satisfies all the
rationality postulates $(\dot{-}1)$ to $(\dot{-}8)$.
\end{theorem}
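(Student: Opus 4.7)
The plan is to verify each of the eight contraction postulates $(\dot{-}1)$--$(\dot{-}8)$ by unfolding the Harper identity $Mod(KB\dot{-}\alpha)=Mod(KB)\cup Mod(KB\dotplus\neg\alpha)$ and invoking the corresponding revision postulates on the right-hand side. The arguments will be carried out semantically, working with sets of abductive models rather than syntactic objects, in parallel with the style of Lemma~\ref{l9}.

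First I would handle the easy cases. For $(\dot{-}1)$, inferential constancy is inherited directly from $(\dotplus 1)$ applied to $KB\dotplus\neg\alpha$ together with the observation that $KB$ contributes no change to $P$, $Ab$, or $IC$. For $(\dot{-}3)$, inclusion is immediate because the Harper identity adds $Mod(KB)$ as a summand. For $(\dot{-}4)$, if $\alpha$ is not accepted in $KB$, then $Mod(KB)\cap Mod(\neg\alpha)\neq\emptyset$, so $(\dotplus 4)$ gives $Mod(KB\dotplus\neg\alpha)=Mod(KB)\cap Mod(\neg\alpha)\subseteq Mod(KB)$, hence the union collapses to $Mod(KB)$. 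For $(\dot{-}6)$, preservation follows from $(\dotplus 5)$ applied to $\neg\alpha\equiv\neg\beta$ together with the equality $Mod(KB)=Mod(KB')$.

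Next I would treat the success and recovery postulates. For $(\dot{-}2)$, if $\alpha\notin Cn_P(KB)$ then $\alpha$ is not a tautology, so $\neg\alpha$ is satisfiable; if $\neg\alpha$ is also $P$-consistent with $IC$, then $(\dotplus 2)$ and $(\dotplus 3)$ give a nonempty $Mod(KB\dotplus\neg\alpha)\subseteq Mod(\neg\alpha)$, which witnesses a model of $KB\dot{-}\alpha$ falsifying $\alpha$; the remaining corner case (where $\neg\alpha$ violates $IC$) can be handled separately using the observation that then $\alpha$ would be a consequence of $IC$, contradicting $\alpha\notin Cn_P(KB)$. For $(\dot{-}5)$, I would compute $Mod(KB\dot{-}\alpha)\cap Mod(\alpha)=(Mod(KB)\cap Mod(\alpha))\cup(Mod(KB\dotplus\neg\alpha)\cap Mod(\alpha))$; the second summand is empty by $(\dotplus 2)$, and the first is a subset of $Mod(KB)$.

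The main obstacle will be the two conjunction postulates $(\dot{-}7)$ and $(\dot{-}8)$, since Harper rewrites them as statements about revision by a \emph{disjunction} $\neg\alpha\lor\neg\beta$, whereas the revision postulates only give information about revision by conjunctions via $(\dotplus 6)$ and $(\dotplus 7)$. The key identity I plan to exploit is that $(\alpha\land\beta)\land(\neg\alpha\lor\neg\beta)\equiv\bot$, so trichotomy analysis on whether a model of $KB\dotplus(\neg\alpha\lor\neg\beta)$ falsifies $\alpha$, falsifies $\beta$, or both, lets one distribute the revision: applying $(\dotplus 6)$--$(\dotplus 7)$ in the contrapositive form
\[
Mod(KB\dotplus(\mu\land\phi))=Mod(KB\dotplus\mu)\cap Mod(\phi)\quad\text{when compatible,}
\]
with $\mu=\neg\alpha\lor\neg\beta$ and $\phi\in\{\neg\alpha,\neg\beta\}$, should yield $Mod(KB\dotplus(\neg\alpha\lor\neg\beta))\subseteq Mod(KB\dotplus\neg\alpha)\cup Mod(KB\dotplus\neg\beta)$, which after adding $Mod(KB)$ to both sides gives $(\dot{-}7)$. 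For $(\dot{-}8)$, the hypothesis that $\alpha$ is not accepted in $KB\dot{-}(\alpha\land\beta)$ translates via Harper into the existence of a model of $KB\dotplus(\neg\alpha\lor\neg\beta)$ that falsifies $\alpha$, i.e., satisfies $\neg\alpha$; this is precisely the compatibility condition required to invoke $(\dotplus 7)$ and conclude that revision by the disjunction collapses (on the $\neg\alpha$ side) to revision by $\neg\alpha$, yielding $Mod(KB\dot{-}\alpha)\subseteq Mod(KB\dot{-}(\alpha\land\beta))$. Once these distributive lemmas are established, the proof of Theorem~\ref{T19} reduces to routine model-theoretic bookkeeping analogous to Theorem~\ref{T18}.
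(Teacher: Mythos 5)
Your proposal is essentially correct, but it takes a genuinely different (and far more substantive) route than the paper: the paper's ``proof'' of Theorem~\ref{T19} merely restates the eight postulates $(\dot{-}1)$--$(\dot{-}8)$ with $Mod(KB\dot{-}\alpha)$ replaced by $Mod(KB)\cup Mod(KB\dotplus\neg\alpha)$, and never actually derives any of them from $(\dotplus 1)$--$(\dotplus 7)$. You, by contrast, carry out the verification. Your treatment of $(\dot{-}1)$, $(\dot{-}3)$--$(\dot{-}6)$ is exactly right, and your handling of the conjunction postulates is the correct strategy: since $\neg(\alpha\land\beta)\equiv\neg\alpha\lor\neg\beta$, one observes that every model of $KB\dotplus(\neg\alpha\lor\neg\beta)$ satisfies $\neg\alpha$ or $\neg\beta$ by $(\dotplus 2)$, and then $(\dotplus 6)$ together with preservation $(\dotplus 5)$ (applied to $(\neg\alpha\lor\neg\beta)\land\neg\alpha\equiv\neg\alpha$) yields $Mod(KB\dotplus(\neg\alpha\lor\neg\beta))\subseteq Mod(KB\dotplus\neg\alpha)\cup Mod(KB\dotplus\neg\beta)$, giving $(\dot{-}7)$ without any compatibility hypothesis. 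This is the standard Harper-identity argument, and it is what the paper should have written.

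One small gap to repair in $(\dot{-}8)$: the hypothesis that $\alpha$ is not accepted in $KB\dot{-}(\alpha\land\beta)$ gives a countermodel of $\alpha$ in $Mod(KB)\cup Mod(KB\dotplus(\neg\alpha\lor\neg\beta))$, and you assert it lies in the second summand. If instead the witness lies only in $Mod(KB)$, then $\alpha$ is not accepted in $KB$ itself, and you should fall back on the already-established vacuity $(\dot{-}4)$ to get $Mod(KB\dot{-}\alpha)=Mod(KB)\subseteq Mod(KB\dot{-}(\alpha\land\beta))$ directly. With that two-case split the argument closes; the case where the witness is in $Mod(KB\dotplus(\neg\alpha\lor\neg\beta))$ is precisely the precondition of $(\dotplus 7)$, as you say. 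You should also note, for $(\dot{-}2)$, that the degenerate subcase where $\neg\alpha$ is $P$-inconsistent with $IC$ forces $\alpha\in Cn_P(IC)\subseteq Cn_P(KB)$, which you correctly flag as contradicting the postulate's antecedent.
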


\begin{proof} Let $KB=<P,Ab,IC,K>$ be contracted by a sentence $\alpha$ to result
in a new Horn knowledge base $KB\dot{-}\alpha=<P',Ab',IC',K'>$.\begin{enumerate}
\item[$(\dot{-}1)$] (\it Inferential Constancy)\rm $P'=P$ and $Ab'=Ab$ and $IC'=IC$.
\item[$(\dot{-}2)$] (\it Success)\rm If $\alpha \notin
Cn_{P}(KB)$, then $\alpha$ is not accepted in $KB\dot{-}\alpha$,
i.e. if $\alpha$ is not true in all the abductive interpretations,
then $\alpha$ is not true in all abductive interpretations in
$Mod(KB)\cup Mod(KB\dotplus \neg
\alpha)$.
\item[$(\dot{-}3)$](\it Inclusion)\rm $\forall$ (belief)
$\beta$:if $\beta$ is accepted in $KB\dot{-}\alpha$, then $\beta$ is
accepted in $KB$, i.e. $Mod(KB)\subseteq (Mod(KB)\cup Mod(KB\dotplus \neg
\alpha))$.
\item[$(\dot{-}4)$](\it Vacuity)\rm If $\alpha$ is not
accepted in $KB$, then $KB\dot{-}\alpha=KB$, i.e. if $\alpha$ is not
true in all the abductive models of $KB$, then
$Mod(KB\dot{-}\alpha)=Mod(KB)$.
\item[$(\dot{-}5)$](\it Recovery)\rm $(KB\dot{-}\alpha)+\alpha$
implies $KB$, i.e. $(Mod(KB\dot{-}\alpha)\cap Mod(\alpha))\subseteq
Mod(KB)$.
\item[$(\dot{-}6)$](\it Preservation)\rm If $KB\equiv KB'$ and
$\alpha\equiv \beta$, then $KB\dot{-}\alpha=KB'\dot{-}\beta$, i.e.
if $Mod(KB)=Mod(KB')$ and $Mod(\alpha)=Mod(\beta)$, then
$(Mod(KB)\cup Mod(KB\dotplus \neg
\alpha))=(Mod(KB)\cup Mod(KB\dotplus \neg
\beta))$.
\item[$(\dot{-}7)$] (\it Conjunction 1) \rm$KB\dot{-}(\alpha\land \beta)$
implies $KB\dot{-}\alpha\cap KB\dot{-}\beta$, i.e.
$Mod(KB\dot{-}(\alpha\land \beta))\subseteq (Mod(KB)\cup Mod(KB\dotplus \neg
\alpha))\cup
(Mod(KB)\cup Mod(KB\dotplus \neg
\beta))$.
\item[$(\dot{-}8)$] (\it Conjunction 2)\rm  If $\alpha$ is not accepted in
$KB\dot{-}(\alpha\land \beta)$, then $KB\dot{-}\alpha$ implies
$KB\dot{-}(\alpha\land \beta)$, i.e. if $\alpha$ is not true in all
the models of $KB\dot{-}(\alpha\land \beta)$, then
$(Mod(KB)\cup Mod(KB\dotplus \neg
\alpha))\subseteq Mod(KB\dot{-}(\alpha\land \beta))$.
\end{enumerate}

\end{proof}


\subsubsection{Construction}\hspace{0.5cm}

Given the construction for revision, based on order among
interpretation in $\mathcal{S}$, a construction for contraction can
be provided as:

$$Mod(KB\dot{-}\alpha)=Mod(KB)\cup Min(Mod(\{\neg\alpha\}\cup
IC),\leq_{KB}),$$ where $\leq_{KB}$ is the relation among
interpretations in $\mathcal{S}$ that satisfies the rationality
axioms $(\leq 1)$ to $(\leq 5)$. As in the case of revision, this
construction says what should be the models of the resulting Horn
knowledge base, and does not explicitly say what the resulting Horn
knowledge base is.


\subsubsection{Representation theorem}\hspace{0.5cm}

Since the construction for contraction is based on a rational
contraction for revision, the following lemma and theorem follow
obviously.

\begin{lemma} \label{l11} Let $KB$ be a Horn knowledge base, $\leq_{KB}$ an order
among $\mathcal{S}$ that satisfies $(\leq 1)$ to $(\leq 5)$. Let a
contraction operator $\dot{-}$ be defined as: for any sentence
$\alpha$, $Mod(KB\dot{-}\alpha)=Mod(KB)\cup
Min(Mod(\{\neg\alpha\}\cup IC),\leq_{KB})$. Then $\dot{-}$ satisfies
all the rationality postulates for contraction $(\dot{-}1)$ to
$(\dot{-}8)$.

\begin{proof} Follows from Theorem  \ref{T17} and Theorem \ref{T19}
\end{proof}
\end{lemma}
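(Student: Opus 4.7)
The plan is to reduce Lemma \ref{l11} to the already-established representation theorem for revision together with the Harper identity. Concretely, given the order $\leq_{KB}$ satisfying $(\leq 1)$--$(\leq 5)$, I would first invoke Lemma \ref{l9} to obtain a revision operator $\dotplus$ defined by $Mod(KB\dotplus \alpha)=Min(Mod(\{\alpha\}\cup IC),\leq_{KB})$, which by that lemma satisfies all the revision postulates $(\dotplus 1)$--$(\dotplus 7)$.

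Next I would show that the contraction operator defined in the statement of Lemma \ref{l11} coincides with the one obtained from $\dotplus$ via the Harper identity. This is a direct substitution: applying the identity gives
\begin{equation*}
Mod(KB\dot{-}\alpha)=Mod(KB)\cup Mod(KB\dotplus \neg\alpha)=Mod(KB)\cup Min(Mod(\{\neg\alpha\}\cup IC),\leq_{KB}),
\end{equation*}
which is exactly the construction in the lemma's hypothesis. Once this identification is made, Theorem \ref{T19} applies verbatim and tells us that the resulting $\dot{-}$ satisfies every rationality postulate $(\dot{-}1)$--$(\dot{-}8)$.

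The only step that requires any genuine care, rather than mere bookkeeping, is checking that the Harper-identity translation is legitimate in our abductive-model setting: in particular, $(\dot{-}2)$ (Success) needs $\neg\alpha$ to carve out at least one $\leq_{KB}$-minimal model outside $Mod(KB)$ whenever $\alpha$ is not a $P$-consequence of $KB$, and $(\dot{-}5)$ (Recovery) requires that adding $\alpha$ back to $Mod(KB)\cup Min(Mod(\{\neg\alpha\}\cup IC),\leq_{KB})$ returns exactly $Mod(KB)$. Both facts follow from faithfulness $(\leq 3)$ and minimality $(\leq 4)$, so there is no real obstacle beyond being explicit that the $IC$-conditioned domain $\mathcal{S}$ used here is the same one used when $\dotplus$ was constructed, so that the Harper derivation of Theorem \ref{T19} transfers without modification.

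In short, the proof is a two-line composition: Lemma \ref{l9} produces a rational $\dotplus$ from $\leq_{KB}$, the Harper identity converts it into the stated $\dot{-}$, and Theorem \ref{T19} guarantees that this $\dot{-}$ meets all eight contraction postulates. The hardest part is simply verifying that the set-theoretic form of the construction in Lemma \ref{l11} is literally the Harper transform of the minimisation in Lemma \ref{l9}; once that is observed the remainder is immediate from the cited theorems.
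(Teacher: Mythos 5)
Your proposal is correct and follows essentially the same route as the paper: the paper's proof is exactly ``follows from Theorem \ref{T17} and Theorem \ref{T19}'', and your argument simply unpacks this by invoking the constructive half of Theorem \ref{T17} (i.e.\ Lemma \ref{l9}) to obtain a rational $\dotplus$ from $\leq_{KB}$, observing that the stated $\dot{-}$ is literally its Harper transform, and then applying Theorem \ref{T19}. Your additional remarks on Success and Recovery are a harmless (and welcome) elaboration rather than a departure from the paper's argument.
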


\begin{lemma} \label{l12} Let $KB$ be a Horn knowledge base and
$\dot{-}$ a contraction operator that satisfies all the rationality
postulates for contraction $(\dot{-}1)$ to $(\dot{-}8)$. Then, there
exists an order $\leq_{KB}$ among $\mathcal{S}$, that
satisfies$(\leq 1)$ to $(\leq 5)$, and for any sentence $\alpha$,
$Mod(KB\dot{-}\alpha)$ is given as $Mod(KB)\cup
Min(Mod(\{\neg\alpha\}\cup IC),\leq_{KB})$.

\begin{proof} Follows from Theorem \ref{T18} and Theorem \ref{T19}.
\end{proof}
\end{lemma}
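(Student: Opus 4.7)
The plan is to chain the Levi identity with the representation theorem for revision (Lemma \ref{l10}) and then return to contraction via the Harper identity. Concretely, given $\dot{-}$ satisfying $(\dot{-}1)$--$(\dot{-}8)$, I first \emph{define} an auxiliary revision operator by the Levi identity
$$Mod(KB \dotplus \alpha) := Mod(KB \dot{-} \neg\alpha) \cap Mod(\alpha).$$
By Theorem~\ref{T18}, this $\dotplus$ satisfies $(\dotplus 1)$--$(\dotplus 7)$, so Lemma~\ref{l10} supplies an order $\leq_{KB}$ on $\mathcal{S}$ satisfying $(\leq 1)$--$(\leq 5)$ such that $Mod(KB \dotplus \alpha) = Min(Mod(\{\alpha\}\cup IC), \leq_{KB})$ for every sentence $\alpha$. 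This $\leq_{KB}$ is the order I want to exhibit.

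It then remains to verify the target equation
$$Mod(KB \dot{-} \alpha) \;=\; Mod(KB) \cup Min(Mod(\{\neg\alpha\}\cup IC), \leq_{KB}).$$
Substituting $\neg\alpha$ for $\alpha$ in the Levi definition and invoking the representation of $\dotplus$ by $\leq_{KB}$ gives
$$Min(Mod(\{\neg\alpha\}\cup IC), \leq_{KB}) \;=\; Mod(KB \dotplus \neg\alpha) \;=\; Mod(KB \dot{-} \alpha) \cap Mod(\neg\alpha),$$
using $\neg\neg\alpha \equiv \alpha$ together with $(\dot{-}6)$. So the desired identity reduces to showing
$$Mod(KB \dot{-} \alpha) \;=\; Mod(KB) \cup \bigl(Mod(KB \dot{-} \alpha) \cap Mod(\neg\alpha)\bigr),$$
which is precisely the Harper identity applied to $\dot{-}$.

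The two containments of this Harper identity follow directly from the contraction postulates. The inclusion $\supseteq$ uses $(\dot{-}3)$ (\emph{inclusion}) for the first summand and is trivial for the second. For the inclusion $\subseteq$, take any $M \in Mod(KB \dot{-}\alpha)$: either $M \models \alpha$, in which case $(\dot{-}5)$ (\emph{recovery}) gives $M \in Mod((KB\dot{-}\alpha) + \alpha) \subseteq Mod(KB)$; or $M \not\models \alpha$, in which case $M \in Mod(\neg\alpha)$ and hence lies in the second summand. Combining these equalities yields $Mod(KB \dot{-}\alpha) = Mod(KB) \cup Min(Mod(\{\neg\alpha\}\cup IC), \leq_{KB})$, as required.

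The only point that needs a bit of care is the Harper step: one must check that recovery and inclusion really do force the round trip Levi$\circ\dot{-}\to\dotplus$ followed by Harper$\circ\dotplus\to\dot{-}$ to reproduce the original contraction, and in particular that the case $\alpha \in Cn_P(\emptyset)$ (where $\neg\alpha$ is unsatisfiable and so $Min(Mod(\{\neg\alpha\}\cup IC), \leq_{KB}) = \emptyset$) is handled correctly by $(\dot{-}2)$ and $(\dot{-}4)$, which together force $Mod(KB\dot{-}\alpha) = Mod(KB)$ in that degenerate case. This is the main place where the proof is not purely mechanical; once it is checked, the representation follows from Theorems~\ref{T18} and \ref{T19} together with Lemma~\ref{l10}.
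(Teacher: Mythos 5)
Your proposal is correct and follows the same Levi/Harper route that the paper's one-line proof points to, while properly supplying the two ingredients the paper leaves implicit: Lemma \ref{l10} to extract the order $\leq_{KB}$ from the Levi-induced revision operator, and the direct verification of the Harper identity from inclusion $(\dot{-}3)$ and recovery $(\dot{-}5)$. One small correction: in your final paragraph you attribute the degenerate case ($\neg\alpha$ unsatisfiable) to $(\dot{-}2)$ and $(\dot{-}4)$, but neither applies there since a tautological $\alpha$ is accepted in $KB$; that case is in fact already covered by your general argument, because recovery forces $Mod(KB\dot{-}\alpha)\subseteq Mod(KB)$ when $Mod(\alpha)$ is everything, and inclusion gives the converse.
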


\begin{theorem} \label{T20} Let $KB$ be contracted by $\alpha$,
and $KB\dot{-}\alpha$ be obtained by the construction discussed
above. Then $\dot{-}$ is a contraction operator iff it satisfies all
the rationality postulates $(\dot{-}1)$ to $(\dot{-}8)$.

\begin{proof} Follows from Lemma \ref{l11} and Lemma \ref{l12}

\end{proof}
\end{theorem}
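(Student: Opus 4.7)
The plan is to establish the biconditional by combining the two halves already developed as Lemma \ref{l11} and Lemma \ref{l12}, which together give soundness and completeness of the construction with respect to the postulates $(\dot{-}1)$--$(\dot{-}8)$. This is the standard shape of a representation theorem: one direction shows that any operator arising from the construction satisfies the postulates, the other shows that any operator satisfying the postulates can be recovered from some order via the construction.

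First I would handle the "if" direction (soundness of the construction). Here I assume that $\leq_{KB}$ is a pre-order on $\mathcal{S}$ satisfying $(\leq 1)$--$(\leq 5)$ and that $\dot{-}$ is defined by $Mod(KB\dot{-}\alpha)=Mod(KB)\cup Min(Mod(\{\neg\alpha\}\cup IC),\leq_{KB})$. By Lemma \ref{l11}, $\dot{-}$ then satisfies all of $(\dot{-}1)$--$(\dot{-}8)$. As an alternative route (worth pointing out as a sanity check), one can also derive this via the Harper-identity bridge: the corresponding $\dotplus$ obtained from $\leq_{KB}$ satisfies $(\dotplus 1)$--$(\dotplus 7)$ by Lemma \ref{l9}, and then Theorem \ref{T19} transports those postulates across the Harper identity into the contraction postulates.

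Next I would handle the "only if" direction (completeness). Assume $\dot{-}$ satisfies $(\dot{-}1)$--$(\dot{-}8)$. By Lemma \ref{l12}, there exists an order $\leq_{KB}$ over $\mathcal{S}$ satisfying $(\leq 1)$--$(\leq 5)$ such that for every sentence $\alpha$, $Mod(KB\dot{-}\alpha)=Mod(KB)\cup Min(Mod(\{\neg\alpha\}\cup IC),\leq_{KB})$. This is exactly the construction in the statement, so $\dot{-}$ coincides with an operator produced by the construction. Combining the two directions yields the representation theorem.

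The main obstacle I anticipate is not the logical assembly, which is essentially a citation of Lemmas \ref{l11} and \ref{l12}, but making sure the order $\leq_{KB}$ used on both sides is in fact the same object, so that the biconditional is tight. Concretely, in the "only if" direction one must verify that the $\leq_{KB}$ extracted from $\dot{-}$ (via the associated revision $\dotplus$ built through the Levi identity, then the order constructed in the proof of Lemma \ref{l10}) reproduces $\dot{-}$ itself when plugged back into the contraction construction. This relies on the Harper and Levi identities holding in our setting (Theorems \ref{T18} and \ref{T19}) and on the faithfulness clause $(\leq 3)$, which guarantees that $Min(Mod(\{\neg\alpha\}\cup IC),\leq_{KB})$ already absorbs the "$Mod(KB)\cup$" summand whenever $\alpha$ is not accepted in $KB$, keeping the vacuity and recovery postulates compatible with the minimisation. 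Once this compatibility is spelled out, the two lemmas fit together cleanly and Theorem \ref{T20} follows.
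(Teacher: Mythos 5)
Your proposal matches the paper's proof exactly: Theorem \ref{T20} is obtained by combining Lemma \ref{l11} (the construction satisfies $(\dot{-}1)$--$(\dot{-}8)$) with Lemma \ref{l12} (any operator satisfying the postulates arises from some order via the construction), which is precisely the two-lemma decomposition you describe. Your additional remarks about routing through the Harper/Levi identities and checking that the extracted order reproduces $\dot{-}$ are reasonable sanity checks but are not needed beyond what the two cited lemmas already provide.
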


\subsection{Relationship with the coherence approach of AGM}
Given Horn knowledge base $KB=<P,Ab,IC,K>$ represents a belief set
$KB^{\bullet}$ that is closed under $Cn_{P}$. We have defined how
$KB$ can be expanded, revised, or contracted. The question now is:
\it does our foundational approach (with respect to classical first-order logic)
on $KB$ coincide with coherence approach (with respect to our consequence
operator $Cn_{P}$) of $AGM$ on $KB^{\bullet}$?\rm~ There is a
problem in answering this question (similar practical problem Baral
\& Zhang 2005) , since our approach, we require $IC$ to be
immutable, and only the current knowledge $K$ is allowed to change.
On the contrary, $AGM$ approach treat every sentence in
$KB^{\bullet}$ equally, and can throw out sentences from
$Cn_{P}(IC)$. One way to solve this problem is to assume that
sentences in $Cn_{P}(IC)$ are more entrenched than others. However,
one-to-one correspondence can be established, when $IC$ is empty.
The key is our consequence operator $Cn_{P}$, and in the following,
we show that coherence approach of $AGM$ with this consequence
operator, is exactly same as our foundational approach, when $IC$ is
empty.

\subsubsection{Expansion}
\hspace{0.5cm}

Expansion in $AGM$ (Alchourron et al. 1985b) framework is defined as
$KB\#\alpha=Cn_{P}(KB^{\bullet}\cup\{\alpha\})$, is is easy to see
that this is equivalent to our definition of expansion (when $IC$ is
empty), and is formalized below.

\begin{theorem}  \label{T21} Let $KB+\alpha$ be an expansion of $KB$ by $\alpha$. Then $(KB+\alpha)^{\bullet}=KB\#\alpha.$

\begin{proof} By our definition of expansion, $(KB+\alpha)^{\bullet}=Cn_{P}(IC \cup
K\cup\{\alpha\})$, which is clearly the same set as
$Cn_{P}(KB^{\bullet}\cup \{\alpha\})$.
\end{proof}
\end{theorem}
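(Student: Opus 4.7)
The plan is to unfold both sides of the claimed equality using the definitions introduced earlier in the paper, and then close the gap with the abstract properties of the consequence operator $Cn_P$ already established (inclusion, monotony, and iteration).

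First, I would expand the left-hand side: by the definition of expansion, $KB+\alpha = \langle P, Ab, IC, K\cup\{\alpha\}\rangle$, so from the definition of the belief set represented by a Horn knowledge base,
\begin{equation*}
(KB+\alpha)^{\bullet} \;=\; Cn_{P}\bigl(IC \cup K \cup \{\alpha\}\bigr).
\end{equation*}
Next, I would expand the right-hand side using the AGM-style definition: since $KB^{\bullet} = Cn_P(K \cup IC)$,
\begin{equation*}
KB\#\alpha \;=\; Cn_P\bigl(KB^{\bullet} \cup \{\alpha\}\bigr) \;=\; Cn_P\bigl(Cn_P(K \cup IC) \cup \{\alpha\}\bigr).
\end{equation*}

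The remaining task is therefore to prove $Cn_P(IC \cup K \cup \{\alpha\}) = Cn_P(Cn_P(K\cup IC) \cup \{\alpha\})$. I would show this by two inclusions. For $(\subseteq)$, note that $IC \cup K \cup \{\alpha\} \subseteq Cn_P(K\cup IC) \cup \{\alpha\}$ by the inclusion property of $Cn_P$, and monotony then yields the desired containment of closures. For $(\supseteq)$, observe that $Cn_P(K\cup IC) \subseteq Cn_P(K \cup IC \cup \{\alpha\})$ by monotony, hence $Cn_P(K\cup IC) \cup \{\alpha\} \subseteq Cn_P(K \cup IC \cup \{\alpha\})$; applying $Cn_P$ again and using the iteration property $Cn_P(Cn_P(X)) = Cn_P(X)$ produces the reverse inclusion.

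There is essentially no obstacle here: the result is a straightforward tautology once the two definitions are unfolded, and it is powered entirely by the three abstract properties of $Cn_P$ proved earlier in the subsection on properties of the consequence operator. The only subtlety worth flagging explicitly is the standing assumption from the paragraph preceding the theorem that $IC$ is empty, under which the ``foundational vs.\ coherence'' correspondence is claimed; however, the chain of equalities above in fact goes through for arbitrary $IC$, which is why the author's one-line proof does not need to invoke that restriction.
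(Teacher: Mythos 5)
Your proposal is correct and follows essentially the same route as the paper: unfold $(KB+\alpha)^{\bullet}$ to $Cn_P(IC\cup K\cup\{\alpha\})$ and $KB\#\alpha$ to $Cn_P(Cn_P(K\cup IC)\cup\{\alpha\})$, then observe these closures coincide. The only difference is that you make explicit, via the two-inclusion argument using inclusion, monotony, and iteration of $Cn_P$, the step the paper dismisses with ``clearly the same set'' --- a worthwhile elaboration, and your side remark that the argument does not actually need the standing assumption that $IC$ is empty is also accurate.
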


\subsubsection{Revision}\hspace{0.5cm}

$AGM$ puts forward rationality postulates $(*1)$ to $(*8)$ to be
satisfied by a revision operator on $KB^{\bullet}$. reproduced
below:
\begin{enumerate}
\item[(*1)] (\it Closure) \rm $KB^{\bullet}*\alpha$ is a belief set.
\item[(*2)] (\it Success) $\alpha\in KB^{\bullet}*\alpha$.
\item[(*3)] (Expansion 1) $KB^{\bullet}*\alpha \subseteq KB^{\bullet}\# \alpha$.
\item[(*4)] (Expansion 2) \rm If $\neg \alpha \notin KB^{\bullet},$ then
$KB^{\bullet}\#\alpha\subseteq KB^{\bullet}*\alpha$.
\item[(*5)] (\it Consistency)\rm $KB^{\bullet}*\alpha$ is inconsistent iff
$\vdash \neg \alpha$.
\item[(*6)] (\it Preservation) \rm If $\vdash \alpha \leftrightarrow
\beta$, then $KB^{\bullet}*\alpha=KB^{\bullet}*\beta$.
\item[(*7)] (\it Conjunction 1) $KB^{\bullet}*(\alpha\land \beta)\subseteq
(KB^{\bullet}*\alpha)\#\beta$.
\item[(*8)] (Conjunction 2) \rm If $\neg \beta \notin KB^{\bullet}*\alpha$,
then,$(KB^{\bullet}*\alpha)\#\beta\subseteq
KB^{\bullet}*(\alpha\land \beta)$.
\end{enumerate}

The equivalence between our approach and $AGM$ approach is brought
out by the following two theorems.

\begin{theorem}  \label{T22} Let $KB$ a Horn knowledge base with an empty $IC$ and $\dotplus $
be a revision function that satisfies all the rationality postulates
$(\dotplus 1)$ to $(\dotplus 7)$. Let a revision operator $*$ on
$KB^{\bullet}$ be defined as: for any sentence $\alpha$,
$KB^{\bullet}*\alpha=(KB\dotplus \alpha)^{\bullet}$. The revision
operator
*, thus defined satisfies all the $AGM$-postulates for revision
$(*1)$ to $(*8)$.

\begin{proof}
\hspace{0.5cm}
\begin{enumerate}
\item[(*1)] $KB^{\bullet}*\alpha$ is a belief set.\\
This follows immediately, because $(KB\dotplus \alpha)^{\bullet}$ is
closed with respect to $Cn_{P}$.
\item[(*2)] $\alpha\in KB^{\bullet}*\alpha$.\\
This follows from the fact that $\dotplus $ satisfies $(\dotplus
2)$.
\item[(*3)] $KB^{\bullet}*\alpha \subseteq KB^{\bullet}\# \alpha$.\\
\item[(*4)] If $\neg \alpha \notin KB^{\bullet},$ then
$KB^{\bullet}\#\alpha\subseteq KB^{\bullet}*\alpha$.\\
 These two postulates follow
from $(\dotplus 4)$ and theorem \ref{T21}.
\item[(*5)]$KB^{\bullet}*\alpha$ is inconsistent iff
$\vdash \neg \alpha$.\\
This follows from from $(\dotplus 3)$ and our assumption that $IC$
is empty.
\item[(*6)] If $\vdash \alpha \leftrightarrow
\beta$, then $KB^{\bullet}*\alpha=KB^{\bullet}*\beta$.\\
This corresponds to $(\dotplus 5)$.
\item[(*7)] $KB^{\bullet}*(\alpha\land \beta)\subseteq
(KB^{\bullet}*\alpha)\#\beta$. This follows from $(\dotplus 6)$ and
theorem \ref{T21}.
\item[(*8)] If $\neg \beta \notin KB^{\bullet}*\alpha$,
then,$(KB^{\bullet}*\alpha)\#\beta\subseteq
KB^{\bullet}*(\alpha\land \beta)$.\\
This follows from $(\dotplus 7)$ and theorem \ref{T21}
\end{enumerate}
\end{proof}
\end{theorem}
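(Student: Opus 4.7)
The plan is to verify each AGM postulate $(*1)$--$(*8)$ by unwinding the definition $KB^{\bullet}*\alpha=(KB\dotplus\alpha)^{\bullet}$ and invoking the matching rationality postulate $(\dotplus i)$, using Theorem~\ref{T21} to handle the AGM expansion $\#$ wherever it appears. The standing assumption $IC=\emptyset$ is what lets us identify ``$\alpha$ is $P$-consistent with $IC$'' with ``$\alpha$ is satisfiable'' and ``$\neg\alpha\notin Cn_{P}(\emptyset)$''; without it, postulate $(*5)$ would not line up with $(\dotplus 3)$.

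I would proceed postulate by postulate. For $(*1)$, closure is immediate since $(KB\dotplus\alpha)^{\bullet}=Cn_{P}(K'\cup IC)$ is by construction closed under $Cn_{P}$, and the properties of $Cn_{P}$ (inclusion and iteration, established earlier) make this a belief set. For $(*2)$, success translates directly: $(\dotplus 2)$ says $\alpha$ is accepted in $KB\dotplus\alpha$, which by the very definition of acceptance means $\alpha\in(KB\dotplus\alpha)^{\bullet}=KB^{\bullet}*\alpha$. For $(*3)$ and $(*4)$, I would use $(\dotplus 4)$ (vacuity) together with Theorem~\ref{T21}: when $\neg\alpha\notin KB^{\bullet}$, vacuity gives $KB\dotplus\alpha\equiv KB+\alpha$, so $(KB\dotplus\alpha)^{\bullet}=(KB+\alpha)^{\bullet}=KB^{\bullet}\#\alpha$, yielding both inclusions; when $\neg\alpha\in KB^{\bullet}$, $(*4)$ is vacuous and $(*3)$ still follows because $(KB\dotplus\alpha)^{\bullet}\subseteq Cn_{P}(KB^{\bullet}\cup\{\alpha\})$ by success and monotony of $Cn_{P}$.

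For $(*5)$, consistency, I would argue in both directions using $(\dotplus 3)$ together with the emptiness of $IC$: $Mod(\{\alpha\}\cup IC)=Mod(\alpha)$ is non-empty iff $\alpha$ is satisfiable iff $\nvdash\neg\alpha$, and the right-hand side of $(\dotplus 3)$ equates this to the $P$-consistency of $KB\dotplus\alpha$, which in turn equates to $KB^{\bullet}*\alpha$ being consistent. For $(*6)$, preservation is a direct transcription of $(\dotplus 5)$ after noting that $\vdash\alpha\leftrightarrow\beta$ implies $\alpha\equiv\beta$ (same abductive models), so the revised bases coincide and hence their belief sets coincide. For $(*7)$ and $(*8)$, the conjunction postulates, the translation is parallel: $(\dotplus 6)$ says $(KB\dotplus\alpha)+\beta$ implies $KB\dotplus(\alpha\land\beta)$, and passing to belief sets via Theorem~\ref{T21} yields $(KB^{\bullet}*\alpha)\#\beta\supseteq KB^{\bullet}*(\alpha\land\beta)$, i.e.\ $(*7)$; symmetrically $(\dotplus 7)$ gives $(*8)$ under the hypothesis $\neg\beta\notin KB^{\bullet}*\alpha$.

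The main obstacle I expect is the bookkeeping between semantic and syntactic formulations: our framework states postulates in terms of acceptance and abductive models, while AGM states them in terms of membership in belief sets and syntactic consequence. The delicate point is $(*5)$, where one must ensure that with $IC=\emptyset$ the $P$-consequence $\vdash_{P}\neg\alpha$ from the empty theory coincides with $\neg\alpha$ being true in every abductive interpretation, i.e.\ with unsatisfiability of $\alpha$; this uses superclassicality of $Cn_{P}$, noted earlier in the paper. Everything else is a routine translation once Theorem~\ref{T21} is in hand.
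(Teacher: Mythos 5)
Your proposal follows essentially the same route as the paper's own proof: each AGM postulate $(*i)$ is discharged by the corresponding rationality postulate $(\dotplus j)$, with Theorem~\ref{T21} translating the expansion $\#$ and the emptiness of $IC$ doing the work for $(*5)$. If anything, you are slightly more careful than the paper on $(*3)$ in the case $\neg\alpha\in KB^{\bullet}$ (where vacuity does not apply and one needs success plus monotony of $Cn_{P}$), a case the paper's one-line justification glosses over.
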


\begin{theorem}  \label{T23} Let $KB$ a Horn knowledge base with an empty $IC$ and
* a revision operator that satisfies all the $AGM$-postulates $(*1)$
to $(*8)$. Let a revision function $+$ on $KB$ be defined as: for
any sentence $\alpha$, $(KB\dotplus
\alpha)^{\bullet}=KB^{\bullet}*\alpha$. The revision function $+$,
thus defined, satisfies all the rationality postulates $(\dotplus
1)$ to $(\dotplus 7)$.

\begin{proof}
\hspace{0.5cm}
\begin{enumerate}
\item[$(\dotplus 1)$] $P,Ab$ and $IC$ do not change.\\
Obvious.
\item[$(\dotplus 2)$] $\alpha$ is accepted in $KB\dotplus \alpha$.\\
Follows from $(^*2)$.
\item[$(\dotplus 3)$] If $\alpha$ is satisfiable and consistent with $IC$,
then $KB\dotplus \alpha$ is consistent.\\
Since we have assumed $IC$ to be empty, this directly corresponds to
$(^*5)$.
\item[$(\dotplus 4)$] If $\neg\alpha$ is not accepted in $KB$, then
$KB\dotplus \alpha\equiv KB+\alpha$.\\ Follows from $(^*3)$ and
$(^*4)$.
\item[$(+5)$] If $KB\equiv KB'$ and $\alpha\equiv \beta$, then
$KB\dotplus \alpha\equiv KB'\dotplus \beta$.\\
Since $KB\equiv KB'$ they represent same belief set, i.e.
$KB^{\bullet}=KB'^{\bullet}$. Now, this postulate follows
immediately from $(^*6)$.
\item[$(\dotplus 6)$] $(KB\dotplus \alpha)+\beta$ implies $KB\dotplus (\alpha\land
\beta)$.\\
Corresponds to $(^*7)$.
\item[$(\dotplus 7)$] If $\neg \beta$ is not accepted in $KB\dotplus \alpha$, then
$KB\dotplus (\alpha\land \beta)$ implies $(KB\dotplus
\alpha)+\beta$.\\
Corresponds to $(^*8)$.
\end{enumerate}
\end{proof}
\end{theorem}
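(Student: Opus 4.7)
The strategy is entirely dual to Theorem \ref{T22}: given an AGM operator $*$ on the belief set $KB^{\bullet}$, we verify each of the seven rationality postulates for the induced Horn knowledge base revision $\dotplus$ by appealing to the corresponding AGM postulate, using Theorem \ref{T21} (which identifies $(KB+\alpha)^{\bullet}$ with the AGM expansion $KB^{\bullet}\#\alpha$) as the bridge between expansion on Horn knowledge bases and expansion on belief sets, and using the standing assumption that $IC=\emptyset$ to handle the consistency clause. The correspondence I intend to exhibit is, in order: $(\dotplus 2)\Leftarrow(*2)$, $(\dotplus 3)\Leftarrow(*5)$ (with $IC=\emptyset$), $(\dotplus 4)\Leftarrow(*3)+(*4)$ together with Theorem \ref{T21}, $(\dotplus 5)\Leftarrow(*6)$, $(\dotplus 6)\Leftarrow(*7)$ using Theorem \ref{T21}, and $(\dotplus 7)\Leftarrow(*8)$ using Theorem \ref{T21}. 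Postulate $(\dotplus 1)$ is immediate from the form of the definition: the recipe $(KB\dotplus\alpha)^{\bullet}=KB^{\bullet}*\alpha$ modifies only the current knowledge $K$, so $P$, $Ab$, and $IC$ remain untouched.

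For $(\dotplus 2)$ I would note that $(*2)$ guarantees $\alpha\in KB^{\bullet}*\alpha=(KB\dotplus\alpha)^{\bullet}$, which is exactly acceptance of $\alpha$ in $KB\dotplus\alpha$. For $(\dotplus 3)$, under $IC=\emptyset$, $P$-consistency of $KB\dotplus\alpha$ is the non-triviality of $(KB\dotplus\alpha)^{\bullet}$, and $(*5)$ states that $KB^{\bullet}*\alpha$ is inconsistent iff $\vdash\neg\alpha$, i.e., iff $\alpha$ is $P$-unsatisfiable; this gives the biconditional in $(\dotplus 3)$. For $(\dotplus 5)$, $KB\equiv KB'$ forces $KB^{\bullet}=KB'^{\bullet}$ (by definition of Horn knowledge base equivalence), so $(*6)$ together with $\alpha\equiv\beta$ yields $(KB\dotplus\alpha)^{\bullet}=(KB'\dotplus\beta)^{\bullet}$, which is $KB\dotplus\alpha\equiv KB'\dotplus\beta$.

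The three remaining postulates — $(\dotplus 4)$, $(\dotplus 6)$, and $(\dotplus 7)$ — are where the work happens, because each compares $\dotplus$ against the Horn expansion $+$ rather than against AGM expansion $\#$. Here I would invoke Theorem \ref{T21} uniformly: for any Horn knowledge base $KB''$ and sentence $\gamma$, $(KB''+\gamma)^{\bullet}=KB''^{\bullet}\#\gamma$. Thus $(\dotplus 4)$ reduces $KB\dotplus\alpha\equiv KB+\alpha$ to $KB^{\bullet}*\alpha=KB^{\bullet}\#\alpha$ whenever $\neg\alpha\notin KB^{\bullet}$, and this is exactly the conjunction of $(*3)$ and $(*4)$. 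Postulate $(\dotplus 6)$ translates to $(KB^{\bullet}*\alpha)\#\beta\supseteq KB^{\bullet}*(\alpha\wedge\beta)$, which is $(*7)$; and $(\dotplus 7)$, under the hypothesis $\neg\beta\notin(KB\dotplus\alpha)^{\bullet}=KB^{\bullet}*\alpha$, translates to $KB^{\bullet}*(\alpha\wedge\beta)\supseteq(KB^{\bullet}*\alpha)\#\beta$, which is $(*8)$.

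The main obstacle I anticipate is not the seven verifications themselves, which are mechanical, but justifying that the definition $(KB\dotplus\alpha)^{\bullet}=KB^{\bullet}*\alpha$ actually determines a well-defined Horn knowledge base $KB\dotplus\alpha$ in the sense of Definition \ref{D1}: one must exhibit an updatable part $K'$ whose $P$-closure together with $IC$ yields the prescribed belief set. Since $Cn_{P}$ satisfies inclusion, iteration, and compactness (as recorded in Section 10), one may simply take $K'=KB^{\bullet}*\alpha\setminus Cn_{P}(IC)$ (which collapses to $KB^{\bullet}*\alpha$ when $IC=\emptyset$); the closure properties of $Cn_{P}$ then ensure $(P,Ab,IC,K')^{\bullet}=KB^{\bullet}*\alpha$. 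Once this well-definedness is secured, the postulate-by-postulate checks above complete the proof.
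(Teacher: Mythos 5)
Your proof takes essentially the same route as the paper's: postulate-by-postulate verification with exactly the correspondences $(\dotplus 2)\Leftarrow(*2)$, $(\dotplus 3)\Leftarrow(*5)$ under $IC=\emptyset$, $(\dotplus 4)\Leftarrow(*3)+(*4)$, $(\dotplus 5)\Leftarrow(*6)$, $(\dotplus 6)\Leftarrow(*7)$, and $(\dotplus 7)\Leftarrow(*8)$, with Theorem \ref{T21} bridging Horn expansion and AGM expansion. Your added remark on the well-definedness of $KB\dotplus\alpha$ (exhibiting a suitable $K'$) is a point the paper leaves implicit, but it does not change the argument.
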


\subsubsection{Contraction}\hspace{0.5cm}

$AGM$ puts forward rationality postulates $(-1)$ to $(-8)$ to be
satisfied by a contraction operator on closed set $KB^{\bullet}$,
reproduced below:

\begin{enumerate}
\item[$(-1)$] (\it Closure) \rm $KB^{\bullet}-\alpha$ is a belief set.
\item[$(-2)$] (\it Inclusion) \rm $KB^{\bullet}-\alpha\subseteq KB^{\bullet}$.
\item[$(-3)$] (\it Vacuity) \rm If $\alpha \notin KB^{\bullet}$, then
$KB^{\bullet}-\alpha=KB^{\bullet}$.
\item[$(-4)$] (\it Success) \rm If $\nvdash \alpha$, then $\alpha
\notin KB^{\bullet}-\alpha$.
\item[$(-5)$] (\it Preservation) \rm If $\vdash \alpha
\leftrightarrow \beta$, then
$KB^{\bullet}-\alpha=KB^{\bullet}-\beta$.
\item[$(-6)$] (\it Recovery) \rm $KB^{\bullet}\subset (KB^{\bullet}-\alpha)+\alpha$.
\item[$(-7)$] (\it Conjunction 1)\rm $KB^{\bullet}-\alpha\cap
KB^{\bullet}-\beta\subseteq KB^{\bullet}-(\alpha\land\beta)$.
\item[$(-8)$] (\it Conjunction 2) \rm If $\alpha \notin
KB^{\bullet}-(\alpha\land\beta)$, then
$KB^{\bullet}-(\alpha\land\beta)\subseteq KB^{\bullet}-\alpha$.
\end{enumerate}

As in the case of revision, the equivalence is brought out by the
following theorems. Since contraction is constructed in terms of
revision, these theorems are trivial.

\begin{corollary}  \label{T24} Let $KB$ be a Horn knowledge base with an empty $IC$ and $\dot{-}$
be a contraction function that satisfies all the rationality
postulates $(\dot{-}1)$ to $(\dot{-}8)$. Let a contraction operator
$-$ on $KB^{\bullet}$ be defined as: for any sentence $\alpha$,
$KB^{\bullet}-\alpha=(KB\dot{-}\alpha)^{\bullet}$. The contraction
operator $-$, thus defined, satisfies all the $AGM$ - postulates for
contraction $(-1)$ to $(-8)$.

%
\begin{proof} Follows from Theorem \ref{T18} and Theorem \ref{T22}

\end{proof}

\end{corollary}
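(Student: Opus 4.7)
The plan is to follow the hint and route the proof through the revision operator, leveraging the already-established Theorems \ref{T18} and \ref{T22}. Starting from the given Horn knowledge base contraction $\dot{-}$ satisfying $(\dot{-}1)$--$(\dot{-}8)$, I would first apply Theorem \ref{T18} (the Levi Identity direction) to obtain a Horn knowledge base revision operator $\dotplus$ satisfying $(\dotplus 1)$--$(\dotplus 7)$, defined by $Mod(KB\dotplus\alpha) = Mod(KB\dot{-}\neg\alpha)\cap Mod(\alpha)$. Then Theorem \ref{T22} converts this into an AGM revision operator $*$ on the belief set $KB^{\bullet}$ via $KB^{\bullet} * \alpha = (KB\dotplus\alpha)^{\bullet}$, which satisfies the AGM revision postulates $(*1)$--$(*8)$.

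Next, I would invoke the classical AGM Harper Identity on the coherence side: define $KB^{\bullet} - \alpha := KB^{\bullet} \cap (KB^{\bullet} * \neg\alpha)$. It is a standard AGM result that if $*$ satisfies $(*1)$--$(*8)$, then this $-$ satisfies all the contraction postulates $(-1)$--$(-8)$. Since $IC$ is assumed empty, the consequence operator $Cn_P$ plays the role of the classical $Cn$ uniformly, and the standard AGM proofs carry over verbatim using the properties of $Cn_P$ established earlier in the paper (inclusion, iteration, monotony, superclassicality, deduction, and compactness).

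The remaining, and most delicate, step will be to show that the contraction $-$ built indirectly by the Levi+Harper detour coincides with the one given directly by $KB^{\bullet} - \alpha = (KB\dot{-}\alpha)^{\bullet}$. Using the model-theoretic characterisations, I would unfold
\begin{equation*}
KB^{\bullet} \cap (KB^{\bullet} * \neg\alpha) = Cn_P(K) \cap (KB\dotplus\neg\alpha)^{\bullet}
\end{equation*}
and then use the Levi definition of $\dotplus$ in terms of $\dot{-}$ to rewrite $Mod(KB\dotplus\neg\alpha) = Mod(KB\dot{-}\alpha)\cap Mod(\neg\alpha)$. The desired equality then reduces to showing $Mod(KB)\cup\bigl(Mod(KB\dot{-}\alpha)\cap Mod(\neg\alpha)\bigr)$ and $Mod(KB\dot{-}\alpha)$ have the same $Cn_P$-closure, which is exactly what the recovery postulate $(\dot{-}5)$ together with inclusion $(\dot{-}3)$ guarantees.

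The main obstacle I expect is precisely this coincidence step, because recovery is known to be the subtlest bridge between contraction and revision; all the other postulates translate in a routine one-to-one fashion ($(\dot{-}1)\Rightarrow(-1)$ from closure of $(\cdot)^{\bullet}$, $(\dot{-}3)\Rightarrow(-2)$, $(\dot{-}4)\Rightarrow(-3)$, $(\dot{-}2)\Rightarrow(-4)$, $(\dot{-}6)\Rightarrow(-5)$, $(\dot{-}5)\Rightarrow(-6)$, $(\dot{-}7)\Rightarrow(-7)$, $(\dot{-}8)\Rightarrow(-8)$). In fact, once the coincidence is established, one may alternatively read the proof as a direct translation of each $(\dot{-}i)$ into $(-i)$ using only the definition $KB^{\bullet}-\alpha=(KB\dot{-}\alpha)^{\bullet}$ and the model-theoretic semantics of $Cn_P$, which gives a more transparent verification for anyone who prefers to bypass the Levi/Harper detour.
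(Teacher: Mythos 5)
Your proposal is correct and follows essentially the same route as the paper, whose proof is just the citation of Theorems \ref{T18} and \ref{T22}: pass from $\dot{-}$ to $\dotplus$ via the Levi identity, lift $\dotplus$ to an AGM revision $*$ on $KB^{\bullet}$, and recover the contraction on the coherence side via the Harper identity. You additionally make explicit the coincidence argument (that the Levi--Harper round trip returns the directly defined operator, via inclusion $(\dot{-}3)$ and recovery $(\dot{-}5)$), which the paper leaves implicit; this is exactly the right gap to fill and your argument for it is sound.
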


\begin{corollary} \label{T25} Let $KB$ be a Horn knowledge base with an empty $IC$ and $-$ be a
contraction operator that satisfies all the $AGM$- postulates $(-1)$
to $(-8)$. Let a contraction function  $\dot{-}$ on $KB$ be defined
as: for any sentence $\alpha$,
$(KB\dot{-}\alpha)^{\bullet}=KB^{\bullet}-\alpha$. The contraction
function $\dot{-}$, thus defined, satisfies all the rationality
postulates $(\dot{-}1)$ to $(\dot{-}8)$.

%

\begin{proof} Follows from Theorem \ref{T19} and Theorem \ref{T23}

\end{proof}
\end{corollary}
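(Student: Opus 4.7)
The plan is to mirror the structure of Theorem \ref{T23} (which handles the revision case), exploiting the duality between contraction and revision that has already been set up via the Levi and Harper identities in Theorems \ref{T18} and \ref{T19}. Concretely, from the given AGM contraction operator $-$ on the belief set $KB^{\bullet}$ satisfying $(-1)$--$(-8)$, one can first route through revision and then back, or verify each Horn-knowledge-base postulate directly. I would take the latter route for clarity, since the direct verification parallels Theorem \ref{T23} postulate-by-postulate and makes it transparent where the assumption $IC = \emptyset$ is used.

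First I would fix notation: define $\dot{-}$ on $KB$ by $(KB\dot{-}\alpha)^{\bullet} = KB^{\bullet}-\alpha$, and observe that since $(-1)$ guarantees $KB^{\bullet}-\alpha$ is a belief set (closed under $Cn_P$), this really does determine $KB\dot{-}\alpha$ up to the $\equiv$-equivalence used throughout Section~10. Then I would verify the eight postulates in order:
\begin{itemize}
\item $(\dot{-}1)$ holds trivially, since by construction we are only changing the updatable part $K$ and leaving $P$, $Ab$, and the (empty) $IC$ untouched.
\item $(\dot{-}2)$ (Success) follows from $(-4)$, using the fact that $IC = \emptyset$ so ``$\alpha \in Cn_P(KB)$'' and ``$\alpha$ accepted in $KB$'' coincide with ``$\alpha \in KB^{\bullet}$''.
\item $(\dot{-}3)$ (Inclusion) is exactly $(-2)$ rephrased in model-theoretic form, since $KB^{\bullet}-\alpha \subseteq KB^{\bullet}$ is equivalent to $Mod(KB) \subseteq Mod(KB\dot{-}\alpha)$.
\item $(\dot{-}4)$ (Vacuity) corresponds to $(-3)$.
\item $(\dot{-}5)$ (Recovery) corresponds to $(-6)$, using Theorem \ref{T21} to translate the expansion $(KB\dot{-}\alpha)+\alpha$ on $KB$ into $(KB^{\bullet}-\alpha)\# \alpha$ on the belief set.
\item $(\dot{-}6)$ (Preservation) corresponds to $(-5)$, combined with the observation that $KB \equiv KB'$ implies $KB^{\bullet} = KB'^{\bullet}$.
\item $(\dot{-}7)$ and $(\dot{-}8)$ correspond to $(-7)$ and $(-8)$, respectively.
\end{itemize}

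The main obstacle, as in Theorem \ref{T23}, is the bookkeeping for the vacuity/consistency postulates when $IC \neq \emptyset$, which is precisely why the corollary assumes $IC = \emptyset$; under this assumption the consequence operator $Cn_P$ behaves classically enough that ``$\alpha$ accepted in $KB$'' matches ``$\alpha \in KB^{\bullet}$'' and ``$\alpha$ consistent with $IC$'' reduces to ``$\alpha$ satisfiable.'' A cleaner, shorter alternative I would also mention is the indirect argument: apply the Levi identity to $-$ to obtain an AGM revision operator $*$ satisfying $(*1)$--$(*8)$; invoke Theorem \ref{T23} to obtain a revision function $\dotplus$ on $KB$ satisfying $(\dotplus 1)$--$(\dotplus 7)$; then invoke Theorem \ref{T19} via the Harper identity to obtain a contraction $\dot{-}'$ on $KB$ satisfying $(\dot{-}1)$--$(\dot{-}8)$; and finally check that $\dot{-}' = \dot{-}$ by comparing $(KB\dot{-}'\alpha)^{\bullet}$ with $KB^{\bullet}-\alpha$ using the identity $(KB^{\bullet}-\alpha) = KB^{\bullet} \cap (KB^{\bullet}*\neg\alpha)$ at the belief-set level. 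This is the path the author seems to have taken, and it is the most economical once the groundwork of Theorems \ref{T19} and \ref{T23} is in place.
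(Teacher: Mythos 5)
Your proposal is correct, and your ``indirect'' route is exactly the paper's own proof, which consists of the single line ``Follows from Theorem \ref{T19} and Theorem \ref{T23}'': pass from the AGM contraction $-$ to an AGM revision $*$ via the Levi identity, apply Theorem \ref{T23} to obtain a Horn knowledge base revision $\dotplus$ satisfying $(\dotplus 1)$--$(\dotplus 7)$, and then apply Theorem \ref{T19} (Harper identity) to obtain a contraction on $KB$ satisfying $(\dot{-}1)$--$(\dot{-}8)$. Your direct postulate-by-postulate verification is a genuinely different and more self-contained presentation, and your correspondence table $(\dot{-}2)\leftrightarrow(-4)$, $(\dot{-}3)\leftrightarrow(-2)$, $(\dot{-}4)\leftrightarrow(-3)$, $(\dot{-}5)\leftrightarrow(-6)$, $(\dot{-}6)\leftrightarrow(-5)$, $(\dot{-}7),(\dot{-}8)\leftrightarrow(-7),(-8)$ is the right one; it also makes explicit where $IC=\emptyset$ is needed, which the paper leaves implicit. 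One point in your favour that the paper glosses over entirely: in the indirect route one must verify that the contraction produced by the Levi-then-Harper round trip coincides with the $\dot{-}$ defined by $(KB\dot{-}\alpha)^{\bullet}=KB^{\bullet}-\alpha$; you correctly flag this (it holds because $-$ satisfies all of $(-1)$--$(-8)$, in particular recovery, so the Harper identity recovers $-$ from the Levi-induced $*$ at the belief-set level), whereas the paper's one-line citation simply assumes it.
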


\subsection{Realizing Horn knowledge base dynamics using abductive explanations}
In this section, we explore how belief dynamics can be realized in
practice (see (Aravindan \& Dung 1994), (Aravindan 1995) and
(Bessant et al. 1998)). Here, we will see how revision can be
implemented based on the construction using models of revising
sentence and an order among them. The notion of abduction proves to
be useful and is explained in the sequel.

Let $\alpha$ be a sentence in $\mathcal{L}$. An \it abductive
explanation \rm for $\alpha$ with respect to $KB$ is a set of abductive literals
\footnotemark \footnotetext{An abductive literal is either an
abducible $A$ from $Ab$, or its negation $\neg A$.} $\Delta$ s.t.
$\Delta$ consistent with $IC$ and $\Delta \models_{P}\alpha$ (that
is $\alpha \in Cn_P(\Delta))$. Further $\Delta$ is said to be \it
minimal \rm iff no proper subset of $\Delta$ is an abductive
explanation for $\alpha$.

The basic idea to implement revision of a Horn knowledge base $KB$
by a sentence $\alpha$, is to realize $Mod(\{\alpha\}\cup IC)$ in
terms of abductive explanations for $\alpha$ with respect to $KB$. We first
provide a useful lemma.

\begin{definition} Let $KB$ be a Horn knowledge base, $\alpha$ a sentence, and
$\Delta_{1}$ and $\Delta_{2}$ be two minimal abductive explanations
for $\alpha$ with respect to $KB$. Then, the \it disjunction \rm of $\Delta_{1}$
and $\Delta_{2}$, written as $\Delta_{1}\lor \Delta_{2}$, is given
as:
$$\Delta_{1}\lor\Delta_{2}=(\Delta_{1}\cap \Delta_{2})\cup
\{\alpha\lor\beta| \alpha \in\Delta_{1}\backslash \Delta_{2}~
\text{and}~\beta\in \Delta_{2}\backslash \Delta_{1}\}.$$ Extending
this to $\Delta^{\bullet}$, a set of minimal abductive explanations
for $\alpha$ with respect to $KB$, $\lor\Delta^{\bullet}$ is given by the
disjunction of all elements of $\Delta^{\bullet}$.
\end{definition}

\begin{lemma} \label{l13} Let $KB$ be a Horn knowledge base, $\alpha$ a sentence,and
$\Delta_{1}$ and $\Delta_{2}$ be two minimal abductive explanations
for $\alpha$ with respect to $KB$.
Then,$Mod(\Delta_{1}\lor\Delta_{2})=Mod(\Delta_{1})\cup
Mod(\Delta_{2})$.

\begin{proof} First we show that every model of $\Delta_{1}$ is a model of
$\Delta_{1}\lor\Delta_{2}$. Clearly, a model $M$ of $\Delta_{1}$
satisfies all the sentences in $(\Delta_{1}\cap \Delta_{2})$. The
other sentences in $(\Delta_{1}\lor\Delta_{2})$ are of the form
$\alpha \lor\beta$, where $\alpha$ is from $\Delta_{1}$ and $\beta$
is from $\Delta_{2}$. Since $M$ is a model of $\Delta_{1}$, $\alpha$
is true in $M$, and hence all such sentences are satisfied by $M$.
Hence $M$ is a model of $\Delta_{1}\lor\Delta_{2}$ too. Similarly,
it can be shown that every model of $\Delta_{2}$ is a model of
$\Delta_{1}\lor\Delta_{2}$ too.

Now, it remains to be shown that every model $M$ of
$\Delta_{1}\lor\Delta_{2}$ is either a model of $\Delta_{1}$ or a
model of $\Delta_{2}$. We will now show that if $M$ is not a model
of $\Delta_{2}$, then it must be a model of $\Delta_{1}$. Since $M$
satisfies all the sentences in $(\Delta_{1}\cap\Delta_{2})$, we need
only to show that $M$ also satisfies all the sentences in
$\Delta_{1}\backslash\Delta_{2}$. For every element $\alpha \in
\Delta_{1}\backslash\Delta_{2}$: there exists a subset of
$(\Delta_{1}\lor \Delta_{2})$, $\{\alpha\lor \beta|\beta \in
\Delta_{2}\backslash\Delta_{2}\}$. $M$ satisfies all the sentences
in this subset. Suppose $M$ does not satisfy $\alpha$, then it must
satisfy all $\beta\in \Delta_{1}\backslash\Delta_{2}$. This implies
that $M$ is a model of $\Delta_{2}$, which is a contradictory to our
assumption. Hence $M$ must satisfy $\alpha$, and thus a model
$\Delta_{1}$. Similarly, it can be shown that $M$ must be a model of
$\Delta_{2}$ if it is not a model of $\Delta_{1}$.
\end{proof}
\end{lemma}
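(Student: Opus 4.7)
The plan is to establish the claimed set equality $Mod(\Delta_1 \lor \Delta_2) = Mod(\Delta_1) \cup Mod(\Delta_2)$ by proving the two inclusions separately, using the explicit definition of $\Delta_1 \lor \Delta_2$ as $(\Delta_1 \cap \Delta_2) \cup \{\alpha \lor \beta \mid \alpha \in \Delta_1 \setminus \Delta_2,\ \beta \in \Delta_2 \setminus \Delta_1\}$. Before starting, I would briefly dispose of the degenerate case: if $\Delta_1 \setminus \Delta_2 = \emptyset$ or $\Delta_2 \setminus \Delta_1 = \emptyset$, then minimality of both abductive explanations forces $\Delta_1 = \Delta_2$, the disjunction reduces to $\Delta_1 \cap \Delta_2 = \Delta_1$, and both sides of the equality coincide trivially. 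So we may assume both symmetric differences are nonempty.

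For the $\supseteq$ direction, I would pick an arbitrary $M \in Mod(\Delta_1)$ and verify it satisfies every sentence of $\Delta_1 \lor \Delta_2$. Sentences in $\Delta_1 \cap \Delta_2$ are in $\Delta_1$, hence satisfied. For each disjunct $\alpha \lor \beta$ with $\alpha \in \Delta_1 \setminus \Delta_2$, the conjunct $\alpha$ already holds in $M$, so the disjunction holds. Thus $M \in Mod(\Delta_1 \lor \Delta_2)$. The case $M \in Mod(\Delta_2)$ is symmetric, completing this inclusion.

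For the $\subseteq$ direction (the main content of the lemma), I would take $M \in Mod(\Delta_1 \lor \Delta_2)$ and argue that if $M \notin Mod(\Delta_2)$ then $M \in Mod(\Delta_1)$. Since $M$ satisfies every sentence of $\Delta_1 \cap \Delta_2$, the failure of $M$ on $\Delta_2$ must come from some $\beta_0 \in \Delta_2 \setminus \Delta_1$ with $M \not\models \beta_0$. Now for every $\alpha \in \Delta_1 \setminus \Delta_2$, the disjunction $\alpha \lor \beta_0$ lies in $\Delta_1 \lor \Delta_2$ and is satisfied by $M$; since $M \not\models \beta_0$, we must have $M \models \alpha$. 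Hence $M$ satisfies all of $\Delta_1 \setminus \Delta_2$ together with $\Delta_1 \cap \Delta_2$, giving $M \in Mod(\Delta_1)$. A symmetric argument shows that if $M \notin Mod(\Delta_1)$ then $M \in Mod(\Delta_2)$, and these together yield $M \in Mod(\Delta_1) \cup Mod(\Delta_2)$.

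The only mildly delicate point is ensuring that a single witness $\beta_0 \in \Delta_2 \setminus \Delta_1$ suffices to pull every $\alpha \in \Delta_1 \setminus \Delta_2$ into $M$'s theory; this is immediate once one observes that $\alpha \lor \beta$ appears in $\Delta_1 \lor \Delta_2$ for \emph{every} pair $(\alpha,\beta)$ with $\alpha \in \Delta_1 \setminus \Delta_2$ and $\beta \in \Delta_2 \setminus \Delta_1$, so we may fix $\beta = \beta_0$ uniformly. No appeal to the semantics of $Cn_P$, to $KB$, or to minimality (beyond the degenerate-case remark) is needed: the statement is purely propositional about sets of sentences, and the proof is essentially the standard distributivity argument for disjunctive normal forms adapted to set-indexed conjunctions.
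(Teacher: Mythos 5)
Your proof is correct and follows essentially the same route as the paper's: both inclusions are established by the standard distributivity argument on the explicit definition of $\Delta_{1}\lor\Delta_{2}$, with the only difference being that you fix a falsified witness $\beta_{0}\in\Delta_{2}\backslash\Delta_{1}$ and conclude directly that all of $\Delta_{1}\backslash\Delta_{2}$ holds, whereas the paper runs the contrapositive (supposing some $\alpha\in\Delta_{1}\backslash\Delta_{2}$ fails and deriving that $M$ models $\Delta_{2}$, a contradiction). Your version is slightly cleaner --- it avoids the paper's garbled index sets ($\Delta_{2}\backslash\Delta_{2}$ and a misplaced $\Delta_{1}\backslash\Delta_{2}$, which should both read $\Delta_{2}\backslash\Delta_{1}$) and adds a harmless but careful treatment of the degenerate case --- but it is the same argument.
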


As one would expect, all the models of revising sentence $\alpha$
can be realized in terms abductive explanations for $\alpha$, and
the relationship is precisely stated below.

\begin{lemma} \label{l14} Let $KB$ be a Horn knowledge base, $\alpha$ a sentence, and
$\Delta^{\bullet}$ the set of all minimal abductive explanations for
$\alpha$ with respect to $KB$. Then $Mod(\{\alpha\}\cup
IC)=Mod(\lor\Delta^{\bullet})$.

\begin{proof} It can be easily verified that every model $M$ of a minimal
abductive explanation is also a model of $\alpha$. Since every
minimal abductive explanation satisfies $IC$, $M$ is a model of
$\alpha\cup IC$. It remains to be shown that every model $M$ of
$\{\alpha\}\cup IC$ is a model of  one of the minimal abductive
explanations for $\alpha$ with respect to $KB$. This can be verified by
observing that a minimal abductive explanation for $\alpha$ with respect to $KB$
can be obtained from $M$.
\end{proof}
\end{lemma}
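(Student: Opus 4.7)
The plan is to prove the set equality by establishing two inclusions, using Lemma \ref{l13} to rewrite $Mod(\lor\Delta^{\bullet})$ as the union $\bigcup_{\Delta \in \Delta^{\bullet}} Mod(\Delta)$ and then matching abductive models to abductive explanations in a canonical way.

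For the inclusion $Mod(\lor\Delta^{\bullet}) \subseteq Mod(\{\alpha\}\cup IC)$, I would argue as follows. By iterated application of Lemma \ref{l13}, any $M \in Mod(\lor\Delta^{\bullet})$ is a model of some minimal abductive explanation $\Delta \in \Delta^{\bullet}$. By the very definition of an abductive explanation, $\Delta \models_{P} \alpha$ and $\Delta$ is $P$-consistent with $IC$, so $M \models \alpha$ and $M \models IC$, giving $M \in Mod(\{\alpha\}\cup IC)$. This direction is direct and uses only the definitions.

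For the converse $Mod(\{\alpha\}\cup IC) \subseteq Mod(\lor\Delta^{\bullet})$, the strategy is to extract, from each $M \in Mod(\{\alpha\}\cup IC)$, a canonical minimal abductive explanation having $M$ as a model. Define $\Delta_{M}$ to be the set of abductive literals true in $M$, i.e.\ $\Delta_{M} = \{A \in Ab \mid A \in M\} \cup \{\neg A \mid A \in Ab,\ A \notin M\}$. Since the underlying logical system $P$ is acyclic and all abducibles lie at level $0$, the interpretation of every non-abducible atom is uniquely determined by $M$'s choice of abducibles; hence $M$ is the unique abductive model of $\Delta_{M}$. Consequently $\Delta_{M} \models_{P} \alpha$ (because $M \models \alpha$) and $\Delta_{M}$ is $P$-consistent with $IC$ (because $M \models IC$). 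Then I would shrink $\Delta_{M}$ to some minimal subset $\Delta^{*} \subseteq \Delta_{M}$ that is still an abductive explanation for $\alpha$; such a $\Delta^{*}$ exists by finiteness (or Zorn's lemma in the infinite case). Since $\Delta^{*} \subseteq \Delta_{M}$ and $M \models \Delta_{M}$, we obtain $M \models \Delta^{*}$, so $M \in Mod(\Delta^{*}) \subseteq \bigcup_{\Delta \in \Delta^{\bullet}} Mod(\Delta) = Mod(\lor\Delta^{\bullet})$ by Lemma \ref{l13}.

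The main obstacle is the converse direction, specifically justifying that $\Delta_{M}$ really does $P$-entail $\alpha$. This step relies crucially on the fact that the abducibles sit at level $0$ in the acyclic program $P$, so specifying their truth values in $\Delta_{M}$ forces the interpretation of every higher-level atom, and in particular pins down the truth value of $\alpha$. Without acyclicity this pinning-down argument would fail, so the careful use of the logical-system structure from Section~10.1 is the delicate point. The subsequent passage from $\Delta_{M}$ to a minimal $\Delta^{*}$ is routine, since preservation of the model $M$ under subset is automatic.
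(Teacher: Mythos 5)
Your proof is correct and follows essentially the same route as the paper's: the inclusion $Mod(\lor\Delta^{\bullet})\subseteq Mod(\{\alpha\}\cup IC)$ falls out of the definition of an abductive explanation (via Lemma \ref{l13}), and the converse is obtained by extracting a minimal explanation from each model $M$ — which is exactly the paper's remark that ``a minimal abductive explanation for $\alpha$ can be obtained from $M$,'' only you make the construction explicit via the complete abducible diagram $\Delta_{M}$ and the level-$0$ argument. The added detail is a faithful elaboration rather than a different method, so no further comment is needed.
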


Thus, we have a way to generate all the models of $\{\alpha\}\cup
IC$, and we just need to select a subset of this based on an order
that satisfies $(\leq 1)$ to $(\leq 5)$. Suppose we have such an
order that satisfies all the required postulates, then this order
can be mapped to a particular set of abductive explanations for
$\alpha$ with respect to $KB$. This is stated precisely in the following
theorem. An important implication of this theorem is that there is
no need to compute all the abductive explanations for $\alpha$ with respect to
$KB$. However, it does not say which abductive explanations need to
be computed.

\begin{theorem} \label{T26} Let $KB$ be a Horn knowledge base, and $\leq_{KB}$ be an order
among abductive interpretations in $\mathcal{S}$ that satisfies all
the rationality axioms $(\leq 1)$ to $(\leq 5)$. Then, for every
sentence $\alpha$, there exists $\Delta^{\bullet}$ a set of minimal
abductive explanations for $\alpha$ with respect to $KB$, s.t.
$Min(Mod(\{\alpha\}\cup IC),\leq_{KB})$ is a subset of $Mod(\lor
\Delta^{\bullet})$, and this does not hold for any proper subset of
$\Delta^{\bullet}$.

\begin{proof} From Lemma \ref{l13} and Lemma \ref{l14}, it is clear that
$Mod(\{\alpha\}\cup IC)$ is the union of all the models of all
minimal abductive explanations of $\alpha$ with respect to $KB$. $Min$ selects a
subset of this, and the theorem follows immediately. .
\end{proof}
\end{theorem}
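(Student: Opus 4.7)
The plan is to start from Lemma~\ref{l14}, which already guarantees that the set $\Delta^{\bullet}_{\mathrm{all}}$ of \emph{all} minimal abductive explanations for $\alpha$ with respect to $KB$ satisfies $Mod(\{\alpha\}\cup IC) = Mod(\vee\Delta^{\bullet}_{\mathrm{all}})$. Since the set $Min(Mod(\{\alpha\}\cup IC),\leq_{KB})$ is by definition a subset of $Mod(\{\alpha\}\cup IC)$, using $\Delta^{\bullet}_{\mathrm{all}}$ already establishes the inclusion part of the claim. The real content of the theorem is to show that we can trim $\Delta^{\bullet}_{\mathrm{all}}$ down to a subset $\Delta^{\bullet}$ for which the same inclusion still holds, and such that no proper subset of $\Delta^{\bullet}$ keeps the inclusion.

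First I would set up a covering relation. For each $M \in Min(Mod(\{\alpha\}\cup IC),\leq_{KB})$, Lemma~\ref{l14} together with Lemma~\ref{l13} (applied inductively, so that $Mod(\vee\Delta^{\bullet}_{\mathrm{all}}) = \bigcup_{\Delta\in\Delta^{\bullet}_{\mathrm{all}}} Mod(\Delta)$) gives at least one $\Delta_M \in \Delta^{\bullet}_{\mathrm{all}}$ with $M \in Mod(\Delta_M)$. Collecting a choice of such $\Delta_M$ for every minimal model yields a candidate set $\Delta^{\bullet}_0 \subseteq \Delta^{\bullet}_{\mathrm{all}}$; by construction, $Min(Mod(\{\alpha\}\cup IC),\leq_{KB}) \subseteq \bigcup_{\Delta\in\Delta^{\bullet}_0} Mod(\Delta) = Mod(\vee\Delta^{\bullet}_0)$, where the last equality is the iterated Lemma~\ref{l13}.

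Next I would obtain the minimality. Define $\Delta^{\bullet}$ to be any inclusion-minimal subset of $\Delta^{\bullet}_0$ which still covers $Min(Mod(\{\alpha\}\cup IC),\leq_{KB})$ in the sense above; such a minimal subset exists because the cover is well defined (in the propositional, finite setting of the paper $\Delta^{\bullet}_{\mathrm{all}}$ is finite, so one simply deletes elements one by one while the cover is preserved). By construction, for any proper subset $\Delta' \subsetneq \Delta^{\bullet}$ there is at least one $M \in Min(Mod(\{\alpha\}\cup IC),\leq_{KB})$ with $M \notin \bigcup_{\Delta\in\Delta'} Mod(\Delta) = Mod(\vee\Delta')$, which is precisely the failure of the required inclusion for that subset.

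The main obstacle I anticipate is the minimality step rather than the inclusion step: one must argue that the iterated disjunction $\vee\Delta^{\bullet}$ really decomposes model-theoretically as a union over the individual $\Delta \in \Delta^{\bullet}$, so that removing a single $\Delta$ from $\Delta^{\bullet}$ strictly removes the models uniquely covered by that $\Delta$. This is where a careful inductive application of Lemma~\ref{l13} is required; once that decomposition is in hand, both the covering property and its failure under shrinking fall out immediately from the faithfulness and minimality axioms $(\leq 3)$ and $(\leq 4)$ of $\leq_{KB}$, which guarantee that $Min(Mod(\{\alpha\}\cup IC),\leq_{KB})$ is non-empty whenever $Mod(\{\alpha\}\cup IC)$ is.
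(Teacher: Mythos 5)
Your proposal is correct and follows essentially the same route as the paper's (one-line) proof: Lemma~\ref{l14} gives $Mod(\{\alpha\}\cup IC)=Mod(\lor\Delta^{\bullet}_{\mathrm{all}})$ for the set of all minimal explanations, an iterated application of Lemma~\ref{l13} decomposes this as $\bigcup_{\Delta}Mod(\Delta)$, and $Min$ then selects a subset of it that can be covered by some of the $\Delta$'s. Your write-up is in fact more complete than the paper's, which never addresses the ``no proper subset'' clause at all: your explicit choice of a covering family followed by shrinking to an inclusion-minimal cover (legitimate because the propositional language, and hence the set of minimal explanations, is finite) is precisely the missing justification, and you correctly flag that the only real technical debt is extending the binary Lemma~\ref{l13} to the $n$-ary disjunction.
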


The above theorem \ref{T26}, is still not very useful in realizing
revision. We need to have an order among all the interpretations
that satisfies all the required axioms, and need to compute all the
abductive explanations for $\alpha$ with respect to $KB$. The need to compute
all abductive explanations arises from the fact that the converse of
the above theorem does not hold in general. This scheme requires an
universal order $\leq$, in the sense that same order can be used for
any Horn knowledge base. Otherwise, it would be necessary to specify
the new order to be used for further modifying $(KB\dotplus\alpha)$.
However, even if the order can be worked out, it is not desirable to
demand all abductive explanations of $\alpha$ with respect to $KB$ be computed.
So, it is desirable to work out, when the converse of the above
theorem is true. The following theorem says that, suppose $\alpha$
is rejected in $KB$, then revision of $KB$ by $\alpha$ can be worked
out in terms of some abductive explanations for $\alpha$ with respect to $KB$.

\begin{theorem} \label{T27} Let $KB$ be a Horn knowledge base, and a revision function
$\dotplus$ be defined as: for any sentence $\alpha$ that is rejected
in $KB$, $Mod(KB\dotplus\alpha)$ is a non-empty subset of
$Mod(\lor\Delta^{\bullet})$, where $\Delta^{\bullet}$ is a set of
all minimal abductive explanations for $\alpha$ with respect to $KB$. Then,
there exists an order $\leq_{KB}$ among abductive interpretations in
$\mathcal{S}$, s.t. $\leq_{KB}$ satisfies all the rationality axioms
$(\leq 1)$ to $(\leq 5)$ and $Mod(KB\dotplus\alpha)=
Min(Mod(\{\alpha\}\cup IC),\leq_{KB})$.

\begin{proof} Let us construct an order $\leq_{KB}$ among
interpretations in $\mathcal{S}$ as follows: For any two abductive
interpretations $I$ and $I'$ in $\mathcal{S}$, define $I\leq_{KB}I'$
iff either $I\in Mod(\lor\Delta^{\bullet})$ or $I \in Mod(KB\dotplus form(I,I'))$,
where $form(I,I')$ stands for sentence whose only models are $I$ and
$I'$. We will show that $\leq_{KB}$ thus constructed satisfies
$(\leq 1)$ to $(\leq 5)$ and $Min(Mod(\{\alpha\}\cup
IC),\leq_{KB})=Mod(KB\dotplus \alpha)$.

First, we show that $Min(Mod(\{\alpha\}\cup
IC),\leq_{KB})=Mod(KB\dotplus \alpha)$.Suppose $\alpha$ is not
satisfiable, i.e. $Mod(\alpha)$ is empty, or $\alpha$ does not
satisfy $IC$, then there are no abductive models of $\{\alpha\}\cup
IC$, and hence $Min(Mod(\{\alpha\}\cup IC),\leq_{KB})$ is empty.
From $(\dotplus 3)$, we infer that $Mod(KB\dotplus \alpha)$ is also
empty. When $\alpha$ is satisfiable and $\alpha$ satisfies $IC$, the
required result is obtained in two parts:

\begin{enumerate}
\item[1)] If $I\in Min(Mod(\{\alpha\}\cup IC),\leq_{KB})$, then $I\in Mod(KB\dotplus
\alpha)$\\
Since $\alpha$ is satisfiable and consistent with $IC$, $(\dotplus
3)$ implies that there exists at least one model, say $I'$, for
$KB\dotplus \alpha$. From $(\dotplus 1)$, it is clear that $I'$ is a
model of $IC$, from $(\dotplus 2)$ we also get that $I'$ is a model
of $\alpha$, and consequently $I\leq_{KB}I'$ (because $I\in
Min(Mod(\{\alpha\}\cup IC),\leq_{KB})$). Suppose $I\in (Mod(\lor\Delta^{\bullet}))$,
then $(\dotplus 4)$ immediately gives $I \in Mod(KB\dotplus
\alpha)$. If not, from our definition of $\leq_{KB}$, it is clear
that $I \in Mod(KB\dotplus form(I,I'))$. Note that $\alpha \land
form(I,I')\equiv form(I,I')$, since both $I$ and $I'$ are models of
$\alpha$. From $(\dotplus 6)$ and $(\dotplus 7)$, we get
$Mod(KB\dotplus \alpha)\cap\{I,I'\}=Mod(KB\dotplus form(I,I'))$.
Since $I \in Mod(KB\dotplus form(I,I'))$, it immediately follows
that $I\in Mod(KB\dotplus \alpha)$.
\item[2)] If $I\in Mod(KB\dotplus \alpha)$, then $I \in Min(Mod(\{\alpha\}\cup
IC),\leq_{KB})$.\\
From $(\dotplus 1)$ we get $I$ is a model of $IC$, and from
$(\dotplus 2)$, we obtain $I\in Mod(\alpha)$. Suppose $I\in
(Mod(\lor\Delta^{\bullet}))$, then from our definition of $\leq_{KB}$, we get
$I\leq_{KB}I'$, for any other model $I'$ of $\alpha$ and $IC$, and
hence $I \in Min(Mod(\{\alpha\}\cup IC),\leq_{KB})$. Instead, if $I$
is not a model of $KB$, then, to get the required result, we should
show that $I \in Mod(KB\dotplus form(I,I'))$, for every model $I'$
of $\alpha$  and $IC$. As we have observed previously, from
$(\dotplus 6)$ and $(\dotplus 7)$, we get $Mod(KB\dotplus
\alpha)\cap \{I,I'\}=Mod(KB\dotplus form(I,I'))$. Since $I\in
Mod(KB\dotplus \alpha)$, it immediately follows that $I \in
Mod(KB\dotplus form(I,I'))$. Hence $I\leq_{KB} I'$ for any model
$I'$ of $\alpha$ and $IC$, and consequently, $I\in
Min(Mod(\{\alpha\}\cup IC),\leq_{KB})$.
\end{enumerate}

Every model of $Mod(KB\dotplus\alpha)$ is strictly minimal than all other
interpretations. It is easy to verify that such a pre-order
satisfies $(\leq 1)$ to $(\leq 5)$. In particular, since $\alpha$ is
rejected in $KB$, $(\leq 3)$ faithfulness is satisfied, and since
non-empty subset of $Mod(\lor\Delta^{\bullet})$ is selected, $(\leq
4)$ is also satisfied.
\end{proof}
\end{theorem}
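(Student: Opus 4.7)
The plan is to mirror the construction used in Lemma \ref{l10}, but adapted to the abductive setting provided by Lemma \ref{l14}. Since $\alpha$ is rejected in $KB$, Lemma \ref{l14} gives $Mod(\{\alpha\}\cup IC)=Mod(\lor\Delta^{\bullet})$, so the hypothesis really says that $Mod(KB\dotplus\alpha)$ is a non-empty subset of $Mod(\{\alpha\}\cup IC)$, and the hypothesis also guarantees that $\dotplus$ satisfies $(\dotplus 1)$--$(\dotplus 7)$ on sentences of the form $form(I,I')$. The idea is to let the revision function itself induce the ordering between pairs of interpretations.

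First I would define the order. For $I,I'\in\mathcal{S}$, set $I\leq_{KB}I'$ iff either (a) $I\in Mod(KB)$, or (b) $I\in Mod(KB\dotplus form(I,I'))$, where $form(I,I')$ denotes a sentence whose only abductive models are $I$ and $I'$. Intuitively, the models of $KB$ are the strict minima, and among any two non-models of $KB$ the one that ``survives'' revision by a formula distinguishing exactly those two interpretations is the more plausible. This is exactly the template used in the revision representation theorem.

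Next I would verify $(\leq 1)$--$(\leq 5)$ one by one, largely by quoting the arguments of Lemma \ref{l10}. Reflexivity follows from $(\dotplus 2)$ and $(\dotplus 3)$ applied to $form(I,I)$; totality follows because $Mod(KB\dotplus form(I,I'))$ is a non-empty subset of $\{I,I'\}$ by $(\dotplus 2)$ and $(\dotplus 3)$; faithfulness uses $(\dotplus 4)$: if $I\in Mod(KB)$ and $I'\notin Mod(KB)$, then $\neg form(I,I')$ is not accepted in $KB$ (since $I$ is a model of both), so $Mod(KB\dotplus form(I,I'))=Mod(KB)\cap\{I,I'\}=\{I\}$, forcing $I<_{KB}I'$; minimality on non-empty $\mathcal{F}\subseteq\mathcal{S}$ uses a sentence $\beta$ whose models are exactly $\mathcal{F}$ together with the equality $Mod(KB\dotplus\beta)=Min(\mathcal{F},\leq_{KB})$ (established below) and $(\dotplus 3)$; preservation follows from $(\dotplus 5)$.

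Then I would prove the key equality $Mod(KB\dotplus\alpha)=Min(Mod(\{\alpha\}\cup IC),\leq_{KB})$. For the inclusion $\subseteq$, let $I\in Mod(KB\dotplus\alpha)$; by $(\dotplus 1)$ and $(\dotplus 2)$, $I\in Mod(\{\alpha\}\cup IC)$. Given any other $I'\in Mod(\{\alpha\}\cup IC)$, combining $(\dotplus 6)$ and $(\dotplus 7)$ with $\alpha\land form(I,I')\equiv form(I,I')$ yields $Mod(KB\dotplus form(I,I'))=Mod(KB\dotplus\alpha)\cap\{I,I'\}$, which contains $I$; hence $I\leq_{KB}I'$. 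For $\supseteq$, let $I\in Min(Mod(\{\alpha\}\cup IC),\leq_{KB})$; by $(\dotplus 3)$ applied to $\alpha$ there is at least one $I'\in Mod(KB\dotplus\alpha)$, which by the previous direction is also $\leq_{KB}$-minimal, so $I\leq_{KB}I'$ and $I'\leq_{KB}I$. If $I\in Mod(KB)$, then $(\dotplus 4)$ forces $I\in Mod(KB\dotplus\alpha)$ (in fact by rejection of $\alpha$ this subcase is vacuous); otherwise $I\in Mod(KB\dotplus form(I,I'))$ by definition of $\leq_{KB}$, and the same $(\dotplus 6)$/$(\dotplus 7)$ calculation yields $I\in Mod(KB\dotplus\alpha)$.

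The main obstacle I anticipate is transitivity of $\leq_{KB}$; this is the one place where the pairwise Levi-style definition interacts non-trivially with the global revision operator. I would handle it exactly as in Lemma \ref{l10}: given $I_1\leq_{KB}I_2$ and $I_2\leq_{KB}I_3$ with $I_1\notin Mod(KB)$, show that $Mod(KB\dotplus form(I_1,I_2,I_3))\cap\{I_1,I_2\}$ is non-empty (otherwise $(\dotplus 6)$ and $(\dotplus 7)$ force $Mod(KB\dotplus form(I_2,I_3))=\{I_3\}$, contradicting $I_2\in Mod(KB\dotplus form(I_2,I_3))$), then intersect with $\{I_1,I_3\}$ via $(\dotplus 6)$ and $(\dotplus 7)$ again to conclude $I_1\in Mod(KB\dotplus form(I_1,I_3))$, i.e., $I_1\leq_{KB}I_3$. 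The secondary subtlety is ensuring that the construction is not sensitive to the hypothesis being stated only for rejected $\alpha$: since the order is built out of revisions by two-model formulas (each of which is automatically rejected in $KB$ as soon as at least one of the two interpretations is not a model of $KB$), the hypothesis suffices to invoke $(\dotplus 1)$--$(\dotplus 7)$ wherever the proof needs them.
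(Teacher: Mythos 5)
Your construction is essentially the one the paper itself uses: the pairwise $I\leq_{KB}I'$ iff ``base case or $I\in Mod(KB\dotplus form(I,I'))$'' ordering transplanted from Lemma~\ref{l10}, verified via $(\dotplus 1)$--$(\dotplus 7)$. The only structural difference is your base disjunct $I\in Mod(KB)$, where the paper's proof of Theorem~\ref{T27} uses $I\in Mod(\lor\Delta^{\bullet})$; since $\alpha$ is rejected in $KB$ the two choices disagree, and it is the paper's choice that makes the minima of $Mod(\{\alpha\}\cup IC)$ land inside $Mod(\lor\Delta^{\bullet})$ without further argument, whereas yours makes faithfulness to $KB$ immediate. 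Either way the skeleton is the same.

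The genuine gap is in your justification for invoking the postulates at all. The hypothesis of Theorem~\ref{T27} does not assert that $\dotplus$ satisfies $(\dotplus 1)$--$(\dotplus 7)$; it only pins down $\dotplus$ on sentences \emph{rejected} in $KB$, and only up to ``some non-empty subset of $Mod(\lor\Delta^{\bullet})$.'' Your patch --- that each $form(I,I')$ is itself rejected, so the hypothesis applies to it --- only yields that $Mod(KB\dotplus form(I,I'))$ is some non-empty subset of $\{I,I'\}$. That gives reflexivity and totality, but it does not give the coherence identity $Mod(KB\dotplus form(I,I'))=Mod(KB\dotplus\alpha)\cap\{I,I'\}$ on which your key equality rests: nothing in the hypothesis ties the pairwise selections to the selection made by $KB\dotplus\alpha$. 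Concretely, if $Mod(\{\alpha\}\cup IC)=\{I_1,I_2\}$ with $Mod(KB\dotplus\alpha)=\{I_1\}$ but $Mod(KB\dotplus form(I_1,I_2))=\{I_2\}$ --- both admissible under the stated hypothesis --- your order selects $I_2$, not $I_1$. (Worse, when $I\models KB$ the sentence $form(I,I')$ is not rejected, so the hypothesis does not define $KB\dotplus form(I,I')$ there at all, and your appeal to $(\dotplus 4)$ for faithfulness has nothing to stand on.) The fix, which the paper's closing paragraph gestures at even though its displayed argument has the same defect as yours, is to build $\leq_{KB}$ directly rather than through $\dotplus$: place $Mod(KB)$ strictly at the bottom of $\mathcal{S}$ (disjoint from $Mod(\{\alpha\}\cup IC)$ because $\alpha$ is rejected), then $Mod(KB\dotplus\alpha)$, then everything else, and check $(\leq 1)$--$(\leq 5)$ by hand; faithfulness uses rejection of $\alpha$, minimality uses the assumed non-emptiness, and the target equality is immediate from the stratification.
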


An important corollary of this theorem is that, revision of $KB$ by
$\alpha$ can be realized just by computing \it one \rm abductive
explanation of $\alpha$ with respect to $KB$, and is stated below.

\begin{corollary} Let $KB$ be a Horn knowledge base, and a revision
function $\dotplus$ be defined as: for any sentence $\alpha$ that is
rejected in $KB$, $Mod(KB\dotplus\alpha)$ is a non-empty subset of
$Mod(\Delta)$, where $\Delta$ is an abductive explanations for
$\alpha$ with respect to $KB$. Then, there exists an order $\leq_{KB}$ among
abductive interpretations in $\mathcal{S}$, s.t. $\leq_{KB}$
satisfies all the rationality axioms $(\leq 1)$ to $(\leq 5)$ and
$Mod(KB\dotplus\alpha)= Min(Mod(\{\alpha\}\cup IC),\leq_{KB})$.
\end{corollary}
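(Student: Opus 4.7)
The plan is to derive this corollary directly from Theorem~\ref{T27} by specializing to the case where the reference set of abductive explanations is the singleton $\{\Delta\}$ rather than the full collection $\Delta^{\bullet}$ of all minimal abductive explanations. First I would note that any abductive explanation $\Delta$ for $\alpha$ with respect to $KB$ satisfies $\Delta \models_{P} \alpha$ by definition, hence $Mod(\Delta) \subseteq Mod(\alpha)$; combined with the consistency of $\Delta$ with $IC$, we obtain $Mod(\Delta) \subseteq Mod(\{\alpha\} \cup IC)$. Furthermore, the hypothesis that $Mod(KB \dotplus \alpha)$ is a non-empty subset of $Mod(\Delta)$ guarantees a non-empty target set for the minimization.

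Next I would reuse the construction from the proof of Theorem~\ref{T27} verbatim, with $Mod(\lor \Delta^{\bullet})$ replaced by $Mod(\Delta)$. That is, for any two abductive interpretations $I, I' \in \mathcal{S}$, define
\[
I \leq_{KB} I' \text{ iff either } I \in Mod(\Delta) \text{ or } I \in Mod(KB \dotplus form(I,I')),
\]
where $form(I,I')$ is the sentence whose only models are $I$ and $I'$. The two directions of $Min(Mod(\{\alpha\} \cup IC), \leq_{KB}) = Mod(KB \dotplus \alpha)$ then go through exactly as in Theorem~\ref{T27}, using $(\dotplus 1)$--$(\dotplus 7)$ at the corresponding steps, and using $Mod(\Delta) \subseteq Mod(\{\alpha\} \cup IC)$ in place of $Mod(\lor \Delta^{\bullet}) \subseteq Mod(\{\alpha\} \cup IC)$.

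It remains to check that $\leq_{KB}$ satisfies $(\leq 1)$--$(\leq 5)$. Pre-order, connectivity, preservance, and the transitivity argument are carried over unchanged from Theorem~\ref{T27} since they depend only on the postulates of $\dotplus$ and on $form(\cdot,\cdot)$. Faithfulness $(\leq 3)$ deserves attention: since by hypothesis $\alpha$ is rejected in $KB$, we have $Mod(KB) \cap Mod(\alpha) = \emptyset$, so no model of $KB$ appears in $Mod(\Delta)$; the faithfulness argument then mimics that of Theorem~\ref{T27} because the distinguished ``minimal'' set is now $Mod(\Delta)$, and the selection from $Mod(\{\alpha\} \cup IC)$ leaves $KB$-models untouched. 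Minimality $(\leq 4)$ is immediate from the non-emptiness assumption on $Mod(KB \dotplus \alpha) \subseteq Mod(\Delta)$.

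The main conceptual obstacle is justifying that a \emph{single} abductive explanation, rather than the exhaustive set $\Delta^{\bullet}$, suffices to parametrize the order on all of $\mathcal{S}$. The resolution is structural: the definition of $\leq_{KB}$ only distinguishes a ``preferred'' subset $Mod(\Delta)$ and delegates all remaining comparisons to the revision operator $\dotplus$ through $form(I,I')$; thus the richness of the order does not rely on enumerating explanations but on the behaviour of $\dotplus$ itself, which is governed by $(\dotplus 1)$--$(\dotplus 7)$. Consequently the corollary follows, and in fact shows the practically desirable fact that revision can be realized by computing just one abductive explanation of $\alpha$ with respect to $KB$.
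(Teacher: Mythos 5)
Your proposal is correct in substance but takes a more laborious route than the paper intends. The paper offers no separate argument for this corollary: it is stated immediately after Theorem~\ref{T27} as a direct consequence, and the intended derivation is a one-line reduction. Since any abductive explanation $\Delta$ satisfies $\Delta\models_P\alpha$ and all interpretations under consideration lie in $\mathcal{S}=Mod(IC)$, one has $Mod(\Delta)\subseteq Mod(\{\alpha\}\cup IC)$, and Lemma~\ref{l14} identifies the latter with $Mod(\lor\Delta^{\bullet})$; hence the hypothesis that $Mod(KB\dotplus\alpha)$ is a non-empty subset of $Mod(\Delta)$ already implies the hypothesis of Theorem~\ref{T27}, which can then be invoked verbatim with the very same order $\leq_{KB}$. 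You instead re-run the entire construction of Theorem~\ref{T27} with $Mod(\Delta)$ substituted for $Mod(\lor\Delta^{\bullet})$; that does reach the same conclusion, but it forces you to re-verify $(\leq 1)$ to $(\leq 5)$, which the reduction avoids entirely. One caution about the step you yourself single out: your faithfulness argument is no tighter than the paper's. Under the substituted order every element of $Mod(\Delta)$ sits $\leq_{KB}$-below everything in $\mathcal{S}$, yet, precisely because $\alpha$ is rejected in $KB$, none of these elements is a model of $KB$; so $(\leq 3)$ read literally as ``$I\in Min(\mathcal{S},\leq_{KB})$ iff $I\in Mod(KB)$'' is in tension with the construction. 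This defect is inherited from the proof of Theorem~\ref{T27} itself rather than introduced by you, which is a further reason to prefer the clean reduction to Theorem~\ref{T27}: it localizes the issue in one place instead of duplicating it.
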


The precondition that $\alpha$ is rejected in $KB$ is not a serious
limitation in various applications such as database updates and
diagnosis, where close world assumption is employed to infer
negative information. For example, in diagnosis it is generally
assumed that all components are functioning normally, unless
otherwise there is specific information against it. Hence, a Horn
knowledge base in diagnosis either accepts or rejects normality of a
component, and there is no "don't know" third state. In other words,
in these applications the Horn knowledge base is assumed to be
complete. Hence, when such a complete Horn knowledge base is revised
by $\alpha$, either $\alpha$ is already accepted in $KB$ or rejected
in $KB$, and so the above scheme works fine.

\section{Related Works}
We begin by recalling previous work on view deletion. Aravindan
(Aravindan \& Dung 1994), (Aravindan 1995), defines a contraction
operator in view deletion with respect to a set of formulae or
sentences using Hansson's (Hansson 1997a) belief change. Similar to
our approach (Delhibabu \& Lakemeyer 2013, Delhibabu \& Behrend, 2014, Delhibabu 2014a, Delhibabu 2014b) he focused on set of formulae or sentences in
knowledge base revision for view update with respect to insertion and deletion
and formulae are considered at the same level. Aravindan proposed
different ways to change knowledge base via only database deletion,
devising particular postulate which is shown to be necessary and
sufficient for such an update process.

Our Horn knowledge base consists of two parts, immutable part and
updatable part, but our focus is on minimal change computations.
The related works are, Eiter (Eiter \& Makino
2007), Langlois (Langlois et al. 2008) and Delgrande (Delgrande \&
Peppas 2011) are focus on Horn revision with different
perspectives like prime implication, logical closure and belief
level. Segerberg (Segerberg 1998) defined a new modeling technique
for belief revision in terms of irrevocability on prioritized
revision. Hansson constructed five types of
non-prioritized belief revision. Makinson (Makinson 1997) developed
dialogue form of revision AGM. Papini (Papini 2000) defined a new
version of knowledge base revision. In this paper, we considered the immutable part as a Horn
clause (Ferm{\'e} $\&$ Hansson 2001 shown shielded contraction similar to immutable part, the success postulate does not hold in general; some non-tautological beliefs are shielded from contraction and cannot be given up. Shielded contraction has close connections with credibility limited revision shown Hansson et al 2001) and the updatable part as an atom (literal). Knowledge bases have a set of integrity constraints. 

Hansson's (Hansson 1997a) kernel change is related to abductive
method. Aliseda's (Aliseda 2006) book on abductive reasoning is one
of the motivation step. Christiansen's (Christiansen \& Dahl 2009)
work on dynamics of abductive logic grammars exactly fits our
minimal change (insertion and deletion). Wrobel's (Wrobel 1995)
definition of first order theory revision was helpful to frame our
algorithm.

On the other hand, we are dealing with view update problem. Keller's
(Keller 1985) thesis is motivation of the view update problem. There
are many papers related to the view update problem (for example, the
recent survey paper on view update by Chen and Liao (Chen \& Liao
2010) and the survey paper on view update algorithms by Mayol and
Teniente (Mayol \& Teniente 1999). More similar to our work is the
paper presented by Bessant (Bessant et al. 1998), which introduces a
local search-based heuristic technique that empirically proves to be
often viable, even in the context of very large propositional
applications. Laurent (Laurent et al. 1998), considers updates in a
deductive database in which every insertion or deletion of a fact
can be performed in a deterministic way.

Furthermore, and at a first sight more related to our work, some
work has been done on "core-retainment" (Hansson 1991) in the model
of language splitting introduced by Parikh (Parikh 1999). More
recently, Doukari (Doukari et al. 2008), \"{O}z\c{c}ep
(\"{O}z\c{c}ep 2012) and Wu (Wu et al. 2011) applied similar ideas
for dealing with knowledge base dynamics. These works represent
motivation step for our future work. Second, we are dealing with how
to change minimally in the theory of "principle of minimal change",
but current focus is on finding second best abductive explanation
(Liberatore \& Schaerf 2004 and 2012), 2-valued minimal hypothesis
for each normal program (Pinto \& Pereira 2011). Our work reflected
in the current trends on Ontology systems and description logics (Qi
and Yang (Qi \& Yang 2008) and Kogalovsky (Kogalovsky 2012)).
Finally, when we presented Horn knowledge base change in abduction
framework, we did not talk about compilability and complexity (see
the works of Liberatore (Liberatore 1997) and Zanuttini (Zanuttini
2003)).

The significance of our work can be summarized in the following:
\begin{itemize}
\item To define a new kind of revision operator on
  Horn knowledge base and obtain axiomatic characterization for it.
  \item To propose new generalized revision algorithm for
  Horn knowledge base dynamics, and study its connections with kernel change and
  abduction procedure.
   \item To develop a new view insertion algorithm for databases.
   \item To design a new view update algorithm for stratifiable Deductive Database (DDB),
 using an axiomatic method based on Hyper tableaux and magic sets.
   \item To study an abductive framework for Horn knowledge base dynamics.
   \item To present a comparative study of view update
algorithms and integrity constraint.
   \item Finally, to shown connection between belief update versus
  database update.
\end{itemize}


\section{Conclusion and remarks}

~~~~~~~~~~The main contribution of this research is to provide a
link between theory of belief dynamics and concrete applications
such as view updates in databases. We argued for generalization of
belief dynamics theory in two respects: to handle certain part of
knowledge as immutable; and dropping the requirement that belief
state be deductively closed. The intended generalization was
achieved by introducing the concept of Horn knowledge base dynamics
and generalized revision for the same. Further, we studied the
relationship between Horn knowledge base dynamics  and abduction
resulting in a generalized algorithm for revision based on abductive
procedures. The successfully demonstrated how Horn knowledge
base dynamics provide an axiomatic characterization for update
an literals to a stratifiable (definite) deductive database.

In bridging the gap between belief dynamics and view updates, we
observe that a balance has to be achieved between computational
efficiency and rationality. While rationally attractive notions of
generalized revision prove to be computationally inefficient, the
rationality behind efficient algorithms based on incomplete trees is
not clear at all. From the belief dynamics point of view, we may
have to sacrifice some postulates, vacuity, to gain computational
efficiency. Further weakening of relevance has to be explored, to
provide declarative semantics for algorithms based on incomplete
trees.

On the other hand, from the database side, we should explore various
ways of optimizing the algorithms that would comply with the
proposed declarative semantics. We believe that partial deduction
and loop detection techniques, will play an important role in
optimizing algorithms. Note that, loop detection could be carried
out during partial deduction, and complete SLD-trees can be
effectively constructed wrt a partial deduction (with loop check) of
a database, rather than wrt database itself. Moreover, we would
anyway need a partial deduction for optimization of query
evaluation.

We have presented two variants of an algorithm for update a view
atom from a definite database. The key idea of this approach is to
transform the given database into a logic program in
such a way that updates can be read off from the models of this
transformed program. We have also shown that this algorithm is
rational in the sense that it satisfies the rationality postulates
that are justified from philosophical angle. In the second variant,
where materialized view is used for the transformation, after
generating a hitting set and removing corresponding $EDB$ atoms, we
easily move to the new materialized view. An obvious way is to
recompute the view from scratch using the new $EDB$ (i.e., compute
the Least Herbrand Model of the new updated database from scratch)
but it is certainly interesting to look for more efficient methods.

Though we have discussed only about view updates, we believe that
Horn knowledge base dynamics can also be applied to other
applications such as view maintenance, diagnosis, and we plan to
explore it further (see works (Biskup 2012) and (Caroprese et al.
2012)). Still, a lot of developments are possible, for improving
existing operators or for defining new classes of change operators. In the relation of Horn KB revision with hitting set as to be describe similar construction for description logic. In particular assuming the T-Box to the hitting set and the rest with the A-Box (Delgrande, JP \& Wassermann 2013). 
We did not talk about complexity (see the works of Liberatore
((Liberatore 1997) and (Liberatore \& Schaerf 2004)), Caroprese
(Caroprese 2012), Calvanese's (Calvanese 2012), and Cong (Cong et
al. 2012)). In this thesis answer impotent question for experimental people that is,\emph{any real life application for AGM in 25 year theory?} (Ferme \&
Hansson 2011). The revision and update are more challenging in
logical view update problem (database theory), we extended the
theory to combine our results similar in the Konieczny's (Konieczny 2011) and Nayak's (Nayak 2011).

\begin{sidewaystable}
\centering
\begin{turn}{180}
\begin{tabular}{|l|l|l|l|l|l|l|l|l|l|l|l|l|l|l|l|l|l|l|l|}\hline
\multirow{3}{*}{\bf Method}&\multicolumn{4}{c}{\bf Problem}&
\multicolumn{4}{|c|}{\bf Database
  schema}&\multicolumn{2}{c}{\bf Update req.}&\multicolumn{3}{|c|}{\bf Mechanism}&\multicolumn{3}{c}{\bf Update Change}&
\multicolumn{3}{|c|}{\bf Solutions}\\
\cline{2-20}

 &\multirow{2}{*}{Type}&View&IC&Run/&Def.&\multirow{2}{*}{View}&IC&Kind of
&\multirow{2}{*} {Mul.}&Update&Tech-&Base&User&\multirow{2}{*}{Type}&~Base~&View&\multirow{2}{*}{Axiom}&\multirow{2}{*}{Sound.}&\multirow{2}{*}{Complete.}\\
&&Update&Enforce.&Comp.&Lang.&&def.&IC&&Operat.&nique&Facts&Part.&&facts&def.&&&\\\hline

\multirow{2}{*}{\bf
\cite{Gue}}&\multirow{2}{*}{N}&\multirow{2}{*}{Yes}&\multirow{2}{*}{Check}&\multirow{2}{*}{Run}
&\multirow{2}{*}{Logic}&\multirow{2}{*}{Yes}&\multirow{2}{*}{Yes}&\multirow{2}{*}{Static}&\multirow{2}{*}{No}&\multirow{2}{*}{
$\iota~\delta$}&\multirow{2}{*}{
SLDNF}&\multirow{2}{*}{No}&\multirow{2}{*}{No}&\multirow{2}{*}{S}&\multirow{2}{*}{Yes}&\multirow{2}{*}{No}&\multirow{2}{*}{1-6,9}&\multirow{2}{*}{No}&Not\\
&&&&&&&&&&&&&&&&&&&proved\\\hline

\multirow{2}{*}{\bf
\cite{Kak}}&\multirow{2}{*}{N}&\multirow{2}{*}{Yes}&\multirow{2}{*}{Maintain}&\multirow{2}{*}{Run}
&\multirow{2}{*}{Logic}&\multirow{2}{*}{Yes}&\multirow{2}{*}{Yes}&\multirow{2}{*}{Static}&\multirow{2}{*}{Yes}&\multirow{2}{*}{
$\iota~\delta$}&\multirow{2}{*}{
SLDNF}&\multirow{2}{*}{No}&\multirow{2}{*}{No}&\multirow{2}{*}{S}&\multirow{2}{*}{Yes}&\multirow{2}{*}{No}&\multirow{2}{*}{---}&\multirow{2}{*}{No}&\multirow{2}{*}{No}\\
&&&&&&&&&&&&&&&&&&&\\\hline

\multirow{2}{*}{\bf
\cite{Kuc}}&\multirow{2}{*}{S}&\multirow{2}{*}{Yes}&\multirow{2}{*}{Check}&\multirow{2}{*}{Run}
&\multirow{2}{*}{Logic}&\multirow{2}{*}{Yes}&\multirow{2}{*}{Yes}&\multirow{2}{*}{Static}&\multirow{2}{*}{Yes}&\multirow{2}{*}{
$\iota~\delta$}&\multirow{2}{*}{
---}&\multirow{2}{*}{Yes}&\multirow{2}{*}{No}&\multirow{2}{*}{SS}&\multirow{2}{*}{Yes}&\multirow{2}{*}{Yes}&\multirow{2}{*}{1-6,7}&Not&Not\\
&&&&&&&&&&&&&&&&&&proved&proved\\\hline

\multirow{2}{*}{\bf
\cite{Moer}}&\multirow{2}{*}{N}&\multirow{2}{*}{No}&\multirow{2}{*}{Maintain}&\multirow{2}{*}{Run}
&\multirow{2}{*}{Logic}&\multirow{2}{*}{Yes}&\multirow{2}{*}{Yes}&\multirow{2}{*}{Static}&\multirow{2}{*}{Yes}&\multirow{2}{*}{
$\iota~\delta$}&\multirow{2}{*}{
---}&\multirow{2}{*}{Yes}&\multirow{2}{*}{No}&\multirow{2}{*}{S}&\multirow{2}{*}{Yes}&\multirow{2}{*}{No}&\multirow{2}{*}{1-6,7}
&No&No\\
&&&&&&&&&&&&&&&&&&proved&proved\\\hline

\multirow{2}{*}{\bf
\cite{Gup}}&\multirow{2}{*}{S}&\multirow{2}{*}{Yes}&Check&Comp.
&Relation.&\multirow{2}{*}{No}&\multirow{2}{*}{No}&\multirow{2}{*}{Static}&\multirow{2}{*}{Yes}&\multirow{2}{*}{
$\iota~\delta~\chi$}&predef.&\multirow{2}{*}{Yes}&\multirow{2}{*}{No}&\multirow{2}{*}{G}&\multirow{2}{*}{Yes}&\multirow{2}{*}{No}&\multirow{2}{*}{---}
&\multirow{2}{*}{No}&\multirow{2}{*}{No}\\
&&&Maintain&Run&Logic&&&&&&Programs&&&&&&&&\\\hline

\multirow{2}{*}{\bf
\cite{Don}}&\multirow{2}{*}{N}&\multirow{2}{*}{Yes}&\multirow{2}{*}{Check}&\multirow{2}{*}{Run}
&\multirow{2}{*}{Logic}&\multirow{2}{*}{Yes}&\multirow{2}{*}{No}&\multirow{2}{*}{Static}&\multirow{2}{*}{Yes}&\multirow{2}{*}{
$\iota~\delta$}&predef&\multirow{2}{*}{Yes}&\multirow{2}{*}{No}&\multirow{2}{*}{S}&\multirow{2}{*}{Yes}&\multirow{2}{*}{No}&\multirow{2}{*}{1-6,7}&Not&\multirow{2}{*}{No}\\
&&&&&&&&&&&Programs&&&&&&&Proved&\\\hline

\multirow{2}{*}{\bf
\cite{Urp}}&\multirow{2}{*}{S}&\multirow{2}{*}{Yes}&Check&\multirow{2}{*}{Run}
&\multirow{2}{*}{Logic}&\multirow{2}{*}{Yes}&\multirow{2}{*}{Yes}&\multirow{2}{*}{Static}&\multirow{2}{*}{Yes}&\multirow{2}{*}{
$\iota~\delta~\chi$}&\multirow{2}{*}{
SLDNF}&\multirow{2}{*}{No}&\multirow{2}{*}{No}&\multirow{2}{*}{SS}&\multirow{2}{*}{Yes}&\multirow{2}{*}{No}&\multirow{2}{*}{1-6,7}&\multirow{2}{*}{Yes}&\multirow{2}{*}{No}\\
&&&Maintain&&&&&&&&&&&&&&&&\\\hline

\multirow{2}{*}{\bf
\cite{Gup1}}&\multirow{2}{*}{N}&\multirow{2}{*}{Yes}&\multirow{2}{*}{Maintain}&\multirow{2}{*}{Run}
&\multirow{2}{*}{Logic}&\multirow{2}{*}{Yes}&\multirow{2}{*}{No}&\multirow{2}{*}{Static}&\multirow{2}{*}{Yes}&\multirow{2}{*}{
$\iota~\delta$}&\multirow{2}{*}{Unfold}&\multirow{2}{*}{Yes}&\multirow{2}{*}{No}&\multirow{2}{*}{SS}&\multirow{2}{*}{Yes}&\multirow{2}{*}{No}&\multirow{2}{*}{1-6,9}&\multirow{2}{*}{Yes}&\multirow{2}{*}{Yes}\\
&&&&&&&&&&&&&&&&&&&\\\hline

\multirow{2}{*}{\bf
\cite{May}}&\multirow{2}{*}{N}&\multirow{2}{*}{Yes}&\multirow{2}{*}{Maintain}&Comp.
&\multirow{2}{*}{Logic}&\multirow{2}{*}{Yes}&\multirow{2}{*}{Yes}&Static&\multirow{2}{*}{Yes}&\multirow{2}{*}{
$\iota~\delta~\chi$}&\multirow{2}{*}{
SLDNF}&\multirow{2}{*}{Yes}&\multirow{2}{*}{No}&\multirow{2}{*}{S}&\multirow{2}{*}{Yes}&\multirow{2}{*}{No}&\multirow{2}{*}{1-6,9}&Not&Not\\
&&&&Run&&&&Dynamic&&&&&&&&&&proved&proved\\\hline

\multirow{2}{*}{\bf
\cite{Wut}}&\multirow{2}{*}{S}&\multirow{2}{*}{Yes}&\multirow{2}{*}{Maintain}&\multirow{2}{*}{Run}
&\multirow{2}{*}{Logic}&\multirow{2}{*}{Yes}&\multirow{2}{*}{Yes}&\multirow{2}{*}{Static}&\multirow{2}{*}{Yes}&\multirow{2}{*}{
$\iota~\delta$}&\multirow{2}{*}{
Unfold.}&\multirow{2}{*}{No}&\multirow{2}{*}{Yes}&\multirow{2}{*}{S}&\multirow{2}{*}{Yes}&\multirow{2}{*}{No}&\multirow{2}{*}{1-6,7}&Not&\multirow{2}{*}{No}\\
&&&&&&&&&&&&&&&&&&proved&\\\hline

\multirow{2}{*}{\bf
\cite{Arav1}}&\multirow{2}{*}{H}&\multirow{2}{*}{Yes}&\multirow{2}{*}{Check}&\multirow{2}{*}{Run}
&\multirow{2}{*}{Logic}&\multirow{2}{*}{Yes}&\multirow{2}{*}{Yes}&\multirow{2}{*}{Static}&\multirow{2}{*}{Yes}&\multirow{2}{*}{
$\iota~\delta$}&\multirow{2}{*}{
SLD}&\multirow{2}{*}{Yes}&\multirow{2}{*}{No}&\multirow{2}{*}{S}&\multirow{2}{*}{Yes}&\multirow{2}{*}{No}&\multirow{2}{*}{1-6,9}&\multirow{2}{*}{Yes}&\multirow{2}{*}{Yes}\\
&&&&&&&&&&&&&&&&&&&\\\hline

\multirow{2}{*}{\bf
\cite{Cer}}&\multirow{2}{*}{N}&\multirow{2}{*}{No}&\multirow{2}{*}{Maintain}&Comp
&Relation&\multirow{2}{*}{Yes}&\multirow{2}{*}{Limited}&\multirow{2}{*}{Static}&\multirow{2}{*}{Yes}&\multirow{2}{*}{
$\iota~\delta~\chi$}&\multirow{2}{*}{
Active}&\multirow{2}{*}{Yes}&\multirow{2}{*}{Yes}&\multirow{2}{*}{S}&\multirow{2}{*}{Yes}&\multirow{2}{*}{No}&\multirow{2}{*}{---}&\multirow{2}{*}{No}&\multirow{2}{*}{No}\\
&&&&Run&Logic&&&&&&&&&&&&&&\\\hline

\multirow{2}{*}{\bf
\cite{Ger}}&\multirow{2}{*}{N}&\multirow{2}{*}{No}&\multirow{2}{*}{Maintain}&Comp
&Relation&\multirow{2}{*}{No}&Flat&Static&\multirow{2}{*}{Yes}&\multirow{2}{*}{
$\iota~\delta~\chi$}&\multirow{2}{*}{
Active}&\multirow{2}{*}{Yes}&\multirow{2}{*}{Yes}&\multirow{2}{*}{S}&\multirow{2}{*}{Yes}&\multirow{2}{*}{No}&\multirow{2}{*}{---}&\multirow{2}{*}{No}&\multirow{2}{*}{No}\\
&&&&Run&Logic&&Limited&Dynamic&&&&&&&&&&&\\\hline

\multirow{2}{*}{\bf
\cite{Chen1}}&\multirow{2}{*}{H}&\multirow{2}{*}{Yes}&Check&Comp.
&O-O&Class
&\multirow{2}{*}{Limited}&\multirow{2}{*}{Static}&\multirow{2}{*}{Yes}&\multirow{2}{*}{
$\iota~\delta$}&\multirow{2}{*}{
Active}&\multirow{2}{*}{Yes}&\multirow{2}{*}{No}&\multirow{2}{*}{SS}&\multirow{2}{*}{Yes}&\multirow{2}{*}{No}&\multirow{2}{*}{1-6,9}&\multirow{2}{*}{No}&\multirow{2}{*}{Yes}\\
&&&Maintain&Run&&Att.&&&&&&&&&&&&&\\\hline

\multirow{2}{*}{\bf
\cite{Con}}&\multirow{2}{*}{N}&\multirow{2}{*}{Yes}&\multirow{2}{*}{Maintain}&\multirow{2}{*}{Run}
&\multirow{2}{*}{Logic}&\multirow{2}{*}{Yes}&Flat&\multirow{2}{*}{Static}&\multirow{2}{*}{Yes}&\multirow{2}{*}{
$\iota~\delta$}&\multirow{2}{*}{
Unfold.}&\multirow{2}{*}{Yes}&\multirow{2}{*}{No}&\multirow{2}{*}{S}&\multirow{2}{*}{Yes}&\multirow{2}{*}{No}&\multirow{2}{*}{1-6,9}&Not&\multirow{2}{*}{Yes}\\
&&&&&&&Limited&&&&&&&&&&&proved&\\\hline

\multirow{2}{*}{\bf
\cite{Lu1}}&\multirow{2}{*}{N}&\multirow{2}{*}{Yes}&\multirow{2}{*}{Maintain}&\multirow{2}{*}{Run}
&\multirow{2}{*}{Logic}&\multirow{2}{*}{Yes}&\multirow{2}{*}{Limited}&\multirow{2}{*}{Static}&\multirow{2}{*}{No}&\multirow{2}{*}{
$\iota~\delta$}&\multirow{2}{*}{
Active}&\multirow{2}{*}{Yes}&\multirow{2}{*}{No}&\multirow{2}{*}{SS}&\multirow{2}{*}{Yes}&\multirow{2}{*}{No}&\multirow{2}{*}{1-6,7}&\multirow{2}{*}{Yes}&Not\\
&&&&&&&&&&&&&&&&&&&proved\\\hline

\multirow{2}{*}{\bf
\cite{Ten}}&\multirow{2}{*}{N}&\multirow{2}{*}{Yes}&\multirow{2}{*}{Maintain}&Comp
&\multirow{2}{*}{Logic}&\multirow{2}{*}{Yes}&\multirow{2}{*}{Yes}&Static&\multirow{2}{*}{Yes}&\multirow{2}{*}{
$\iota~\delta$}&\multirow{2}{*}{
SLDNF}&\multirow{2}{*}{Yes}&\multirow{2}{*}{No}&\multirow{2}{*}{S}&\multirow{2}{*}{Yes}&\multirow{2}{*}{No}&\multirow{2}{*}{1-6,9}&\multirow{2}{*}{Yes}&\multirow{2}{*}{Yes}\\
&&&&Run&&&&Dynamic&&&&&&&&&&&\\\hline

\multirow{2}{*}{\bf
\cite{Sch}}&\multirow{2}{*}{S}&\multirow{2}{*}{Yes}&\multirow{2}{*}{Maintain}&\multirow{2}{*}{Comp}
&\multirow{2}{*}{Logic}&\multirow{2}{*}{No}&Flat&\multirow{2}{*}{Static}&\multirow{2}{*}{Yes}&\multirow{2}{*}{
$\iota~\delta$}&
predef&\multirow{2}{*}{---}&\multirow{2}{*}{Yes}&\multirow{2}{*}{G}&\multirow{2}{*}{No}&\multirow{2}{*}{Yes}&\multirow{2}{*}{---}&\multirow{2}{*}{No}&Not\\
&&&&&&&Limited&&&&Programs&&&&&&&&proved\\\hline

\end{tabular}
\end{turn}
\begin{turn}{180}
\text{\bf Tab. 1.\rm~Summary of view-update and integrity constraint
with our axiomatic method}
\end{turn}
\begin{turn}{180}
\textbf{Appendix A}
\end{turn}
\end{sidewaystable}

\begin{sidewaystable}
\centering
\begin{tabular}{|l|l|l|l|l|l|l|l|l|l|l|l|l|l|l|l|l|l|l|l|}\hline
\multirow{3}{*}{\bf Method}&\multicolumn{4}{c}{\bf Problem}&
\multicolumn{4}{|c|}{\bf Database
  schema}&\multicolumn{2}{c}{\bf Update req.}&\multicolumn{3}{|c|}{\bf Mechanism}&\multicolumn{3}{c}{\bf Update Change}&
\multicolumn{3}{|c|}{\bf Solutions}\\
\cline{2-20}

 &\multirow{2}{*}{Type}&View&IC&Run/&Def.&\multirow{2}{*}{View}&IC&Kind of
&\multirow{2}{*} {Mul.}&Update&Tech-&Base&User&\multirow{2}{*}{Type}&~Base~&View&\multirow{2}{*}{Axiom}&\multirow{2}{*}{Sound.}&\multirow{2}{*}{Complete.}\\
&&Update&Enforce.&Comp.&Lang.&&def.&IC&&Operat.&nique&Facts&Part.&&facts&def.&&&\\\hline

\multirow{2}{*}{\bf
\cite{Sta}}&\multirow{2}{*}{N}&\multirow{2}{*}{No}&\multirow{2}{*}{Maintain}&\multirow{2}{*}{Comp}
&\multirow{2}{*}{Logic}&\multirow{2}{*}{Yes}&\multirow{2}{*}{Limited}&\multirow{2}{*}{Static}&\multirow{2}{*}{Yes}&\multirow{2}{*}{
$\iota~\delta~\chi$}&
Predef&\multirow{2}{*}{Yes}&\multirow{2}{*}{No}&\multirow{2}{*}{S}&\multirow{2}{*}{Yes}&\multirow{2}{*}{No}&\multirow{2}{*}{1-6,7}&\multirow{2}{*}{Yes}&\multirow{2}{*}{No}\\
&&&&&&&&&&&Program&&&&&&&&\\\hline

\multirow{2}{*}{\bf
\cite{Arav2}}&\multirow{2}{*}{H}&\multirow{2}{*}{Yes}&\multirow{2}{*}{Check}&\multirow{2}{*}{Run}
&\multirow{2}{*}{Logic}&\multirow{2}{*}{Yes}&\multirow{2}{*}{Limited}&\multirow{2}{*}{Static}&\multirow{2}{*}{Yes}&\multirow{2}{*}{
$\iota~\delta$}&\multirow{2}{*}{
SLD}&\multirow{2}{*}{Yes}&\multirow{2}{*}{No}&\multirow{2}{*}{S}&\multirow{2}{*}{Yes}&\multirow{2}{*}{No}&\multirow{2}{*}{1-6,9}&\multirow{2}{*}{Yes}&\multirow{2}{*}{Yes}\\
&&&&&&&&&&&&&&&&&&&\\\hline

\multirow{2}{*}{\bf
\cite{Dec}}&\multirow{2}{*}{N}&\multirow{2}{*}{Yes}&\multirow{2}{*}{Maintain}&\multirow{2}{*}{Run}
&\multirow{2}{*}{Logic}&\multirow{2}{*}{Yes}&\multirow{2}{*}{Yes}&\multirow{2}{*}{Static}&\multirow{2}{*}{Yes}&\multirow{2}{*}{
$\iota~\delta$}&\multirow{2}{*}{
SLDNF}&\multirow{2}{*}{No}&\multirow{2}{*}{No}&\multirow{2}{*}{S}&\multirow{2}{*}{Yes}&\multirow{2}{*}{No}&\multirow{2}{*}{1-6,7}&\multirow{2}{*}{No}&Not\\
&&&&&&&&&&&&&&&&&&&Proved\\\hline

\multirow{2}{*}{\bf
\cite{Lobo}}&\multirow{2}{*}{N}&\multirow{2}{*}{Yes}&\multirow{2}{*}{Maintain}&\multirow{2}{*}{Run}
&\multirow{2}{*}{Logic}&\multirow{2}{*}{Yes}&Flat&\multirow{2}{*}{Static}&\multirow{2}{*}{Yes}&\multirow{2}{*}{
$\iota~\delta$}&\multirow{2}{*}{
Unfold}&\multirow{2}{*}{No}&\multirow{2}{*}{Yes}&\multirow{2}{*}{G}&\multirow{2}{*}{Yes}&\multirow{2}{*}{No}&\multirow{2}{*}{1-6,7}&Not&\multirow{2}{*}{No}\\
&&&&&&&Limited&&&&&&&&&&&proved&\\\hline

\multirow{2}{*}{\bf
\cite{Yang}}&\multirow{2}{*}{H}&\multirow{2}{*}{No}&\multirow{2}{*}{Maintain}&Comp.
&\multirow{2}{*}{Relation}&\multirow{2}{*}{Yes}&\multirow{2}{*}{Limited}&Static&\multirow{2}{*}{Yes}&\multirow{2}{*}{
$\iota~\delta~\chi$}&\multirow{2}{*}{
Unfold}&\multirow{2}{*}{Yes}&\multirow{2}{*}{No}&\multirow{2}{*}{S}&\multirow{2}{*}{Yes}&\multirow{2}{*}{No}&\multirow{2}{*}{1-6,7}&Not&Not\\
&&&&Run&&&&Dynamic&&&&&&&&&&proved&proved\\\hline

\multirow{2}{*}{\bf
\cite{Maa}}&\multirow{2}{*}{N}&\multirow{2}{*}{No}&Maintain&Comp
&\multirow{2}{*}{Logic}&\multirow{2}{*}{No}&Flat&Static&\multirow{2}{*}{Yes}&\multirow{2}{*}{
$\iota~\delta$}&\multirow{2}{*}{
Active}&\multirow{2}{*}{Yes}&\multirow{2}{*}{No}&\multirow{2}{*}{G}&\multirow{2}{*}{No}&\multirow{2}{*}{No}&\multirow{2}{*}{---}&\multirow{2}{*}{No}&\multirow{2}{*}{No}\\
&&&Restore&Run&&&Limited&Dynamic&&&&&&&&&&&\\\hline

\multirow{2}{*}{\bf
\cite{Sch1}}&\multirow{2}{*}{N}&\multirow{2}{*}{No}&\multirow{2}{*}{Maintain}&Comp
&\multirow{2}{*}{Relation}&\multirow{2}{*}{No}&Flat&\multirow{2}{*}{Static}&\multirow{2}{*}{Yes}&\multirow{2}{*}{
$\iota~\delta$}&\multirow{2}{*}{
Active}&\multirow{2}{*}{Yes}&\multirow{2}{*}{No}&\multirow{2}{*}{S}&\multirow{2}{*}{No}&\multirow{2}{*}{No}&\multirow{2}{*}{---}&\multirow{2}{*}{No}&\multirow{2}{*}{No}\\
&&&&Run&&&Limited&&&&&&&&&&&&\\\hline

\multirow{2}{*}{\bf
\cite{Lu}}&\multirow{2}{*}{N}&\multirow{2}{*}{Yes}&\multirow{2}{*}{Check}&\multirow{2}{*}{Run}
&\multirow{2}{*}{Logic}&\multirow{2}{*}{Yes}&\multirow{2}{*}{Limited}&\multirow{2}{*}{Static}&\multirow{2}{*}{Yes}&\multirow{2}{*}{
$\iota~\delta$}&\multirow{2}{*}{
SLD}&\multirow{2}{*}{Yes}&\multirow{2}{*}{No}&\multirow{2}{*}{S}&\multirow{2}{*}{Yes}&\multirow{2}{*}{No}&\multirow{2}{*}{1-6,9}&\multirow{2}{*}{Yes}&\multirow{2}{*}{Yes}\\
&&&&&&&&&&&&&&&&&&&\\\hline

\multirow{2}{*}{\bf
\cite{Agr}}&\multirow{2}{*}{O}&\multirow{2}{*}{No}&\multirow{2}{*}{Maintain}&\multirow{2}{*}{Run}
&\multirow{2}{*}{Logic}&\multirow{2}{*}{Yes}&\multirow{2}{*}{Limited}&\multirow{2}{*}{Static}&\multirow{2}{*}{Yes}&\multirow{2}{*}{
$\iota~\delta$}&\multirow{2}{*}{
---}&\multirow{2}{*}{Yes}&\multirow{2}{*}{No}&\multirow{2}{*}{S}&\multirow{2}{*}{Yes}&\multirow{2}{*}{No}&\multirow{2}{*}{---}&\multirow{2}{*}{No}&\multirow{2}{*}{No}\\
&&&&&&&&&&&&&&&&&&&\\\hline

\multirow{2}{*}{\bf
\cite{Sch2}}&\multirow{2}{*}{N}&\multirow{2}{*}{No}&\multirow{2}{*}{Maintain}&\multirow{2}{*}{Comp}
&\multirow{2}{*}{Relation}&\multirow{2}{*}{No}&\multirow{2}{*}{Limited}&\multirow{2}{*}{Static}&\multirow{2}{*}{Yes}&\multirow{2}{*}{
$\iota~\delta$}&Predef&\multirow{2}{*}{No}&\multirow{2}{*}{No}&\multirow{2}{*}{G}&\multirow{2}{*}{No}&\multirow{2}{*}{No}&\multirow{2}{*}{---}&\multirow{2}{*}{No}&\multirow{2}{*}{No}\\
&&&&&&&&&&&Program&&&&&&&&\\\hline

\multirow{2}{*}{\bf
\cite{Hal}}&\multirow{2}{*}{N}&\multirow{2}{*}{No}&\multirow{2}{*}{Maintain}&\multirow{2}{*}{Comp}
&\multirow{2}{*}{Logic}&\multirow{2}{*}{Yes}&\multirow{2}{*}{Limited}&\multirow{2}{*}{Static}&\multirow{2}{*}{Yes}&\multirow{2}{*}{
$\iota~\delta$}&\multirow{2}{*}{
---}&\multirow{2}{*}{No}&\multirow{2}{*}{No}&\multirow{2}{*}{S}&\multirow{2}{*}{No}&\multirow{2}{*}{No}&\multirow{2}{*}{---}&\multirow{2}{*}{No}&\multirow{2}{*}{No}\\
&&&&&&&&&&&&&&&&&&&\\\hline

\multirow{2}{*}{\bf
\cite{Chi}}&\multirow{2}{*}{N}&\multirow{2}{*}{No}&\multirow{2}{*}{Maintain}&Comp.
&\multirow{2}{*}{Relation}&\multirow{2}{*}{Yes}&\multirow{2}{*}{Limited}&Static&\multirow{2}{*}{Yes}&\multirow{2}{*}{
$\iota~\delta$}&\multirow{2}{*}{
---}&\multirow{2}{*}{Yes}&\multirow{2}{*}{No}&\multirow{2}{*}{S}&\multirow{2}{*}{Yes}&\multirow{2}{*}{No}&\multirow{2}{*}{---}&Not&Not\\
&&&&Run&&&&Dynamic&&&&&&&&&&proved&proved\\\hline

\multirow{2}{*}{\bf
\cite{Dom}}&\multirow{2}{*}{H}&\multirow{2}{*}{Yes}&\multirow{2}{*}{Check}&\multirow{2}{*}{Run}
&\multirow{2}{*}{Logic}&\multirow{2}{*}{Yes}&\multirow{2}{*}{Yes}&\multirow{2}{*}{Static}&\multirow{2}{*}{Yes}&\multirow{2}{*}{
$\iota~\delta$}&
Predef&\multirow{2}{*}{Yes}&\multirow{2}{*}{No}&\multirow{2}{*}{S}&\multirow{2}{*}{Yes}&\multirow{2}{*}{No}&\multirow{2}{*}{1-6,7}&\multirow{2}{*}{Yes}&Not\\
&&&&&&&&&&&Programs&&&&&&&&proved\\\hline

\multirow{2}{*}{\bf
\cite{Heg1}}&\multirow{2}{*}{O}&\multirow{2}{*}{Yes}&\multirow{2}{*}{Check}&\multirow{2}{*}{Run}
&\multirow{2}{*}{Relation}&\multirow{2}{*}{Yes}&\multirow{2}{*}{Limited}&\multirow{2}{*}{Static}&\multirow{2}{*}{No}&\multirow{2}{*}{
$\iota~\delta$}&\multirow{2}{*}{
Unfold}&\multirow{2}{*}{Yes}&\multirow{2}{*}{No}&\multirow{2}{*}{S}&\multirow{2}{*}{Yes}&\multirow{2}{*}{No}&\multirow{2}{*}{1-6,9}&\multirow{2}{*}{Yes}&\multirow{2}{*}{Yes}\\
&&&&&&&&&&&&&&&&&&&\\\hline

\multirow{2}{*}{\bf
\cite{Baue}}&\multirow{2}{*}{O}&\multirow{2}{*}{Yes}&\multirow{2}{*}{Check}&\multirow{2}{*}{Run}
&\multirow{2}{*}{Relation}&\multirow{2}{*}{Yes}&\multirow{2}{*}{Limited}&\multirow{2}{*}{Static}&\multirow{2}{*}{No}&\multirow{2}{*}{
$\iota~\delta$}&\multirow{2}{*}{
Unfold}&\multirow{2}{*}{Yes}&\multirow{2}{*}{No}&\multirow{2}{*}{S}&\multirow{2}{*}{Yes}&\multirow{2}{*}{No}&\multirow{2}{*}{1-6,9}&\multirow{2}{*}{Yes}&\multirow{2}{*}{Yes}\\
&&&&&&&&&&&&&&&&&&&\\\hline

\multirow{2}{*}{\bf
\cite{Far}}&\multirow{2}{*}{N}&\multirow{2}{*}{Yes}&\multirow{2}{*}{Check}&\multirow{2}{*}{Run}
&\multirow{2}{*}{Logic}&\multirow{2}{*}{Yes}&\multirow{2}{*}{Yes}&\multirow{2}{*}{Static}&\multirow{2}{*}{Yes}&\multirow{2}{*}{
$\iota~\delta$}&\multirow{2}{*}{
SLDNF}&\multirow{2}{*}{Yes}&\multirow{2}{*}{No}&\multirow{2}{*}{S}&\multirow{2}{*}{Yes}&\multirow{2}{*}{No}&\multirow{2}{*}{1-6,9}&\multirow{2}{*}{Yes}&\multirow{2}{*}{Yes}\\
&&&&&&&&&&&&&&&&&&&\\\hline

\multirow{2}{*}{\bf
\cite{Saha}}&\multirow{2}{*}{N}&\multirow{2}{*}{No}&\multirow{2}{*}{Maintain}&\multirow{2}{*}{Run}
&\multirow{2}{*}{Logic}&\multirow{2}{*}{Yes}&\multirow{2}{*}{Limited}&\multirow{2}{*}{Static}&\multirow{2}{*}{Yes}&\multirow{2}{*}{
$\iota~\delta$}&\multirow{2}{*}{
Predef}&\multirow{2}{*}{Yes}&\multirow{2}{*}{No}&\multirow{2}{*}{S}&\multirow{2}{*}{Yes}&\multirow{2}{*}{No}&\multirow{2}{*}{1-6,7}&\multirow{2}{*}{Yes}&Not\\
&&&&&&&&&&&Programs&&&&&&&&proved\\\hline

\multirow{2}{*}{\bf
\cite{Hor}}&\multirow{2}{*}{O}&\multirow{2}{*}{No}&\multirow{2}{*}{Maintain}&\multirow{2}{*}{Comp}
&\multirow{2}{*}{Relation}&\multirow{2}{*}{Yes}&\multirow{2}{*}{Limited}&\multirow{2}{*}{Static}&\multirow{2}{*}{Yes}&\multirow{2}{*}{
$\iota~\delta$}&\multirow{2}{*}{
---}&\multirow{2}{*}{Yes}&\multirow{2}{*}{No}&\multirow{2}{*}{S}&\multirow{2}{*}{Yes}&\multirow{2}{*}{No}&\multirow{2}{*}{---}&\multirow{2}{*}{No}&\multirow{2}{*}{No}\\
&&&&&&&&&&&&&&&&&&&\\\hline

\end{tabular}
\end{sidewaystable}

\begin{sidewaystable}
\centering
\begin{turn}{180}
\begin{tabular}{|l|l|l|l|l|l|l|l|l|l|l|l|l|l|l|l|l|l|l|l|}\hline
\multirow{3}{*}{\bf Method}&\multicolumn{4}{c}{\bf Problem}&
\multicolumn{4}{|c|}{\bf Database
  schema}&\multicolumn{2}{c}{\bf Update req.}&\multicolumn{3}{|c|}{\bf Mechanism}&\multicolumn{3}{c}{\bf Update Change}&
\multicolumn{3}{|c|}{\bf Solutions}\\
\cline{2-20}

 &\multirow{2}{*}{Type}&View&IC&Run/&Def.&\multirow{2}{*}{View}&IC&Kind of
&\multirow{2}{*} {Mul.}&Update&Tech-&Base&User&\multirow{2}{*}{Type}&~Base~&View&\multirow{2}{*}{Axiom}&\multirow{2}{*}{Sound.}&\multirow{2}{*}{Complete.}\\
&&Update&Enforce.&Comp.&Lang.&&def.&IC&&Operat.&nique&Facts&Part.&&facts&def.&&&\\\hline

\multirow{2}{*}{\bf
\cite{Sak3}}&\multirow{2}{*}{N}&\multirow{2}{*}{Yes}&\multirow{2}{*}{Check}&\multirow{2}{*}{Run}
&\multirow{2}{*}{Logic}&\multirow{2}{*}{No}&\multirow{2}{*}{Limited}&\multirow{2}{*}{Static}&\multirow{2}{*}{Yes}&\multirow{2}{*}{
$\iota~\delta$}&\multirow{2}{*}{
SLDNF}&\multirow{2}{*}{Yes}&\multirow{2}{*}{No}&\multirow{2}{*}{S}&\multirow{2}{*}{Yes}&\multirow{2}{*}{No}&\multirow{2}{*}{1-6,9}&\multirow{2}{*}{Yes}&\multirow{2}{*}{Yes}\\
&&&&&&&&&&&&&&&&&&&\\\hline

\multirow{2}{*}{\bf
\cite{Far1}}&\multirow{2}{*}{N}&\multirow{2}{*}{Yes}&\multirow{2}{*}{Check}&\multirow{2}{*}{Run}
&\multirow{2}{*}{Logic}&\multirow{2}{*}{Yes}&\multirow{2}{*}{Yes}&\multirow{2}{*}{Static}&\multirow{2}{*}{No}&\multirow{2}{*}{
$\iota~\delta$}&\multirow{2}{*}{
SLDNF}&\multirow{2}{*}{No}&\multirow{2}{*}{No}&\multirow{2}{*}{S}&\multirow{2}{*}{Yes}&\multirow{2}{*}{No}&\multirow{2}{*}{1-6,9}&Not&\multirow{2}{*}{No}\\
&&&&&&&&&&&&&&&&&&&proved\\\hline

\multirow{2}{*}{\bf
\cite{Mar}}&\multirow{2}{*}{N}&\multirow{2}{*}{Yes}&Check&\multirow{2}{*}{Run}
&\multirow{2}{*}{Logic}&\multirow{2}{*}{Yes}&\multirow{2}{*}{Yes}&\multirow{2}{*}{Static}&\multirow{2}{*}{No}&\multirow{2}{*}{
$\iota~\delta~\chi$}&\multirow{2}{*}{
SLD}&\multirow{2}{*}{Yes}&\multirow{2}{*}{No}&\multirow{2}{*}{S}&\multirow{2}{*}{Yes}&\multirow{2}{*}{No}&\multirow{2}{*}{---}&\multirow{2}{*}{No}&\multirow{2}{*}{No}\\
&&&Maintain&&&&&&&&&&&&&&&&\\\hline

\multirow{2}{*}{\bf
\cite{Bra}}&\multirow{2}{*}{N}&\multirow{2}{*}{Yes}&Check&Run
&\multirow{2}{*}{Logic}&\multirow{2}{*}{Yes}&\multirow{2}{*}{Yes}&\multirow{2}{*}{Static}&\multirow{2}{*}{No}&\multirow{2}{*}{
$\iota~\delta~\chi$}&\multirow{2}{*}{
SLD}&\multirow{2}{*}{Yes}&\multirow{2}{*}{No}&\multirow{2}{*}{SS}&\multirow{2}{*}{Yes}&\multirow{2}{*}{No}&\multirow{2}{*}{1-6,7}&\multirow{2}{*}{Yes}&\multirow{2}{*}{Not}\\
&&&Maintain&Comp&&&&&&&&&&&&&&&proved\\\hline

\multirow{2}{*}{\bf
\cite{Chris2}}&\multirow{2}{*}{N}&\multirow{2}{*}{Yes}&Check&\multirow{2}{*}{Run}
&\multirow{2}{*}{Logic}&\multirow{2}{*}{Yes}&\multirow{2}{*}{Yes}&Static&\multirow{2}{*}{Yes}&\multirow{2}{*}{
$\iota~\delta~\chi$}&\multirow{2}{*}{
Predef}&\multirow{2}{*}{Yes}&\multirow{2}{*}{No}&\multirow{2}{*}{S}&\multirow{2}{*}{Yes}&\multirow{2}{*}{No}&\multirow{2}{*}{1-6,9}&\multirow{2}{*}{Yes}&\multirow{2}{*}{Yes}\\
&&&Maintain&&&&&Dynamic&&&Program&&&&&&&&\\\hline

\multirow{2}{*}{\bf
\cite{Car}}&\multirow{2}{*}{N}&\multirow{2}{*}{Yes}&\multirow{2}{*}{Check}&\multirow{2}{*}{Comp}
&\multirow{2}{*}{Logic}&\multirow{2}{*}{Yes}&\multirow{2}{*}{Yes}&\multirow{2}{*}{Dynamic}&\multirow{2}{*}{Yes}&\multirow{2}{*}{
$\iota~\delta$}&\multirow{2}{*}{
Predef}&\multirow{2}{*}{Yes}&\multirow{2}{*}{No}&\multirow{2}{*}{S}&\multirow{2}{*}{Yes}&\multirow{2}{*}{No}&\multirow{2}{*}{---}&Not&\multirow{2}{*}{No}\\
&&&&&&&&&&&Programs&&&&&&&Proved&proved\\\hline

\multirow{2}{*}{\bf
\cite{Chris1}}&\multirow{2}{*}{N}&\multirow{2}{*}{Yes}&Check&\multirow{2}{*}{Run}
&\multirow{2}{*}{Logic}&\multirow{2}{*}{Yes}&\multirow{2}{*}{Yes}&Static&\multirow{2}{*}{Yes}&\multirow{2}{*}{
$\iota~\delta~\chi$}&\multirow{2}{*}{
Predef}&\multirow{2}{*}{Yes}&\multirow{2}{*}{No}&\multirow{2}{*}{S}&\multirow{2}{*}{Yes}&\multirow{2}{*}{No}&\multirow{2}{*}{1-6,9}&\multirow{2}{*}{Yes}&\multirow{2}{*}{Yes}\\
&&&Maintain&&&&&Dynamic&&&Program&&&&&&&&\\\hline

\multirow{2}{*}{\bf
\cite{Coe}}&\multirow{2}{*}{N}&\multirow{2}{*}{No}&\multirow{2}{*}{Maintain}&\multirow{2}{*}{Comp}
&\multirow{2}{*}{Logic}&\multirow{2}{*}{Yes}&\multirow{2}{*}{No}&\multirow{2}{*}{---}&\multirow{2}{*}{Yes}&\multirow{2}{*}{
$\iota~\delta$}&\multirow{2}{*}{
---}&\multirow{2}{*}{Yes}&\multirow{2}{*}{No}&\multirow{2}{*}{S}&\multirow{2}{*}{Yes}&\multirow{2}{*}{No}&\multirow{2}{*}{---}&\multirow{2}{*}{No}&\multirow{2}{*}{No}\\
&&&&&&&&&&&&&&&&&&&\\\hline

\multirow{2}{*}{\bf
\cite{Zho}}&\multirow{2}{*}{N}&\multirow{2}{*}{No}&\multirow{2}{*}{Maintain}&\multirow{2}{*}{Run}
&\multirow{2}{*}{Relation}&\multirow{2}{*}{Yes}&\multirow{2}{*}{No}&\multirow{2}{*}{---}&\multirow{2}{*}{Yes}&\multirow{2}{*}{
$\iota~\delta~\chi$}&\multirow{2}{*}{
Unfold}&\multirow{2}{*}{Yes}&\multirow{2}{*}{No}&\multirow{2}{*}{SS}&\multirow{2}{*}{No}&\multirow{2}{*}{No}&\multirow{2}{*}{---}&Not&Not\\
&&&&&&&&&&&&&&&&&&proved&proved\\\hline

\multirow{2}{*}{\bf
\cite{Heg}}&\multirow{2}{*}{O}&\multirow{2}{*}{No}&\multirow{2}{*}{Maintain}&Comp.
&\multirow{2}{*}{Logic}&\multirow{2}{*}{Yes}&\multirow{2}{*}{No}&\multirow{2}{*}{---}&\multirow{2}{*}{Yes}&\multirow{2}{*}{
$\iota~\delta$}&\multirow{2}{*}{
----}&\multirow{2}{*}{Yes}&\multirow{2}{*}{No}&\multirow{2}{*}{G}&\multirow{2}{*}{No}&\multirow{2}{*}{No}&\multirow{2}{*}{---}&\multirow{2}{*}{Yes}&Not\\
&&&&Run&&&&&&&&&&&&&&&proved\\\hline

\multirow{2}{*}{\bf
\cite{Beh}}&\multirow{2}{*}{S}&\multirow{2}{*}{Yes}&\multirow{2}{*}{Check}&\multirow{2}{*}{Run}
&\multirow{2}{*}{Logic}&\multirow{2}{*}{Yes}&Flat&\multirow{2}{*}{Static}&\multirow{2}{*}{Yes}&\multirow{2}{*}{
$\iota~\delta$}&\multirow{2}{*}{
SLDNF}&\multirow{2}{*}{Yes}&\multirow{2}{*}{No}&\multirow{2}{*}{S}&\multirow{2}{*}{Yes}&\multirow{2}{*}{No}&\multirow{2}{*}{1-6,9}&\multirow{2}{*}{Yes}&Not\\
&&&&&&&Limited&&&&&&&&&&&&proved\\\hline

\multirow{2}{*}{\bf
\cite{Dom1}}&\multirow{2}{*}{N}&\multirow{2}{*}{Yes}&\multirow{2}{*}{Check}&\multirow{2}{*}{Run}
&\multirow{2}{*}{Logic}&\multirow{2}{*}{Yes}&\multirow{2}{*}{Yes}&\multirow{2}{*}{Static}&\multirow{2}{*}{Yes}&\multirow{2}{*}{
$\iota~\delta$}&\multirow{2}{*}{
---}&\multirow{2}{*}{Yes}&\multirow{2}{*}{No}&\multirow{2}{*}{S}&\multirow{2}{*}{Yes}&\multirow{2}{*}{No}&\multirow{2}{*}{---}&\multirow{2}{*}{No}&\multirow{2}{*}{No}\\
&&&&&&&&&&&&&&&&&&&\\\hline

\multirow{2}{*}{\bf
\cite{Cha}}&\multirow{2}{*}{O}&\multirow{2}{*}{No}&\multirow{2}{*}{Maintain}&\multirow{2}{*}{Run}
&\multirow{2}{*}{Relation}&\multirow{2}{*}{Yes}&\multirow{2}{*}{No}&\multirow{2}{*}{Static}&\multirow{2}{*}{Yes}&\multirow{2}{*}{
$\iota~\delta$}&\multirow{2}{*}{
SLD}&\multirow{2}{*}{Yes}&\multirow{2}{*}{Yes}&\multirow{2}{*}{G}&\multirow{2}{*}{No}&\multirow{2}{*}{No}&\multirow{2}{*}{---}&Not&Not\\
&&&&&&&&&&&&&&&&&&proved&proved\\\hline

\multirow{2}{*}{\bf
\cite{Bell}}&\multirow{2}{*}{O}&\multirow{2}{*}{No}&\multirow{2}{*}{Maintain}&\multirow{2}{*}{Comp}
&\multirow{2}{*}{Relation}&\multirow{2}{*}{Yes}&\multirow{2}{*}{No}&\multirow{2}{*}{Static}&\multirow{2}{*}{Yes}&\multirow{2}{*}{
$\iota~\delta~\chi$}&\multirow{2}{*}{
---}&\multirow{2}{*}{Yes}&\multirow{2}{*}{No}&\multirow{2}{*}{SS}&\multirow{2}{*}{Yes}&\multirow{2}{*}{No}&\multirow{2}{*}{---}&\multirow{2}{*}{No}&\multirow{2}{*}{No}\\
&&&&&&&&&&&&&&&&&&&\\\hline

\multirow{2}{*}{\bf
\cite{Alex}}&\multirow{2}{*}{O}&\multirow{2}{*}{No}&\multirow{2}{*}{Maintain}&Comp.
&\multirow{2}{*}{Relation}&\multirow{2}{*}{No}&\multirow{2}{*}{Limited}&Static&\multirow{2}{*}{Yes}&\multirow{2}{*}{
$\iota~\delta$}&\multirow{2}{*}{
---}&\multirow{2}{*}{Yes}&\multirow{2}{*}{No}&\multirow{2}{*}{G}&\multirow{2}{*}{Yes}&\multirow{2}{*}{No}&\multirow{2}{*}{---}&\multirow{2}{*}{No}&\multirow{2}{*}{No}\\
&&&&Run&&&&Dynamic&&&&&&&&&&&\\\hline

\multirow{2}{*}{\bf
\cite{Mam1}}&\multirow{2}{*}{N}&\multirow{2}{*}{No}&\multirow{2}{*}{Maintain}&\multirow{2}{*}{Comp}
&\multirow{2}{*}{Relation}&\multirow{2}{*}{No}&\multirow{2}{*}{Yes}&\multirow{2}{*}{Static}&\multirow{2}{*}{Yes}&\multirow{2}{*}{
$\iota~\delta~\chi$}&\multirow{2}{*}{
Unfold}&\multirow{2}{*}{No}&\multirow{2}{*}{Yes}&\multirow{2}{*}{SS}&\multirow{2}{*}{No}&\multirow{2}{*}{No}&\multirow{2}{*}{---}&\multirow{2}{*}{No}&\multirow{2}{*}{No}\\
&&&&&&&&&&&&&&&&&&&\\\hline

\multirow{2}{*}{\bf
\cite{Sal}}&\multirow{2}{*}{N}&\multirow{2}{*}{No}&\multirow{2}{*}{Check}&\multirow{2}{*}{Comp}
&\multirow{2}{*}{Logic}&\multirow{2}{*}{No}&\multirow{2}{*}{Yes}&\multirow{2}{*}{Static}&\multirow{2}{*}{Yes}&\multirow{2}{*}{
$\iota~\delta$}&\multirow{2}{*}{
Active}&\multirow{2}{*}{Yes}&\multirow{2}{*}{No}&\multirow{2}{*}{G}&\multirow{2}{*}{Yes}&\multirow{2}{*}{No}&\multirow{2}{*}{---}&\multirow{2}{*}{No}&\multirow{2}{*}{No}\\
&&&&&&&&&&&&&&&&&&&\\\hline

\multirow{2}{*}{\bf
\cite{Del}}&\multirow{2}{*}{N}&\multirow{2}{*}{Yes}&\multirow{2}{*}{Check}&\multirow{2}{*}{Run}
&\multirow{2}{*}{Logic}&\multirow{2}{*}{Yes}&\multirow{2}{*}{Yes}&\multirow{2}{*}{Static}&\multirow{2}{*}{Yes}&\multirow{2}{*}{
$\iota~\delta$}&\multirow{2}{*}{
SLD}&\multirow{2}{*}{Yes}&\multirow{2}{*}{No}&\multirow{2}{*}{S}&\multirow{2}{*}{Yes}&\multirow{2}{*}{No}&\multirow{2}{*}{1-6,9}&
\multirow{2}{*}{Yes}&\multirow{2}{*}{Yes}\\
&&&&&&&&&&&&&&&&&&&\\\hline

\end{tabular}
\end{turn}
\end{sidewaystable}

\section*{Acknowledgement}
I would like to thanks Chandrabose Aravindan and Gerhard Lakemeyer both my Indian and Germany PhD supervisor, give encourage to write the paper. I owe my deepest gratitude to Ramaswamy Ramanujam from Institute of Mathematical Sciences and Ulrich Furbach from University Koblenz-Landau for my PhD thesis member. I thank to my PhD thesis examiner's Eduardo  Ferm{\'e} from University of Madeira, Mohua Banerjee from Indian Institute of Technology Kanpur and Arindama Singh form Indian Institute of Technology Madras. The author acknowledges the support of SSN College of Engineering research funds and RWTH Aachen, where he is visiting scholar with an Erasmus Mundus External Cooperation Window India4EU by the European Commission.

\newpage


\end{document}